\newcommand{\EEE}{\color{black}} 
\newcommand{\eps}{\varepsilon} 
\newcommand{\dx}{{\rm d}x}
\theoremstyle{plain}
\newtheorem{theorem}{Theorem}[section]
\newtheorem{lemma}[theorem]{Lemma}
\newtheorem{proposition}[theorem]{Proposition}
\numberwithin{equation}{section}
\newcommand{\N}{\Bbb N}
\newcommand{\Z}{\Bbb Z}
\newcommand{\R}{\Bbb R}
\newcommand{\C}{\Bbb C}
\theoremstyle{definition}
\newtheorem{remark}[theorem]{Remark}
\begin{document}

\title[Emergence of rigid polycrystals from atomistic systems]{Emergence of rigid polycrystals from atomistic systems with Heitmann-Radin sticky   disk  energy}

\keywords{Polycrystals, crystallization, interfacial energies, $\Gamma$-convergence, sticky disk potential}

\author{Manuel Friedrich}
\address[Manuel Friedrich]{Applied Mathematics M\"unster, University of M\"unster\\
Einsteinstrasse 62, 48149 M\"unster, Germany.}
\email{manuel.friedrich@uni-muenster.de}

\author{Leonard Kreutz}
\address[Leonard Kreutz]{Applied Mathematics M\"unster, University of M\"unster\\
Einsteinstrasse 62, 48149 M\"unster, Germany.}
\email{lkreutz@uni-muenster.de}

\author{Bernd Schmidt}
\address[Bernd Schmidt]{Institut f{\"u}r Mathematik, Universit{\"a}t Augsburg, 
Universit{\"a}tsstr.\ 14, 86159 Augsburg, Germany.}
\email{bernd.schmidt@math.uni-augsburg.de}

\date{\today}

\begin{abstract}
We investigate the emergence of rigid polycrystalline structures  from atomistic  particle systems. The atomic interaction is governed by  a suitably normalized pair interaction energy, where the `sticky disk'  interaction potential models the atoms as hard spheres that interact when  they  are tangential.  The  discrete energy is frame invariant and no underlying reference  lattice  on the atomistic configurations is assumed. By means of $\Gamma$-convergence, we characterize the asymptotic behavior of  configurations with finite surface energy scaling in the infinite particle limit. The effective continuum theory is described in terms of  a piecewise constant  field  delineating  the local orientation and micro-translation of the configuration. The limiting energy is local and concentrated on the grain boundaries, i.e., on the boundaries of the zones where the underlying microscopic configuration has constant  parameters. The corresponding surface energy density depends on the relative orientation of the two grains, their microscopic translation misfit, and the normal to the interface. We further provide a fine analysis of the surface energies at grain boundaries both for vacuum-solid and solid-solid phase transitions. The latter relies fundamentally on a structure result for grain boundaries showing that due to the extremely brittle setup interpolating boundary layers near cracks are  energetically not favorable. 
\end{abstract}

\subjclass[2010]{74N05, 82B24, 49J45.} 
\maketitle

\section{Introduction}\label{section:introduction}

Most inorganic solids in nature are polycrystals. They are composed of microscopic crystallites (grains) of varying size and orientation in which the atoms are arranged in a periodic, crystalline pattern. In spite of their ubiquity, it remains poorly understood why in these materials such highly regular structures develop at the microscale. The core challenge is to investigate the phenomenon of \emph{crystallization}, i.e., the tendency of atoms to self-assemble into a crystal structure. An ultimate solution would be to understand this as a consequence of the interatomic interactions, where such interactions are determined by the laws of quantum mechanics. 

In view of the current state of research, however, the crystallization question seems out of reach in this generality. It is thus necessary to consider reduced models and to study simplified theories which, however, retain essential features of the interatomic interactions. We follow this route by restricting to zero temperature and by describing our system in the frame of Molecular Mechanics \cite{Molecular, Friesecke-Theil15,Lewars}  as a classical system of particles, whose interaction is given in terms of an empirical pair interaction potential. Moreover, we consider planar rather than three-dimensional models. Given a configuration $X =\{x_1,\ldots,x_N\} \subset \mathbb{R}^2$ consisting of a finite number of particles, their configurational energy $\mathcal{E}(X)$ takes the form 
\begin{align*}
\mathcal{E}(X) = \frac{1}{2}\sum\nolimits_{i \neq j}  V_{\rm pair}  \big(|x_i-x_j|\big),
\end{align*}
where $ V_{\rm pair} \colon [0,+\infty) \to \overline{\mathbb{R}}$ denotes the pair potential. (The factor $1/2$ accounts for double counting.) Such potentials typically are repulsive for close-by atoms while two atoms at larger distances (yet still in their interaction range) exert attractive forces on each other. The latter favors the formation of clusters, whereas the short-range repulsion guarantees that the atoms keep a minimal distance.

Notably, even for commonly used models such as the Lennard-Jones potential, the crystallization problem is still open beyond the one-dimensional setting. (In one dimension, the situation is considerably easier: crystallization at zero temperature for Lennard-Jones interactions is shown in \cite{GardnerRadin:79}. Recent results for positive temperature including an analysis of boundary layers are obtained in \cite{JansenKoenigSchmidtTheil:19a,JansenKoenigSchmidtTheil:19b}. For results on dimers we refer to \cite{Betermin, FriedrichStefanelli}.) First rigorous results for a two-dimensional system have been achieved in \cite{Harborth:74,HeitmannRadin:80,Radin:81}, see also the recent  paper \cite{Lucia}. For the very special choice of the `Heitmann-Radin sticky disk' interaction potential
\begin{align}\label{def:potential}
 V_{\rm sticky} (r) = \begin{cases} +\infty &\text{if } r<1,\\
-1 &\text{if }r=1,\\
0&\text{if } r>1,
\end{cases}
\end{align} 
it has been shown in \cite{HeitmannRadin:80} that ground states, i.e., minimizers under the cardinality constraint $\#X= N$, crystallize: they are subsets of the triangular lattice. The potential $ V_{\rm sticky}$ is pictured schematically in Figure \ref{fig:interactionpotential}.
\begin{figure}[H]
 \includegraphics{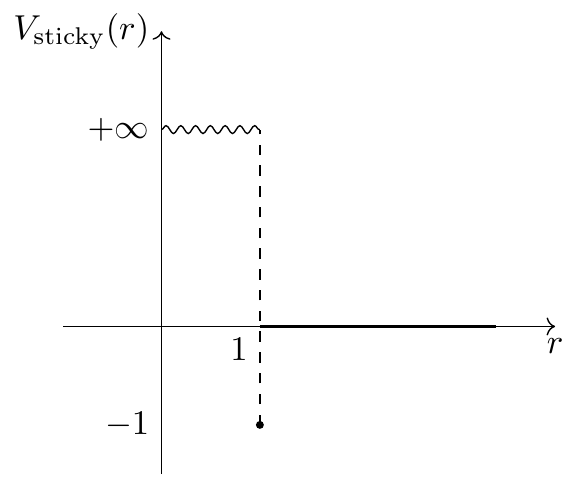}
\caption{The interaction potential $ V_{\rm sticky}$.}
\label{fig:interactionpotential}
\end{figure}
\noindent  On the one hand, it draws its motivation from being the most basic choice of a potential featuring the properties discussed above. On the other hand, it models extremely brittle materials and  might be viewed as an `infinitely brittle' limiting model for more generic interaction potentials, in which the hard core radius, the equlibrium distance, and the interaction range coincide. Slightly more general potentials are discussed in \cite{Radin:81} which, however, do not allow for soft elastic interactions  either. Still only partial results are available for more general potentials or higher dimensions,  see \cite{BlancLewin:15} for a recent survey.  Most noteworthy, \cite{Theil:06,ELi:09} in two and \cite{FlatleyTheil:15} in three dimensions prove that crystalline structures have optimal bulk energy scaling and crystals are ground states subject to their own boundary conditions. Such conditions, however, are insufficient, respectively, prohibitive in view of our goal to investigate the emergence of polycrystals. For this task, it is indispensable to both work at the surface energy scale, which is much finer than the bulk scaling, and to allow for free boundary conditions.

The ground states of sticky disk potentials in two dimensions are by now very well understood, not only on the atomic microscale. In \cite{AuYeungFrieseckeSchmidt:12} their macroscopic shape was identified as being the Wulff shape of an associated crystalline perimeter functional. Fine properties and surface fluctuations were investigated in \cite{Schmidt:13} and quantified in terms of an $N^{3/4}$ law (see the comment below \eqref{eq:energymin}). Sharp constants for this law were then established in \cite{DavoliPiovanoStefanelli:17} and the uniqueness of ground states was characterized in \cite{DeLucaFriesecke:17}. We also mention extensions to other crystals \cite{Mainini,MaininiPiovanoStefanelli:14,DavoliPiovanoStefanelli:16} and  dimers \cite{FriedrichKreutz:19,FriedrichKreutz:20}. By way of contrast, in dimension three or higher the recent results \cite{MaininiPiovanoSchmidtStefanelli:19,CicaleseLeonardi:19,MaininiSchmidt:20} characterize optimal energy configurations within classes of lattices and are in this sense conditional to crystallization. 

The main objective of our contribution is to advance our understanding of (microscopic) crystallization and formation of macroscopic clusters beyond ground states and single crystals. Indeed, all of the aforementioned results ultimately rely on the emergence of a single crystal which is supported on a unique periodic structure. Restricting our analysis to the basic Heitmann-Radin sticky disk potential \eqref{def:potential}, we succeed in deriving a rather complete picture on the formation of general polycrystals by considering the $\Gamma$-limit for the interaction energy in the surface energy regime in the infinite particle limit. (We refer to \cite{Braides:02, DalMaso:93} for an exhaustive treatment of $\Gamma$-convergence.) First  relevant  steps in this direction have been obtained in \cite{DeLucaNovagaPonsiglione:19}, where the authors prove a compactness result for polycrystals and identify the $\Gamma$-limit in the case of a single crystal limiting configuration. In the present work, we prove a full $\Gamma$-convergence result and provide a limiting continuum model consisting of grains that are characterized by a rotation and, in addition, a micro-translation. We also analyze in depth the surface energy of grain boundaries both for vacuum-solid and solid-solid phase transitions.

We proceed to describe our particle model in more detail. The minimal energy of a configuration $X_N = \{x_1,\ldots,x_N\} \subset \mathbb{R}^2$ of $N$ particles has been determined already in \cite{Harborth:74}:  
\begin{align}\label{eq:energymin}
\min\{\mathcal{E}(X_N) \colon \#X_N = N\} = -\lfloor 3N - \sqrt{12N-3}\rfloor \approx -3N +  {\rm O}(\sqrt{N}). 
\end{align}
The leading order term $-3N$ comes from $N - {\rm O}(\sqrt{N})$ atoms in the bulk, each having six neighbors. The lower order term $\sim\sqrt{N}$ is due to missing neighbors of a number ${\rm O}(\sqrt{N})$ of atoms at the boundary and is thus a \emph{surface energy}. (The aforementioned $N^{3/4}$ law quantifies the surprisingly large possible deviations of ground states from the macroscopic Wulff shape which involve a number of $\sim N^{3/4} \gg \sqrt{N}$ particles.) 

As polycrystals  will not be ground states in general, but rather metastable states with surface energy contributions from atoms at individual grain boundaries, we proceed to address the class of all configurations at the finite surface energy scaling, i.e., we consider $X_N \subset \mathbb{R}^2$, $\# X_N = N$, with bounded \emph{normalized energy} \begin{align*}
\frac{\mathcal{E}(X_N)+3N}{\sqrt{N}} = \frac{1}{2\sqrt{N}}\sum\nolimits_{x \in X_N} \Big( 6 +  \sum\nolimits_{y \in X_N \setminus \lbrace x \rbrace}  V_{\rm sticky} \big(|x-y|\big)\Big) 
\end{align*}
as $N \to \infty$. Here, we have subtracted the minimal energy $-3$ per particle times the number of particles and rescaled with $\sqrt{N}$.

The diameter of an $N$-particle configuration $X_N$ with energy given in \eqref{eq:energymin} is $\sim\sqrt{N}$.    To obtain configurations which are contained in a bounded domain, we therefore rescale the configuration by a factor $\eps:=1/\sqrt{N}$, i.e., $X_\varepsilon := \eps X_N$. We then study the asymptotics of the energy $E_\varepsilon(X_\varepsilon)$ where the energy functional $E_\varepsilon$ is defined on finite point sets $X \subset \R^2$ by 
\begin{align}\label{def:energy}
E_\varepsilon(X) = \frac{1}{2}\sum\nolimits_{x \in X}\varepsilon\Big(6+ \sum\nolimits_{y \in X\setminus \lbrace x\rbrace} V_{\rm sticky}\Big( \frac{|x-y|}{\eps} \Big)\Big).
\end{align}
This will allow us to pass to a macroscopic description as $\eps \to 0$. In the following, we consider the energy $E_\varepsilon$ in \eqref{def:energy} \emph{without cardinality constraint} since the energy has already been normalized with respect to the minimal energy per particle.

Our main results are a full $\Gamma$-convergence proof for the functionals $E_\varepsilon$ towards a surface energy functional (Theorem \ref{th: Gamma}) and a detailed analysis of the limiting continuum surface energy density (Proposition \ref{proposition:existence-original} and Theorem \ref{prop: properties of varphi}). We also prove a corresponding compactness result for bounded energy sequences (Theorem \ref{proposition:compactness}), which turns out to be comparatively straightforward. The proofs in fact also provide a rather complete picture of the structure of grain boundaries. We collect these findings of independent interest in Theorem \ref{theorem:grain-boundary}. Our continuum description keeps track not only of the \emph{orientation angles} of various grains but depends additionally on a \emph{micro-translation vector} which in particular measures the translational offset of two lattices with the same orientation. Indeed, the introduction of such an augmented field does not only provide a finer characterization of the continuum limit, but turns out to be crucial when polycrystals with multiple solid-solid grain boundaries are considered. 

The limiting surface energy $\varphi$ is a function of the relative orientation of the two grains, their microscopic translation misfit, and the normal to the interface. For solid-vacuum surfaces it had been identified in \cite{AuYeungFrieseckeSchmidt:12, DeLucaNovagaPonsiglione:19} as the Finsler norm whose unit ball is shaped like a Voronoi cell of the lattice in the solid part. In other words, this is just the surface energy density of the crystal perimeter. For solid-solid interfaces, however, the problem is considerably more subtle as there are atomic interactions across the interface. In softer materials, one expects dislocations to accumulate and elastic strain to concentrate near such grain boundaries. We refer to \cite{ponsiglione, LauteriLuckhaus:16} for recent mathematical developments on substantiating the Read-Shockley formula, see \cite{read1950dislocation}, in such a regime. By way of contrast, within our extremely brittle set-up,   \emph{generically} $\varphi$ turns out to be given by the sum of the solid-vacuum surface energies of the two grains. Here, the term generic refers to the fact that the surface energy may be smaller only for a countable number of mismatch angles between the two lattices, and  corresponding  micro-translations contained in a finite number of spheres. 

We proceed with some comments on the general proof strategy. As it is customary for variational limits with interfacial energies, the density $\varphi$ is expressed in terms of a cell formula minimizing the asymptotic surface energy between two grains separated by a flat grain boundary. In such cell problems, it is instrumental to pass from a mere $L^1$-convergence to fixed boundary values in order to match the $\Gamma$-$\liminf$ and $\Gamma$-$\limsup$ inequalities. Motivated by \cite{barroso.fonseca, fonseca.tartar, sternberg} for vectorial problems in liquid-liquid phase transitions and \cite{conti.schweizer, davoli, kytavsev-ruland-luckhaus2} in solid-solid phase transitions, we use a cut-off construction, the so-called \emph{fundamental estimate}, to replace an asymptotic realization by the exact attainnment of \emph{converging boundary values} in a first step. Here, our extremely brittle set-up on the one hand renders geometric rigidity estimates easier as compared to, e.g., \cite{conti.schweizer, davoli}. On the other hand, it calls for carefully refined cut-off constructions since very small modifications in the configurations may induce a lot of energy. Yet, in contrast to  \cite{conti.schweizer, davoli}, a cell problem with converging boundary data turns out to be insufficient in the presence of \emph{multiple grain boundaries}. Thus, a further step is needed to show that they can be replaced by \emph{fixed boundary values}. Also this passage is subtle due to our rigid set-up which requires a thorough analysis of possible \emph{touching points} of two lattices (points with distance $\varepsilon$).  Finally,   let us also mention  that  related, very general $\Gamma$-convergence results for elastic materials exhibiting discontinuities along surfaces, see e.g.\ \cite{BachBraidesCicalese:20, Caterina, FriedrichSolombrino},  do not apply to our situation.  Most notably, in \cite{FriedrichSolombrino} a model similar to ours featuring rigid grains is considered.  Unfortunately, these results cannot be used in our setting as they  fundamentally rely on continuous surface interactions.

At the core of our proofs, there are two key steps to which we devote Sections \ref{sec:Reduction-two-lattices} and \ref{sec: Part II}, respectively. Firstly, Lemma \ref{lemma:reduction} allows to reduce the cell formula to two lattices only. An expanded version of this observation is detailed in Theorem \ref{theorem:grain-boundary}. It shows that in our brittle set-up there are no interpolating boundary layers at interfaces.  This is done by employing techniques from  graph theory  in order to exclude inclusions of grains of different orientation as the prescribed boundary datum. The basic idea behind its proof is that to each admissible configuration one can associate its bond-graph and for this graph such inclusions induce non-triangular faces which in turn lead to fewer bonds than a competitor without such inclusions. This can be quantified via the \emph{face defect}, see definition \eqref{def:face defect}. Once established, this in particular results in a largely simplified analysis of the interaction energy with vacuum as compared to \cite{DeLucaNovagaPonsiglione:19}, see Lemma~\ref{lemma:vacuumirrational}. More importantly, it is crucial for the second main ingredient of the proof: the quantification of solid-solid interactions with the help of Lemma \ref{lemma: touching},  which clarifies when the surface energy can be smaller than twice the interaction energy with vacuum and plays a pivotal role in order to show that converging boundary values can be replaced by fixed ones. This can be understood as a rigidity theorem for the mismatch-angle between two grains: the generically expected interaction energy can  exceed the grain boundary energy only for finitely many mismatch angles depending on the excess. Its proof relies on the fact that such an energy gap can only occur if the two lattices have many touching points (points with distance $\varepsilon$). This entails that the touching points of the two lattices have to be rather equi-distributed along the interface. This, however, can only happen in a periodic landscape, which reduces the possible mismatch-angle to a finite set. Many  further ingredients of our proofs \EEE are more standard (blow-up, density arguments, fundamental estimate, \ldots), but technically challenging in our case since the energy is very rigid and thus very sensitive to small changes of the configuration. 	

The paper is organized as follows. In Section \ref{section:setting} we introduce the model and present the main results. Section \ref{sec: proof of main results} is devoted to the proofs of compactness and $\Gamma$-convergence. They fundamentally rely on a fine characterization of the surface energy density whose proof is postponed to Sections~\ref{section:surfacetension}--\ref{section:surfacetension2}. In Section \ref{section:surfacetension} we address the fundamental estimate and in Section \ref{section:surfacetension2} we show that converging boundary values can be replaced by fixed ones. Sections \ref{sec:Reduction-two-lattices} and \ref{sec: Part II} are devoted to the reduction of the cell formula to two lattices only and to the characterization of solid-vacuum/solid-solid interactions at grain boundaries, respectively.

\section{Setting of the problem and main results}\label{section:setting}

In this section we introduce our model, give basic definitions, and present our main results.

\subsection{Configurations and  atomistic energy}\label{subsection:energy}

In the following we always assume that $X$  is  a finite subset of $\mathbb{R}^2$. We  denote by $ V_{\rm sticky}  \colon [0,+\infty) \to \overline{\mathbb{R}}$ the Heitmann-Radin potential defined in \eqref{def:potential}, see Figure \ref{fig:interactionpotential}.  By $\eps>0$ we denote the \emph{atomic spacing}.   The  \emph{normalized atomistic energy} $E_\varepsilon$ of a given configuration  $X$ is given by \eqref{def:energy}. The notion \emph{normalized} has been explained in the  introduction and is chosen in such a way that an infinite triangular lattice with spacing $\eps$ has energy zero. Equivalently, the energy can be expressed in terms of the neighborhoods of the atoms. To this end, we introduce  the \textit{neighborhood} of $x \in X$ by
\begin{align}\label{def:neighbourhood}
\mathcal{N}_\varepsilon(x) =\{y\in X : |x-y| = \varepsilon\}.
\end{align}
If $\varepsilon=1$, we omit the subscript $\varepsilon$ and just write $\mathcal{N}(x)$ for simplicity. In view of $ V_{\rm sticky} (r) = \infty$ for $r \in (0,1)$, an elementary geometric argument shows that for configurations $X$ with $E_\varepsilon(X) < +\infty$ there holds 
\begin{align}\label{eq: neighborhood bound}
\# \mathcal{N}_\eps(x) \le 6  \ \ \ \text{for all $x \in X$.}
\end{align}  
In particular, if $\# \mathcal{N}_\eps(x) = 6$, the neighbors form a regular hexagon with center $x$ and diameter $2\eps$. By (\ref{def:potential}) and (\ref{def:energy}) we can now rewrite the energy as 
\begin{align*}
E_\varepsilon(X)= \frac{1}{2}\sum\nolimits_{x\in X} \varepsilon\big(6-\#\mathcal{N}_\varepsilon(x)\big).
\end{align*}
Additionally, for  $X \subset \mathbb{R}^2$ and Borel sets $B \subset \mathbb{R}^2$,  we define a localized version of the energy by 
\begin{align}\label{def:energyneighbourhood}
E_\varepsilon(X,B)= \frac{1}{2}\sum\nolimits_{x\in X\cap B} \varepsilon\big(6-\#\mathcal{N}_\varepsilon(x)\big).
\end{align}

\subsection{Basic definitions}\label{subsection:definitions}
This subsection is devoted to basic notions which we will use throughout the paper.

\noindent \textbf{Notation.} We let $\mathbb{S}^1 = \lbrace x\in\mathbb{R}^2 \colon |x| = 1 \rbrace$.  Given $\nu \in \mathbb{S}^1$, we denote by  $\nu^\bot \in \mathbb{S}^1$  the unit vector obtained by rotating $\nu$ by $\pi/2$ in a clockwise sense.    The scalar product between two vectors $x,y \in \R^2$ is denoted by $\langle x,y \rangle$. Without further notice, we sometimes identify vectors $x \in \R^2$ with elements of $\C$. In particular, we identify rotations  in the plane with  a multiplication with a unit vector in $\mathbb{C}$: namely, the rotation of  $x \in \R^2$ by an angle $\theta \in [0,2\pi)$ is indicated by $e^{i\theta} x$. For $ t\in \R$, we write $\lfloor t \rfloor = \max \lbrace k\in \Z \colon k \le t\rbrace$ and $\lceil t \rceil = \min \lbrace k\in \Z \colon k \ge t\rbrace$.

We denote by $\mathcal{L}^2$ and $\mathcal{H}^1$  the  two-dimensional Lebesgue measure and the one-dimensional Hausdorff measure, respectively.  We write $\chi_E$ for the characteristic function of any $E\subset \R^2$, which is 1 on $E$ and 0 otherwise.  If $E$ is a set of finite perimeter, we denote its \emph{essential boundary} by $\partial^* E$,  see \cite[Definition 3.60]{AFP}.  For $r>0$ and $x \in \mathbb{R}^2$, we denote by $B_r(x)$ the open ball of radius $r$ centered in $x$. For simplicity, we write $B_r$ if $x=0$. Given $A \subset \mathbb{R}^2$, $\tau \in \mathbb{R}^2$, and $\lambda \in \mathbb{R}$, we define
\begin{align}\label{eq: basic set def}
A+ \tau = \{x+\tau : x \in A \},\quad \lambda A = \{\lambda x : x \in A\} \text{ and } (A)_\eps =\{x+y \colon \, x \in A, y \in B_\varepsilon\}.
\end{align}
For $x_1,x_2 \in \mathbb{R}^2$, we define the \emph{line segment between $x_1$ and $x_2$}  by
\begin{align}\label{def:line segment}
[x_1;x_2] =\big\{\lambda x_1 + (1-\lambda) x_2 : \lambda \in [0,1]\big\}.
\end{align}
By $Q^\nu = \lbrace y \in \R^2\colon   -\frac{1}{2} \le \langle y,\nu \rangle < \frac{1}{2}, -\frac{1}{2} \le \langle y,\nu^\bot \rangle < \frac{1}{2} \rbrace$ we denote the half-open unit cube in $\mathbb{R}^2$ with center zero and two sides parallel to $\nu \in \mathbb{S}^1$.  Moreover, we define  the half-cubes
\begin{align}\label{eq: plus-minus}
Q^{\nu,\pm} = \lbrace y \in  Q^{\nu}  \colon \pm  \langle\nu, y\rangle \geq 0\rbrace.
\end{align}
 Here and in the following, we will frequently use the notation $\pm$ to indicate that a property holds for both signs $+$ and $-$.  In a similar fashion, for $x \in \mathbb{R}^2$ and $\rho >0$ we define $Q^\nu_\rho(x) := x + \rho Q^\nu$ and $Q^{\nu,\pm}_\rho(x) := x + \rho Q^{\nu,\pm}$.  For $\rho= 1$, we write $Q^\nu(x)$ instead of $Q^\nu_1(x)$ for simplicity.   For $\varepsilon >0$ and $Q^\nu_\rho(x)$ we introduce the notation of \emph{boundary regions}
\begin{align}\label{eq: eps-rand}
\partial^\pm_\varepsilon  Q^\nu_\rho(x)  = x + \left\{y \in \overline{Q^\nu_{\rho + 10\eps} \setminus Q^\nu_{\rho - 10\eps}} \colon \pm \langle \nu,  y \rangle  \geq 5 \varepsilon\right\},
\end{align}
see also Figure \ref{fig:cellformula} below for an illustration. For $\rho= 1$, we write $\partial^\pm_\varepsilon Q^\nu(x)$ instead of $\partial^\pm_\varepsilon Q^\nu_\rho(x)$.

\noindent\textbf{The triangular lattice.} We define the \emph{triangular lattice} as the set of points given by
\begin{align*}
\mathscr{L}:=\left\{ p+q\omega : p,q\in \mathbb{Z}  \right\},
\end{align*}
where $\omega := \frac{1}{2}+\frac{i}{2}\sqrt{3} \in \C$.

\noindent\textbf{The set of lattice isometries.} We denote by $\mathbb{A}$  the  \emph{set of rotations}  by angles  in $[0,\frac{\pi}{3})$  equipped with the metric of the $1$-dimensional torus, i.e., $\mathbb{A} = \mathbb{R} / \frac{\pi}{3} \mathbb{Z}$. In a similar fashion, we introduce the  \emph{set of translations} $\mathbb{T}= \mathbb{R}^2 / \mathscr{L} = \mathbb{C} / \mathscr{L}$.  We observe that each translation $\tau \in \mathbb{T}$ can be represented by a vector in    
\begin{align}\label{eq: tautautau}
\{\lambda_1 + \lambda_2 \omega \colon 0 \leq \lambda_1 <1,0 \leq \lambda_2<1 \}.
\end{align}
We introduce  the \emph{set of lattice isometries}  by 
\begin{align}\label{eq: state space}
\mathcal{Z}:= \big(\mathbb{A}\times \mathbb{T} \times \{1\}\big) \cup  \{\mathbf{0}\}, 
\end{align}
where for each $\theta \in \mathbb{A}$ and $\tau \in \mathbb{T}$ the triple  $z = (\theta,\tau,1) \in \mathcal{Z}$ represents the  \emph{rotated and translated lattice} 
\begin{align*}
\mathscr{L}(z) =  \mathscr{L}(\theta,\tau,1) := e^{i\theta} (\mathscr{L} + \tau).
\end{align*}
Here, the entry $1$ encodes that a lattice is present. On the contrary,  $\mathbf{0}=(0,0,0) \in \mathbb{A}\times \mathbb{T} \times \{0\} $ represents the empty set, also referred to as  \emph{vacuum} in the following. We set
\begin{align*}
\mathscr{L}(\mathbf{0}) = \emptyset.
\end{align*}
 Note that $\mathbb{A} \simeq \mathbb{S}^1 $ and $\mathbb{T} \simeq \mathbb{S}^1\times \mathbb{S}^1$. Therefore,  the three-dimensional set  $\mathcal{Z}$ can naturally be embedded into $\mathbb{R}^7$. We endow $\mathcal{Z}$ with the product topology, i.e., $z_j=(\theta_j,\tau_j,1) \to z=(\theta,\tau,1)$ if and only if $\theta_j \to \theta$ in $\mathbb{A}$ and $\tau_j\to \tau$ in $\mathbb{T}$.  Moreover, $z_j \to \mathbf{0}$ if and only if $z_j = \mathbf{0}$ for all $j$ large enough.    For a set $A \subset \mathbb{R}^2$, $z \in \mathcal{Z}$, and a configuration $X$ with $E_\varepsilon(X) <+\infty$, we say that $X$ \emph{coincides with the lattice} $ \eps \mathscr{L}(z)$ on $A$, written  $X= \eps  \mathscr{L}(z)$ on $A$, if
\begin{align}\label{eq: coincidence with lattice}
X \cap A =  (\varepsilon\mathscr{L}(z)) \cap A. 
\end{align}

\noindent\textbf{The state space.}  For $A \subset \mathbb{R}^2$, we introduce the space of \emph{piecewise constant functions} $PC(A;\mathcal{Z})$ with values in $\mathcal{Z}$ as functions of the form
\begin{align}\label{eq: PC def}
u = \sum\nolimits_{j=1}^\infty  \chi_{G_j} z_j,
\end{align}
where $\lbrace z_j \rbrace_j \subset \mathcal{Z} \setminus \lbrace \mathbf{0} \rbrace$  are pairwise distinct and $ G_j  \subset A$ are pairwise disjoint sets satisfying  $\mathcal{L}^2\big(\bigcup\nolimits_{j=1}^\infty G_j\big) < \infty$ and 
\begin{align}\label{eq: boundedness cond for jump}
\sum\nolimits_{j=1}^\infty \mathcal{H}^1(\partial^* G_j) < + \infty.
\end{align}
Here, $\lbrace G_j\rbrace_j$ represent the \emph{grains} of the polycrystal and $\lbrace z_j\rbrace_j$ the corresponding \emph{orientation and translation} of the lattice. 
We remark that this space can be identified with
\begin{align}\label{eq: PC-def}
PC(A;\mathcal{Z})=\big\{u \in SBV(A;\mathcal{Z})\colon \,  \nabla u =0,  \, \mathcal{L}^2(\lbrace u \neq \mathbf{0} \rbrace) < + \infty, \,   \mathcal{H}^1(J_u) <+\infty   \big\}. 
\end{align}
 Here, $u$ is a function in $  SBV(A;\mathcal{Z})$ in the sense that $u \in SBV(A;\mathbb{R}^7)$ and $u$ takes values in $\mathcal{Z}$. The jump set of $u$ is denoted by $J_u$.  The one-sided limits of $u$ at a jump point will be indicated by $u^+$ and $u^-$ in the following, and the normal will be denoted by $\nu_u$.  We refer to \cite[Definition 4.21]{AFP} for details on this space.  In a similar fashion, we say $u \in PC_{\rm loc}(\mathbb{R}^2;\mathcal{Z})$ if $u|_A \in PC(A;\mathcal{Z})$ for all compact sets $A \subset \mathbb{R}^2$.

\noindent \textbf{Identification of configurations with piecewise constant functions.}  We now relate atomistic configurations $X$ to the state space defined above.  Consider   $x \in X\cap \mathscr{L}$ such that $\mathcal{N}(x) \subset \mathscr{L}$. Then,  we define the open \textit{lattice  Voronoi cell} of $x$ by
\begin{align}\label{eq: Voronoi}
V(x)= x        +  \frac{1}{\sqrt{3}}  e^{i\pi/6} \, {\rm int}  \big(\mathrm{conv} \{\pm 1,\pm \omega,\pm \omega^2\}\big),
\end{align}
where $\mathrm{conv}\lbrace \cdot \rbrace$ denotes the convex hull of a point set, and int the interior. In a similar fashion, if $x$ and  the points in its neighborhood  $\mathcal{N}_\varepsilon(x)$ lie in a scaled rotated and translated lattice  $\eps \mathscr{L}(z)$, for  $\eps>0$ and $z = (\theta,\tau,1) \in \mathcal{Z}$,  we define $V^z_\varepsilon(x) =  x +    e^{i\theta} \eps V(0)$.  We also point out the implicit dependence on $\tau$ here, since $x = e^{i\theta}(v +\tau)$ for some $v\in \mathscr{L}$.

Given a configuration $X$ with $E_\varepsilon(X) < +\infty$, we now identify $X$ with a suitable function $u\in PC(\mathbb{R}^2;\mathcal{Z})$. Since $E(X) < +\infty$, we have $\#\mathcal{N}_\varepsilon(x) \leq 6$ for all $x \in X$ with equality only if $\lbrace x\rbrace \cup \mathcal{N}_\varepsilon(x) \subset  e^{i \theta(x)}\eps(\mathscr{L} + \tau(x))$ for a unique pair  $(\theta(x),\tau(x))\in \mathbb{A} \times \mathbb{T}$.  We set
\begin{align*}
z(x) = { \big(\theta(x),\tau(x),1\big)  } \in \mathcal{Z} \ \ \  \text{ for all $x \in X$ with $\# \mathcal{N}_\eps(x) = 6$}
\end{align*}
and define $u_\varepsilon^X\colon \mathbb{R}^2\to \mathcal{Z}$ by
\begin{align}\label{def:u}
u_\varepsilon^X(x) := \begin{cases}
z(x) \text{ on $V^{z(x)}_\varepsilon(x)$} & \text{ if } x \in X\text{ with } \#\mathcal{N}_\varepsilon(x)=6 ,\\
 \mathbf{0}  & \text{ else.}
\end{cases} 
\end{align} 
In the following, if no confusion may arise, we write $u_\varepsilon$ instead of $u_\varepsilon^X$.  We note  that this definition is well  posed  in the sense that $V^{z(x_1)}_\varepsilon(x_1) \cap  V^{z(x_2)}_\varepsilon(x_2) = \emptyset$ for all $x_1,x_2 \in X$, $x_1 \neq x_2$, with $\#\mathcal{N}_\varepsilon(x_1) = \#\mathcal{N}_\varepsilon(x_2) = 6$.   In fact, if this were not the case,  one of the six atoms in $\mathcal{N}_\varepsilon(x_1)$  (forming a regular hexagon on $\partial B_\varepsilon(x_1)$)  would have  distance smaller than $1$ to $x_2$. This contradicts $E_\varepsilon(X) < + \infty$.  Clearly,  $u_\eps$  as defined in \eqref{def:u} lies in $PC(\mathbb{R}^2; \mathcal{Z})$.  

The  function  $u_\varepsilon$ for some finite energy configuration   $X$  is illustrated in Figure \ref{fig:interpolation}.  We point out that the translation $\tau(x)$ induces a shift of the Voronoi cells by the vector $\eps e^{i\theta(x)} \tau(x)$. This is the reason why we call the variable $\tau$ a \emph{micro-translation}. 
\begin{figure}[H]
 \includegraphics{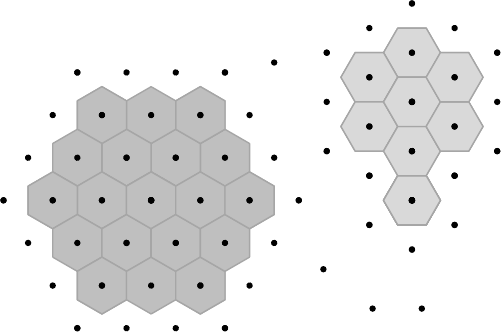}

\caption{A function $u_\eps$ defined in \eqref{def:u}: the different regions $\{u=z\}$ with $z \neq \mathbf{0}$ (here illustrated in different shades of gray) are made of unions of regular hexagons. The complement of those regions is the set $\{u=\mathbf{0}\}$.}
\label{fig:interpolation}

\end{figure}

\noindent\textbf{Convergence:} Let $\{X_\varepsilon\}_\varepsilon$ be a sequence of configurations. We say that $X_\varepsilon \to u$ in $L^1_{\mathrm{loc}}(\mathbb{R}^2)$ if $u_\varepsilon \to u$ in $L^1_{\mathrm{loc}}(\mathbb{R}^2;\mathcal{Z})$, where $u_\varepsilon$ is given by (\ref{def:u}) for $X_\varepsilon$.

\subsection{Main results} \label{subsection:limitfunctional}

We now formulate our main results. We start with a compactness result for sequences of configurations with bounded energy. Recall the definition for convergence of configurations in Subsection \ref{subsection:definitions}.

\begin{theorem}[Compactness]\label{proposition:compactness} Let  $\{X_\varepsilon\}_\varepsilon$ be a sequence of configurations with 
$$\sup\nolimits_{\varepsilon >0}  E_\varepsilon(X_\varepsilon) <+\infty.$$
 Then, there exists a subsequence $ \{\varepsilon_k\}_{k\in \mathbb{N}}$ with  $\varepsilon_k \to 0$ and a function $u \in PC(\mathbb{R}^2;\mathcal{Z})$ such that     $X_{\varepsilon_k} \to u$ in $L^1_{\rm loc}(\mathbb{R}^2)$ as $k \to +\infty$.
\end{theorem}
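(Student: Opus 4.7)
The plan is to identify each $X_\varepsilon$ with the function $u_\varepsilon := u_\varepsilon^{X_\varepsilon} \in PC(\mathbb{R}^2;\mathcal{Z})$ defined in \eqref{def:u}, to derive uniform bounds on $u_\varepsilon$ in $BV(\mathbb{R}^2;\mathbb{R}^7)$ from the energy, and to extract a convergent subsequence via standard $BV$-compactness. The central task is the perimeter estimate $\mathcal{H}^1(J_{u_\varepsilon}) \le C\, E_\varepsilon(X_\varepsilon)$.

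First I would reduce this to counting \emph{defective} atoms, i.e., those $x \in X_\varepsilon$ with $\#\mathcal{N}_\varepsilon(x) \le 5$. Each such atom contributes at least $\varepsilon/2$ to $E_\varepsilon(X_\varepsilon)$, so their number is at most $2\varepsilon^{-1} E_\varepsilon(X_\varepsilon)$. The crucial geometric observation is that if $x_1, x_2 \in X_\varepsilon$ both satisfy $\#\mathcal{N}_\varepsilon(x_i) = 6$ and $z(x_1) \ne z(x_2)$, then $x_2 \notin \mathcal{N}_\varepsilon(x_1)$ (by the rigidity argument in the paragraph preceding \eqref{def:u}), and applying the hard-core constraint $|y - x_2| \ge \varepsilon$ to the neighbor $y$ of $x_1$ whose direction from $x_1$ is closest to that of $x_2$ (deviation at most $\pi/6$) forces $|x_1 - x_2| \ge 2\cos(\pi/6)\,\varepsilon = \sqrt{3}\,\varepsilon$. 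Since each cell $V^z_\varepsilon(\cdot)$ has circumradius $\varepsilon/\sqrt{3} < \sqrt{3}\,\varepsilon/2$, the cells of such $x_1, x_2$ have disjoint closures, so $u_\varepsilon$ admits no direct jumps between two distinct non-$\mathbf{0}$ values and $J_{u_\varepsilon} = \partial^*\{u_\varepsilon \ne \mathbf{0}\}$. Consequently, a boundary edge of $V^{z(x)}_\varepsilon(x)$ for a perfect atom $x$ can only face a defective neighbor (perfect neighbors share the same $z$ and tile across it), yielding
\[
\mathcal{H}^1(J_{u_\varepsilon}) \le 6 \cdot \tfrac{\varepsilon}{\sqrt{3}} \cdot \#\{\text{defective atoms}\} \le C\, E_\varepsilon(X_\varepsilon).
\]

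The planar isoperimetric inequality then gives $\mathcal{L}^2(\{u_\varepsilon \ne \mathbf{0}\}) \le C\, E_\varepsilon(X_\varepsilon)^2$, and compactness of $\mathcal{Z} \subset \mathbb{R}^7$ makes $\|u_\varepsilon\|_{L^1(\mathbb{R}^2)}$ and $|Du_\varepsilon|(\mathbb{R}^2) \le \operatorname{diam}(\mathcal{Z})\cdot \mathcal{H}^1(J_{u_\varepsilon})$ uniformly bounded in $\varepsilon$. Standard $BV$-compactness then yields a subsequence $u_{\varepsilon_k} \to u$ in $L^1_{\rm loc}(\mathbb{R}^2;\mathbb{R}^7)$ for some $u \in BV$. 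Since $\mathcal{Z}$ is closed in $\mathbb{R}^7$, $u$ takes values in $\mathcal{Z}$ a.e.; Ambrosio's $SBV$-closure theorem combined with $\nabla u_{\varepsilon_k} \equiv 0$ gives $u \in SBV$ with $\nabla u = 0$, and lower semicontinuity yields $\mathcal{H}^1(J_u) < \infty$ and $\mathcal{L}^2(\{u \ne \mathbf{0}\}) < \infty$. This places $u$ in $PC(\mathbb{R}^2;\mathcal{Z})$ through the characterization \eqref{eq: PC-def}. The delicate point is the geometric separation $|x_1 - x_2| \ge \sqrt{3}\,\varepsilon$ between perfect atoms of distinct $z$; once this rules out solid--solid contact at the level of Voronoi cells, the perimeter bound reduces to the elementary defective-atom count and the remainder is a routine application of the isoperimetric inequality and $BV$-compactness.
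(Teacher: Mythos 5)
Your proposal is correct and follows essentially the same route as the paper: the key coercivity estimate $\mathcal{H}^1(J_{u_\varepsilon})\le C E_\varepsilon(X_\varepsilon)$ is obtained, exactly as in Proposition \ref{proposition:coerc}, by showing that every boundary edge of a Voronoi hexagon faces an atom with fewer than six neighbors and counting such defective atoms, after which compactness for piecewise constant/$SBV$ functions and the isoperimetric inequality finish the argument. Your quantitative separation $|x_1-x_2|\ge\sqrt{3}\,\varepsilon$ between perfect atoms with distinct $z$ is a correct (and slightly sharper) version of the cell-disjointness observation the paper makes below \eqref{def:u}, and your use of global $BV$-compactness plus the $SBV$ closure theorem in place of the paper's localized appeal to the compactness theorem for piecewise constant functions on balls $B_r$ with a diagonal argument is an inessential variation.
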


For  $\varepsilon >0$ and $\nu \in \mathbb{S}^1$, recall the definition of $\partial^\pm_\varepsilon Q^\nu_\rho$ in \eqref{eq: eps-rand}. Recall also the coincidence with a lattice in \eqref{eq: coincidence with lattice}. The following proposition introduces the density $\varphi \colon \mathcal{Z} \times \mathcal{Z} \times\mathbb{S}^1\to [0,+\infty)$  which appears in our continuum limiting functional, see Figure \ref{fig:cellformula} for an illustration.

\begin{proposition}[Density]\label{proposition:existence-original} 
For every $z^+,z^- \in \mathcal{Z}$,   $\nu \in \mathbb{S}^1$,  $x_0 \in \R^2$, and $\rho>0$ there exists
\begin{align}\label{def:varphi}
\varphi(z^+,z^-,\nu)= \lim_{\varepsilon \to 0} \frac{1}{\rho}\min\Big\{E_\varepsilon \big(X,Q^\nu_\rho(x_0)\big)\colon \,    X =  \varepsilon\mathscr{L}(z^\pm) \text{ \rm on } \partial_\varepsilon^\pm Q^\nu_\rho(x_0) \Big\},
\end{align}
and is independent of $x_0$ and $\rho$.
\end{proposition}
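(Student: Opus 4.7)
The statement is a standard cell-formula existence result in the theory of $\Gamma$-convergence for interfacial energies, tailored here to a rigid atomistic setting. My plan rests on three ingredients: attainment together with a scaling reduction, a tiling (sub-additivity) construction, and translation invariance---the last being the main obstacle.

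First I would set $m_\varepsilon(\rho,x_0)$ to be the minimum on the right-hand side of \eqref{def:varphi}. Attainment is immediate: the admissible class is non-empty (take $\varepsilon\mathscr{L}(z^\pm)$ on the two half-cubes $Q^{\nu,\pm}_\rho(x_0)$ and delete a row of atoms around $\{\langle y-x_0,\nu\rangle=0\}$ to enforce finite energy), and since $V_{\rm sticky}$ forces all pairwise distances to be at least $\varepsilon$, only finitely many competitors exist. A direct rescaling $X\mapsto X/\rho$ applied to \eqref{def:energyneighbourhood} and \eqref{eq: eps-rand} yields the exact scaling identity
\[
\tfrac{1}{\rho}\,m_\varepsilon(\rho,x_0) \;=\; m_{\varepsilon/\rho}\bigl(1,\,x_0/\rho\bigr),
\]
which reduces the proposition to showing that $\lim_{\delta\to 0} m_\delta(1,y)$ exists and does not depend on $y\in\R^2$.

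Next I would run a tiling argument for sub-additivity. For $n\in\N$, pave the central horizontal slab of $Q^\nu_n(0)$ by $n$ cubes $Q^\nu_1(x_k)$ arranged along $\nu^\perp$, place a near-minimizer of $m_\varepsilon(1,x_k)$ in each, and fill the complementary top and bottom strips with the pure lattices $\varepsilon\mathscr{L}(z^\pm)$. The key compatibility is built in: each small cube's boundary datum coincides with the surrounding pure-lattice strip, so horizontal seams cost nothing; only the $10\varepsilon$-thin unconstrained strip around $\{\langle y,\nu\rangle=0\}$ requires reconciliation across the $n-1$ vertical seams, totalling $O(n\varepsilon)$. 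Combined with the scaling identity this yields
\[
m_{\varepsilon/n}(1,0) \;\le\; \frac{1}{n}\sum_{k=1}^n m_\varepsilon(1,x_k) + O(\varepsilon),
\]
and, together with a matching super-additive bound obtained by slicing a near-minimizer for $m_\varepsilon(n,0)$ into sub-cubes and enforcing the per-slice boundary data via a cut-off, this pins down $\limsup_\delta m_\delta = \liminf_\delta m_\delta$---provided one can absorb the $x_k$-dependence on the right.

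The main obstacle is exactly this last point, i.e., translation invariance: $|m_\delta(1,y)-m_\delta(1,y')|\to 0$ as $\delta\to 0$. The obstruction is that a naive shift of a near-minimizer by $y-y'$ sends the rigid boundary datum $\delta\mathscr{L}(z^\pm)$ to the generally distinct shifted lattice $\delta\mathscr{L}(z^\pm)+(y-y')$, and since $\delta\mathscr{L}(z^+)$ and $\delta\mathscr{L}(z^-)$ share no common translation sublattice for generic angles $\theta^\pm$, directly overwriting the boundary annulus to restore the required lattice costs $O(1)$ in energy---of the same order as $m_\delta(1,y)$ itself. The resolution I would adopt, and indeed the one developed later in the paper via the \emph{fundamental estimate} of Section~\ref{section:surfacetension} together with the rigidity analysis of Sections~\ref{sec:Reduction-two-lattices}--\ref{sec: Part II}, is a cut-off construction: insert a narrow vacuum crack just inside the boundary annulus and glue the translated near-minimizer to the fixed pure lattices across the crack, paying only a grain-boundary-type surface cost. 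The result of Section~\ref{section:surfacetension2} which replaces converging boundary data by fixed ones then renders these corrections infinitesimal in the limit, and combined with the tiling argument identifies the common value $\varphi(z^+,z^-,\nu)$ asserted in \eqref{def:varphi}.
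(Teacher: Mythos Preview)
Your scaling reduction and the sub-additivity tiling are correct and match the paper's Proposition~\ref{proposition:existence} (the super-additive half is not needed: the paper simply shows $\limsup \le \bar\varphi$ with $\bar\varphi$ already defined as a $\liminf$). The gap is in your resolution of translation invariance.

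Your vacuum-crack-plus-glue costs $O(1)$ on the unit cube---of the same order as $m_\delta(1,y)$ itself---so it is \emph{not} rendered infinitesimal. And invoking Lemma~\ref{lemma: calculation} (converging boundary data $\Rightarrow$ fixed) does not help: shifting a near-minimizer by $y-y'$ produces boundary data $\delta\mathscr{L}(\theta^\pm,\tau^\pm + e^{-i\theta^\pm}(y-y')/\delta,1)$, whose translation component \emph{oscillates} as $\delta\to 0$ and does not converge to $\tau^\pm$ (the paper points this out explicitly in the remark following Proposition~\ref{prop, psi}). So the hypothesis $z^\pm_\delta\to z^\pm$ of $\Phi$ fails and the machinery of Section~\ref{section:surfacetension2} does not apply in the way you suggest.

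The paper's actual mechanism is a dichotomy coming from the touching lemma (Lemma~\ref{lemma: touching}), encoded in Proposition~\ref{proposition:xindependence}. If $\theta^+-\theta^-\notin\mathcal{G}_{\mathbb{A}}$, then Lemma~\ref{lemma: touching} forces $\bar\varphi=\varphi_{\rm hex}(e^{-i\theta^+}\nu)+\varphi_{\rm hex}(e^{-i\theta^-}\nu)$, and Lemma~\ref{lemma:vacuumirrational}(ii) bounds the minimum by this quantity plus $C/T$ \emph{uniformly in the centre}---so centre-independence is trivial in the generic case. If $\theta^+-\theta^-\in\mathcal{G}_{\mathbb{A}}$, the two rotated lattices \emph{do} share a common sublattice, the coincidence site lattice $e^{i\theta^+}\mathscr{L}\cap e^{i\theta^-}\mathscr{L}$, with a bounded fundamental parallelogram~\eqref{eq: fundamental parallelogram}; one then shifts the near-minimizer by a coincidence vector to land within bounded distance of the target centre, and glues across an $O(1)$-area vacuum set. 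In short: your diagnosis ``no common sublattice for generic angles'' is right, but in the generic case the problem is handled by identifying the energy explicitly, not by cracking; cracking-and-shifting is used precisely in the non-generic case, where a common sublattice exists after all.
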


\begin{figure}[H]
\centering
 \includegraphics{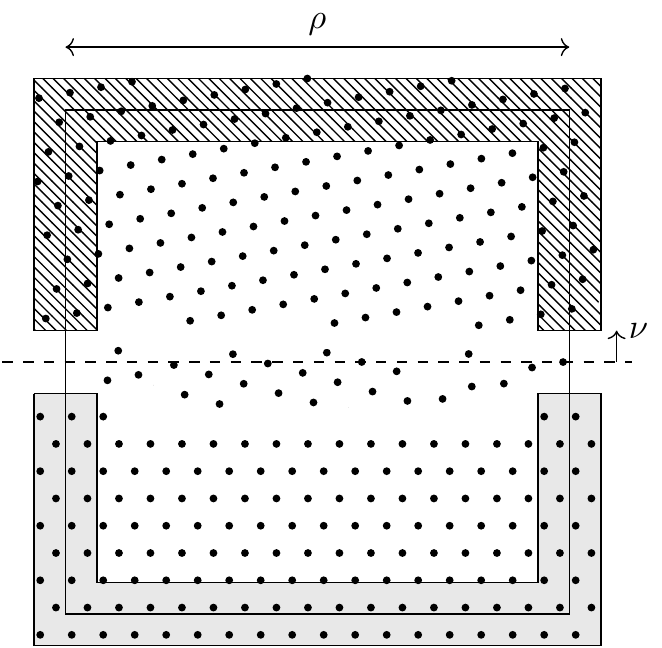}

\caption{Illustration of a competitor for the cell-problem on $Q^\nu_\rho$ in the definition of $\varphi$. On the  light gray hatched and dark gray regions  we have $X=\varepsilon\mathscr{L}(z^\pm)$, respectively.  We point out that the competitor is prescribed in a small neighborhood $ \partial_\varepsilon^-  Q^\nu_\rho \cup  \partial_\varepsilon^+  Q^\nu_\rho$ \emph{both} inside \emph{and} outside of the cube.  (The thickness of the neighborhood is larger than the lattice spacing, see \eqref{eq: eps-rand}. Here, for illustration purposes, it is drawn with thickness $2\varepsilon$ instead of $10\varepsilon$.)}
\label{fig:cellformula}
\end{figure}

The limiting functional  $E \colon PC(\mathbb{R}^2;\mathcal{Z}) \to [0,+\infty)$ is defined by 
\begin{align}\label{eq: limiting func}
E(u) = \int_{J_u} \varphi(u^+(x),u^-(x),\nu_u(x))\,\mathrm{d}\mathcal{H}^1(x). 
\end{align} 
In view of \eqref{eq: PC-def}, functions in  $PC(\R^2;\mathcal{Z})$ lie in $SBV$, and therefore $u^+$, $u^-$, and $\nu_u$ are well defined. The following statement shows that $E$ can be interpreted as the effective limit of the atomistic energies $E_\varepsilon$ in the sense of $\Gamma$-convergence.

\begin{theorem}[$\Gamma$-convergence]\label{th: Gamma}
There holds $E = \Gamma(L^1_{\rm loc})\text{-}\lim_{\varepsilon \to 0} E_\varepsilon$, more precisely:
 
\noindent {\rm (i)} ($\Gamma$-liminf inequality) For each  $u \in PC(\mathbb{R}^2;\mathcal{Z})$  and each sequence $\{X_{\varepsilon}\}_{\eps}$ with $X_{\varepsilon} \to u$ in $L^1_{\rm loc}(\mathbb{R}^2)$ there holds
$$\liminf_{\eps \to 0} E_{\varepsilon}(X_{\varepsilon}) \ge E(u).$$
\noindent {\rm (ii)} ($\Gamma$-limsup inequality) For each $u \in PC(\mathbb{R}^2;\mathcal{Z})$  we find configurations $\lbrace X_{\varepsilon} \rbrace_\eps$ such that $X_{\varepsilon} \to u$ in $L^1_{\rm loc}(\mathbb{R}^2)$ and 
$$\lim_{\eps \to 0} E_{\varepsilon}(X_{\varepsilon}) = E(u).$$ 
\end{theorem}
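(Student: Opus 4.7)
The plan is to prove the two inequalities separately, using the blow-up method of Fonseca--M\"uller for the $\Gamma$-liminf and a polygonal approximation plus gluing argument for the $\Gamma$-limsup. In both directions the decisive analytic tools are the fundamental estimate of Section~\ref{section:surfacetension} and the passage from converging to fixed boundary values of Section~\ref{section:surfacetension2}; the structural results of Sections~\ref{sec:Reduction-two-lattices} and~\ref{sec: Part II} enter only through the cell-formula characterisation of $\varphi$, so I would treat them as a black box.

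For the $\Gamma$-liminf inequality, given $X_\varepsilon \to u$ with $\liminf_\varepsilon E_\varepsilon(X_\varepsilon) < +\infty$, I would attach to each configuration the non-negative Radon measure
\[
\mu_\varepsilon \;=\; \tfrac{1}{2} \sum_{x \in X_\varepsilon} \varepsilon\,\big(6 - \#\mathcal{N}_\varepsilon(x)\big)\,\delta_x,
\]
whose total mass equals $E_\varepsilon(X_\varepsilon)$. By weak-$*$ compactness, along a subsequence $\mu_\varepsilon$ converges to a finite Radon measure $\mu$, and it suffices to show that at $\mathcal{H}^1$-a.e.\ $x_0 \in J_u$ the Radon--Nikod\'ym derivative of $\mu$ with respect to $\mathcal{H}^1$ restricted to $J_u$ is at least $\varphi(u^+(x_0),u^-(x_0),\nu_u(x_0))$. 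By Besicovitch derivation the derivative can be computed along shrinking cubes $Q^\nu_{\rho_k}(x_0)$ with $\nu=\nu_u(x_0)$ and $\rho_k\to 0$. The $L^1_{\rm loc}$-convergence of the interpolations $u^{X_\varepsilon}_\varepsilon$ to $u$ yields, after a diagonal extraction in $(\varepsilon,\rho_k)$ and a rescaling $y=(x-x_0)/\rho_k$, a sequence of configurations on the unit cube $Q^\nu$ whose $L^1$-traces on the two halves of $Q^\nu$ are close to $z^\pm:=u^\pm(x_0)$. The fundamental estimate of Section~\ref{section:surfacetension} then allows me to modify them near $\partial Q^\nu$, at asymptotically negligible energy cost, so that they coincide exactly with $\eps\mathscr{L}(z_k^\pm)$ on $\partial_\eps^\pm Q^\nu$ for some $z_k^\pm\to z^\pm$; subsequently the passage to fixed boundary values of Section~\ref{section:surfacetension2} replaces $z_k^\pm$ by $z^\pm$. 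The resulting configurations are admissible in the cell problem \eqref{def:varphi}, yielding the lower bound $\varphi(z^+,z^-,\nu)$.

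For the $\Gamma$-limsup inequality, a standard density argument for $PC(\R^2;\mathcal{Z})$, combined with the regularity properties of $\varphi$ established in Theorem~\ref{prop: properties of varphi}, reduces the construction to functions $u=\sum_{j=1}^m \chi_{G_j}z_j$ whose grains $G_j$ are polygons. On each flat segment of $J_u$ separating two adjacent grains $G_j$ and $G_k$, of normal $\nu$, I would place a near-optimal competitor for \eqref{def:varphi} associated to $(z_j,z_k,\nu)$, suitably translated to lie on that segment. Because these local competitors already coincide with $\eps\mathscr{L}(z_j)$ and $\eps\mathscr{L}(z_k)$ on the prescribed $\eps$-boundary regions, they glue across neighbouring sides of each polygon except on short defect sets near the corners; these are absorbed by the fundamental estimate at an energy cost $o(1)$ as $\eps\to 0$. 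Summing over the flat segments and passing to the limit produces $\limsup_\eps E_\eps(X_\eps)\le E(u)$.

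The main obstacle is the replacement, in the liminf step, of converging boundary values $z_k^\pm\to z^\pm$ by the fixed values $z^\pm$ required for admissibility in \eqref{def:varphi}. In softer settings this is standard cut-off surgery, but in the present rigid sticky disk model even a single misplaced atom can carry an ${\rm O}(\eps)$ amount of energy, so no naive interpolation works. This is precisely where the structural analysis of Sections~\ref{sec:Reduction-two-lattices} and~\ref{sec: Part II} enters: the reduction to two lattices excludes interpolating grains inside the boundary layer, and the touching-point rigidity restricts the mismatch pairs $(z^+,z^-)$ that can realise a grain-boundary energy strictly below twice the vacuum--solid surface energy to a discrete set of angles together with finitely many admissible spheres of micro-translations, so that for $k$ large one can continuously deform $z_k^\pm$ to $z^\pm$ without an energy jump. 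Once these rigidity statements are available, the remainder of the proof follows the outline above.
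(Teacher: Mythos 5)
Your proposal follows essentially the same route as the paper: the liminf is proved by the blow-up/measure argument reducing to the cell formula (the paper packages your "diagonal extraction plus rescaling" step as an auxiliary density $\psi$ defined with $L^1$-convergence and \emph{varying} cube centers, and the passage to converging and then fixed boundary values is exactly Proposition \ref{prop, psi}, proved via Lemma \ref{lemma:cutoff} and Sections \ref{sec:Reduction-two-lattices}--\ref{section:surfacetension2}), while the limsup uses density of polygonal partitions plus gluing of near-optimal cell competitors. The only cosmetic difference is at the corners and between adjacent cells along a segment, where the paper does not invoke the fundamental estimate but simply deletes all atoms in a $\delta$-neighbourhood of the corner set $M$ and in thin strips $H^\eps$ between neighbouring cubes, at a cost $O(\delta)$ that vanishes in the iterated limit $\eps \to 0$, $\delta \to 0$.
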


Here and in the sequel, we follow the usual convention that convergence of the continuous parameter $\eps \to 0$ stands for convergence of arbitrary sequences $\lbrace \eps_k \rbrace_k$ with $\eps_k \to 0$ as $k \to +\infty$.

\begin{remark}[Extension to $L^1$]
{\normalfont 
Defining $E_\varepsilon \colon L^1(\mathbb{R}^2;\mathcal{Z}) \to [0,+\infty]$ by
\begin{align*}
E_\varepsilon(u) 
= \begin{cases} E_\varepsilon(X) &\text{if there exists } X \text{ such that } u = u_\varepsilon^X, \\
  +\infty &\text{otherwise,}
\end{cases}
\end{align*}
and extending $E$ to all of $L^1(\mathbb{R}^2;\mathcal{Z})$ by setting $E(u) = +\infty$ if $u \in L^1(\mathbb{R}^2;\mathcal{Z}) \setminus PC(\mathbb{R}^2;\mathcal{Z})$, in view of Theorem \ref{proposition:compactness}, this indeed implies $\Gamma(L^1_{\rm loc})\text{-}\lim_{\varepsilon \to 0} E_\varepsilon = E$. 
}
\end{remark}

We close this section by providing properties of the density $\varphi$. To this end, we introduce the function $\varphi_{\rm hex}\colon \R^2 \to [0,+\infty)$ defined by
\begin{align}\label{eq: phi-hex-def}
\varphi_{\rm hex}(\nu) =  \frac{2}{\sqrt{3}} \sum\nolimits_{k=1}^3 |\langle  \nu,   \omega^k \rangle|.  
\end{align}
Note that $\varphi_{\rm hex}$ is a  Finsler norm whose unit ball is a regular hexagon in $\R^2$ with vertices in $\frac{1}{2} e^{i\pi/6}\{\pm1,\pm \omega,\pm \omega^2\}$, cf.\ \cite{AuYeungFrieseckeSchmidt:12, DeLucaNovagaPonsiglione:19}.

\begin{theorem}[Properties of $\varphi$]\label{prop: properties of varphi}
Let $\varphi$ be the density given in Proposition \ref{proposition:existence-original}, extended to a function  defined  on $\mathcal{Z} \times \mathcal{Z} \times \mathbb{R}^2$ which is  positively  $1$-homogeneous in the third variable. Then $\varphi$ satisfies the following properties:
\begin{itemize}
\item[(i)] (Solid-vacuum energy) There holds $\varphi(z,\mathbf{0},\nu)= \varphi(\mathbf{0},z,\nu) = \varphi_{\mathrm{hex}}(e^{-i\theta} \nu)$ for all $z = (\theta,\tau,1) \in \mathcal{Z}\setminus \lbrace \mathbf{0}\rbrace$ and $\nu \in \mathbb{S}^1$.\\
\item[(ii)] (Solid-solid energy) There exists a null-set $\mathcal{N}$ in $(\mathcal{Z}\setminus \lbrace \mathbf{0}\rbrace)^2$ (with respect to its six-dimensional Haar measure) 
such that for all pairs $(z^+,z^-) \in (\mathcal{Z}\setminus \lbrace \mathbf{0}\rbrace)^2 \setminus \mathcal{N} $, $z^+ \neq z^-$, and $\nu \in \mathbb{S}^1$ there holds
$$\varphi(z^+,z^-,\nu)= \varphi_{\mathrm{hex}}\big(e^{-i\theta^+} \nu\big)+ \varphi_{\mathrm{hex}}\big(e^{-i\theta^-} \nu\big),$$
and for all $(z^+,z^-)\in \mathcal{N}$, $z^+ \neq z^-$, and $\nu \in \mathbb{S}^1$ there holds
$$\frac{1}{2}\varphi_{\mathrm{hex}}\big(e^{-i\theta^+} \nu\big)+ \frac{1}{2}\varphi_{\mathrm{hex}}\big(e^{-i\theta^-} \nu\big) \leq \varphi(z^+,z^-,\nu)< \varphi_{\mathrm{hex}}\big(e^{-i\theta^+} \nu\big)+ \varphi_{\mathrm{hex}}\big(e^{-i\theta^-} \nu\big), $$
where we write   $z^+ =(\theta^+,\tau^+,1)$ and $z^- =(\theta^-,\tau^-,1)$. 

 \noindent Moreover, there are exceptional sets ${\mathcal{G}_{\mathbb{A}}} \subset \mathbb{A}$ of angles and, for each $\theta \in {\mathcal{G}_{\mathbb{A}}}$, ${\mathcal{G}_{\mathbb{T}}}(\theta) \subset \mathbb{R}^2$ of translation vectors such that ${\mathcal{G}_{\mathbb{A}}}$ is countable and each ${\mathcal{G}_{\mathbb{T}}}(\theta) $ is contained in a finite union of spheres, with  
 
$$\mathcal{N} \subset \big\{ (z^+, z^-) \in (\mathcal{Z}\setminus \lbrace \mathbf{0}\rbrace)^2 \colon \, \theta^+-\theta^- \in {\mathcal{G}_{\mathbb{A}}},\, e^{i\theta^+}\tau^+-e^{i\theta^-}\tau^- \in {\mathcal{G}_{\mathbb{T}}}(\theta^+-\theta^-)\big\}.$$
\item[(iii)] (Convexity) The mapping $\nu \mapsto \varphi(z^+,z^-,\nu) $ is convex  for all $z^+,z^-\in \mathcal{Z}$.
\item[(iv)]  (Rotational invariance) For all $z^\pm=(\theta^\pm,\tau^\pm,1)$, $\nu \in \mathbb{S}^1$, and $\theta \in \mathbb{A}$ there holds
\begin{align*}
\varphi\big((\theta^++\theta,\tau^+,1),(\theta^-+\theta,\tau^-,1),e^{i\theta}\nu\big)=\varphi\big((\theta^+,\tau^+,1),(\theta^-\tau^-,1),\nu\big).
\end{align*}
\item[(v)] (Translational invariance) For all $z^\pm=(\theta^\pm,\tau^\pm,1)$, $\nu \in \mathbb{S}^1$, and $\tau \in \mathbb{T}$ there holds
\begin{align*}
\varphi\Big(\big(\theta^+,\tau^++e^{-i\theta^+}\tau ,1\big), \big(\theta^-,\tau^-+e^{-i\theta^-}\tau,1\big),\nu\Big)=\varphi\big((\theta^+,\tau^+,1),(\theta^-\tau^-,1),\nu\big).
\end{align*}
\end{itemize}
\end{theorem}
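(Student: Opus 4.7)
The approach is to dispatch the symmetry properties first, then use them together with the structural lemmas referenced in the excerpt to establish the substantive parts. For (iv) and (v), both follow directly from the cell formula \eqref{def:varphi} by applying the corresponding symmetries to the admissible class. For (iv), rotating a competitor $X$ on $Q^{e^{i\theta}\nu}_\rho(x_0)$ with boundary datum $\varepsilon\mathscr{L}((\theta^\pm+\theta,\tau^\pm,1)) = \varepsilon e^{i(\theta^\pm+\theta)}(\mathscr{L}+\tau^\pm)$ by the factor $e^{-i\theta}$ produces a competitor on $Q^\nu_\rho(e^{-i\theta}x_0)$ with boundary datum $\varepsilon\mathscr{L}((\theta^\pm,\tau^\pm,1))$ and identical energy, since $E_\varepsilon$ only depends on interatomic distances. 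For (v), the prescribed modifications of the translation representatives amount to shifting every competitor by the common vector $\varepsilon\tau$, which again leaves the energy invariant.

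For (i), I use (iv) to reduce to $\theta = 0$. The cell formula then asks for the minimal energy on $Q^\nu_\rho$ with boundary condition $\varepsilon(\mathscr{L}+\tau)$ on the $+$ side and empty on the $-$ side. The reduction result Lemma \ref{lemma:reduction} rules out nontrivial inclusions of other grains in an optimal competitor, and Lemma \ref{lemma:vacuumirrational} gives the detailed analysis of the vacuum interface: up to boundary layers of order $\varepsilon$ the minimizer coincides with $\varepsilon(\mathscr{L}+\tau)\cap\{y:\langle y-x_0,\nu\rangle \geq 0\}$. Counting the missing bonds along the straight interface — each atom within one lattice spacing of the cut loses exactly those neighbors sitting in the complementary half-plane, and the arithmetic of this loss over the length $\rho$ is precisely the Voronoi-cell perimeter — yields $\rho\varphi_{\rm hex}(\nu)$ to leading order, matching the formula of \cite{AuYeungFrieseckeSchmidt:12, DeLucaNovagaPonsiglione:19}. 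The case $\varphi(\mathbf{0},z,\nu)$ is symmetric.

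For (ii), the upper bound $\varphi(z^+,z^-,\nu)\leq\varphi_{\rm hex}(e^{-i\theta^+}\nu)+\varphi_{\rm hex}(e^{-i\theta^-}\nu)$ is obtained by inserting a vacuum strip of width $O(\varepsilon)$ between the two grains and invoking (i) on each side. The lower bound $\varphi(z^+,z^-,\nu)\geq\tfrac{1}{2}\varphi_{\rm hex}(e^{-i\theta^+}\nu)+\tfrac{1}{2}\varphi_{\rm hex}(e^{-i\theta^-}\nu)$ follows from the observation that every atom in $\varepsilon\mathscr{L}(z^\pm)$ within one lattice spacing of the interface necessarily loses at least the three neighbors lying strictly on the other side of the cut, so each grain contributes at least half of its own solid-vacuum defect. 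The hard part, and the main obstacle, is the sharp dichotomy between generic and non-generic pairs, which is exactly the content of Lemma \ref{lemma: touching}: any energy saving below the full sum forces the presence of many touching points $(x^+,x^-)\in\varepsilon\mathscr{L}(z^+)\times\varepsilon\mathscr{L}(z^-)$ with $|x^+-x^-|=\varepsilon$, and this accumulation constrains the mismatch $\theta^+-\theta^-$ to a countable set $\mathcal{G}_{\mathbb{A}}$ (arising from compatibility of lattice periods, expressible e.g.\ via the law of cosines in $\mathscr{L}$) and, for each such angle, confines the translational misfit $e^{i\theta^+}\tau^+-e^{i\theta^-}\tau^-$ to a finite union of circles $\mathcal{G}_{\mathbb{T}}(\theta^+-\theta^-)$. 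Strictness on $\mathcal{N}$ comes from an explicit competitor that aligns the two grains through these touching points, saving a positive fraction of the missing-bond count. The null-set claim follows by Fubini since $\mathcal{G}_{\mathbb{A}}$ is countable and each slice is contained in a one-dimensional union of circles inside the two-dimensional translation space.

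Finally, (iii) convexity of $\nu\mapsto\varphi(z^+,z^-,\nu)$ follows from a cut-and-paste argument in the cell formula. Given $\nu_1,\nu_2\in\mathbb{S}^1$ and $\lambda\in(0,1)$, I would take near-optimal competitors for $\varphi(z^+,z^-,\nu_i)$ on thin rectangles with sides parallel and orthogonal to the respective $\nu_i$, and glue them along a polyline approximating the straight interface with normal $\nu=\lambda\nu_1+(1-\lambda)\nu_2$, using the fundamental estimate of Section \ref{section:surfacetension} to absorb the transition layers at negligible energetic cost. Dividing by the length of the interface and passing to the limit $\varepsilon\to 0$ yields $\varphi(z^+,z^-,\nu)\leq\lambda\varphi(z^+,z^-,\nu_1)+(1-\lambda)\varphi(z^+,z^-,\nu_2)$, which is the desired convexity of the positively $1$-homogeneous extension.
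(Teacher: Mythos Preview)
Your overall architecture matches the paper's: (iv) and (v) via direct symmetry of the cell formula, (i) via Lemma~\ref{lemma:vacuumirrational}(i), (ii) via Lemma~\ref{lemma:vacuumirrational}(ii) and Lemma~\ref{lemma: touching}, and (iii) via a zigzag cut-and-paste. Two points deserve comment.

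\textbf{The lower bound in (ii) has a real gap.} Your heuristic ``every atom \ldots\ within one lattice spacing of the interface necessarily loses at least the three neighbors lying strictly on the other side of the cut'' is false once you remember that the grain boundary $\partial X^\pm$ is not a straight line but an arbitrary simple path (Lemma~\ref{lemma:reduction}(ii)). At a convex corner a boundary atom can have five same-grain neighbors and lose only one. The paper obtains the factor $\tfrac12$ through Lemma~\ref{lemma:grain-bonds}: after partitioning $X^+\cup X^-$ into disjoint $Y^+\dot\cup Y^-$, it shows that \emph{on average} each boundary atom of $Y^\pm$ has exactly four same-grain neighbors (the angle-sum identity along the boundary path), and combines this with $\#\mathcal{N}(x)\le 5$ on $\partial X^\pm$ to get $E_1(X,Q^\nu_T)\ge \tfrac12 E_1(Y^+,Q^\nu_T)+\tfrac12 E_1(Y^-,Q^\nu_T)-1$. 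Each $Y^\pm$ is then a competitor for the vacuum problem, which yields the bound via Lemma~\ref{lemma:vacuumirrational}(i). Your sketch does not supply this averaging argument, and without it the inequality does not follow. Also, your sentence about ``strictness on $\mathcal{N}$'' is backwards: $\mathcal{N}$ is \emph{defined} as the set where strict inequality holds; what must be proved is the containment, and for the translation constraint this needs not only Lemma~\ref{lemma: touching} but also the second assertion of Lemma~\ref{lemma: calculation} (no touching points $\Rightarrow$ equality) together with Lemma~\ref{lemma:translationproperties}.

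\textbf{For (iii)} your strategy is the paper's, but the tool you name is not the one used. The paper does not invoke the fundamental estimate of Section~\ref{section:surfacetension}; instead it relies on Proposition~\ref{proposition:xindependence}, which furnishes near-optimal competitors on small cubes $Q^{\nu_j}_S(x_i^{j,k})$ with \emph{exact} boundary data $\mathscr{L}(z^\pm)$ independently of the center. Because the boundary data match identically, adjacent cubes glue with only $O(1)$ error per junction (handled by the small excised squares $A_i^{j,k}$), and no cut-off layer is needed. Your ``thin rectangles with converging boundary values plus fundamental estimate'' route could presumably be made to work, but it is heavier than what the paper actually does.
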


We note that the interaction with vacuum, see property (i), has already been addressed in \cite{AuYeungFrieseckeSchmidt:12, DeLucaNovagaPonsiglione:19}. A main novelty of our work lies in the characterization (ii).  For explicit choices of the sets ${\mathcal{G}_{\mathbb{A}}}$ and ${\mathcal{G}_{\mathbb{T}}}(\theta)$ we refer to \eqref{eq: good angles} and the paragraph above Lemma~\ref{lemma:translationproperties}, respectively.  In particular, (ii) states that \emph{generically} the surface energy between two lattices is if each of the two lattices would interact with vacuum. In this case, the continuum energy $E$ of a function $u = \sum\nolimits_{j=1}^\infty  \chi_{G_j} z_j$ corresponds to the \emph{crystalline perimeter} of the grains $\lbrace G_j\rbrace_j$, induced by $\varphi_{\rm hex}$. In the non-generic case    
$(z^+,z^-) \in \mathcal{N}$,  two lattices $\mathscr{L}(z^+)$ and $\mathscr{L}(z^-)$ have many \emph{touching pairs} (i.e., pairs of points with distance $1$) which reduce the energy \eqref{def:energyneighbourhood}. Optimal interfaces for both cases for a normal vector $\nu$ are illustrated in Figure~\ref{fig:low energy}. We remark that the exact characterization of  $\varphi$ seems to be a difficult issue which is beyond the scope of the present analysis. In fact, counting the number of touching pairs depending on the relative orientation of the  two lattices seems to be a non-trivial number theoretic problem, see Remark~\ref{rem: difficult} and Figure \ref{fig:twolattices} below for some details in that direction.  We remark that the properties of ${\mathcal{G}_{\mathbb{A}}}$ and ${\mathcal{G}_{\mathbb{T}}}(\theta)$ imply that $\mathcal{N}$ is of Hausdorff-dimension at most four. \EEE Finally, note that (iv) and (v) express the fact that both the atomistic and the continuum model are \emph{frame indifferent}.

More precisely, our proof in Lemma \ref{lemma: touching} below shows that the non-degeneracy in Theorem~\ref{prop: properties of varphi}(ii) above can be quantized: for every $\eta > 0$ there are only a finite number of differences $\theta$ of lattice rotations and a corresponding finite number of spheres containing the difference of lattice shifts for which 
$$\varphi(z^+,z^-,\nu) \le \varphi_{\mathrm{hex}}\big(e^{-i\theta^+} \nu\big)+ \varphi_{\mathrm{hex}}\big(e^{-i\theta^-} \nu\big) - \eta. $$
These numbers only depend on $\eta$.  Moreover, we remark that the lower bound provided for $\varphi$ is attained, e.g., for $z^- = (0,0,1)$, $z^+ = (0,i,1)$, and $\nu = i$, see Figure \ref{fig:low energy}(c).  (Consider $X = \{x\in\eps\mathscr{L}(0,0,1)\colon \langle x, i \rangle \leq 0\} \cup \{x\in\eps\mathscr{L}(0,i,1)\colon \langle x, i \rangle \geq \eps\}$ in \eqref{def:varphi}.) 

\begin{figure}[H]
 \includegraphics{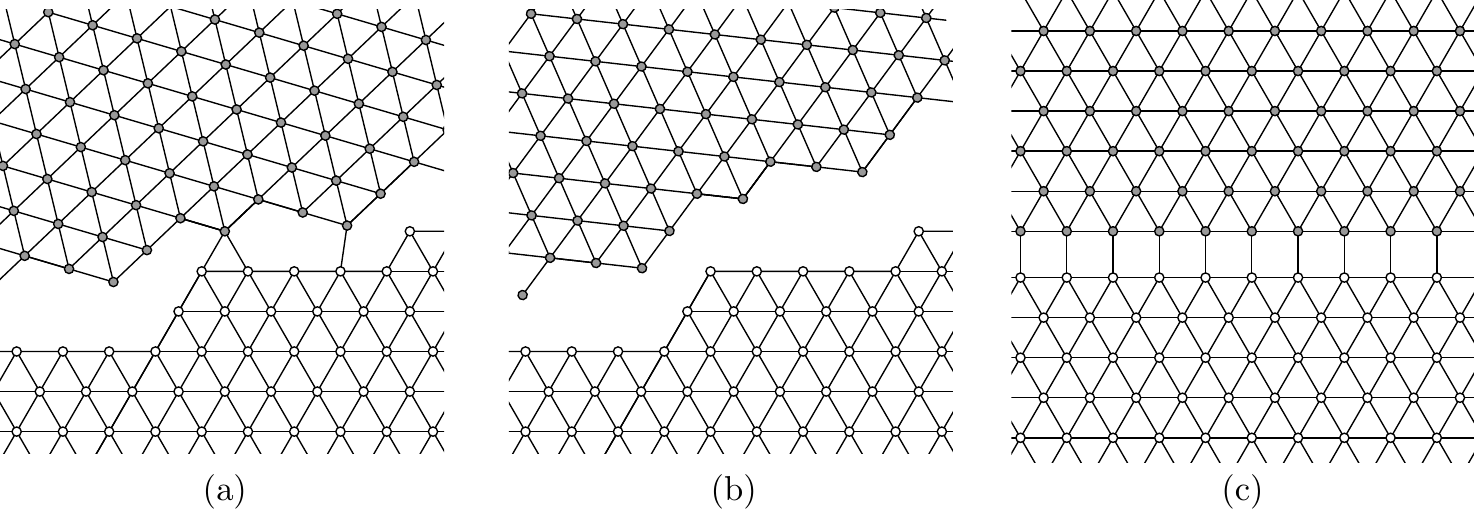}

\caption{Different scenarios of optimal interfaces for a fixed normal $\nu$ and different lattices $\mathscr{L}(z^\pm)$. The dark gray and white points form the lattice $\mathscr{L}(z^+)$ and the lattice $\mathscr{L}(z^-)$, respectively.  Edges are depicted between  points of distance $1$. {\rm (a)}: Two lattices $\mathscr{L}(z^\pm)$  are depicted for which  $\varphi$ is less than twice the interaction energy with the vacuum. {\rm (b)}:  We see  two lattices $\mathscr{L}(z^\pm)$ for which  $\varphi$ is equal to twice the interaction energy with the vacuum. {\rm (c)}: Two lattices for which the lower bound in Theorem \ref{prop: properties of varphi}(ii) is attained.  
 }
\label{fig:low energy}
\end{figure}

\begin{remark} \label{rem: difficult} We finally  point out that for $\theta^+ - \theta^- \in {\mathcal{G}_{\mathbb{A}}}$, $e^{i \theta^+} \tau^+ -  e^{i \theta^-}  \tau^- \in {\mathcal{G}_{\mathbb{T}}}(\theta^+-\theta^-)$ the calculation of $\varphi$ seems to be a difficult problem. In fact, for $e^{i(\theta^+-\theta^-)} =\frac{v_1}{v_2}$ with $v_1,v_2 \in \mathscr{L}$ and $|v_1|=|v_2|$, depending on the factorization of $v_1,v_2$ in $\mathscr{L}$, there may be points $(x,y) \in \mathscr{L}(z^+) \times \mathscr{L}(z^-)$ such that $x,y\notin \mathscr{L}(z^+) \cap \mathscr{L}(z^-)$ and $|x-y|=1$. If this is the case, the relative position of two such atoms is fixed through the prime factors of $v_1,v_2$, respectively. This leads to two major challenges in the calculation of $\varphi$: (i) the characterization of points $(x,y) \in \mathscr{L}(z^+)\times \mathscr{L}(z^-)$ such that $|x-y|=1$ depending on the relative orientation $e^{i(\theta^+-\theta^-)}$ of the two lattices seems to be  a non-trivial number theoretic problem. (ii) even after the characterization of the set of points $(x,y) \in \mathscr{L}(z^+)\times \mathscr{L}(z^-)$ such that $|x-y|=1$ for different normals $\nu$ to the interface, it is not always clear if it is energetically convenient to include such points in the construction of the optimal interface due to their relative orientation. Such a situation is illustrated in Figure \ref{fig:twolattices}.

\end{remark}

\begin{figure}
 \includegraphics{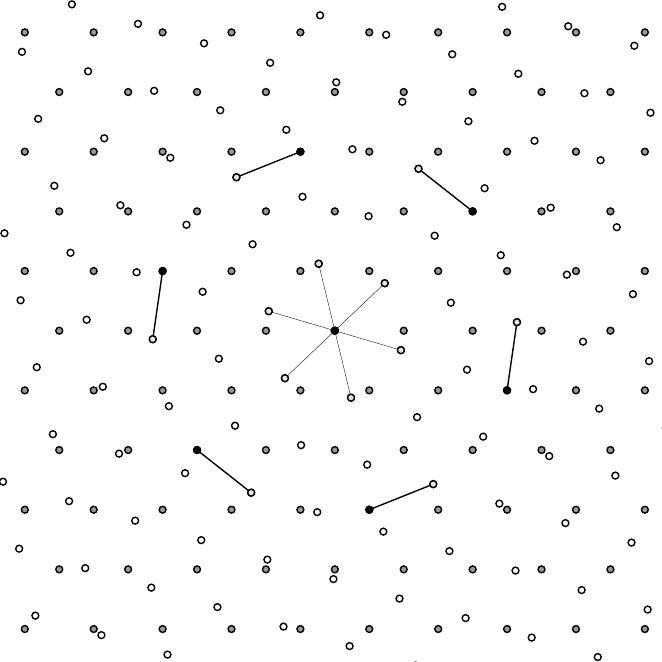}

\caption{Two lattices $\mathscr{L}(z^\pm)$  for which   $\varphi$ is less than twice the interaction energy with  vacuum. The dark gray points form the lattice $\mathscr{L}(z^+)$ and  the white points the lattice $\mathscr{L}(z^-)$.  The black and light gray points are those that are of distance $1$ to the other lattice,  as emphasized  by   an  edge between them.}
\label{fig:twolattices}
\end{figure}

The compactness and $\Gamma$-convergence results will be proved in Section \ref{sec: proof of main results}. The properties of several cell formulas related to $\varphi$,  which are fundamental for the proofs,  are postponed to Sections \ref{section:surfacetension}--\ref{section:surfacetension2}. Finally, the proofs of Proposition \ref{proposition:existence-original}  and Theorem  \ref{prop: properties of varphi} are given in Subsection \ref{sec: properties}.

\section{Proof of the main results}\label{sec: proof of main results}

This section is devoted to the proofs of our main results. We start with some preliminary properties. Then we prove compactness and finally we address the $\Gamma$-convergence result. 

\subsection{Preliminaries}

We state and prove  some elementary properties of the family ${E}_\varepsilon$.  Recall the representation of the energy in \eqref{def:energyneighbourhood} and the definition of sets in \eqref{eq: basic set def}. 

\begin{lemma}[Properties of $E_\eps$]\label{lemma:propertiesofE} Let $\eps >0$ and let $X$ be a configuration with $E_\eps(X) < + \infty$. Then there holds 
\begin{itemize}
\item[(i)] $E_\varepsilon(e^{i\theta}X+\tau,e^{i\theta}A+\tau)=E_\varepsilon(X,A)$  for all $\theta \in [0,2\pi)$,  $\tau \in \mathbb{R}^2$, and $A \subset \mathbb{R}^2$, \vspace{0.1cm}
\item[(ii)]  $E_{\lambda\varepsilon}(\lambda X,\lambda A) = \lambda E_\varepsilon(X,A)$ for all $\lambda >0$ and $A \subset \mathbb{R}^2$, \vspace{0.1cm}
\item[(iii)] $E_\varepsilon(X,A)\leq E_\varepsilon(X,B)$ for all $A \subset B \subset \mathbb{R}^2$, \vspace{0.1cm}
\item[(iv)] $E_\varepsilon(X,A \cup B) = E_\varepsilon(X,B)+E_\varepsilon(X,A)$ for all $A,B \subset \mathbb{R}^2$ with $A \cap B =\emptyset$,
\item[(v)] There exists $C>0$ such that for all $A \subset \mathbb{R}^2$ there holds $\#(X \cap A) \leq C\mathcal{L}^2((A)_\varepsilon)/\varepsilon^2$.         
\end{itemize}
\end{lemma}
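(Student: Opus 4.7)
The plan is to prove each of (i)--(v) directly from the definition \eqref{def:energyneighbourhood} combined with the elementary bound $\#\mathcal{N}_\varepsilon(x) \le 6$ from \eqref{eq: neighborhood bound}, which ensures that every summand $\varepsilon(6 - \#\mathcal{N}_\varepsilon(x))$ is non-negative. The first four items amount to routine bookkeeping on this sum; only (v) requires a separate geometric ingredient, namely the hard-core condition implicit in $V_{\rm sticky}(r) = +\infty$ for $r < 1$.

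For (i), I would observe that $\Phi \colon y \mapsto e^{i\theta}y + \tau$ is an isometry of $\mathbb{R}^2$, so $|\Phi(x) - \Phi(y)| = |x-y|$. Consequently the neighborhoods transport covariantly, $\mathcal{N}_\varepsilon(\Phi(x)) = \Phi(\mathcal{N}_\varepsilon(x))$ when computed with respect to $\Phi(X)$, and in particular $\#\mathcal{N}_\varepsilon(\Phi(x))$ equals $\#\mathcal{N}_\varepsilon(x)$. Since $\Phi(X) \cap \Phi(A) = \Phi(X \cap A)$, reindexing $x \mapsto \Phi(x)$ in the sum yields the identity. Property (ii) proceeds along the same lines from $|\lambda x - \lambda y| = \lambda|x-y|$: the equality $|\lambda x - \lambda y| = \lambda\varepsilon$ is equivalent to $|x-y| = \varepsilon$, hence $\#\mathcal{N}_{\lambda\varepsilon}(\lambda x) = \#\mathcal{N}_\varepsilon(x)$ when the left side is computed in $\lambda X$ and the right in $X$, and the prefactor $\lambda\varepsilon$ appearing in $E_{\lambda\varepsilon}(\lambda X, \lambda A)$ produces the overall factor $\lambda$ after comparison with $E_\varepsilon(X,A)$.

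Items (iii) and (iv) are immediate from the non-negativity of each summand together with the set-theoretic identity $X \cap (A \cup B) = (X \cap A) \sqcup (X \cap B)$ for disjoint $A,B$: additivity follows by splitting the sum, and monotonicity follows by summing over the smaller index set and discarding the non-negative remainder.

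For (v), the key observation is that finiteness of $E_\varepsilon(X)$ forces $|x - y| \ge \varepsilon$ for all distinct $x,y \in X$, since otherwise one summand would contain $V_{\rm sticky}(|x-y|/\varepsilon) = +\infty$. Therefore the open balls $\{B_{\varepsilon/2}(x)\}_{x \in X \cap A}$ are pairwise disjoint, and each is contained in $(A)_\varepsilon$ (indeed already in $(A)_{\varepsilon/2}$) by the definition \eqref{eq: basic set def}. Summing Lebesgue measures,
\begin{equation*}
\#(X \cap A) \cdot \frac{\pi \varepsilon^2}{4} = \sum_{x \in X \cap A} \mathcal{L}^2\big(B_{\varepsilon/2}(x)\big) \le \mathcal{L}^2\big((A)_\varepsilon\big),
\end{equation*}
yielding the estimate with $C = 4/\pi$. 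There is no genuine obstacle here; the whole lemma is a collection of elementary properties to be invoked throughout the subsequent analysis.
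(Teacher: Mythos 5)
Your proposal is correct and follows essentially the same route as the paper's proof in every item: the isometry/scaling reindexing for (i)--(ii), non-negativity of the summands for (iii)--(iv), and the disjoint-balls packing argument with the hard-core constraint for (v), down to the same constant $C = 4/\pi$. Nothing to add.
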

\begin{proof} \noindent \emph{Proof of $\mathrm{(i)}$:} Given $\theta \in [0,2\pi)$ and $\tau \in \mathbb{R}^2$, we define $\tilde{x} = e^{i\theta}x + \tau$ for each $x \in \mathbb{R}^2$.  The statement follows by noting that $|\tilde{x}-\tilde{y}|=|x-y|$ for all $x,y \in \mathbb{R}^2$ and $\tilde{x} \in e^{i\theta}A + \tau$ if and only if $x \in A$. This implies $y \in \mathcal{N}_\varepsilon(x)$ if and only if $\tilde{y} \in \mathcal{N}_\varepsilon(\tilde{x})$. 

\noindent \emph{Proof of $\mathrm{(ii)}$:} For $\lambda >0$ and $x \in \mathbb{R}^2$, we define  
 $ x_\lambda=\lambda x$. Clearly, we have  $|x_\lambda-y_\lambda|= \lambda |x-y|$ for all $x,y \in \mathbb{R}^2$ and $ x_\lambda \in \lambda A$ if and only if $x\in A$. This implies $y_\lambda \in \mathcal{N}_{\lambda\eps}(x_\lambda)$ if and only if $y \in \mathcal{N}_\eps(x)$. 

\noindent \emph{Proof of $\mathrm{(iii)}$:} This statement follows from  the fact that for all configurations $X$ with finite energy and all $x \in X$ we have $6-\#\mathcal{N}_\varepsilon(x)\geq 0$ by \eqref{eq: neighborhood bound}.

\noindent \emph{Proof of $\mathrm{(iv)}$:} This follows from the fact that, if $A\cap B=\emptyset$, each term of the summation on the left hand side occurs also in the right hand side and vice versa.

\noindent \emph{Proof of $\mathrm{(v)}$:} Since $X$ is a  configuration with finite energy, there holds $|x-y| \geq \varepsilon$ for all $x,y \in X$, $x\neq y$. Therefore,  $B_{\varepsilon/2}(x) \cap B_{\varepsilon/2}(y) = \emptyset$  for all $x,y \in X$, $x\neq y$. By \eqref{eq: basic set def}, we obtain 
$\bigcup\nolimits_{x \in X \cap A} B_{\varepsilon/2}(x) \subset (A)_\varepsilon$ and therefore
\begin{align*}
\pi  \varepsilon^2/4 \, \#(X\cap A) = \mathcal{L}^2\Big(\bigcup\nolimits_{x \in X \cap A} B_{\varepsilon/2}(x)  \Big)  \leq \mathcal{L}^2 \big((A)_\varepsilon\big).
\end{align*}
From this the claim follows with  $C= 4/\pi$. 
\end{proof}

The following scaling property will be instrumental.

\begin{lemma}[Scaling]\label{lemma:scaling} For $\eps>0$, consider configurations $X_\varepsilon$ satisfying $E_\eps(X_\eps) < +\infty$ and $\lambda X_\eps$ for $\lambda >0$. By  $u_{\lambda\varepsilon}^{\lambda}$  and $u_\varepsilon$ we denote the functions corresponding to $\lambda X_\eps$ and $X_\eps$, respectively, as defined in \eqref{def:u}.  Then,  there holds 
\begin{align}\label{eq:interpolation}
u_{\lambda \varepsilon}^{\lambda}(\lambda x)=u_\varepsilon(x) \text{ for all $x \in \mathbb{R}^2$}.
\end{align}
Moreover, for each bounded $A \subset \mathbb{R}^2$,  we have  $u_{\lambda \varepsilon}^{\lambda} \to u(\lambda^{-1} \, \cdot)$ in $L^1(\lambda A)$ as $\eps \to 0$ if and only if $u_\varepsilon \to u$ in $L^1( A)$.
\end{lemma}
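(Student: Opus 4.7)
The plan is to derive both claims directly from definition \eqref{def:u} by tracking how its three ingredients---neighborhoods, lattice parameters, and Voronoi cells---transform under the rescaling $X \mapsto \lambda X$. The pointwise identity will be the substantive content, and the $L^1$ equivalence will follow from a single change of variables.

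For the pointwise identity $u_{\lambda\varepsilon}^{\lambda}(\lambda x) = u_\varepsilon(x)$, I would first observe that since $|a-b|=\varepsilon$ iff $|\lambda a - \lambda b| = \lambda\varepsilon$, the map $y \mapsto \lambda y$ is a bijection between $\mathcal{N}_\varepsilon(x)$ (within $X$) and $\mathcal{N}_{\lambda\varepsilon}(\lambda x)$ (within $\lambda X$). In particular, $\#\mathcal{N}_\varepsilon(x) = 6$ iff $\#\mathcal{N}_{\lambda\varepsilon}(\lambda x) = 6$. Next, if this common value is $6$, then $\{x\} \cup \mathcal{N}_\varepsilon(x) \subset \varepsilon e^{i\theta(x)}(\mathscr{L}+\tau(x))$, and scaling by $\lambda$ yields $\{\lambda x\} \cup \mathcal{N}_{\lambda\varepsilon}(\lambda x) \subset \lambda\varepsilon e^{i\theta(x)}(\mathscr{L}+\tau(x))$. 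The uniqueness of the pair $(\theta,\tau)$, already noted before \eqref{def:u}, then forces $z(\lambda x) = z(x)$ in the notation used for $\lambda X$ at scale $\lambda\varepsilon$. Finally, the Voronoi cells rescale covariantly:
\begin{equation*}
V^{z}_{\lambda\varepsilon}(\lambda x) = \lambda x + e^{i\theta}\lambda\varepsilon V(0) = \lambda V^{z}_\varepsilon(x).
\end{equation*}
Combining these three facts, for any $y \in \mathbb{R}^2$ we have $y \in V^{z(x)}_\varepsilon(x)$ with $\#\mathcal{N}_\varepsilon(x)=6$ iff $\lambda y \in V^{z(x)}_{\lambda\varepsilon}(\lambda x)$ with $\#\mathcal{N}_{\lambda\varepsilon}(\lambda x)=6$, and the associated element of $\mathcal{Z}$ agrees. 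Evaluating both sides of \eqref{def:u} at $x$ and $\lambda x$ respectively, the identity follows (the case $u_\varepsilon(x) = \mathbf{0}$ is handled by the contrapositive).

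Given the pointwise identity, the second claim is a one-line change of variables $y = \lambda x$: for any bounded $A \subset \mathbb{R}^2$,
\begin{equation*}
\int_{\lambda A} \big|u_{\lambda\varepsilon}^{\lambda}(y) - u(y/\lambda)\big|\,\mathrm{d}y = \lambda^2 \int_A \big|u_{\lambda\varepsilon}^{\lambda}(\lambda x) - u(x)\big|\,\mathrm{d}x = \lambda^2 \int_A |u_\varepsilon(x) - u(x)|\,\mathrm{d}x,
\end{equation*}
where in the last equality we used the first part. Hence the two $L^1$ norms vanish simultaneously as $\varepsilon \to 0$, proving the equivalence.

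There is no real obstacle here; the only point requiring care is the justification that the rotation and translation parameters $(\theta,\tau)$ are \emph{invariant} (not rescaled) under $X \mapsto \lambda X$, which is precisely where one must invoke the uniqueness built into the definition above \eqref{def:u} together with the fact that the lattice spacing, not the lattice itself, absorbs the factor $\lambda$.
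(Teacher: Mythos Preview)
Your proof is correct and follows essentially the same approach as the paper: both verify the pointwise identity by checking that neighborhoods, lattice parameters $(\theta,\tau)$, and Voronoi cells transform compatibly under $X\mapsto\lambda X$, and then obtain the $L^1$ equivalence via the change of variables $y=\lambda x$. Your write-up is in fact a bit more detailed (you make the Voronoi-cell scaling $V^{z}_{\lambda\varepsilon}(\lambda x)=\lambda V^{z}_{\varepsilon}(x)$ and the uniqueness of $(\theta,\tau)$ explicit), but the logic is the same.
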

\begin{proof} 
We first prove \eqref{eq:interpolation}. To see this, it suffices to note that $x \in X_\varepsilon$ if and only if $\lambda x \in \lambda X_\varepsilon$, $\#(\mathcal{N}_\varepsilon(x)\cap X)=6$ if and only if $\#(\mathcal{N}_{\lambda\varepsilon}(\lambda x)\cap \lambda X_\varepsilon)=6$, and $(x \cup \mathcal{N}_\varepsilon(x)) \subset \varepsilon e^{i\theta}(\mathscr{L}+\tau)$ if and only if  $(\lambda x \cup \mathcal{N}_{\lambda\varepsilon}(\lambda x)) \subset \lambda \varepsilon e^{i\theta}(\mathscr{L}+\tau)$ for $\theta \in \mathbb{A}$ and $\tau \in \mathbb{T}$. Therefore, in view of \eqref{def:u} and the definition of the Voronoi cells $V_\eps^z(x)$ below \eqref{eq: Voronoi}, \eqref{eq:interpolation} holds true. The equivalence of the convergence follows by a change of variables: we set $y=\lambda x$ and obtain
\begin{align*}
\lambda^2\int_{A} |u_{ \varepsilon}(x)- u(x)| \, \mathrm{d}x =\lambda^2 \int_{A} |u_{\lambda \varepsilon}^\lambda (\lambda x)- u(x)| \, \mathrm{d}x  = \int_{\lambda A} |u_{\lambda \varepsilon}^\lambda (y)- u(\lambda^{-1} y)| \, \mathrm{d}y  
\end{align*}
for every bounded $A \subset \mathbb{R}^2$. 
\end{proof}

\subsection{Compactness}\label{section:compactness}
In this subsection we prove Theorem \ref{proposition:compactness}. As a preparation, we show the following coercivity property.

\begin{proposition}[Coercivity]\label{proposition:coerc} Let $X$ be a  configuration with $E_\eps(X) < +\infty$ and let $A \subset \R^2 $ be a Borel set. Then, there exists a universal $C>0$ such that
\begin{align}\label{eq:energyboundsjump}
   \mathcal{H}^1(J_u\cap {A}) \le CE_\varepsilon(X,(A)_\varepsilon).
\end{align}
where $u$ associated to $X$ is given by \eqref{def:u} and $(A)_\eps$ is defined in \eqref{eq: basic set def}. 
\end{proposition}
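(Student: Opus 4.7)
The jump set $J_u$ lies on the union of the hexagonal boundaries $\partial V_\eps^{z(x)}(x)$ over the set of ``good'' atoms $G := \lbrace x \in X : \#\mathcal{N}_\eps(x) = 6\rbrace$, since by \eqref{def:u} the function $u$ is constant on each such cell and equals $\mathbf{0}$ on the complement of $\bigcup_{x \in G} V_\eps^{z(x)}(x)$. Each such cell is a regular hexagon of side length $\eps/\sqrt{3}$. My plan is to bound $\mathcal{H}^1(J_u \cap A)$ by counting, for each $x \in G$, the number of its six edges that actually contribute to $J_u$, and then to relate this count to the atoms in $X \setminus G$, which are precisely the ones carrying the surface energy.

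The key structural step is a rigidity claim: \emph{if $x, y \in G$ with $|x - y| = \eps$, then $z(x) = z(y)$.} I would argue it as follows. Since $y \in \mathcal{N}_\eps(x) \subset \eps\mathscr{L}(z(x))$, we have $y - x = \eps e^{i\theta(x)} \omega^k$ for some $k \in \lbrace 0,\dots,5\rbrace$, and symmetrically $x - y = \eps e^{i\theta(y)} \omega^\ell$ for some $\ell$. Comparing the two expressions modulo the $\pi/3$-rotational symmetry of $\mathscr{L}$ forces $\theta(x) = \theta(y)$ in $\mathbb{A}$; matching translational parts in the representations $x = e^{i\theta(x)}\eps(v_x + \tau(x))$, $y = e^{i\theta(y)}\eps(v_y + \tau(y))$ with $v_x,v_y \in \mathscr{L}$ then yields $\tau(x) = \tau(y)$ in $\mathbb{T}$. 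Consequently, for $x \in G$ with $\eps$-neighbors $y_1,\dots,y_6$, the edge $e_i$ of $V_\eps^{z(x)}(x)$ perpendicular to $[x,y_i]$ is shared with $V_\eps^{z(y_i)}(y_i)$ and carries the same value of $u$ whenever $y_i \in G$; hence $e_i \subset J_u$ can occur \emph{only if} $y_i \in X \setminus G$, i.e., $\#\mathcal{N}_\eps(y_i) < 6$.

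The counting is then straightforward. Each hexagonal edge has length $\eps/\sqrt{3}$, and every atom $y \in X \setminus G$ can be an $\eps$-neighbor of at most six atoms of $G$. Moreover, if the edge $e_i$ between $x$ and $y_i$ meets $A$, both $x$ and $y_i$ lie in $(A)_\eps$, since $e_i$ is at distance $\eps/2$ from each of these two atoms. This yields
\begin{align*}
\mathcal{H}^1(J_u \cap A) \le \sum_{x \in G}\,\sum_{\substack{y_i \in \mathcal{N}_\eps(x) \setminus G \\ e_i \cap A \neq \emptyset}} \frac{\eps}{\sqrt{3}} \le 2\sqrt{3}\,\eps\, \#\bigl((X \setminus G) \cap (A)_\eps\bigr),
\end{align*}
while the trivial lower bound $E_\eps(X,(A)_\eps) \ge (\eps/2)\,\#\bigl((X \setminus G) \cap (A)_\eps\bigr)$ closes the argument and gives \eqref{eq:energyboundsjump} with $C = 4\sqrt{3}$. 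The main obstacle is really the rigidity claim of the second paragraph: the geometric picture is clear, but executing the bookkeeping carefully, so that the identifications $\theta(x) = \theta(y)$ in $\mathbb{A}$ and $\tau(x) = \tau(y)$ in $\mathbb{T}$ are genuine equalities in the respective quotients, is what makes all ``grain boundaries'' in $J_u$ effectively solid-vacuum interfaces and thereby enables the clean combinatorial count above.
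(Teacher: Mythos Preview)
Your argument is correct and follows essentially the same route as the paper: both bound $\mathcal{H}^1(J_u\cap A)$ by counting hexagon edges in the jump set, observe that each such edge forces the corresponding $\varepsilon$-neighbor to have fewer than six neighbors (your rigidity claim is exactly what the paper invokes with ``see Figure~\ref{fig:interpolation}''), and then use that each bad atom accounts for at most six such edges and lies in $(A)_\varepsilon$. You even recover the same constant $C=4\sqrt{3}=12/\sqrt{3}$; the only cosmetic difference is that the paper phrases the count via $\sum_j\mathcal{H}^1(\partial^* G_j\cap A)$ whereas you work directly with $J_u$.
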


\begin{proof} Let $A \subset \R^2 $ be   a Borel set.  Consider $X \subset \mathbb{R}^2$ with  $E_\varepsilon(X) <+\infty$.
In view of \eqref{eq: PC def} and \eqref{def:u}, the function $u$ associated to $X$ can be written in the form $u = \sum\nolimits_{j=1}^\infty  \chi_{G_j} z_j$ for pairwise distinct  $\lbrace z_j \rbrace_j \subset \mathcal{Z} \setminus \lbrace \mathbf{0} \rbrace$ and pairwise disjoint $\lbrace G_j \rbrace_j \subset \mathbb{R}^2$. By \cite[Remark 4.22]{AFP} it suffices to check that
\begin{align}\label{eq: stronger}
\sum\nolimits_{j=1}^\infty \mathcal{H}^1( \partial^* G_j \cap {A}) \le C E_\varepsilon(X,(A)_\varepsilon). 
\end{align} 
Due to  the construction in \eqref{def:u}, each $G_j$ is made of a finite union of regular hexagons  with sidelength $\varepsilon/\sqrt{3}$ \EEE such that at the center of each such hexagon there is an atom  $x \in X$ with $\#\mathcal{N}_\varepsilon(x)=6$. If an edge of such a hexagon is contained in  $\partial^* G_j$,  then there exists a point $y \in \mathcal{N}_\varepsilon(x)$ such that $\#\mathcal{N}_\varepsilon(y) <6$, see Figure \ref{fig:interpolation}.  If the intersection of that edge with ${A}$ is non-empty, then  $y \in (A)_\varepsilon \cap X$, see \eqref{def:neighbourhood} and \eqref{eq: basic set def}. Note that each such $y$ is selected for at most six different edges of hexagons contained in $\partial G^*_j$. By \eqref{def:energyneighbourhood},  this yields
\begin{align*}
\sum\nolimits_{j \in \mathbb{N}} \mathcal{H}^1( \partial^* G_j \cap {A}) \leq  \tfrac{6}{\sqrt{3}}  \eps \, \#\{ y \in X \cap (A)_\varepsilon \colon \#\mathcal{N}_\varepsilon(y) <6\} \leq  \tfrac{12}{\sqrt{3}}  E_\varepsilon(X,(A)_\varepsilon),
\end{align*}
 where we used that each edge of the hexagon has length $\eps/\sqrt{3}$.
\end{proof}

\begin{proof}[Proof of Theorem \ref{proposition:compactness}]
Let $\lbrace X_\eps\rbrace_\eps$ and $\lbrace u_\eps\rbrace_\eps$ be given, as defined in \eqref{def:u}.  Recall that $\mathcal{Z}$ can be embedded into $\mathbb{R}^7$ and that it  is closed and bounded, see \eqref{eq: state space}.  Therefore, for each $B_r$, $r \in \mathbb{N}$, we can use Proposition \ref{proposition:coerc} and a  compactness result for piecewise constant functions, see  \cite[Theorem 4.25]{AFP}, to find a subsequence $\{\varepsilon_k\}_k$ and $u^r \in PC(B_r;\mathcal{Z})$ such that $u_{\eps_k} \to u^r$  in measure and thus also in  $L^1(B_r;\mathcal{Z})$. By lower semicontinuity  there holds $\mathcal{H}^1(J_{u^r} \cap B_r) \le C$ for a constant independent of $r$. By a diagonal argument, we obtain $u\colon \mathbb{R}^2 \to \mathcal{Z}$ with $u=u^r$ on $B_r$ for all $r \in \N$ such that $u_{\eps_k} \to u$ in $L^1_{\rm loc}(\R^2;\mathcal{Z})$. Clearly, $\mathcal{H}^1(J_u)<+\infty$. Thus, to show that $u \in PC(\mathbb{R}^2;\mathcal{Z})$, it remains to check that  $\mathcal{L}^2(\lbrace u \neq \mathbf{0}\rbrace) < + \infty$.  

 Using \eqref{eq:energyboundsjump} with $A=\mathbb{R}^2$, the isoperimetric inequality on $\mathbb{R}^2$, $\mathcal{L}^2( \{u_{\varepsilon_k}\neq \mathbf{0}\}) <+\infty$,  and the fact that $\mathcal{L}^2(\{u \neq \mathbf{0}\})$ is lower semicontinuous with respect to strong $L^1_{\rm loc}$ convergence, we obtain
\begin{align*}
\big(\mathcal{L}^2(\{u \neq \mathbf{0}\})\big)^{1/2} &\leq\liminf_{k \to +\infty} \big(\mathcal{L}^2(\{u_{\varepsilon_k} \neq \mathbf{0}\}) \big)^{1/2}\leq  \liminf_{k \to +\infty} C\mathcal{H}^1(\partial^* \{u_{\varepsilon_k}\neq \mathbf{0}\}) \\&\leq \liminf_{k \to +\infty} C\mathcal{H}^1(J_{u_{\eps_k}})\leq \liminf_{k\to+\infty} C E_{\varepsilon_k}(X_{\varepsilon_k}) <+\infty.
\end{align*}
This implies that $u \in PC(\mathbb{R}^2;\mathcal{Z})$ and concludes the proof.
\end{proof}

\subsection{Lower Bound}\label{section:lowerbound}

This subsection is devoted to the proof of Theorem \ref{th: Gamma}(i).  For the proof, it is instrumental to use a different cell formula. In contrast to imposing boundary conditions as in \eqref{def:varphi}, we require $L^1$-convergence to the function $u^{\nu}_{z^+,z^-} \in PC_{\rm loc}(\mathbb{R}^2;\mathcal{Z})$ defined by 
\begin{align}\label{eq: step function}
u^{\nu}_{z^+,z^-}(x) =      \begin{cases}
z^+ & \text{ if } \langle x, \nu\rangle \ge 0, \\
z^- & \text{ if }\langle x, \nu \rangle < 0,
\end{cases} 
\end{align}
for $x \in \mathbb{R}^2$,  $z^+,z^- \in \mathcal{Z}$, and $\nu \in \mathbb{S}^1$. More precisely, for $z^+,z^- \in \mathcal{Z}$ and $\nu \in \mathbb{S}^1$ we introduce 
\begin{align}\label{def:psi}
\begin{split}
\psi(z^+,z^-,\nu) := \inf\Big\{  \liminf_{\varepsilon \to 0} E_\varepsilon & \big(X_\varepsilon,Q^\nu(y_\eps)\big)\colon \,   y_\eps \in \mathbb{R}^2, \, \lim_{\eps \to 0} \int_{Q^\nu} |u_\eps (x + y_\eps)-   u^{\nu}_{z^+,z^-}(x)| \, {\rm d}x = 0  \Big\},
\end{split}
\end{align}
where $u_\eps$ denotes the function associated to $X_\eps$, as defined in \eqref{def:u}. The density $\psi$ is related to $\varphi$ (see \eqref{def:varphi}) in the following way. 

\begin{proposition}[Relation of $\psi$ and $\varphi$]\label{prop, psi}
 For all $z^+,z^- \in \mathcal{Z}$ and $\nu \in \mathbb{S}^1$ there holds
$$\psi(z^+,z^-,\nu)\ge \varphi(z^+,z^-,\nu).$$

\end{proposition}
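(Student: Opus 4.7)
The plan is to promote a near-optimal sequence for $\psi$ into an admissible competitor for $\varphi$ by a surgery in a thin annular boundary layer, with controlled energy cost. Fix $\eta > 0$ and pick a sequence $\{X_\varepsilon\}_\varepsilon$ and shifts $\{y_\varepsilon\}_\varepsilon$ so that the associated functions $\{u_\varepsilon\}_\varepsilon$ satisfy $u_\varepsilon(\cdot + y_\varepsilon) \to u^\nu_{z^+,z^-}$ in $L^1(Q^\nu)$ and $\liminf_{\varepsilon\to 0} E_\varepsilon(X_\varepsilon, Q^\nu(y_\varepsilon)) \leq \psi(z^+,z^-,\nu) + \eta$. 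By the translation invariance in Lemma~\ref{lemma:propertiesofE}(i), I may assume $y_\varepsilon = 0$.

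First I would localize the modification by a pigeonhole argument. Fix small $\delta \in (0,1/2)$ and a large integer $N$, and partition the annular region $Q^\nu_{1-\delta}\setminus Q^\nu_{1-2\delta}$ into $N$ disjoint concentric shells of width $\sim \delta/N$. By the additivity of the energy (Lemma~\ref{lemma:propertiesofE}(iv)), at least one shell $\mathcal{A}_\varepsilon$, sandwiched between $Q^\nu_{\rho_1(\varepsilon)}$ and $Q^\nu_{\rho_2(\varepsilon)}$ with $\rho_1,\rho_2 \in [1-2\delta,1-\delta]$, satisfies $E_\varepsilon(X_\varepsilon,\mathcal{A}_\varepsilon) \leq (\psi+\eta+1)/N$. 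Combining this bound with the coercivity estimate of Proposition~\ref{proposition:coerc} gives $\mathcal{H}^1(J_{u_\varepsilon}\cap \mathcal{A}_\varepsilon) \leq C/N$, and together with the $L^1$-convergence of $u_\varepsilon$ to the step $u^\nu_{z^+,z^-}$ this forces $u_\varepsilon \equiv z^\pm$ on $\mathcal{A}_\varepsilon \cap Q^{\nu,\pm}$ outside a set of area $o(1)$. A Fubini/slicing argument along concentric cubes inside $\mathcal{A}_\varepsilon$ then selects a radius $\rho = \rho(\varepsilon)\in (\rho_1,\rho_2)$ along which the atoms of $X_\varepsilon$ in $\partial^\pm_\varepsilon Q^\nu_\rho$ coincide with those of $\varepsilon\mathscr{L}(z^\pm)$ up to an exceptional set of $\mathcal{H}^1$-measure $o(1)$, i.e.\ up to $o(1/\varepsilon)$ atoms.

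Next, I would define the competitor $\tilde X_\varepsilon$ by keeping $X_\varepsilon \cap Q^\nu_\rho$ in the interior, removing $X_\varepsilon \cap \partial^\pm_\varepsilon Q^\nu_\rho$, and inserting $\varepsilon\mathscr{L}(z^\pm)\cap \partial^\pm_\varepsilon Q^\nu_\rho$ instead. By the previous step these two point sets differ in only $o(1/\varepsilon)$ atoms, and each atom modification affects at most six bonds of weight $\varepsilon$, so the surgery costs at most $o(1)$ in energy. Hence $\tilde X_\varepsilon$ is admissible in the cell formula \eqref{def:varphi} at scale $\rho$ centered at $0$, and
\[
\rho\cdot \varphi(z^+,z^-,\nu) = \lim_{\varepsilon\to 0} \min E_\varepsilon(\cdot, Q^\nu_\rho) \leq \liminf_{\varepsilon\to 0} E_\varepsilon(\tilde X_\varepsilon, Q^\nu_\rho) \leq \psi(z^+,z^-,\nu) + \eta + \frac{C}{N}.
\]
Sending first $N\to\infty$, then $\delta \to 0$ (so that any accumulation point of $\rho(\varepsilon)$ is $\geq 1-2\delta \to 1$), and finally $\eta\to 0$, yields $\varphi \leq \psi$.

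The main obstacle is the surgery: because $V_{\rm sticky}$ is infinite on $(0,1)$, a single pair of replaced-and-kept atoms at distance in $(0,\varepsilon)$ would make the energy blow up, and even without forbidden near-collisions a careless replacement can break many bonds and contribute $O(1)$ extra energy. The safeguard is that in our brittle setting $u_\varepsilon \equiv z^\pm$ on a subregion means that $X_\varepsilon$ coincides \emph{exactly} with $\varepsilon\mathscr{L}(z^\pm)$ there, so the Fubini step picks a radius $\rho$ along which the surgery is almost entirely a no-op and the few mismatched atoms, being confined to a lower-dimensional discrepancy set, can be excised cleanly. Turning this heuristic into a quantitative statement, and in particular verifying that the surgery never generates forbidden near-collisions, is exactly the content of the fundamental estimate developed in Section~\ref{section:surfacetension}.
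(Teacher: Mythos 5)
Your architecture (pigeonhole to find a low-energy shell, coercivity plus an isoperimetric argument to identify a dominant phase, then a cut-off surgery in that shell) is essentially the first half of the paper's proof, namely Lemma \ref{lemma:cutoff}. However, there is a genuine gap at the decisive step: $L^1$-convergence of $u_\varepsilon$ to the step function $u^\nu_{z^+,z^-}$ does \emph{not} force $u_\varepsilon \equiv z^\pm$ on most of the shell, and hence does not force $X_\varepsilon$ to coincide with $\varepsilon\mathscr{L}(z^\pm)$ there. The configuration may coincide with $\varepsilon\mathscr{L}(z^\pm_\varepsilon)$ for some $z^\pm_\varepsilon=(\theta^\pm_\varepsilon,\tau^\pm_\varepsilon,1)\to z^\pm$ with $z^\pm_\varepsilon\neq z^\pm$ for every $\varepsilon$ (e.g.\ a slightly rotated lattice, or — as the paper's remark on varying cubes points out — a translated one, whose micro-translation oscillates on scale $\varepsilon$). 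In that case your surgery of swapping $X_\varepsilon\cap\partial^\pm_\varepsilon Q^\nu_\rho$ for $\varepsilon\mathscr{L}(z^\pm)\cap\partial^\pm_\varepsilon Q^\nu_\rho$ is not "almost a no-op": the boundary region $\partial^\pm_\varepsilon Q^\nu_\rho$ contains ${\rm O}(1/\varepsilon)$ atoms, essentially all of which change position, so the replacement can sever ${\rm O}(1/\varepsilon)$ bonds of weight $\varepsilon$ (an ${\rm O}(1)$ energy error, not $o(1)$) and, worse, can place a new atom at distance strictly less than $\varepsilon$ from a retained one, making the energy $+\infty$. Your Fubini/slicing step cannot select a radius avoiding this, because the mismatch set between $\varepsilon\mathscr{L}(z^\pm_\varepsilon)$ and $\varepsilon\mathscr{L}(z^\pm)$ is not lower-dimensional — it is everything.

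What your argument actually yields (after repairing the surgery along the lines of the paper's averaging over thickened layers $L^\varepsilon_k$) is the inequality $\psi\ge\Phi$, where $\Phi$ is the cell formula with \emph{converging} boundary data $z^\pm_\varepsilon\to z^\pm$. The passage from converging to fixed boundary values is a separate and substantial piece of work that your proposal silently assumes: one must show $\Phi(z^+,z^-,\nu)\ge\bar\varphi(z^+,z^-,\nu)$, which in this rigid model requires the reduction of minimizers to subsets of two lattices (Lemma \ref{lemma:reduction}), the rigidity of the mismatch angle via the touching-point analysis (Lemma \ref{lemma: touching}), the normalization of the rotations (Lemma \ref{lemma:thetaT-rotation}), and the uniform closedness of touching pairs under converging translations (Lemma \ref{lemma:translation}). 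None of this is captured by "the fundamental estimate of Section \ref{section:surfacetension}", which only produces the converging boundary data. So the proposal is incomplete precisely where the paper's main technical novelty lies.
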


We postpone the  proof  of Proposition \ref{prop, psi} to Sections \ref{section:surfacetension}--\ref{section:surfacetension2}. It will follow by combining  Lemma \ref{lemma:cutoff}, Lemma \ref{lemma: calculation}, and  Proposition \ref{proposition:existence}. After a further comment about the definition of $\psi$, we proceed with the proof of the lower bound.

\begin{remark}[Varying cubes in the definition of $\psi$]
{\normalfont
We point out that, in contrast to many other cell formulas in the literature, the \emph{position of the cubes} in \eqref{def:psi} is \emph{not fixed} but may vary along the sequence $\eps \to 0$. This general definition is necessary as the problem is not translation invariant in the variables $z^\pm$, although the discrete energy has such a property, see Lemma \ref{lemma:propertiesofE}(i). To see this issue, consider a sequence $\lbrace X_\eps\rbrace_\eps$ contained in a fixed lattice $X_\eps \subset \eps e^{i\theta}(\mathscr{L}+\tau)$. Then, for a fixed translation $\sigma \in \R^2$, the shifted configurations $\tilde{X}_\eps:= X_\eps + \sigma$ are contained in $\eps e^{i\theta}(\mathscr{L} + \tau_\eps)$, where the translation  $\tau_\eps := (\tau + e^{-i\theta}\sigma/\eps)$ (modulo $\mathscr{L}$)  is in general different from $\tau$ and highly oscillating. This in general implies $\tilde{u}_\eps \neq u_\eps(\cdot - \sigma)$, where ${u}_\eps$ and $\tilde{u}_\eps$ are given in \eqref{def:u}.  This lack of translational invariance is remedied in our approach by minimizing over all possible cell centers. Note that only \emph{a posteriori} we are able to show that the cell formula $\varphi$ is actually independent of the center, see Proposition \ref{proposition:existence-original}.      
}
\end{remark}

\begin{proof}[Proof of Theorem \ref{th: Gamma}(i)]
Let $\lbrace X_{\eps}\rbrace_\eps$ be a sequence with $X_{\eps} \to u$ in $L^1_{\rm loc}(\mathbb{R}^2)$ for  $u \in PC(\mathbb{R}^2;\mathcal{Z})$.  Clearly, it suffices to treat the case  
\begin{align}\label{ineq: equiboundedliminf}
\sup\nolimits_{\varepsilon >0} E_\varepsilon(X_\varepsilon) <+\infty.
\end{align}
We proceed  in two steps. We first identify a limiting measure associated to the discrete configurations (Step 1). Then, we proceed by a blow-up procedure for the jump part of this measure (Step~2). 

\noindent \emph{Step 1: Identification of a limiting measure.} We consider the family of positive measures $\{\mu_\varepsilon\}_\varepsilon$ given by
\begin{align*}
\mu_\varepsilon :=\frac{1}{2} \sum\nolimits_{x\in X} \varepsilon\left(6-\#\mathcal{N}_\varepsilon(x)\right)\delta_x.
\end{align*}
By \eqref{def:energyneighbourhood} we observe that for all open sets $A \subset \mathbb{R}^2$ there holds 
\begin{align}\label{eq: measure energy}
|\mu_\varepsilon|(A) =  \mu_\varepsilon(A) = E_\varepsilon(X_\varepsilon,A).
\end{align}
Therefore, by  \eqref{ineq: equiboundedliminf}  we get  $\sup_{\varepsilon >0} |\mu_\varepsilon|(\mathbb{R}^2) < +\infty$. Thus, as $\R^2$ is locally compact,   up to passing to a subsequence (not relabeled), there exists a positive finite Radon measure $\mu$ such that
\begin{align}\label{conv: weakstarliminf}
\mu_\varepsilon\overset{\ast}{\rightharpoonup}  \mu.
\end{align}
By the Radon-Nykodym Theorem we may decompose $\mu$ into two mutually singular non-negative measures
\begin{align*}
\mu=\xi\mathcal{H}^1|_{J_u} +\mu_s.
\end{align*}
The main point is to prove
\begin{align}\label{eq: goalliminf}
\xi(x_0) \geq \psi(z^+,z^-,\nu) \quad \text{for } \mathcal{H}^1\text{-a.e. } x_0 \in J_u,
\end{align}
where $z^+$ and $z^-$ denote the one-sided limits of $u$ at $x_0$ and $\nu$ denotes the corresponding normal. (For notational convenience, the explicit dependence on $u$ is omitted.)  Once this is shown, the statement follows from \eqref{eq: limiting func}, \eqref{eq: measure energy}, \eqref{conv: weakstarliminf}, and Proposition \ref{prop, psi}. In fact,
$$\liminf_{\eps \to 0} E_{\varepsilon}(X_{\varepsilon}) = \liminf_{\eps \to 0} \mu_\eps(\R^2) \ge \mu(\R^2) \ge \int_{J_u} \xi  \, {\rm d} \mathcal{H}^1 \ge   \int_{J_u} \varphi(z^+,z^-,\nu) \, {\rm d}\mathcal{H}^1  =  E(u).$$

\noindent \emph{Step 2: Blow-up argument.} It remains to prove \eqref{eq: goalliminf}. By the properties of $SBV$-functions and Radon measures we know that for $\mathcal{H}^1$-a.e. $x_0 \in J_u$ there holds 
\begin{itemize}
\item[(a)] 
$ \displaystyle
\lim_{\rho \to 0} \frac{1}{\rho^2}\int_{Q^{\nu}_{\rho}(x_0)}|u(x)-u_{z^+,z^-}^{\nu}(x-x_0)|\, \mathrm{d}x=0,
$
\item[(b)] $\displaystyle \lim_{\rho \to 0} \frac{1}{\rho}\mathcal{H}^1\big(J_u \cap Q^\nu_{\rho}(x_0)\big)=1$,
\item[(c)] $\displaystyle \xi(x_0) = \lim_{\rho \to 0} \frac{\mu(Q^\nu_{\rho}(x_0))}{\mathcal{H}^1\big(J_u \cap Q^\nu_{\rho}(x_0)\big)}$, 
\end{itemize}
see, e.g., \cite[Theorem 2.63, Theorem 3.78, and Remark 3.79]{AFP}.  Here, $u^\nu_{z^+,z^-}$ is defined in \eqref{eq: step function}. \EEE It suffices to prove \eqref{eq: goalliminf} for all $x_0 \in J_u$ such that (a)-(c) hold. We  fix $\rho_n \to 0$ such that $|\mu|(\partial Q^\nu_{\rho_n}(x_0))=0$ for all $n \in \mathbb{N}$. By  \eqref{eq: measure energy}, \eqref{conv: weakstarliminf},   (b), (c),  and the Portmanteu Theorem, we get 
\begin{align*}
\xi(x_0) &= \lim_{\rho \to 0} \frac{\mu(Q^\nu_{\rho}(x_0))}{\mathcal{H}^1(J_u \cap Q^\nu_{\rho}(x_0))} = \lim_{\rho \to 0}\frac{\mu(Q^\nu_{\rho}(x_0))}{\rho} = \lim_{n \to +\infty} \frac{1}{\rho_n} \lim_{\varepsilon \to 0} \mu_\varepsilon\big(Q^\nu_{\rho_n}(x_0)\big)\\& = \lim_{n\to+\infty} \frac{1}{\rho_n} \lim_{\varepsilon \to 0} E_\varepsilon\big(X_\varepsilon,Q^\nu_{\rho_n}(x_0)\big).
\end{align*}
We introduce the configuration $ X_\eps^n := \rho_n^{-1} X_\varepsilon $ and obtain by  Lemma \ref{lemma:propertiesofE}(ii)  (for $\lambda = 1/\rho_n$)
\begin{align}\label{eq: doublelimit}
\xi(x_0)  = \lim_{n\to+\infty} \lim_{\varepsilon \to 0 } E_{\varepsilon/\rho_n} \big( X^n_\eps, Q^\nu(\rho^{-1}_n x_0) \big).
\end{align}
Since $X_{\eps} \to u$ in $L^1_{\rm loc}(\mathbb{R}^2)$, we obtain by definition that $u_\varepsilon \to u$ in $L^1_{\rm loc}(\mathbb{R}^2)$, see the end of Subsection \ref{subsection:definitions}. By $u_\eps^n$ we denote the function corresponding to  $X_\eps^n$. By \eqref{eq:interpolation} we have $u^n_{\varepsilon}( x)=u_\varepsilon(\rho_n x)$ for all $x \in \mathbb{R}^2$.  In particular,  Lemma \ref{lemma:scaling} yields $u_\eps^n \to u^n$ on $Q^\nu(\rho_n^{-1} x_0)$, where $u^n(x):= u(\rho_n x) $ for $x \in \mathbb{R}^2$. By (a),  change of variables,  and the fact that $u^n(x+\rho_n^{-1}x_0) = u(x_0 + \rho_nx)$  as well as $u^{\nu}_{z^+,z^-}(x) = u^{\nu}_{z^+,z^-}(\rho_n  x)$ for $x\in \mathbb{R}^2$, we also get that 
$$\lim_{n\to+\infty} \int_{Q^\nu} | u^n(x+ \rho_n^{-1}x_0) - u^\nu_{z^+,z^-}(x) | \, \dx = \lim_{n\to+\infty} \frac{1}{\rho^2_n}\int_{Q^{\nu}_{\rho_n}(x_0)}|u(x)-u_{z^+,z^-}^{\nu}(x-x_0)|\, \dx   = 0.$$
Therefore, by recalling \eqref{eq: doublelimit} and $u_\eps^n \to u^n$ on $Q^\nu(\rho_n^{-1} x_0)$, by using a standard diagonal argument, we find  an infinitesimal sequence  $\{\varepsilon(n)\}_n$ such that for $ X^n := X^n_{\eps(n)}$ and $u^n:= u^n_{\eps(n)}$ we have
\begin{align}\label{eq: doublelimit2}
\xi(x_0)  = \lim_{n\to +\infty }  E_{\eps_n} \big(X^n,Q^\nu(y^n) \big),
\end{align}
and 
$$\lim_{n \to +\infty}\int_{Q^\nu} |u^n(x   + y^n   )  - u_{z^+,z^-}^\nu(x)    | \, \dx = 0, $$
where  $\eps_n =\eps(n)/\rho_n$ and     $y^n=\rho_n^{-1}x_0$. Since the sequence is admissible in \eqref{def:psi},   \eqref{eq: doublelimit2} implies $\xi(x_0) \geq \psi(z^+,z^-,\nu) $. This shows  \eqref{eq: goalliminf} and concludes the proof. \EEE
 \end{proof}

\subsection{Upper Bound}\label{section:upperbound}

This subsection is devoted to the proof of Theorem \ref{th: Gamma}(ii). The following density result will be instrumental.

\begin{lemma}\label{lemma: sup density}
Let $u \in PC(\mathbb{R}^2;\mathcal{Z})$. Then there exists a sequence $(u_n)_n \subset  PC(\mathbb{R}^2;\mathcal{Z})$ with $u_n \to u$ in $L^1(\R^2)$ and $\limsup_{n\to +\infty} E(u_n) \le E(u)$ such that  each $u_n$ attains only finitely many values and has polygonal jump set, i.e., $J_{u_n}$ consists of finitely many segments.
\end{lemma}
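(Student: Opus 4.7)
I plan to argue in two steps: first reduce $u$ to a function with only finitely many grains (Step 1), then apply a polyhedral approximation to the resulting finite Caccioppoli partition (Step 2). For Step 1, write $u = \sum_{j \ge 1} \chi_{G_j} z_j$ as in \eqref{eq: PC def}. The pairwise disjointness of the $G_j$ together with $\mathcal{L}^2(\bigcup_j G_j) < +\infty$ forces $\sum_j \mathcal{L}^2(G_j) < +\infty$, and \eqref{eq: boundedness cond for jump} gives $\sum_j \mathcal{H}^1(\partial^* G_j) < +\infty$. For $\eta > 0$ I choose $N$ so large that $\sum_{j > N}\bigl(\mathcal{L}^2(G_j) + \mathcal{H}^1(\partial^* G_j)\bigr) < \eta$, and set $u^N := \sum_{j=1}^N \chi_{G_j} z_j$. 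Then $u^N \in PC(\mathbb{R}^2; \mathcal{Z})$ takes at most $N+1$ values, $\|u^N - u\|_{L^1} \le C\eta$, and $J_{u^N} = \bigcup_{j \le N} \partial^* G_j$ up to an $\mathcal{H}^1$-null set. On the part of $J_{u^N}$ where the one-sided limits of $u$ and $u^N$ coincide (interfaces between two kept grains, or between a kept grain and vacuum not adjacent to any discarded grain), the integrands of $E(u^N)$ and $E(u)$ agree; the remaining part lies in $\bigcup_{k > N} \partial^* G_k$ and therefore has $\mathcal{H}^1$-measure less than $\eta$. Since $\varphi$ is uniformly bounded by Theorem~\ref{prop: properties of varphi}(i)-(ii), this remaining part contributes at most $C\eta$ to $E(u^N)$, yielding $E(u^N) \le E(u) + C\eta$.

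For Step 2, the support of $u^N$ lies in some ball $B_R$, so $\{G_0, G_1, \ldots, G_N\}$, with $G_0 := B_R \setminus \bigcup_{j=1}^N G_j$, forms a Caccioppoli partition of $B_R$ with only $N+1$ elements. I plan to invoke a standard density result for finite Caccioppoli partitions (in the spirit of Cortesani--Toader, adapted to the partition setting with preserved labels): there exist polygonal partitions $\{P_0^k, P_1^k, \ldots, P_N^k\}$ of $B_R$ with $\chi_{P_j^k} \to \chi_{G_j}$ in $L^1$ for every $j$, and
\begin{align*}
\int_{\partial^* P_i^k \cap \partial^* P_j^k} \psi(\nu) \, d\mathcal{H}^1 \longrightarrow \int_{\partial^* G_i \cap \partial^* G_j} \psi(\nu) \, d\mathcal{H}^1 \quad \text{for all } 0 \le i < j \le N
\end{align*}
and every continuous $\psi \colon \mathbb{S}^1 \to [0, +\infty)$. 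Applying this with $\psi = \varphi(z_i, z_j, \cdot)$ (setting $z_0 := \mathbf{0}$), which is continuous on $\mathbb{S}^1$ by convexity and positive $1$-homogeneity from Theorem~\ref{prop: properties of varphi}(iii), and setting $u^{N,k} := \sum_{j=1}^N \chi_{P_j^k} z_j$, I obtain $E(u^{N,k}) \to E(u^N)$ as $k \to \infty$. By construction $u^{N,k}$ takes at most $N+1$ values in $\mathcal{Z}$ and has polygonal jump set.

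A diagonal extraction over $N$ and $k$ then yields a sequence $u_n$ with $u_n \to u$ in $L^1(\mathbb{R}^2)$, $\limsup_n E(u_n) \le E(u)$, finitely many values, and polygonal jump set. The main obstacle I expect is the lack of joint continuity of $\varphi$ in $(z^+, z^-)$ caused by the exceptional set $\mathcal{N}$ of Theorem~\ref{prop: properties of varphi}(ii), which would prevent a direct application of classical density results to general $SBV$ functions into $\mathcal{Z}$. The truncation in Step 1 is precisely what circumvents this: once the target takes only finitely many values, continuity in $(z^+, z^-)$ is vacuous, and only continuity of $\varphi$ in $\nu$ (automatic from convexity and homogeneity) is actually used.
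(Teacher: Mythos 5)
Your overall strategy (truncate to finitely many grains, then polyhedrally approximate the resulting finite Caccioppoli partition via a density theorem for partitions) is the same as the paper's, and your Step 1 energy estimate and your observation that truncation makes the discontinuity of $\varphi$ in $(z^+,z^-)$ harmless are both correct. There is, however, one genuine gap: the assertion at the start of your Step 2 that ``the support of $u^N$ lies in some ball $B_R$.'' This does not follow from anything you have established. Each $G_j$ is only a set of finite perimeter and finite measure, and such a set need not be essentially bounded (take a union of balls $B_{2^{-k}}(x_k)$ with $|x_k|\to\infty$). So even after discarding all but finitely many grains, $\{u^N\neq\mathbf{0}\}$ can be unbounded, and you cannot form a Caccioppoli partition of a ball $B_R$ containing the support, nor apply the polyhedral density result on a bounded domain.

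The fix is an additional truncation step, which the paper carries out first: choose $R'$ with $\mathcal{L}^2(\{u\neq\mathbf{0}\}\setminus B_{R'})\le\delta$, then use the coarea formula to select $R\in(R',R'+1)$ with $\mathcal{H}^1(\{u\neq\mathbf{0}\}\cap\partial B_R)\le\delta$, and replace $u$ by $u\chi_{B_R}$. The new jump set created on $\partial B_R$ has length at most $\delta$ and carries the bounded solid--vacuum density $\varphi(z,\mathbf{0},\cdot)\le C$, so the energy increases by at most $C\delta$ and the $L^1$ error is at most $C\delta$. With this step inserted before your grain truncation, the rest of your argument goes through; your direct tail estimate $\sum_{j>N}\mathcal{L}^2(G_j)<\eta$ (from disjointness and finite total measure) is in fact slightly more economical than the paper's detour through the isoperimetric inequality, and your appeal to a polyhedral density theorem for finite partitions with continuous, $1$-homogeneous normal dependence is exactly what the paper does via Braides--Conti--Garroni.
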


\begin{proof}
Consider $u \in PC(\mathbb{R}^2;\mathcal{Z})$. We proceed in three steps. We first show that $u$ can be approximated by functions with finite support (Step 1). Then, we approximate with functions attaining only finitely many values (Step 2) and finally show that the jump set can be approximated by a finite number of segments (Step 3). Note that it suffices to show that for each $\delta >0$ there exists a function  $u_\delta$ with the desired properties satisfying
\begin{align}\label{eq: delta-prop}
E(u_\delta) \le E(u) + \delta \ \ \ \text{ and } \ \ \  \Vert  u - u_\delta \Vert_{L^1(\R^2)} \le \delta.
\end{align}
We prove \eqref{eq: delta-prop} up to  the multiplication with  a uniform constant  $C > 0$  that is independent of $\delta$. Replacing $u_\delta$ with $u_{\delta/C}$, then yields the result.

 \noindent \emph{Step 1: Reduction to finite support.}  We show that  for every $u \in PC(\mathbb{R}^2;\mathcal{Z})$ and for every $\delta>0$ there exist $R>0$ and $u_\delta\in PC(\mathbb{R}^2;\mathcal{Z})$ such that \eqref{eq: delta-prop} is satisfied and there holds 
 \begin{align}\label{ineq:finitesupport}
\{u_\delta \neq  \mathbf{0}\} \subset B_R.
 \end{align}
 To this end, fix $\delta>0$. Since there holds  $\mathcal{L}^2(\lbrace u \neq \mathbf{0} \rbrace) < +\infty$,  we can  choose $R'>0$ such that 
\begin{align}\label{eq: Rchoice}
\mathcal{L}^2\big(\{u\neq \mathbf{0}\} \cap (\mathbb{R}^2\setminus B_{R'})\big) \leq  \delta.
\end{align}
By the coarea formula and the previous inequality, we can select $R \in (R',R'+1)$ such that
\begin{align}\label{ineq:uneq0BR}
\mathcal{H}^1\big( \{u\neq \mathbf{0}\} \cap \partial B_{R}\big)  \leq \mathcal{L}^2\big(\{u\neq \mathbf{0}\} \cap (B_{R'+1}\setminus B_{R'})\big)\leq \mathcal{L}^2\big(\{u\neq \mathbf{0}\} \cap (\mathbb{R}^2\setminus B_{R'})\big) \leq \delta.
\end{align}
Define $u_\delta \in  PC(\mathbb{R}^2;\mathcal{Z})$ by $u_\delta = u\chi_{B_R}$. Then clearly \eqref{ineq:finitesupport} holds. We choose the orientation of $\nu_{u_\delta}(x)$ for $x \in J_u \cap \partial B_R$ such that $u^+_\delta$ coincides with the trace of $u$ from the interior of $B_R$. As
$\varphi(z,\mathbf{0},\nu) \leq C$ for all $z \in \mathcal{Z}$ and $\nu \in \mathbb{S}^1$ by Theorem \ref{prop: properties of varphi}(i), we use  (\ref{ineq:uneq0BR}) to get  
\begin{align*}
E(u_\delta) & =  \int_{B_{R} \cap J_{u_\delta}} \varphi(u_\delta^+,u_\delta^-,\nu_{u_\delta})\,\mathrm{d}\mathcal{H}^1 + \int_{\partial B_{R} \cap \{u\neq \mathbf{0}\}} \varphi(u_\delta^+,\mathbf{0},\nu_{u_\delta})\,\mathrm{d}\mathcal{H}^1 \\&\leq  \int_{B_{R} \cap J_{u}} \varphi(u^+,u^-,\nu_{u})\,\mathrm{d}\mathcal{H}^1 + C\mathcal{H}^1( \{u\neq \mathbf{0}\} \cap \partial B_{R}) \leq E(u) + C\delta.
\end{align*}
This implies the first inequality of \eqref{eq: delta-prop}. To see the second inequality of \eqref{eq: delta-prop}, note that $|z| \leq C$ for all $z\in \mathcal{Z}$ and therefore by \eqref{eq: Rchoice}
\begin{align*}
 \Vert u_\delta-u\Vert_{L^1(\mathbb{R}^2)} = \Vert u_\delta-u\Vert_{L^1(\mathbb{R}^2\setminus B_R)}\leq C\mathcal{L}^2 \big(\{u\neq \mathbf{0}\} \cap (\mathbb{R}^2\setminus B_{R'} ) \big) \leq C\delta.
\end{align*}

\noindent \emph{Step 2: Reduction to functions attaining finitely many values.} Consider $u \in PC(\mathbb{R}^2;\mathcal{Z})$. By Step~1 we may assume that \eqref{ineq:finitesupport} holds for some $R>0$, i.e., $\{u \neq \mathbf{0}\} \subset B_R$. For each $\delta >0$,  we prove that there exists $u_\delta \in PC(\mathbb{R}^2;\mathcal{Z})$ such that  \eqref{eq: delta-prop} holds and $u_\delta$ attains only finitely many values. Recall by  \eqref{eq: PC def} that $u$ can  be  written in the form $u = \sum\nolimits_{j=1}^\infty  \chi_{G_j} z_j$ for  pairwise distinct  $\lbrace z_j \rbrace_j \subset \mathcal{Z} \setminus \lbrace \mathbf{0} \rbrace$ and pairwise disjoint $\lbrace G_j \rbrace_j \subset \mathbb{R}^2$.   In view of \eqref{eq: boundedness cond for jump}, we can choose  $J_\delta \in \mathbb{N}$  sufficiently large such that
\begin{align}\label{ineq:peruzn}
\sum\nolimits_{j=J_\delta+1}^\infty \mathcal{H}^1\big(\partial^* G_j\big) \le \delta/R.
\end{align}
Note that $G_j \subset B_R$ for all $j \in \N$ since $\{u \neq \mathbf{0}\} \subset B_R$.  Due to the isoperimetric inequality on $B_R$ along with $  \mathcal{L}^2(G_j) \le \mathcal{L}^2(B_R)  =\pi R^2 $ for all $j \in \N$, we obtain 
\begin{align}\label{eq: volume bound}
\sum\nolimits_{j=J_\delta+1}^\infty \mathcal{L}^2(G_j) \le \  \sqrt{\pi R^2} \sum\nolimits_{j=J_\delta+1}^\infty \big(\mathcal{L}^2(G_j)\big)^{1/2} \leq CR\sum\nolimits_{j=J_\delta+1}^\infty \mathcal{H}^1\big(\partial^* G_j\big)\le C\delta,
\end{align}
where $C>0$ is a universal constant. Now we define 
\begin{align*}
u_\delta :=  \begin{cases}\displaystyle u &\displaystyle \text{in } \bigcup\nolimits_{j=1}^{J_\delta} G_j, \\
\mathbf{0} &\text{otherwise.}
\end{cases}
\end{align*}
Then, by \eqref{eq: volume bound} and $\Vert u \Vert_\infty \le C$ we get $\Vert u_\delta-u\Vert_{L^1(\mathbb{R}^2)}= \Vert u_\delta-u\Vert_{L^1(B_R)}\leq C\delta $. Moreover, setting for brevity $\Gamma := \bigcup_{j=J_\delta+1}^\infty\partial^* G_j$ we obtain by \eqref{ineq:peruzn}
\begin{align*}
E(u_\delta) = \int_{J_{u_\delta}}\varphi(u^+_\delta,u^-_\delta,\nu_{u_\delta}) \, \mathrm{d}\mathcal{H}^1 &=  \int_{J_{u_\delta}\cap \Gamma }\varphi(u^+_\delta,u^-_\delta,\nu_{u_\delta})\,\mathrm{d}\mathcal{H}^1  +\int_{J_{u_\delta}\setminus \Gamma}\varphi(u^+_\delta,u^-_\delta,\nu_{u_\delta})\, \mathrm{d}\mathcal{H}^1\\&\leq C \sum\nolimits_{j=J_\delta+1}^\infty \mathcal{H}^1\big( \partial^* G_j \big) +E(u)\leq C\delta+E(u),
\end{align*}
where we have used $\varphi(z_1,z_2,\nu) \leq C$ for all $z_1,z_2 \in \mathcal{Z}$ and $\nu \in \mathbb{S}^1$. Therefore, \eqref{eq: delta-prop} holds, and Step 2 is concluded.  

\noindent \emph{Step 3: Reduction to polyhedral jump  sets.}  Consider $u \in PC(\mathbb{R}^2;\mathcal{Z})$. By Steps 1--2 we can assume that $u$ attains only finitely many values, and its support is contained in $B_R$. By Theorem \ref{prop: properties of varphi}(iii) we get that the mapping $\nu \mapsto \varphi(z_1,z_2,\nu)$ is convex and thus continuous for all $z_1,z_2 \in \mathcal{Z}$. Therefore, by \cite[Theorem 2.1 and Corollary 2.4]{BraidesDensity}  (with $\Omega = B_R$ and $\mathcal{Z}$ being the range of $u$)  we obtain a function $u_\delta \in PC(\mathbb{R}^2;\mathcal{Z})$ with polyhedral jump set such that \eqref{eq: delta-prop} is satisfied. This concludes the proof. 
\end{proof}

We are now in a position to prove Theorem \ref{th: Gamma}(ii).

\begin{figure}
 \includegraphics{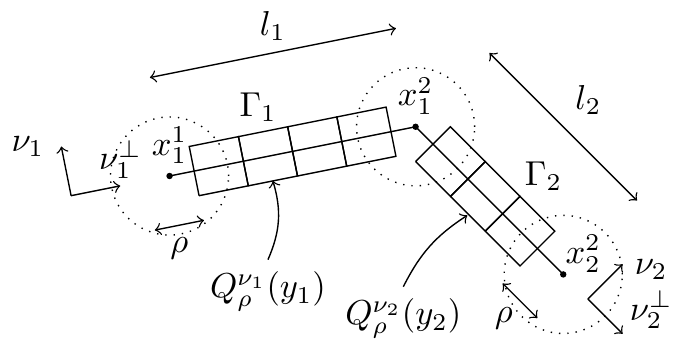}

\caption{The construction for the $\Gamma$-$\limsup$ in the case where the jump set is polyhedral: The part $\Gamma_1 \cup \Gamma_2$ of the jump set is shown. Here, $x_1^2$ equals $x_2^1$. The region $(M)_\delta$ is shown as the dotted circles around the points in $M$. Also the cubes used in the construction to cover the segments $\Gamma_1$ and $\Gamma_2$ are indicated.}
\label{fig:limsup}

\end{figure}

\begin{proof}[Proof of Theorem \ref{th: Gamma}(ii)] 
By Lemma \ref{lemma: sup density} and a general density argument in the theory of $\Gamma$-convergence  (see \cite[Remark 1.29]{Braides:02})\EEE , it suffices to construct recovery sequences for $u \in PC(\mathbb{R}^2;\mathcal{Z})$ such that $u$ attains only finitely many values, and $u$ has a polygonal jump set. Our goal is to prove  that there exists $\lbrace X_\eps \rbrace_\eps$ such that $X_\varepsilon \to u$ in $L^1_{\rm loc}(\R^2) $ and $\limsup_{\varepsilon \to 0} E_\varepsilon(X_\varepsilon) \leq E(u)$.

Let $J_u = \bigcup_{i=1}^N \Gamma_i =\bigcup_{i=1}^N [x_i^1;x_i^2]$, where the sets $\Gamma_i$ are line segments between the points $x_i^1$ and $x_i^2$,  defined in \eqref{def:line segment},   with  length $l_i$, orientation $\nu_i^\bot$, and  normal $\nu_i$. We can assume that the traces $(u^+,u^-) = (u^+_i,u^-_i)$ are constant along each line segment, and that  two segments $\Gamma_i$ and $\Gamma_j$ intersect  at most at  endpoints of $\Gamma_i$ and $\Gamma_j$.  Denote by $M$ the collection of points where at least two of such line segments meet. Fix  $0<\delta<\frac{1}{3} \min\{ |x-y| \colon x,y \in M, \, x\neq y \}$ and choose $\rho \in (0,\delta)$ small enough such that 
 \begin{align}\label{eq: choice of rho}
\rho <  \frac{1}{\sqrt{2}}\mathrm{dist}\Big(\Gamma_i \setminus \big(B_\delta(x_i^1) \cup B_\delta(x_i^2)\big), \, \Gamma_j \setminus \big(B_\delta(x_j^1) \cup B_\delta(x_j^2)\big) \Big) \ \ \ \text{for all $i \neq j$}.
 \end{align}
 This choice of $\rho$ implies that $Q^\nu_\rho(x_1) \cap Q^\nu_\rho(x_2) =\emptyset$ for all $x_1 \in \Gamma_i \setminus (B_\delta(x_i^1) \cup B_\delta(x_i^2))$ and $x_2 \in \Gamma_j \setminus (B_\delta(x_j^1) \cup B_\delta(x_j^2))$, $i\neq j$. As the traces $(u^+,u^-)$ are constant on $\Gamma_i$,   there holds 
 \begin{align}\label{eq:energyu}
 \int_{ \Gamma_i } \varphi(u^+,u^-,\nu_u)\,\mathrm{d}\mathcal{H}^1 = l_i \,\varphi(u^+_i,u^-_i,\nu_i)  \ \text{ for all } i \in \lbrace 1,\ldots, N\rbrace.
 \end{align}
 We define 
 \begin{align*}
  P_i^\rho = \big\{x_i^1 + k\rho \nu^\perp_i \colon\,  k \in \mathbb{N}, \,  0 \leq k \leq \lfloor l_i/\rho \rfloor\big\}, \ \ \  \Gamma_i^\rho = \bigcup\nolimits_{x \in P_i^\rho} Q^\nu_\rho(x), \ \ \  \Gamma_\rho = \bigcup\nolimits_{i=1}^N \Gamma^\rho_i
 \end{align*}
as well as (recall \eqref{eq: eps-rand}) 
 \begin{align*}
H^\eps = \bigcup_{i=1}^N \bigcup_{x \in P_i^\rho}  \big(x +  \partial^H_\varepsilon Q^\nu_\rho\big), \ \  \text{where } \ \partial_\varepsilon^HQ^\nu_\rho :=  \overline{Q^\nu_{\rho + 10\eps} \setminus Q^\nu_{\rho - 10\eps}}   \setminus \big(\partial_\varepsilon^+ Q^\nu_\rho\cup \partial_\varepsilon^- Q^\nu_\rho\big).
 \end{align*}
In view of Proposition \ref{proposition:existence-original}, we can choose $\eps=\eps(\rho, \delta)>0$  sufficiently small such that, for  each $x\in P_i^\rho$,  we can choose a configuration  $X_\eps^x \subset \R^2$ satisfying $X_\eps^x = \varepsilon \mathscr{L}(u^\pm_i) \text{ \rm on } \partial_\varepsilon^\pm Q^\nu_\rho(x)$ and 
 \begin{align}\label{ineq:localoptimality}
 E_\varepsilon(X_\eps^x, Q^\nu_\rho(x)) \leq \rho \,   \varphi(u^+_i, u^-_i, \nu_i) + \delta \rho /l_i.
 \end{align}
 We introduce the configuration 
 \begin{align*}
 X^\delta_\varepsilon =\begin{cases} 
 X_\eps^x &\text{in } Q^\nu_\rho(x)\setminus ((M)_\delta \cup H^\eps), \ \text{for } x \in P_i^\rho \text{ for some } i\in \lbrace 1,\ldots,N\rbrace,\\
 \mathscr{L}(z)&\text{in } \{u=z\} \setminus ((M)_\delta \cup \Gamma_\rho) \text{ for }z \in {\rm Im}(u),\\
 \emptyset &\text{in }(M)_\delta \cup H^\eps, \\
 \end{cases} 
 \end{align*}
see Figure \ref{fig:limsup} for an illustration. Here, $(M)_\delta$ denotes the $\delta$-neighborhood of $M$, see  \eqref{eq: basic set def}, and ${\rm Im}(u)$ denotes the image of $u$.  The set $H^\eps$ is introduced in order to ensure that $E_\varepsilon(X_\varepsilon^\delta)<+\infty$ since atoms in $H^\eps$ of two adjacent cubes could violate the constraint of having at least distance $\varepsilon$. 
 Indeed, by  $X^\delta_\eps = \emptyset$ on $(M)_\delta\cup H^\eps$ and the boundary conditions of $X_\eps^x$, we get  $|x-y| \geq \varepsilon $ for all $x,y \in X_\varepsilon^\delta$, $x\neq y$, and therefore $E_\varepsilon(X_\varepsilon^\delta) <+\infty$.  We  have $\#\mathcal{N}_\varepsilon(x)=6$ for each atom $x \in X^\delta_\eps\setminus ((M)_{\delta+\eps} \cup \Gamma_\rho)$. To see this, we take the boundary conditions of $X_\eps^x$ and the choice of $\rho$ in \eqref{eq: choice of rho} into account. By  \eqref{def:energyneighbourhood} this  implies 
 \begin{align}\label{eq:energyzero}
E_\varepsilon\big(X^\delta_\varepsilon, \mathbb{R}^2 \setminus ((M)_{\delta+\eps} \cup \Gamma_\rho)\big) =0.
\end{align} 
Therefore, it remains to account for the energy contribution inside the cubes $Q^\nu_\rho(x)$, $x\in P_i^\rho$, and the set  $(M)_{\delta+\eps}$.   First,  note that for $ \bar{x}  \in M$ we have that 
 \begin{align}\label{ineq:Cdelta}
 \#\big(X^\delta_\varepsilon \cap B_{\delta +\varepsilon}( \bar{x} )\big) \leq C\delta/\varepsilon.
 \end{align}
 In fact, $(M)_\delta \cap X^\delta_\varepsilon = \emptyset$ by definition and thus $X^\delta_\eps \cap B_\delta( \bar{x}) = \emptyset$.  As $E_\varepsilon(X_\varepsilon^\delta) <+\infty$,    by Lemma \ref{lemma:propertiesofE}(v)  and a simple computation  we get  $\#(X^\delta_\varepsilon \cap (B_{\delta +\varepsilon}(\bar{x})\setminus B_\delta(\bar{x}))) \leq C \varepsilon^{-2}  \mathcal{L}^2(B_{\delta +2\varepsilon}(\bar{x})\setminus B_{\delta-\varepsilon}(\bar{x})) \leq  C\delta/\varepsilon$  for a universal constant $C>0$.  This yields (\ref{ineq:Cdelta})   and then  by \eqref{def:energyneighbourhood} we get  
 \begin{align}\label{ineq:Mdeltapeps}
 E_\varepsilon\big(X^\delta_\varepsilon, (M)_{\delta+\eps}\big) \leq C\delta,
 \end{align}
where $C$ depends also on $\#M$. By  definition of $X^\delta_\eps$,  for $x \in P_i^\rho$ we have that $X_\varepsilon^\delta= X_\varepsilon^x$ in $Q^\nu_{\rho+\varepsilon}(x) \setminus ( H^\eps \cup (M)_\delta)$.  As $E_\varepsilon(X_\varepsilon^\delta) <+\infty$,  we can employ Lemma \ref{lemma:propertiesofE}(v) to deduce that $\#(X_\varepsilon^\delta \cap (H^\eps)_\eps \cap Q^\nu_\rho(x)  ) \leq \varepsilon^{-2}C\mathcal{L}^2( ( H^\eps \cap Q^\nu_\rho(x))_{2\varepsilon}) \leq C$. Hence,  by \eqref{def:energyneighbourhood}  we obtain
\begin{align}\label{ineq: energy cubexPirho}
E_\varepsilon\big(  X^\delta_\eps,  Q^\nu_\rho(x)\big) \leq E_\varepsilon\big(X^x_\varepsilon,Q^\nu_\rho(x)\big) + C\varepsilon
\end{align} 
for all $x \in P_i^\rho$ such that $ \mathrm{dist}(Q^\nu_\rho(x), (M)_\delta)   \geq\varepsilon $.  On the other hand,  for $x \in P_i^\rho$ such that $\mathrm{dist}(Q^\nu_\rho(x), (M)_\delta) <\varepsilon$, we use the estimate in \eqref{ineq:Cdelta} with $\bar{x} \in M$ such that $\mathrm{dist}(Q^\nu_\rho(x), (M)_\delta) = \mathrm{dist}(Q^\nu_\rho(x), B_\delta(\bar{x}))$ (and so $\mathrm{dist}(Q^\nu_\rho(x), (M\setminus\{\bar{x}\})_\delta) > \varepsilon$) and obtain 
 \begin{align}\label{ineq: energy cubexPirhoclose}
E_\varepsilon\big( X^\delta_\varepsilon,Q^\nu_\rho(x)\big) \leq E_\varepsilon\big(X^x_\varepsilon,Q^\nu_\rho(x)\big) + C(\varepsilon+\delta). 
\end{align} 
Consequently, using \eqref{ineq:localoptimality}, \eqref{ineq: energy cubexPirho}--\eqref{ineq: energy cubexPirhoclose},  and Lemma \ref{lemma:propertiesofE}(iii),   we obtain
\begin{align*}
\sum\nolimits_{x \in P_i^\rho} E_\varepsilon\big(X^\delta_\varepsilon,Q^\nu_\rho(x)\setminus  (M)_{\delta+\eps} \big)&\leq \sum\nolimits_{x \in  \tilde{P}_i^\rho  } E_\varepsilon\big(X^\delta_\varepsilon,Q^\nu_\rho(x)\big) \leq \rho \lfloor l_i/\rho\rfloor \,  \varphi(u^+_i,u^-_i,\nu_i) + C\delta +C\varepsilon/\rho, 
\end{align*}
where we have set $\tilde{P}_i^\rho = \{ x \in P_i^\rho : Q^\nu_\rho(x)\not \subset (M)_{\delta+\eps} \}$. Here, $C$ depends on $N$ and $\# M$, but is independent of $\eps$, $\delta$, and $\rho$. Thus, by choosing $\eps$ small enough with respect to $\rho$ (i.e., with respect to $\delta$) we get by (\ref{eq:energyu}) that 
\begin{align}\label{ineq:uepsjump}
\sum\nolimits_{x \in P_i^\rho} E_\varepsilon\big(X^\delta_\varepsilon,Q^\nu_\rho(x)\setminus  (M)_{\delta+\eps} \big) \leq l_i \,\varphi(u^+_i,u^-_i,\nu_i)+C\delta =\int_{\Gamma_i \cap J_u} \varphi(u^+,u^-,\nu_u)\,\mathrm{d}\mathcal{H}^1+C\delta.
\end{align}
Now, by  Lemma \ref{lemma:propertiesofE}(iv),    \eqref{eq:energyzero}, \eqref{ineq:Mdeltapeps}, and \eqref{ineq:uepsjump}  we conclude  
\begin{align*}
E_\varepsilon(X_\varepsilon^\delta) &\leq  \sum_{i=1}^N \sum_{x \in P_i^\rho} E_\varepsilon\big(X^\delta_\varepsilon,Q^\nu_\rho(x)\setminus  (M)_{\delta+\eps} \big) + E_\varepsilon\big(X_\varepsilon^\delta, (M)_{\delta+\eps}\big) + E_\varepsilon\big(X^\delta_\varepsilon, \mathbb{R}^2 \setminus ((M)_{\delta+\eps} \cup \Gamma_\rho)\big)\\& \leq\sum\nolimits_{i=1}^N  \int_{\Gamma_i \cap J_u} \varphi(u^+,u^-,\nu_u)\,\mathrm{d}\mathcal{H}^1 + CN\delta= \int_{J_u} \varphi(u^+,u^-,\nu_u)\,\mathrm{d}\mathcal{H}^1+CN\delta.
\end{align*}
By choosing $\delta = \delta(\varepsilon) \to 0$ sufficiently slowly as  $\varepsilon \to 0$ we obtain $X_\varepsilon^{\delta(\varepsilon)} \to u$ in $L^1_{\rm loc}(\R^2)$ (see Subsection \ref{subsection:definitions} for the definition of this convergence) and 
\begin{align*}
\limsup_{\varepsilon \to 0} E_\varepsilon(X_\varepsilon^{\delta(\varepsilon)}) 
&\leq \int_{J_u} \varphi(u^+,u^-,\nu_u)\,\mathrm{d}\mathcal{H}^1.
\end{align*}
This concludes the proof. 
\end{proof}

To conclude the proof of the main theorems, it remains to show  Proposition \ref{proposition:existence-original}, Theorem \ref{prop: properties of varphi}, and Proposition \ref{prop, psi}.  This is subject to the next sections.

\section{Cell formula Part I: Relation of $L^1$-convergence and boundary values}\label{section:surfacetension}

In this first part about cell formulas, we show that the condition of $L^1$-convergence as given in the cell formula $\psi$, see \eqref{def:psi}, can be replaced by converging boundary values. More precisely, in this section we consider  $\Phi : \mathcal{Z}\times \mathcal{Z} \times \mathbb{S}^1\to [0,+\infty)  $ defined by 
\begin{align}\label{eq: Phi def}
  \Phi(z^+,z^-,\nu) 
  = \min \Big\{\liminf_{\varepsilon \to 0}\inf\Big\{  E_\varepsilon(X_\eps,Q^\nu(y_\eps)) \colon 
     \, y_\eps \in \, \mathbb{R}^2, \, &X_\eps =  \eps  \mathscr{L}(z_\varepsilon^\pm) \text{ on } \partial^\pm_\varepsilon Q^\nu(y_\eps) \Big\} \colon  \notag \\ 
      & \ \  \ 
      \lbrace z^\pm_\varepsilon\rbrace_\eps \subset \mathcal{Z} \text{ with }  z^\pm_\varepsilon \to z^\pm\Big\},
\end{align} 
where the identity  $X_\eps =\varepsilon\mathscr{L}(z_\varepsilon^\pm)$  is defined in \eqref{eq: coincidence with lattice} and $\partial^\pm_\varepsilon Q^\nu(y_\eps)$ in \eqref{eq: eps-rand}. This means that near the boundary of the cube the  configuration is contained in at most two different lattices $\eps \mathscr{L}(z^\pm_\eps)$. (Less is possible if $z^\pm_\eps = \mathbf{0}$.) We note that the minimum in \eqref{eq: Phi def} is attained by  a standard diagonal sequence argument. Our aim is to prove the following statement. 

\begin{lemma}[Relation of $\psi$ and $\Phi$]\label{lemma:cutoff} Let $z^+,z^- \in \mathcal{Z}$ and $\nu \in \mathbb{S}^1$. Then
\begin{align}\label{ineq:approxboundaryconditions}
\begin{split}
\psi(z^+,z^-,\nu) \geq \Phi(z^+,z^-,\nu).
\end{split}
\end{align}
\end{lemma}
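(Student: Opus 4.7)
The strategy is to modify a near-optimal sequence for $\psi(z^+, z^-, \nu)$ near $\partial Q^\nu(y_\eps)$ to enforce the lattice boundary conditions required by $\Phi$, controlling the energy cost via a Fubini-type averaging in the spirit of the classical fundamental estimate. Fix $\delta > 0$ and pick $\{X_\eps\}_\eps, \{y_\eps\}_\eps$ admissible in \eqref{def:psi} with $\liminf_\eps E_\eps(X_\eps, Q^\nu(y_\eps)) \leq \psi(z^+, z^-, \nu) + \delta$ and $u_\eps(\cdot + y_\eps) \to u^\nu_{z^+, z^-}$ in $L^1(Q^\nu)$. Proposition \ref{proposition:coerc} bounds $\mathcal{H}^1(J_{u_\eps} \cap Q^\nu(y_\eps))$ uniformly, and the grain decomposition $u_\eps = \sum_j \chi_{G_j^\eps} z_j^\eps$ combined with the $L^1$-convergence and the isoperimetric inequality in the cube allows me to select, for each sign, a dominant grain $G_\eps^\pm \subset Q^{\nu, \pm}(y_\eps)$ with value $z_\eps^\pm \to z^\pm$ in $\mathcal{Z}$, satisfying $\alpha_\eps := \mathcal{L}^2(Q^{\nu, \pm}(y_\eps) \setminus G_\eps^\pm) \to 0$. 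If $z^\pm = \mathbf{0}$ one sets $z_\eps^\pm = \mathbf{0}$ and $G_\eps^\pm$ is taken to be the vacuum region.

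Fix a scale $h_\eps \to 0$ to be tuned. By Lemma \ref{lemma:propertiesofE}(v), the shell $Q^\nu(y_\eps) \setminus Q^\nu_{1 - h_\eps}(y_\eps)$ contains at most $C \alpha_\eps/\eps^2$ atoms of $X_\eps$ outside $\eps \mathscr{L}(z_\eps^\pm) \cap G_\eps^\pm$. Discretising radii $\rho \in (1 - h_\eps, 1 - h_\eps/2)$ into $\sim h_\eps/\eps$ disjoint $20\eps$-sub-shells and applying pigeonhole yields a radius $\rho_\eps$ in this range for which the thick shell $\partial_\eps^\pm Q^\nu_{\rho_\eps}(y_\eps)$ contains at most $C \alpha_\eps/(h_\eps \eps)$ such \emph{defect} atoms; picking $h_\eps := \sqrt{\alpha_\eps}$ renders this count $o(1/\eps)$. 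I then define the modified configuration $\tilde{X}_\eps$ to equal $X_\eps$ on $Q^\nu_{\rho_\eps - \eps}(y_\eps)$, the lattice $\eps \mathscr{L}(z_\eps^\pm)$ on the upper/lower caps of the annulus $y_\eps + (Q^\nu_{1 + 10\eps} \setminus Q^\nu_{\rho_\eps - \eps})$, and $\emptyset$ on the middle strip where $|\langle \nu, x - y_\eps \rangle| < 5\eps$ within that annulus; in the narrow transition band only the $o(1/\eps)$ defect atoms are removed, the remainder already agreeing with the lattice by the choice of $\rho_\eps$. By construction $\tilde{X}_\eps = \eps \mathscr{L}(z_\eps^\pm)$ on $\partial_\eps^\pm Q^\nu(y_\eps)$, and the hard-core bound $|x - y| \geq \eps$ is preserved throughout.

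The estimate $E_\eps(\tilde{X}_\eps, Q^\nu(y_\eps)) \leq E_\eps(X_\eps, Q^\nu_{\rho_\eps - \eps}(y_\eps)) + o(1)$ then follows: inside $Q^\nu_{\rho_\eps - \eps}(y_\eps)$ the configuration is untouched; the inserted lattice atoms are bond-saturated except near the seam, where at most $o(1/\eps)$ atoms miss a bond at cost $O(\eps)$ each; and the middle strip contributes only $O(\eps)$ from a bounded number of truncated atoms at its endpoints. Sending $\delta \to 0$ and extracting a diagonal subsequence yields a sequence admissible for $\Phi(z^+, z^-, \nu)$ in \eqref{eq: Phi def} satisfying $\liminf_\eps E_\eps(\tilde X_\eps, Q^\nu(y_\eps)) \leq \psi(z^+, z^-, \nu)$, which gives \eqref{ineq:approxboundaryconditions}. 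The chief obstacle, which forces the averaging step, is the rigidity of the Heitmann--Radin potential: a single missing bond already costs $\eps$, and a single pair at distance $<\eps$ renders the energy $+\infty$, so unless the seam between $X_\eps$ and the inserted lattice is placed where the two already coincide modulo $o(1/\eps)$ atoms, the surgery could raise the energy by $O(1)$ or destroy admissibility altogether; the Fubini averaging on cut radii $\rho_\eps$ is precisely what guarantees this coincidence and makes the seemingly small but very sensitive cut-and-paste construction work.
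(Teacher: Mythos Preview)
There is a genuine gap in your argument: the claim that a single dominant grain can be selected with $\alpha_\eps := \mathcal{L}^2(Q^{\nu,\pm}(y_\eps)\setminus G_\eps^\pm)\to 0$ does \emph{not} follow from $L^1$-convergence, coercivity, and the isoperimetric inequality alone. Consider the scenario in which $Q^{\nu,+}(y_\eps)$ is split by a single straight interface into two grains of comparable volume, carrying values $z^+_{1,\eps},\, z^+_{2,\eps}\in\mathcal{Z}\setminus\{\mathbf 0\}$ both converging to $z^+$. The $L^1$-distance of $u_\eps(\cdot+y_\eps)$ to $u^\nu_{z^+,z^-}$ still tends to zero, the energy is uniformly bounded (the extra interface contributes $O(1)$), yet whichever grain you declare ``dominant'' leaves $\alpha_\eps$ bounded below by a positive constant. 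All subsequent estimates then collapse: the defect count $C\alpha_\eps/\eps^2$ is of order $\eps^{-2}$, and no averaging over $\sim h_\eps/\eps$ sub-shells can reduce it to $o(1/\eps)$. The placed lattice $\eps\mathscr{L}(z^+_\eps)$ would meet a macroscopic portion of $X_\eps$ sitting in the \emph{other} nearby lattice $\eps\mathscr{L}(z^+_{2,\eps})$, producing hard-core violations or $O(1)$ energy along the seam.

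The paper closes this gap by first proving that the energy \emph{concentrates near the interface}: for every fixed $\delta>0$ one has $E_\eps\big(X_\eps,Q^\nu(y_\eps)\setminus R^\nu_{1,\delta}(y_\eps)\big)\to 0$. This is a nontrivial fact, established via the scaling-invariance Lemma~\ref{lemma:scalinginvariance} (showing that $\psi$ can equally be computed on arbitrarily thin rectangles), and it is exactly what rules out the competing-grain scenario above: additional grain boundaries away from $\{\langle\nu,x-y_\eps\rangle=0\}$ would carry energy bounded below, contradicting concentration. Only after this step does the isoperimetric argument yield a dominant grain with $\mathcal{L}^2(\{u_\eps\neq z^\pm_\eps\}\cap P^\pm_{\delta,\eps})$ controlled by the \emph{vanishing} off-interface energy, and the averaging/cut-off then proceeds essentially along your lines---though the paper also inserts an explicit buffer layer $X_\eps\cap\eps\mathscr{L}(z^\pm_\eps)$ to guarantee the hard-core constraint at the seam, which your construction asserts without justification. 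The missing ingredient is thus Lemma~\ref{lemma:scalinginvariance} and the energy-concentration step it enables; once you supply these, your Fubini averaging becomes valid.
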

In Section \ref{section:surfacetension2}, we will prove $\Phi(z^+,z^-,\nu) = \varphi(z^+,z^-,\nu)$ for all $z^+, z^- \in \mathcal{Z}$ and $\nu \in \mathbb{S}^1$,  see  Lemma \ref{lemma: calculation}, and  Proposition \ref{proposition:existence}. This along with Lemma \ref{lemma:cutoff} will conclude the proof of Proposition \ref{prop, psi}.

As it is customary in the analysis of cell formulas, the proof of Lemma \ref{lemma:cutoff} crucially relies on a cut-off argument which allows to construct configurations attaining the boundary values. Whereas for problems on Sobolev spaces this is usually achieved by a convex combination of functions, our discrete problem is considerably  more delicate. In fact, on the one hand, the system is quite flexible due to the rotational and translational invariance of the atomistic energy, cf.\ Lemma \ref{lemma:propertiesofE}(i). On the other hand, the system is very rigid as  small changes in the configuration may induce a lot of energy  due to the discontinuous interaction potential, see \eqref{def:potential}.    This calls for a refined cut-off construction.

The  construction fundamentally relies on the fact that the energy of an optimal sequence in \eqref{def:psi} is concentrated asymptotically arbitrarily close to the interface. (Similar properties can be observed in related phase transition problems, see e.g.\ \cite{conti.fonseca.leoni, conti.schweizer, davoli}.)  As a preliminary step, we need to show that in the definition of $\psi$ we may replace cubes by rectangles. To this end, we introduce half-open rectangles with sides parallel to $\nu$ by
\begin{align}\label{eq: reci}
R^\nu_{l,h}(y) = y +  \Big\{ x \in \mathbb{R}^2\colon \,  -\frac{h}{2} \le  \langle x, \nu \rangle < \frac{h}{2}, \  -\frac{l}{2} \le  \langle x, \nu^\bot\rangle < \frac{l}{2} \Big\}, 
\end{align}
where $y \in \mathbb{R}^2$, and  $l,h >0$. We simply write $R^\nu_{l,h}$ instead of $R^\nu_{l,h}(y)$ if the rectangle is centered at $y=0$. Recall the definition in \eqref{eq: step function}.

\begin{lemma}[Density  $\psi$ on rectangles]\label{lemma:scalinginvariance}
For all $z^+,z^- \in \mathcal{Z}$, all $\nu \in \mathbb{S}^1$, and all $l,h>0$ there holds 
\begin{align}\label{eq:translational invariance}
\psi(z^+,z^-,\nu) = \inf\Big\{  \liminf_{\varepsilon \to 0}& \, \frac{1}{l} E_\varepsilon\big(X_\varepsilon,R^\nu_{l,h}(y_\eps)\big)\colon  \, y_\eps \in \mathbb{R}^2, \,  \lim_{\eps \to 0} \int_{R^\nu_{l,h}} |u_\eps (x + y_\eps)-   u^{\nu}_{z^+,z^-}(x)| \, {\rm d}x = 0  \Big\}.
\end{align}
\end{lemma}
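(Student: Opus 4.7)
\emph{Plan.} I write $\psi_{l,h}$ for the right-hand infimum in the statement and prove both $\psi_{l,h}\ge\psi$ and $\psi_{l,h}\le\psi$ via a scaling-and-tiling argument using Lemma~\ref{lemma:propertiesofE}(ii),(iv) and Lemma~\ref{lemma:scaling}. The basic observation is: for any $\lambda>0$, the configuration $\lambda X_\eps$ viewed at scale $\lambda\eps$ is admissible on $R^\nu_{\lambda l,\lambda h}(\lambda y_\eps)$ with identical normalized energy, since Lemma~\ref{lemma:propertiesofE}(ii) gives $\tfrac{1}{\lambda l}E_{\lambda\eps}(\lambda X_\eps, R^\nu_{\lambda l,\lambda h}(\lambda y_\eps)) = \tfrac{1}{l}E_\eps(X_\eps, R^\nu_{l,h}(y_\eps))$, and the $L^1$-condition transfers by Lemma~\ref{lemma:scaling} because $u^\nu_{z^+,z^-}$ only depends on $\mathrm{sign}\langle\cdot,\nu\rangle$.

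\emph{Lower bound.} Given an admissible $\{X_\eps\}$ realizing $\psi_{l,h}$, for each $n\in\mathbb{N}$ with $n\ge 1/h$ the rescaled sequence $nX_\eps$ at scale $n\eps$ is admissible on $R^\nu_{nl,nh}(n y_\eps)$. I place $M_n := 2\lfloor(nl-1)/2\rfloor+1$ disjoint unit cubes $Q^\nu(n y_\eps + k\nu^\perp)$ along the interface, all contained in $R^\nu_{nl,nh}(n y_\eps)$. By $\nu^\perp$-invariance of $u^\nu_{z^+,z^-}$, a restriction-plus-change-of-variables shows that each sub-cube sequence (centered at $n y_\eps + k\nu^\perp$) is admissible for $\psi$, hence $\liminf_\eps E_{n\eps}(nX_\eps, Q^\nu(n y_\eps + k\nu^\perp)) \ge \psi$. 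Summing via Lemma~\ref{lemma:propertiesofE}(iv) and the elementary inequality $\liminf(\sum A_k) \ge \sum \liminf A_k$, then undoing the rescaling via Lemma~\ref{lemma:propertiesofE}(ii), yields $\tfrac{1}{l}\liminf_\eps E_\eps(X_\eps, R^\nu_{l,h}(y_\eps)) \ge \tfrac{M_n}{nl}\psi$. Letting $n\to\infty$ so that $M_n/(nl)\to 1$ gives $\psi_{l,h}\ge\psi$.

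\emph{Upper bound.} Fix $\delta>0$ and choose a near-optimal $\{X_\eps\}$ for $\psi$ with $\liminf E_\eps(X_\eps, Q^\nu(y_\eps)) \le \psi+\delta$. For $n\ge\max\{h,l\}$, the rescaled sequence $nX_\eps$ at scale $n\eps$ is admissible on the big cube $Q^\nu_n(n y_\eps)$ with $\liminf_\eps \tfrac{1}{n}E_{n\eps}(nX_\eps, Q^\nu_n(n y_\eps))\le\psi+\delta$. I tile the interface of this cube with $M'_n := 2\lfloor(n/l-1)/2\rfloor+1$ disjoint sub-rectangles $R^\nu_{l,h}(n y_\eps + k l\nu^\perp)$. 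By Lemma~\ref{lemma:propertiesofE}(iv), $\sum_k E_{n\eps}(nX_\eps, R^\nu_{l,h}(\ldots)) \le E_{n\eps}(nX_\eps, Q^\nu_n(n y_\eps))$, so by pigeonhole at each $\eps$ some index $k_\eps^{(n)}$ attains value $\le \tfrac{1}{M'_n}E_{n\eps}(nX_\eps, Q^\nu_n(n y_\eps))$; since $k_\eps^{(n)}$ lies in a finite set for fixed $n$, a subsequence makes it a constant $k^{(n)}$. A direct change-of-variables using Lemma~\ref{lemma:scaling} confirms that the resulting sequence centered at $n y_\eps + k^{(n)} l\nu^\perp$ satisfies the $L^1$-convergence to $u^\nu_{z^+,z^-}$ on $R^\nu_{l,h}$, so it is a competitor for $\psi_{l,h}$; normalizing by $l$ gives $\psi_{l,h} \le \tfrac{n(\psi+\delta)}{lM'_n}$. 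Since $n/(lM'_n)\to 1$ as $n\to\infty$, sending $n\to\infty$ and then $\delta\to 0$ concludes.

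The main obstacle is the bookkeeping in the change-of-variables that propagates the $L^1$-convergence from the sub-cube/sub-rectangle $\{|y_\nu|/n\le h/2\}\cap\{|y_{\nu^\perp}+k\cdot\text{(shift)}|/n\le l/2\}\subset Q^\nu$ up to the desired rectangle after rescaling by $n$, and verifying that the $\eps$-subsequence extraction for the pigeonhole index still produces a valid competitor for the appropriate cell formula.
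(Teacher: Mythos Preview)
Your argument is correct and relies on the same underlying ingredients as the paper---scaling via Lemma~\ref{lemma:propertiesofE}(ii) and Lemma~\ref{lemma:scaling}, together with a tiling/averaging step---but organizes them more directly. The paper instead establishes four abstract identities for $\Psi(z^+,z^-,\nu,l,h)$ (scaling invariance in $(l,h)$, monotonicity in $h$, averaging in $l$ for integer ratios, and a second monotonicity in $l$) and then chains these to prove independence of $l$ and of $h$ separately, passing through rational $\mu$ along the way. Your two-inequality route avoids that detour; the paper's approach, in turn, packages the scaling properties in a reusable form.

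Concerning your stated obstacle: the subsequence extraction for $k_\eps^{(n)}$ is actually unnecessary. Since the cell formula for $\psi_{l,h}$ allows the center to depend on $\eps$, you may take the $\eps$-dependent center $ny_\eps + k_\eps^{(n)} l\,\nu^\perp$ directly; the $L^1$-condition on each sub-rectangle is dominated by that on the full cube $Q^\nu_n(ny_\eps)$ uniformly in $k$, so the resulting family is a valid competitor without extraction (this is exactly how the paper handles the analogous averaging step). If you do insist on a constant $k^{(n)}$, you must first pass to a subsequence in $\eps$ that realizes the $\liminf$ of the full-cube energy before pigeonholing; otherwise the $\liminf$ along your further subsequence could exceed $\psi+\delta$.
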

\begin{proof} For convenience, we denote the function on the right hand side of \eqref{eq:translational invariance} in the variables $(z^+,z^-,\nu,l,h)$ by $\Psi$. We will use certain scaling properties of $\Psi$:
\begin{align}\label{item:scalingofpsi}
\Psi(z^+,z^-,\nu,\lambda \ell,\lambda \kappa) =\Psi(z^+,z^-,\nu,\ell,\kappa) \text{ for all $\lambda >0$}.
\end{align}
\begin{align}\label{item:monotonicityofpsi}
\Psi(z^+,z^-,\nu,\ell, \kappa) \leq \Psi(z^+,z^-,\nu,\ell,\lambda \kappa) \text{ for all $\lambda \geq 1$}.
\end{align}
\begin{align}\label{item:averagingofpsi}
\Psi(z^+,z^-,\nu,\ell,\kappa) \leq \Psi(z^+,z^-,\nu,\lambda \ell,\kappa) \text{ for all $\lambda \in \mathbb{N}$}.
\end{align}
\begin{align}\label{item:monotonicityofpsi2}
\ell_1\Psi(z^+,z^-,\nu,\ell_1,\kappa) \leq \ell_2 \Psi(z^+,z^-,\nu,\ell_2,\kappa) \text{ for all $0<\ell_1 \leq \ell_2$}.
\end{align}
We postpone the  proof of (\ref{item:scalingofpsi})--(\ref{item:monotonicityofpsi2}) to Step 3 of the proof, and first derive the statement. 

\noindent \emph{Step 1: Independence of $l$.}  We start by  proving the independence of the length $l$, i.e.,
\begin{align}\label{eq:invariancepsil}
\Psi(z^+,z^-,\nu,l, h) = \Psi(z^+,z^-,\nu,\mu l,h)
\end{align}
for all $\mu>0$. To this end, consider  first $\mu \in \mathbb{N}$. Using  (\ref{item:scalingofpsi})  and then (\ref{item:monotonicityofpsi}) with $\lambda = \mu$, $\ell = l$, and $\kappa = h /\mu$, we obtain
\begin{align*}
\Psi(z^+,z^-,\nu, \mu l, h) = \Psi\left(z^+,z^-,\nu, l, h/\mu\right) \leq  \Psi\left(z^+,z^-,\nu, l, h\right).
\end{align*}
By (\ref{item:averagingofpsi}) for $\mu\in \mathbb{N}$ there holds $\Psi(z^+,z^-,\nu, \mu l, h)  \ge  \Psi(z^+,z^-,\nu,l,h)$. Combining the estimates we get 
\begin{align}\label{eq:llambda1}
\Psi(z^+,z^-,\nu, \mu l, h)  =  \Psi(z^+,z^-,\nu,l,h)
\end{align}
for $\mu \in \N$. Now substituting $l$ with $\frac{l}{\mu}$ in the previous equation, we obtain
\begin{align}\label{eq:llambda2}
\Psi(z^+,z^-,\nu,  l, h)  = \Psi(z^+,z^-,\nu,l/\mu,h)
\end{align}
for all $\mu \in \mathbb{N}$ and $l >0$. Hence, due to (\ref{eq:llambda1}) and (\ref{eq:llambda2}),  equality \eqref{eq:invariancepsil} holds for all  $\mu \in \mathbb{Q}^+$. 

Now, for general $\mu >0$, we  take a sequence $\{\mu_n\}_n \subset \mathbb{Q}$ such that  $\mu_n \leq \mu_{n+1}$ for all $n \in \mathbb{N}$ and $\mu_n \to \mu$. By  \eqref{item:monotonicityofpsi2} and the fact that  \eqref{eq:invariancepsil} holds for all $\mu \in \mathbb{Q}$ we obtain 
\begin{align*}
\Psi(z^+,z^-,\nu,  l, h) = \Psi(z^+,z^-,\nu, \mu_n l, h) \leq \frac{\mu}{\mu_n}\Psi(z^+,z^-,\nu, \mu l, h).
\end{align*}
Taking $n\to +\infty$ we obtain
\begin{align}\label{eq:llambda3}
\Psi(z^+,z^-,\nu,  l, h) \leq \Psi(z^+,z^-,\nu, \mu l, h).
\end{align}
This yields one inequality in \eqref{eq:invariancepsil}.  Applying \eqref{eq:llambda3} for $\lambda$ in place of $\mu$  and $l/\lambda$ in place of $l$ we also get  
\begin{align*}
\Psi(z^+,z^-,\nu,  l, h)=\Psi(z^+,z^-,\nu, \lambda l/\lambda, h) \geq  \Psi(z^+,z^-,\nu,  l/\lambda, h).
\end{align*}
If we choose $\lambda= \mu^{-1}$, we get the other inequality in  (\ref{eq:invariancepsil}).

\noindent \emph{Step 2: Independence of $h$.}  Let $\mu >0$. By first applying (\ref{item:scalingofpsi}) and then (\ref{eq:invariancepsil}) we obtain
\begin{align*}
\Psi(z^+,z^-,\nu,l,h) = \Psi\left(z^+,z^-,\nu,\mu l,\mu h\right) = \Psi(z^+,z^-,\nu,  l,\mu h).
\end{align*}
This yields the desired independence of the height $h$.  

\noindent \emph{Step 3: Proof of  (\ref{item:scalingofpsi})--(\ref{item:monotonicityofpsi2}).} It remains to prove (\ref{item:scalingofpsi})--(\ref{item:monotonicityofpsi2}).

\noindent \emph{Step 3.1: Proof of \eqref{item:scalingofpsi}.} Fix $\lambda, \ell,\kappa >0$. Let  $X_\varepsilon \subset \mathbb{R}^2$ and $y_\eps \in \mathbb{R}^2$ be given such that $\lim\nolimits_{\eps \to 0} \int_{R^\nu_{\ell,\kappa}} |u_\eps (x + y_\eps)-   u^{\nu}_{z^+,z^-}(x)| \, {\rm d}x = 0$ and 
\begin{align}\label{eq:psi1}
\Psi(z^+,z^-,\nu,\ell,\kappa) =  \liminf_{\varepsilon \to 0} \frac{1}{\ell}E_\varepsilon\big(X_\varepsilon,R^\nu_{ \ell , \kappa}(y_\eps)\big).
\end{align}
(By a standard diagonal sequence argument the infimum on the right hand side of \eqref{eq:translational invariance} is attained.)  Set $X_\varepsilon^\lambda =\lambda X_\varepsilon$. By   \eqref{eq:interpolation} we get that the corresponding functions $u_{\lambda\eps}^\lambda$, see \eqref{def:u}, satisfy $u^\lambda_{\lambda\varepsilon}( x)=u_\varepsilon(\lambda^{-1} x)$ for all $x \in \mathbb{R}^2$.  Change of variables $y = \lambda^{-1}x$  and $u^{\nu}_{z^+,z^-}(y) = u^{\nu}_{z^+,z^-}(\lambda y)$  imply 
$$\lim_{\eps \to 0} \int_{R^\nu_{\lambda \ell,\lambda \kappa}} |u_{\lambda\eps}^\lambda (x +  \lambda y_\eps)-   u^{\nu}_{z^+,z^-}(x)| \, {\rm d}x = \lim_{\eps \to 0} \lambda^2 \int_{R^\nu_{\ell,\kappa}} |u_\eps (y + y_\eps)-   u^{\nu}_{z^+,z^-}(y)| \, {\rm d}y= 0.$$
Using Lemma \ref{lemma:propertiesofE}(ii) along with \eqref{eq:psi1} and the definition of $\Psi$, we obtain
\begin{align*}
\Psi(z^+,z^-,\nu, \lambda \ell,\lambda \kappa) & \leq \liminf_{\varepsilon \to 0} \frac{1}{\lambda \ell}E_{\lambda\varepsilon}\big(X_\varepsilon^\lambda,R^\nu_{\lambda \ell ,\lambda \kappa}( \lambda y_\eps   )\big) \\& = \liminf_{\varepsilon \to 0} \frac{1}{\ell}E_\varepsilon\big(X_\varepsilon,R^\nu_{ \ell , \kappa}(y_\eps)\big) = \Psi(z^+,z^-,\nu,\ell,\kappa).
\end{align*}
By exchanging $\lambda$ with $\frac{1}{\lambda}$ and $\ell,\kappa$ with $\lambda \ell,\lambda \kappa$, respectively, we obtain \eqref{item:scalingofpsi}.

\noindent \emph{Step 3.2: Proof of \eqref{item:monotonicityofpsi}.}
Fix $\lambda \geq 1$ and $\ell, \kappa >0$. Consider  $X_\varepsilon \subset \mathbb{R}^2$ and $y_\eps \in \mathbb{R}^2$  such that $\lim\nolimits_{\eps \to 0} \int_{R^\nu_{\ell,\lambda \kappa}} |u_\eps (x + y_\eps)-   u^{\nu}_{z^+,z^-}(x)| \, {\rm d}x = 0$ and 
\begin{align*}
\Psi(z^+,z^-,\nu,\ell,\lambda\kappa) =  \liminf_{\varepsilon \to 0} \frac{1}{\ell}E_\varepsilon\big(X_\varepsilon,R^\nu_{ \ell , \lambda \kappa}(y_\eps)\big).
\end{align*}
By Lemma \ref{lemma:propertiesofE}(iii)  and the definition of $\Psi$ we get 
\begin{align*}
\Psi(z^+,z^-,\nu,\ell, \kappa) \leq \liminf_{\varepsilon \to 0} \frac{1}{\ell}E_\varepsilon\big(X_\varepsilon,R^\nu_{\ell, \kappa}(y_\eps)\big)\leq \liminf_{\varepsilon \to 0} \frac{1}{\ell}E_\varepsilon\big(X_\varepsilon,R^\nu_{\ell,\lambda \kappa}(y_\eps)) = \Psi(z^+,z^-,\nu,\ell,\lambda \kappa\big).
\end{align*}

\noindent \emph{Step 3.3: Proof of \eqref{item:averagingofpsi}.} Let $\lambda \in \mathbb{N}$ and $\ell, \kappa >0$. Consider  $X_\varepsilon \subset \mathbb{R}^2$ and $y_\eps \in \mathbb{R}^2$  such that 
\begin{align}\label{eq: rect-conv}
\lim\limits_{\eps \to 0} \int_{R^\nu_{\lambda\ell, \kappa}} |u_\eps (x + y_\eps)-   u^{\nu}_{z^+,z^-}(x)| \, {\rm d}x = 0
\end{align}
 and 
\begin{align}\label{eq:psi1-2}
\Psi(z^+,z^-,\nu,\lambda\ell,\kappa) =  \liminf_{\varepsilon \to 0} \frac{1}{\lambda\ell}E_\varepsilon\big(X_\varepsilon,R^\nu_{ \lambda\ell ,  \kappa}(y_\eps)\big).
\end{align}
We decompose the half-open rectangle $R^\nu_{ \lambda\ell ,  \kappa}(y_\eps)$ into pairwise disjoint half-open rectangles of the form 
\begin{align*}
R^\nu_{ \lambda\ell ,  \kappa}(y_\eps) = \bigcup\nolimits_{j=0}^{\lambda-1} R^\nu_{ \ell, \kappa} (y_j^\varepsilon), 
\end{align*}
where $y_j^\varepsilon = y_\eps + \frac{2 j - \lambda + 1}{2} \ell \nu^\perp$.  Now, using Lemma \ref{lemma:propertiesofE}(iv), we derive that there exists $j_0$ such that
\begin{align}\label{eq: good choice}
E_\varepsilon\big(X_\varepsilon,R^\nu_{ \ell , \kappa}(y^\eps_{j_0})\big)&\le \frac{1}{\lambda} \sum\nolimits_{j=0}^{\lambda-1}E_\varepsilon\big(X_\varepsilon,R^\nu_{\ell, \kappa}(y_j^\varepsilon) \big) = \frac{1}{\lambda}E_\varepsilon\big(X_\varepsilon,R_{\lambda \ell, \kappa}^\nu(y_\eps) \big).
\end{align}
By \eqref{eq: rect-conv} and the fact that $u^{\nu}_{z^+,z^-}(x) = u^{\nu}_{z^+,z^-}(x +t\nu^\bot)$ for all $x\in\mathbb{R}^2$ and   $t \in \mathbb{R}$, see \eqref{eq: step function}, we get that $\lim\nolimits_{\eps \to 0} \int_{R^\nu_{\ell, \kappa}} |u_\eps (x +  y^\eps_{j_0})-   u^{\nu}_{z^+,z^-}(x)| \, {\rm d}x = 0$. By the definition of $\Psi$, \eqref{eq:psi1-2}, and \eqref{eq: good choice} this yields
\begin{align*}
\Psi(z^+,z^-, \nu, \ell, \kappa ) &\leq \liminf_{\varepsilon \to 0} \frac{1}{\ell} E_\varepsilon\big(X_\varepsilon,R^\nu_{ \ell , \kappa }(y_{j_0}^\eps)\big)  \leq \liminf_{\varepsilon\to 0}  \frac{1}{\lambda \ell}E_\varepsilon\big(X_\varepsilon,R_{\lambda \ell,\kappa}^\nu(y_\eps)\big) = \Psi(z^+,z^-, \nu, \lambda \ell, \kappa).
\end{align*}
This implies \eqref{item:averagingofpsi}.

\noindent\emph{Step 3.4: Proof of \eqref{item:monotonicityofpsi2}.} Let $0<\ell_1 \leq \ell_2$. Consider $X_\varepsilon \subset \mathbb{R}^2$ and $y_\eps \in \mathbb{R}^2$  such that $\lim\nolimits_{\eps \to 0} \int_{R^\nu_{\ell_2, \kappa}} |u_\eps (x + y_\eps)-   u^{\nu}_{z^+,z^-}(x)| \, {\rm d}x = 0$ and 
\begin{align*}
\Psi(z^+,z^-,\nu,\ell_2,\kappa) =  \liminf_{\varepsilon \to 0} \frac{1}{\ell_2}E_\varepsilon\big(X_\varepsilon,R^\nu_{ \ell_2 , \kappa}(y_\eps)\big).
\end{align*}
By using Lemma \ref{lemma:propertiesofE}(iii) along with $\ell_2 \geq \ell_1$   and the definition of $\Psi$ we get
\begin{align*}
\Psi(z^+,z^-,\nu,\ell_1, \kappa) &\leq \liminf_{\varepsilon \to 0} \frac{1}{\ell_1}E_\varepsilon\big(X_\varepsilon,R^\nu_{\ell_1, \kappa}(y_\eps)\big) \leq  \liminf_{\varepsilon \to 0} \frac{1}{\ell_1}E_\varepsilon\big(X_\varepsilon,R^\nu_{\ell_2, \kappa}(y_\eps)\big)  \\&= \frac{\ell_2}{\ell_1}\liminf_{\varepsilon \to 0} \frac{1}{\ell_2}E_\varepsilon\big(X_\varepsilon,R^\nu_{\ell_2, \kappa}(y_\eps)\big)= \frac{\ell_2}{\ell_1}\Psi(z^+,z^-,\nu,\ell_2, \kappa).
\end{align*}
This yields \eqref{item:monotonicityofpsi2} and concludes the proof.  
\end{proof}

We now proceed with the proof of Lemma \ref{lemma:cutoff}.

\begin{proof}[Proof of  Lemma \ref{lemma:cutoff}]
In view of \eqref{def:psi}, we can choose a subsequence in $\eps$ (not relabeled) and configurations  $X_\eps \subset \mathbb{R}^2$ and  $y_\eps \in \mathbb{R}^2$  such that $ \lim\nolimits_{\eps \to 0} \int_{Q^\nu} |u_\eps (x + y_\eps)-   u^{\nu}_{z^+,z^-}(x)| \, {\rm d}x = 0$ and 
\begin{align}\label{eq: optimal!}
\psi(z^+,z^-,\nu) =   \lim_{\varepsilon \to 0} E_\varepsilon\big(X_\varepsilon,Q^\nu(y_\eps)\big).
\end{align}
 We perform a refined cut-off construction and split the proof into several steps.  As explained above, the construction is quite delicate due to the fact that the energy is very sensitive to   small changes of the configurations. First, we use Lemma \ref{lemma:scalinginvariance} to prove that the energy of $X_\eps$ concentrates around a strip close to the limiting interface (Step 1). This allows us to select one dominant component on each side of the interface, i.e.,  on the upper and the lower half-cube (Step 2). Here, the notion ``component'' refers to a subset of a specific triangular lattice.  

Our  goal in the subsequent steps is to modify the configuration $X_\eps$ such that it coincides with these lattices near the boundary of the upper and lower half-cube, respectively.  In Step 3, we give a precise cardinality estimate on the number of points that differ from the lattices of the two dominant components in terms of ${\rm o}(\eps^{-2})$. In Step 4, we select a ``good layer'' where we can modify our configuration. ``Good'' means here that,  in that layer, the configuration coincides with the lattice of the dominant component up to $o(\varepsilon^{-1})$ atoms. In Step 5, we show that the configuration constructed in Step 4 is an asymptotic energy lower bound for the original configuration. Finally, in Step 6, we conclude by observing that the constructed configuration is a competitor in the definition of  $\Phi$. We will perform this construction under the assumption that in both the upper and the lower half-cube there exist (dominant) lattices. The case of vacuum calls for small adaptions which are described at the end in Step 7.

\noindent \emph{Step 1: The energy concentrates near the line $\{\langle\nu, (x-y_\eps)\rangle=0\}$.}  Recall \eqref{eq: reci}. We show that for all $\delta \in (0,1)$ there holds  
\begin{align}\label{eq: liminfoutsidedelta0}
\lim_{\varepsilon \to 0} E_\varepsilon\big(X_\varepsilon,Q^\nu(y_\eps)\setminus R^\nu_{1,\delta}(y_\eps)\big) = 0.
\end{align}
By Lemma \ref{lemma:propertiesofE}(iii), Lemma \ref{lemma:scalinginvariance}, \eqref{eq: optimal!},  and the fact that $\lbrace X_\varepsilon\rbrace_\eps$ is admissible in the definition of $\psi$ on $R^\nu_{1,\delta}$, see \eqref{eq:translational invariance},  we obtain
\begin{align*}
\psi(z^+,z^-,\nu) \leq \liminf_{\varepsilon \to 0} E_\varepsilon\big(X_\varepsilon,R^\nu_{1,\delta}(y_\eps)\big) \leq \lim_{\varepsilon \to 0} E_\varepsilon\big(X_\varepsilon,Q^\nu(y_\eps)\big) =  \psi(z^+,z^-,\nu).
\end{align*}
Lemma \ref{lemma:propertiesofE}(iv) then implies
\begin{align*}
0 &\leq \limsup_{\varepsilon \to 0} E_\varepsilon\big(X_\varepsilon,Q^\nu(y_\eps)\setminus R^\nu_{1,\delta}(y_\eps)\big)= \limsup_{\varepsilon \to 0} \Big( E_\varepsilon\big(X_\varepsilon,Q^\nu(y_\eps)\big)-E_\varepsilon\big(X_\varepsilon,R^\nu_{1,\delta}(y_\eps)\big)\Big)\\& 
\leq \lim_{\varepsilon \to 0} E_\varepsilon\big(X_\varepsilon,Q^\nu(y_\eps)\big) - \liminf_{\varepsilon\to 0}E_\varepsilon\big(X_\varepsilon,R^\nu_{1,\delta}(y_\eps)\big)=0. 
\end{align*}
This yields (\ref{eq: liminfoutsidedelta0}) and concludes Step 1.

\medskip

In order to shorten the notation, we omit the dependence on the center $y_\eps$ and simply write $Q^\nu_\rho$ instead of $Q^\nu_\rho(y_\eps)$ for $\rho>0$ and $R^\nu_{1,\delta}$ instead of $R^\nu_{1,\delta}(y_\eps)$. For brevity, we also define (omitting the center $y_\eps$) the rectangles  $P_{\delta,\eps}^\pm = Q^{\nu,\pm}_{1-\varepsilon}\setminus  R^\nu_{1-\eps,\delta}$, where $Q^{\nu,\pm}_{1-\varepsilon}$ is defined below \eqref{eq: plus-minus}.  We will prove all auxiliary statements along the proof for the upper half-cube $Q^{\nu,+}$ only since the arguments for the lower one are analogous.  In the following, $\delta \in (0,1)$ is fixed sufficiently small. Without restriction, we may suppose that  $\varepsilon \ll  \delta$.

\medskip

\noindent\emph{Step 2: Single dominant component in the upper and lower half.} We prove that there exist sequences $\{z^\pm_\varepsilon\}_\varepsilon \subset \mathcal{Z}$ such that $z^\pm_\varepsilon \to z^\pm$ and 
\begin{align} \label{ineq : dominantphase}
\mathcal{L}^2\big(\{u_\varepsilon \neq z^\pm_\varepsilon\} \cap P_{\delta,\eps}^\pm\big)\leq C E_\varepsilon\big(X_\varepsilon,Q^\nu\setminus R^\nu_{1,\delta/2}\big),
\end{align}
where $C>0$ is a universal constant independent of $\eps$. 

Recall by  \eqref{eq: PC def} and \eqref{def:u} that the function $u_\eps$ can be written as $u_\eps = \sum\nolimits_{j=1}^\infty  \chi_{G^\eps_j} z^\eps_j$ for pairwise distinct $\lbrace z^\eps_j \rbrace_j \subset \mathcal{Z} \setminus \lbrace \mathbf{0} \rbrace$ and pairwise disjoint $\lbrace G^\eps_j \rbrace_j \subset \mathbb{R}^2$. By Proposition \ref{proposition:coerc} (more precisely, see \eqref{eq: stronger}), \eqref{eq: basic set def},  and Lemma \ref{lemma:propertiesofE}(iii) we have
\begin{align}\label{ineq:compactnessprop}
\sum\nolimits_{j=1}^\infty\mathcal{H}^1( \partial^* G^\eps_j \cap P_{\delta,\eps}^+ ) \le CE_\varepsilon\big(X_\varepsilon,(P_{\delta,\eps}^+)_\eps\big) \leq C E_\varepsilon\big(X_\varepsilon,Q^\nu\setminus R^\nu_{1,\delta/2}\big),
\end{align}
where in the last step we used $(P_{\delta,\eps}^+)_\eps \subset Q^\nu\setminus R^\nu_{1,\delta/2}$.  We also define the vacuum inside $Q^\nu$ by $G^\eps_0 := Q^\nu \setminus \bigcup_{j=1}^\infty G^\eps_j$. By the relative isoperimetric inequality (see e.g.\ \cite[Theorem 2, Section 5.6.2]{EvansGariepy92}), there exists $c>0$ such that for all $j \in \N_0$ there holds
\begin{align}\label{ineq:isoperimetric}
\nonumber
 \min \big\{ \mathcal{L}^2(G_j^\eps \cap P_{\delta,\eps}^+), \mathcal{L}^2(  P_{\delta,\eps}^+\setminus G_j^\eps) \big\} &\leq   \min \big\{ \mathcal{L}^2(G_j^\eps \cap P_{\delta,\eps}^+), \mathcal{L}^2(  P_{\delta,\eps}^+\setminus G_j^\eps) \big\}^{1/2} \mathcal{L}^2(P_{\delta,\eps}^+)^{1/2} \\&\leq c\mathcal{H}^1(\partial^* G_j^\eps\cap P_{\delta,\eps}^+),
\end{align}
 where we used $\mathcal{L}^2(P_{\delta,\eps}^+)\leq 1$. 
 (Note  that the theorem in the reference above is stated and proved in a ball, but that the argument only relies on Poincar\'e inequalities, and thus easily extends to the rectangles $P_{\delta,\eps}^+$. Since the ratio of length and width is controlled, the  constant is independent of $\delta$ and $\eps$.) Then, from \eqref{ineq:compactnessprop}, \eqref{ineq:isoperimetric}, and $\partial^* G^\eps_0 \cap P_{\delta,\eps}^+ \subset \bigcup_{j=1}^\infty (\partial^* G^\eps_j \cap P_{\delta,\eps}^+)$  it follows  
\begin{align}\label{ineq:allphases}
\sum\nolimits_{j=0}^\infty \min\big\{ \mathcal{L}^2(G_j^\eps \cap P_{\delta,\eps}^+), \, \mathcal{L}^2(P_{\delta,\eps}^+\setminus G_j^\eps) \big\}  \leq C E_\varepsilon\big(X_\varepsilon,Q^\nu\setminus R^\nu_{1,\delta/2}\big).
\end{align}
We now get that there is a unique dominant component, i.e., there exists $j_\eps\in \N_0$ such that 
\begin{align}\label{eq: mass}
\mathcal{L}^2(G_{j_\eps}^\eps \cap P_{\delta,\eps}^+)  > \frac{1}{2}\mathcal{L}^2(P_{\delta,\eps}^+).
\end{align}
In fact, assume by contradiction that this were not the case. Then, we get for all $j \in \N_0$
$$ \min \big\{ \mathcal{L}^2(G_j^\eps \cap P_{\delta,\eps}^+), \mathcal{L}^2(  P_{\delta,\eps}^+\setminus G_j^\eps) \big\} = \mathcal{L}^2(G_j^\eps \cap P_{\delta,\eps}^+).$$
 By using (\ref{ineq:allphases}) we obtain
$
\mathcal{L}^2(P_{\delta,\eps}^+) = \sum_{j=0}^\infty \mathcal{L}^2(G_j^\eps \cap P_{\delta,\eps}^+)    \leq C E_\varepsilon\big(X_\varepsilon,Q^\nu\setminus R^\nu_{1,\delta/2}\big).
$
This contradicts \eqref{eq: liminfoutsidedelta0}  for $\varepsilon$ small enough. Now \eqref{ineq:allphases} and \eqref{eq: mass} imply \eqref{ineq : dominantphase} for the choice $z^+_\eps = z_{j_\eps}^\eps$.   

To conclude this step, we note that the convergence $ \lim\nolimits_{\eps \to 0} \int_{Q^\nu} |u_\eps (x + y_\eps)-   u^{\nu}_{z^+,z^-}(x)| \, {\rm d}x = 0$ along with \eqref{eq: mass}  also yields $z^+_\varepsilon \to z^+$.

\medskip The rest of the proof is divided into two cases:  (a) $z_\varepsilon^+ \neq \mathbf{0}$ and (b) $z_\varepsilon^+ = \mathbf{0}$, i.e., $X_\eps$ converges to a lattice in the upper half of the cube or there is vacuum. We perform the proof for case (a). At the end of the proof  (Step 7), we indicate the necessary changes to treat case (b). 

\medskip

\noindent \emph{Step 3: Cardinality estimate.}   We prove that there exists $C>0$ such that
\begin{align}\label{ineq:cardestimate}
\varepsilon^2 \#\left( \big( \eps\mathscr{L}(z^\pm_\eps) \triangle X_\varepsilon\big) \cap P_{\delta,\eps}^\pm \right) \leq CE_\varepsilon\big(X_\varepsilon, Q^\nu\setminus R^\nu_{1,\delta/2}\big),
\end{align}
where here and in the following $\triangle$ denotes the symmetric difference of sets. First, consider some $x \in (\eps\mathscr{L}(z^+_\eps) \setminus X_\varepsilon) \cap P_{\delta,\eps}^+$. Then, by the definition of $u_\varepsilon$  in \eqref{def:u} we get
\begin{align}\label{ineq:uepszplus}
u_\varepsilon(y) \neq z^+_\varepsilon \text{ for all } y \in B_{\eps/4}(x).
\end{align}
Indeed, otherwise we would find $y \in B_{\eps/4}(x)$ and   $x' \in X_\eps \cap B_{\eps/\sqrt{3}}(y)$ with $\# \mathcal{N}_\eps(x') = 6$ and $\lbrace x' \rbrace \cup \mathcal{N}_\eps(x') \subset \eps \mathscr{L}(z^+_\eps)$. The latter follows from the fact that $ V_\eps^{z^+_\eps}(x')  \subset B_{\eps/\sqrt{3}}(x') $. In particular, we have $x' \in \eps\mathscr{L}(z^+_\eps)$ and  $|x-x'| \le |x-y| + |y-x'| \le \eps/4 + \eps/\sqrt{3} < \eps$.  This, however, is impossible since $|x_1-x_2| \ge \eps $ for all $x_1,x_2\in \eps\mathscr{L}(z^+_\eps)$, $x_1 \ne x_2$.

On the other hand, if there exists $x \in (X_\varepsilon \setminus \eps\mathscr{L}(z^+_\eps)) \cap P_{\delta,\eps}^+ $, then we find $x_0 \in \eps\mathscr{L}(z^+_\eps)  \cap P_{\delta,\eps}^+ $ with $|x_0-x| < \eps$. Clearly, $x_0 \notin X_\eps$ by \eqref{def:potential} and the fact that $E_\eps(X_\eps)<+\infty$. Repeating the reasoning in \eqref{ineq:uepszplus} we find 
\begin{align}\label{ineq:uepszplus2}
u_\varepsilon(y) \neq z^+_\varepsilon \text{ for all } y \in B_{\eps/4}(x_0).
\end{align} 
Note that, in this procedure, $x_0$ can be chosen for at most six $x\in X_\varepsilon$ independently of $\varepsilon$  since $\#(X_\eps \cap B_\eps(x_0))\le 6$  due to $E_\eps(X_\eps)<+\infty$. Using \eqref{ineq : dominantphase},  $\mathcal{L}^2(B_{\eps/4}(x) \cap P_{\delta,\eps}^+ ) \ge c\eps^2 $ for all $x \in  \eps\mathscr{L}(z^+_\eps) \cap P_{\delta,\eps}^+$, and \eqref{ineq:uepszplus}--\eqref{ineq:uepszplus2} we conclude
\begin{align*}
\varepsilon^2 \#\left( \big( \eps\mathscr{L}(z^+_\eps) \triangle X_\varepsilon\big) \cap P_{\delta,\eps}^+ \right) \leq C \mathcal{L}^2 \big( \{u_\varepsilon \neq z^+_\varepsilon\} \cap P_{\delta,\eps}^+\big) \leq  CE_\varepsilon\big(X_\varepsilon, Q^\nu\setminus R^\nu_{1,\delta/2}\big).
\end{align*}

\noindent \emph{Step 4: Cut-off construction.}  In this step, we construct a new configuration $Y_\eps^+ \subset \R^2$ such that $ Y_\eps^+ = \varepsilon\mathscr{L}(z_\varepsilon^+) \text{ on } \partial^+_\varepsilon Q^\nu$, see \eqref{eq: eps-rand}. This construction changes the configuration in the upper half-cube $Q^{\nu,+}$. Step 5 then shows that the energy of $Y^+_\varepsilon$ is asymptotically equal to the  one  of $X_\varepsilon$. The procedure can then be repeated on the lower half-cube. We defer this to Step 6 below.

Set $N_\varepsilon= \left\lfloor \frac{\delta}{6\varepsilon}\right\rfloor$. (Here and in the sequel, we do not highlight the dependence on $\delta$ to save notation.) For $k\in\lbrace 0,\ldots,N_\varepsilon+1\rbrace$ we let $r_k = 1-\delta +3k\varepsilon$ and define the layers
\begin{align}\label{eq: Sdef}
 S_k^\varepsilon = \big(Q^{\nu,+}_{r_k} \setminus Q^{\nu,+}_{r_{k-1}} \big) \setminus R^\nu_{1,\delta}.
\end{align}
For $k\in\lbrace 1,\ldots,N_\varepsilon\rbrace$ we also define the ``thickened layers'' $L_{k}^\varepsilon =S_{k-1}^\varepsilon \cup S_{k}^\varepsilon\cup S_{k+1}^\varepsilon$. Our goal is to perform a transition to the lattice   $\varepsilon\mathscr{L}(z^+_\eps)$ on one of these layers. To this end, we choose a convenient layer by an averaging argument: by \eqref{ineq:cardestimate} there exists $k_\varepsilon \in \{1,\ldots,N_\varepsilon\}$ such that
\begin{align}\label{ineq:averaging}
\#\left( ( \eps\mathscr{L}(z_\eps^+) \triangle X_\varepsilon) \cap L_{k_\varepsilon}^\varepsilon \right) &\leq \frac{1}{N_\varepsilon}\sum\nolimits_{k=1}^{N_\varepsilon}\#\left( ( \eps\mathscr{L}(z_\eps^+) \triangle X_\varepsilon) \cap L_{k}^\varepsilon \right)\notag\\&\leq  \frac{3}{N_\eps}\,\#\left(( \eps\mathscr{L}(z_\eps^+) \triangle X_\varepsilon) \cap P_{\delta,\eps}^+ \right) \leq \frac{C}{\varepsilon \delta}  E_\varepsilon\big(X_\varepsilon,Q^\nu\setminus R^\nu_{1,\delta/2}\big).
\end{align}
Here, we used $L_{k}^\varepsilon \subset  P_{\delta,\eps}^+$ for all $k$ and $\eps\delta \le CN_\eps \eps^2$. The factor $3$ is due to the fact that we count each strip  $S_k^\varepsilon$ at most three times. Set  $D^\varepsilon :=Q^\nu_{r_{k_\varepsilon-1}} \cup ( Q^{\nu,-} \setminus R^\nu_{1,\delta})$.  We now define $Y_\eps^+$ by
\begin{align}\label{def:Yeps}
Y_\eps^+ 
= \begin{cases} \eps\mathscr{L}(z^+_\eps)  &\text{in } (P_{\delta,\eps}^+ \setminus Q^\nu_{r_{k_\varepsilon}}) \cup \partial^+_\varepsilon Q^\nu,\\
\emptyset & \text{in } (R^\nu_{1,\delta} \setminus Q^\nu_{r_{k_\varepsilon-1}})\setminus (\partial^+_\varepsilon Q^\nu \cup \partial^-_\varepsilon Q^\nu), \\
X_\varepsilon \cap \eps\mathscr{L}(z^+_\eps) &\text{in }S_{k_\varepsilon}^\varepsilon,\\
X_\varepsilon &\text{in } D^\varepsilon \cup \partial^-_\varepsilon Q^\nu.
\end{cases}
\end{align}
See Figure \ref{Fig:DefinitonYeps} for an illustration of the different regions. We briefly explain the definition. In $D^\eps \cup \partial^-_\varepsilon Q^\nu$, the configuration remains unchanged,  and near the boundary of the upper half-cube it coincides with the lattice $\eps\mathscr{L}(z^+_\eps)$. In $S_{k_\eps}^\eps$,  we use the intersection $X_\varepsilon \cap \eps\mathscr{L}(z^+_\eps)$. In this sense, $S_{k_\eps}^\eps$ can be understood as a transition layer. Eventually, small regions near the boundary close to the interface $\partial Q^{\nu,+} \cap \partial Q^{\nu,-}$ do not contain atoms.  This is convenient since in this region the energy of the original configuration possibly does not vanish.   Note that the latter ensures that  $|y_1-y_2| \ge \eps$ for all  $y_1,y_2 \in Y_\eps^+, y_1 \neq y_2,$ and therefore
\begin{align}\label{eq: least distance}
 E_\eps(Y_\eps^+) < + \infty. 
\end{align} 
 Finally, we point out that  $Y_\eps^+ \not\subset Q^\nu$ due to the definition of $\partial_\eps^\pm Q^\nu$ in \eqref{eq: eps-rand},  see also Figure \ref{fig:cellformula}. 
\begin{figure}[htp]
 \includegraphics{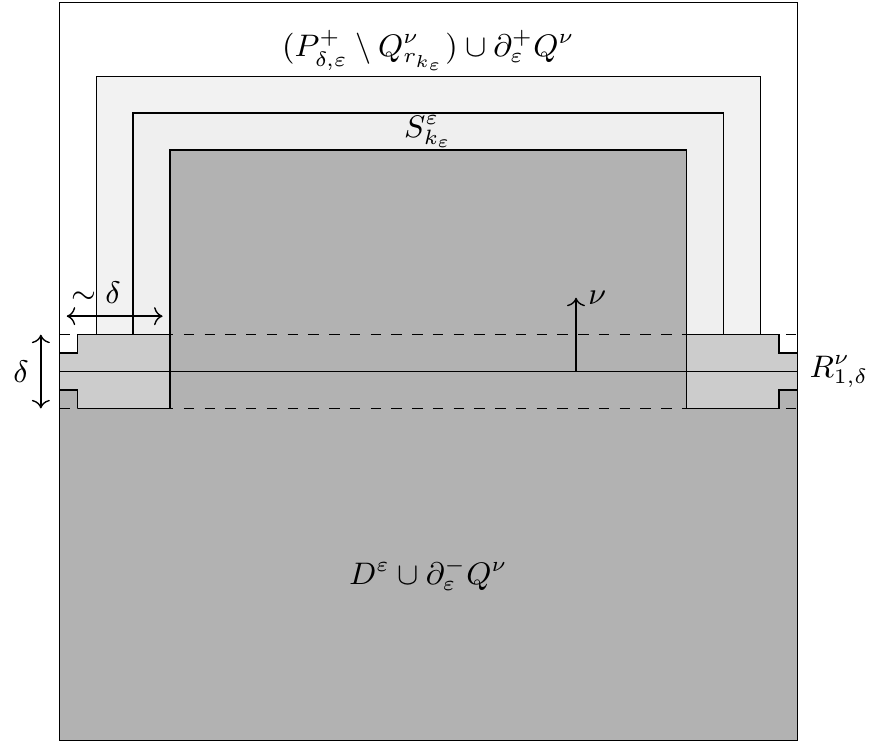}
\caption{The different regions for $Y_\eps^+$  inside $Q^\nu$:  dark gray region $D^\eps \cup \partial^-_\eps Q^\nu$, gray region $(R^\nu_{1,\delta} \setminus Q^\nu_{r_{k_\varepsilon-1}})\setminus  (\partial^+_\varepsilon Q^\nu \cup \partial^-_\varepsilon Q^\nu)$, light gray region $S_{k_\varepsilon}^\varepsilon$, and white region $ (P_{\delta,\eps}^+ \setminus Q^\nu_{r_{k_\varepsilon}}) \cup \partial^+_\varepsilon Q^\nu$. The two dashed lines enclose the region $R^\nu_{1,\delta}$.}

\label{Fig:DefinitonYeps}
\end{figure}

\noindent \emph{Step 5: Energy estimate.}  In this step we show that the energy of the configuration constructed in Step 4  is asymptotically controlled by the original energy, i.e., 
\begin{align}\label{ineq: Energyinequality}
\liminf_{\varepsilon \to 0} E_\varepsilon(Y_\eps^+,Q^\nu) \leq \liminf_{\varepsilon \to 0} E_\varepsilon(X_\varepsilon,Q^\nu) + C\delta
\end{align}
for some universal $C>0$. In order to obtain (\ref{ineq: Energyinequality}), we distinguish  three regions:
\begin{align}\label{eq: Asets}
 A^\eps_1= \overline{(R^\nu_{1,\delta} \setminus Q^\nu_{r_{k_\varepsilon-1}})_\varepsilon}, \ \ \ \ \ \     A^\eps_2= \overline{(S_{k_\varepsilon}^\varepsilon)_\varepsilon}\setminus A^\eps_1, \ \ \ \ \ \    A^\eps_3 = Q^\nu \setminus  (A^\eps_1 \cup A^\eps_2).
\end{align}
\noindent\emph{Energy estimate on $A^\eps_1$:}
We claim that there exists a universal $C>0$ such that
\begin{align}\label{ineq:energyA1}
E_\varepsilon(Y_\eps^+,A^\eps_1) \leq C\delta.
\end{align}
In fact, due to (\ref{def:Yeps}), we have $Y_\eps^+ \cap (R^\nu_{1,\delta} \setminus Q^\nu_{r_{k_\varepsilon-1}})= ( \varepsilon\mathscr{L}(z^+_\eps) \cap  R^\nu_{1,\delta}  \cap \partial^+_\varepsilon Q^\nu )   \cup ( X_\varepsilon \cap R^\nu_{1,\delta} \cap \partial^-_\varepsilon Q^\nu )$. As  $\mathcal{L}^2((R^\nu_{1,\delta}  \cap  \partial^\pm_\varepsilon Q^\nu)_\varepsilon) \leq C \delta \varepsilon$,  see \eqref{eq: eps-rand} and \eqref{eq: reci}, by Lemma \ref{lemma:propertiesofE}(v)   we get  
\begin{align}\label{eq: first}
\#\big(Y_\eps^+ \cap (R^\nu_{1,\delta} \setminus Q^\nu_{r_{k_\varepsilon-1}})\big) \leq C\delta/\eps.
\end{align}
 Here, Lemma \ref{lemma:propertiesofE} is applicable by  \eqref{eq: least distance}.   Additionally, we note that  $R^\nu_{1,\delta} \setminus Q^\nu_{r_{k_\varepsilon-1}}$ consists of two rectangles and we have $\mathcal{H}^1(\partial (R^\nu_{1,\delta} \setminus Q^\nu_{r_{k_\varepsilon-1}})) \leq C\delta$. Hence, by Lemma \ref{lemma:propertiesofE}(v) we obtain
\begin{align*}
\# \big((A^\eps_1\cap Y_\eps^+) \setminus (  R^\nu_{1,\delta}   \setminus Q^\nu_{r_{k_\varepsilon-1}}) \big) &\leq C\varepsilon^{-2} \mathcal{L}^2\Big(\big(\overline{(R^\nu_{1,\delta} \setminus Q^\nu_{r_{k_\varepsilon-1}})_\varepsilon}\setminus (  R^\nu_{1,\delta}  \setminus Q^\nu_{r_{k_\varepsilon-1}})\big)_\varepsilon\Big)\\&\leq C\eps^{-1} \mathcal{H}^1\big(\partial (  R^\nu_{1,\delta}  \setminus Q^\nu_{r_{k_\varepsilon-1}})\big) \leq C\delta/\eps.
\end{align*}
This along with \eqref{eq: first} yields $\#(A^\eps_1\cap Y_\eps^+) \leq C\delta/\eps$, and therefore (\ref{ineq:energyA1}) follows by \eqref{def:energyneighbourhood}.

\noindent\emph{Energy estimate on $A^\eps_2$:} We prove that there exists a universal $C>0$ such that
\begin{align}\label{ineq:energyA2}
 E_\varepsilon(Y_\eps^+,A^\eps_2) \leq (1+C /\delta) \, E_\varepsilon\big(X_\varepsilon,Q^\nu\setminus R^\nu_{1,\delta/2}\big). 
\end{align}
First, the definition of $ L_{k_\varepsilon}^\varepsilon$ below  \eqref{eq: Sdef} implies $(A^\eps_2)_\eps \subset L_{k_\varepsilon}^\varepsilon$. For $x\in Y_\eps^+$, we denote the neighborhood of $x$ with respect to $Y_\eps^+$ by $\mathcal{N}_{\varepsilon,Y}(x)$, cf.\ \eqref{def:neighbourhood}.
We claim that
\begin{align}\label{ineq:cardneighbourhoodA2}
\#\mathcal{N}_{\varepsilon,Y}(x) \geq \#\mathcal{N}_\varepsilon(x) -6\, \#\big(\overline{B_\varepsilon(x)} \cap (X_\varepsilon \setminus\varepsilon \mathscr{L}(z^+_\eps))\big) \  \ \ \text{ for all } x\in X_\varepsilon \cap Y_\eps^+ \cap A^\eps_2.
\end{align}
In fact, if $\overline{B_\varepsilon(x)} \cap (X_\varepsilon \setminus\varepsilon \mathscr{L}(z^+_\eps))\neq \emptyset$, the right hand side is nonpositive since $\#\mathcal{N}_\varepsilon(x)\le 6$, see \eqref{eq: neighborhood bound}. Since $\#\mathcal{N}_{\varepsilon,Y}(x) \geq 0$,  (\ref{ineq:cardneighbourhoodA2}) follows in this case. On the other hand, if $\overline{B_\varepsilon(x)} \cap (X_\varepsilon \setminus\varepsilon \mathscr{L}(z^+_\eps))= \emptyset$, by (\ref{def:Yeps}), we may have only increased the cardinality of the neighborhood by adding atoms in $\eps\mathscr{L}(z_\eps^+) \setminus X_\eps$, i.e., $\#\mathcal{N}_{\varepsilon,Y}(x) \geq \#\mathcal{N}_\varepsilon(x)$. This    again yields \eqref{ineq:cardneighbourhoodA2}. 

We split the sum into $X_\varepsilon \cap Y_\eps^+$ and $Y_\eps^+ \setminus X_\varepsilon$. By using  \eqref{def:energyneighbourhood}, $A^\eps_2 \subset L_{k_\varepsilon}^\varepsilon$, Lemma \ref{lemma:propertiesofE}(iii),  and (\ref{ineq:cardneighbourhoodA2}) we obtain
\begin{align}\label{ineq:A2intermediate}
E_\varepsilon(Y_\eps^+,A^\eps_2) &\leq C\varepsilon\#\, \big\{x \in A^\eps_2 \cap (Y_\eps^+ \setminus X_\varepsilon)\big\}  +  \frac{1}{2}  \underset{x \in A^\eps_2}{\sum_{x \in Y_\eps^+ \cap X_\varepsilon}}\varepsilon\big(6-\#\mathcal{N}_{\varepsilon,Y}(x)\big) \\& 
\nonumber\leq C\varepsilon\#\big\{x \in (Y_\eps^+ \cap L_{k_\varepsilon}^\varepsilon) \setminus X_\varepsilon\big\} +  3 \varepsilon \hspace{-0.35cm} \underset{x \in A^\eps_2}{\sum_{x \in Y_\eps^+ \cap X_\varepsilon}}
 \# \big(\overline{B_\varepsilon(x)} \cap (X_\varepsilon \setminus\varepsilon\mathscr{L}(z_\eps^+))\big)+ E_\varepsilon(X_\varepsilon, L_{k_\varepsilon}^\varepsilon).
\end{align}
Note by (\ref{def:Yeps}) that $Y_\eps^+ \subset \varepsilon\mathscr{L}(z^+_\eps) \cup X_\varepsilon$ in $L_{k_\varepsilon}^\varepsilon$. Therefore, in view of      (\ref{ineq:averaging}), we obtain
\begin{align}\label{ineq:A2intermediate2}
\#\big\{x \in (Y_\eps^+ \cap L_{k_\varepsilon}^\varepsilon) \setminus X_\varepsilon\big\} \leq \#\big\{x \in (\varepsilon\mathscr{L}(z^+_\eps) \triangle X_\varepsilon) \cap L_{k_\varepsilon}^\varepsilon \big\} \leq    \frac{C}{\varepsilon \delta}E_\varepsilon\big(X_\varepsilon,Q^\nu\setminus R^\nu_{1,\delta/2}\big).
\end{align}
Exploiting (\ref{ineq:averaging}) once more, we get 
\begin{align}\label{ineq:A2intermediate3}
\begin{split}
\sum\nolimits_{x \in Y_\eps^+ \cap X_\varepsilon \cap A^\eps_2}
 \#\big (\overline{B_\varepsilon(x)} \cap (X_\varepsilon \setminus\varepsilon\mathscr{L}(z^+_\eps))\big) &\leq C \#\big\{x \in (\varepsilon\mathscr{L}(z^+_\eps) \triangle X_\varepsilon) \cap L_{k_\varepsilon}^\varepsilon \big\}\\&\leq    \frac{C}{\varepsilon \delta}E_\varepsilon\big(X_\varepsilon,Q^\nu\setminus R^\nu_{1,\delta/2}\big).
 \end{split}
\end{align}
Here, the first inequality holds because $|x_1-x_2| \geq \varepsilon$ for $x_1,x_2 \in X_\varepsilon$, $x_1\neq x_2$, and  $\overline{B_\varepsilon(x)} \subset L_{k_\varepsilon}^\varepsilon$ for all $x \in A^\eps_2$. Hence, we get that every point in $ (X_\varepsilon \setminus\varepsilon\mathscr{L}(z^+_\eps)) \cap L_{k_\varepsilon}^\varepsilon$ is only accounted for at most seven times in the sum.   Now, using (\ref{ineq:A2intermediate})--(\ref{ineq:A2intermediate3}), $L_{k_\varepsilon}^\varepsilon \subset Q^\nu\setminus R^\nu_{1,\delta}$, and Lemma \ref{lemma:propertiesofE}(iii), we obtain (\ref{ineq:energyA2}).

\noindent \emph{Energy estimate on $A^\eps_3$:} We claim that 
\begin{align}\label{ineq:energyA3}
E_\varepsilon(Y_\eps^+,A^\eps_3) \leq E_\varepsilon(X_\varepsilon,Q^\nu).
\end{align}
Recalling \eqref{eq: Asets}  we get that each $x \in A^\eps_3\cap  Y_\eps^+$ lies either in  $T^\eps := (P_{\delta,\eps}^+ \setminus Q^\nu_{r_{k_\varepsilon}}) \cup (\partial^+_\eps Q^\nu \setminus R^\nu_{1,\delta})$ or in $D^\varepsilon$. If $x \in A^\eps_3\cap  Y_\eps^+ \cap T^\eps$, then also  $\overline{B_\eps(x)} \subset T^\eps$. (Here, we use the definition of $A^\eps_1$, $A^\eps_2$ and \eqref{eq: eps-rand}.) Then, \eqref{def:Yeps} implies  $\#\mathcal{N}_{\varepsilon,Y}(x) =6$. On the other hand, if $x \in A^\eps_3\cap  Y_\eps^+ \cap D^\eps$, then $X_\varepsilon \cap \overline{B_\eps(x)} = Y^+_\varepsilon \cap \overline{B_\eps(x)}$, which yields $\mathcal{N}_{\varepsilon,Y}(x) = \mathcal{N}_\varepsilon(x)$. 
Thus, by \eqref{def:energyneighbourhood} and Lemma \ref{lemma:propertiesofE}(iii),(iv) we obtain (\ref{ineq:energyA3}). In fact, we get
\begin{align*}
E_\varepsilon(Y_\eps^+,A^\eps_3) = E_\varepsilon\big(Y_\eps^+,A^\eps_3 \cap T^\eps \big) + E_\varepsilon\big(Y_\eps^+,A^\eps_3 \cap D^\eps \big) =E_\varepsilon\big(Y_\eps^+,A^\eps_3 \cap D^\eps \big)  \leq E_\varepsilon(X_\varepsilon,Q^\nu).
\end{align*}

To conclude this step of the proof, it suffices to recall that by  Lemma \ref{lemma:propertiesofE}(iv)
\begin{align*}
E_\varepsilon(Y_\eps^+,Q^\nu)=E_\varepsilon(Y_\eps^+,A^\eps_1)+E_\varepsilon(Y_\eps^+,A^\eps_2)+E_\varepsilon(Y_\eps^+,A^\eps_3).
\end{align*}
Then we obtain \eqref{ineq: Energyinequality} by  \eqref{eq: liminfoutsidedelta0},   (\ref{ineq:energyA1}), (\ref{ineq:energyA2}), and (\ref{ineq:energyA3}).

\noindent \emph{Step 6: Conclusion.} By repeating the cut-off construction in Step   4 on $Q^{\nu,-}$ for $z_\varepsilon^-$, we obtain a configuration $Y_\varepsilon$ such that $Y_\varepsilon = \eps \mathscr{L}(z_\varepsilon^\pm)$ on $\partial^\pm_\varepsilon Q^\nu(y_\eps)$ and 
\begin{align}\label{eq: now forY}
\liminf_{\varepsilon \to 0} E_\varepsilon\big(Y_\eps,Q^\nu(y_\eps)\big) \leq \liminf_{\varepsilon \to 0} E_\varepsilon\big(X_\varepsilon,Q^\nu(y_\eps)\big) + C\delta
\end{align}
 by \eqref{ineq: Energyinequality}, where we reinclude the center $y_\eps$ in the notation for clarification. Since  $z^\pm_\varepsilon \to z^\pm$ by Step 2, we observe by the definition of $\Phi$ in \eqref{eq: Phi def} that
\begin{align*}
\liminf_{\varepsilon \to 0} E_\varepsilon(Y_\varepsilon,Q^\nu(y_\eps))\geq \Phi(z^+,z^-,\nu).
\end{align*}
 By using \eqref{eq: optimal!}, \eqref{eq: now forY} and by passing to $\delta \to 0$, we obtain the statement of the lemma.

\noindent \emph{Step 7: Adaptions in  $\mathrm{(b)}$.} To conclude the proof of the lemma, it remains to describe Steps 3--5 in the case of vacuum, i.e., $z^+_\eps = \mathbf{0}$.

 \noindent \emph{Step 3 for case $\mathrm{(b)}$: Cardinality estimate.} We prove that
\begin{align}\label{ineq:cardestimatecase(b)}
\varepsilon^2\#(X_\varepsilon \cap P_{\delta,\eps}^+) \leq CE_\varepsilon(X_\varepsilon,Q^\nu\setminus   R^\nu_{1,  \delta/2 }) 
\end{align}
for a universal $C>0$. In fact, if $x \in X_\varepsilon$ has $\#\mathcal{N}_\varepsilon(x)=6$, then $u_\eps(x) \neq \mathbf{0}$ on $B_{\eps/2}(x)$ by \eqref{def:u} and the fact that $B_{\eps/2}(x) \subset V_\eps^{z(x)}(x)$. Also note the $B_{\eps/2}(x) \cap B_{\eps/2}(y) =\emptyset$ for $x,y\in X_\eps$, $x\neq y$.   Thus, by \eqref{def:energyneighbourhood}, \eqref{ineq : dominantphase} (with $z^+_\varepsilon = \mathbf{0}$),  and Lemma \ref{lemma:propertiesofE}(iii) we get 
\begin{align*}
\varepsilon^2\#(X_\varepsilon \cap P_{\delta,\eps}^+) &\leq \varepsilon^2\#\{x \in X_\varepsilon \cap P_{\delta,\eps}^+ : \#\mathcal{N}_\varepsilon(x)=6\} + \varepsilon^2\sum\nolimits_{x \in X_\varepsilon  \cap P_{\delta,\eps}^+}(6-\#\mathcal{N}_\varepsilon(x)) \\&\leq C\mathcal{L}^2\big(\{u_\varepsilon \neq \mathbf{0}\}\cap P_{\delta,\eps}^+\big)+  2 \varepsilon E_\varepsilon\big(X_\varepsilon, Q^\nu\setminus  R^\nu_{1,\delta/2}  \big) \leq CE_\varepsilon\big(X_\varepsilon,Q^\nu\setminus  R^\nu_{1,\delta/2}  \big),
\end{align*}
where we again used that  $P_{\delta,\eps}^+ \subset Q^\nu\setminus R^\nu_{1,\delta/2}$. This concludes Step 3 in case (b).

\noindent \emph{Step 4 for case $\mathrm{(b)}$: Cut-off construction.} We now explain the  construction of a new configuration $Y^+_\varepsilon$ such that $ Y_\eps^+ = \mathbf{0} \text{ on } \partial^+_\varepsilon Q^\nu$. Again set $N_\varepsilon= \left\lfloor \frac{\delta}{6\varepsilon}\right\rfloor$ and define $S_k^\varepsilon$ as in \eqref{eq: Sdef}, as well as   $L_{k}^\varepsilon =S_{k-1}^\varepsilon \cup S_{k}^\varepsilon\cup S_{k+1}^\varepsilon$. Similar to \eqref{ineq:averaging}, by averaging over $k$ and using \eqref{ineq:cardestimatecase(b)}, there exists $k_\varepsilon \in \{1,\ldots,N_\varepsilon\}$ such that
\begin{align}\label{ineq:averaging2}
\begin{split}
\#( X_\varepsilon \cap L_{k_\varepsilon}^\varepsilon ) &\leq \frac{1}{N_\varepsilon}\sum\nolimits_{k=1}^{N_\varepsilon}\#( X_\varepsilon \cap L_{k}^\varepsilon )  \leq \frac{3}{N_\eps}\#(X_\varepsilon \cap P_{\delta,\eps}^+) \leq \frac{C}{\varepsilon \delta} E_\varepsilon\big(X_\varepsilon,Q^\nu\setminus R^\nu_{1,\delta/2}\big).
\end{split}
\end{align}
where we again use that each strip   $S_k^\varepsilon$ is counted at most three times.  We define 
 \begin{align}\label{def:Yeps000}
 Y_\eps^+=\begin{cases} \emptyset & \text{in }   \big( (P_{\delta,\eps}^+ \cup R^\nu_{1,\delta})  \setminus (Q_{r_{k_\varepsilon}}^\nu \cup \partial_\varepsilon^- Q^\nu ) \big) \cup \partial_\varepsilon^+ Q^\nu,  \\
 X_\varepsilon & \text{otherwise.}
 \end{cases} 
 \end{align}
 Note that, since $E_\varepsilon( X_\varepsilon) <+\infty$, we have that $E_\varepsilon(Y_\varepsilon^+) <+\infty$.

\noindent \emph{Step 5 for case $\mathrm{(b)}$: Energy estimate.} We again split the estimate into the three sets $A^\eps_1$, $A^\eps_2$, and $A^\eps_3$ defined in \eqref{eq: Asets}. 

\noindent\emph{Energy estimate for $A^\eps_1$:}
We claim that there exists $C>0$ such that
\begin{align}\label{ineq:energyA1b}
E_\varepsilon(Y_\eps^+,A^\eps_1) \leq C\delta.
\end{align}
In fact, due to (\ref{def:Yeps000}), we have  $Y_\eps^+ \cap (R^\nu_{1,\delta} \setminus Q^\nu_{r_{k_\varepsilon}})=X_\varepsilon \cap R^\nu_{1,\delta} \cap \partial_\varepsilon^- Q^\nu$, where, similarly as in \eqref{eq: first}, $\#(X_\varepsilon \cap R^\nu_{1,\delta} \cap \partial_\varepsilon^- Q^\nu) \le C \delta/\eps$. As $R^\nu_{1,\delta} \setminus Q^\nu_{r_{k_\varepsilon-1}}$ consists of two rectangles with $\mathcal{H}^1(\partial (R^\nu_{1,\delta} \setminus Q^\nu_{r_{k_\varepsilon-1}})) \leq C\delta$ and $Y_\eps^+$ satisfies   $E_\varepsilon(Y_\varepsilon^+) <+\infty$,    we obtain by   Lemma \ref{lemma:propertiesofE}(v) 
\begin{align*}
\#(A^\eps_1 \cap Y_\eps^+)
&= \#\big( \big(A^\eps_1 \setminus (  R^\nu_{1,\delta} \setminus Q^\nu_{r_{k_\varepsilon}})\big)  \cap Y^+_\varepsilon\big)  + \# \big( X_\varepsilon \cap R^\nu_{1,\delta} \cap \partial_\varepsilon^- Q^\nu \big) \\
&\leq C\varepsilon^{-2} \mathcal{L}^2\big(\big(A^\eps_1 \setminus (R^\nu_{1,\delta} \setminus Q^\nu_{r_{k_\varepsilon}})\big)_\varepsilon \big) + C\delta/\eps  \\ 
&\leq  C\eps^{-1}\mathcal{H}^1\big(\partial (R^\nu_{1,\delta} \setminus Q^\nu_{r_{k_\varepsilon-1}})\big)  + C\delta/\eps  \leq C\delta/\eps.
\end{align*}
Then (\ref{ineq:energyA1b}) follows by \eqref{def:energyneighbourhood}.

\noindent\emph{Energy estimate for $A^\eps_2$:} We claim that there exists $C>0$ such that
\begin{align}\label{ineq:energyA2b}
E_\varepsilon(Y_\eps^+,A^\eps_2) \leq \frac{C}{\delta}E_\varepsilon\big(X_\varepsilon,Q^\nu\setminus R^\nu_{1,\delta/2}\big). 
\end{align}
In fact, if $x \in Y_\eps^+\cap A^\eps_2$, then $x \in X_\varepsilon \cap L_{k_\varepsilon}^\varepsilon$. Using  \eqref{def:energyneighbourhood} and (\ref{ineq:averaging2}) we obtain (\ref{ineq:energyA2b}).

\noindent\emph{Energy estimate for $A^\eps_3$:} We observe that
\begin{align}\label{ineq:energyA3b}
E_\varepsilon(Y_\eps^+,A^\eps_3) \leq E_\varepsilon(X_\varepsilon,Q^\nu).
\end{align}
Indeed, if $x \in Y_\eps^+ \cap (Q^\nu \setminus (A^\eps_1 \cup A^\eps_2))$, then $\mathcal{N}_{\varepsilon,Y}(x) = \mathcal{N}_\varepsilon(x)$, where the neighborhood of $x$ with respect to $Y_\eps^+$ is again denoted by $\mathcal{N}_{\varepsilon,Y}(x)$. Therefore, \eqref{ineq:energyA3b} follows by \eqref{def:energyneighbourhood} and Lemma \ref{lemma:propertiesofE}(iii). 

Summarizing, \eqref{ineq:energyA1b}--\eqref{ineq:energyA3b} and \eqref{eq: liminfoutsidedelta0} yield 
\begin{align*}
\liminf_{\varepsilon \to 0} E_\varepsilon(Y_\eps^+,Q^\nu) \leq \liminf_{\varepsilon \to 0} E_\varepsilon(X_\varepsilon,Q^\nu) + C\delta,
\end{align*}
which is the analog to \eqref{ineq: Energyinequality}. The rest of the proof (i.e., Step 6) remains unchanged.
\end{proof}

\section{Reduction of the problem to subsets of two lattices}\label{sec:Reduction-two-lattices}

In the previous section, we have seen that the condition of $L^1$-convergence in the definition of $\psi$ (see \eqref{def:psi}) can be replaced by converging boundary values, see the definition of $\Phi$ in \eqref{eq: Phi def}. From now on, it will be convenient to express the problem with lattice spacing equal to $1$.  Recall \eqref{eq: eps-rand} and observe that by Lemma \ref{lemma:propertiesofE}  the cell formula for  $\Phi$ can be written as 
\begin{align}\label{lemma:Phi}
  \Phi(z^+,z^-,\nu) 
  = \min \Big\{ \liminf_{T\to +\infty} \frac{1}{T}\inf\Big\{E_1\big(X_T,Q^\nu_T(y_T)\big)\colon \,  y_T \in \R^2,  & \  X_T  = \mathscr{L}(z^\pm_T) \text{ on } \partial^\pm_1 Q_T^\nu(y_T) \Big\} \colon \notag \\ &   
    \lbrace z^\pm_T\rbrace_T \subset \mathcal{Z} \text{ with }  z^\pm_T \to z^\pm\Big\} 
\end{align} 
for all $z^+,z^- \in \mathcal{Z}$ and $\nu \in \mathbb{S}^1$. This section is devoted to a fundamental ingredient for  the proof of relation of $\Phi$ and $\varphi$, and the properties of $\varphi$, which will be addressed in Sections \ref{sec: Part II} and \ref{section:surfacetension2}. We show that the minimization problem in \eqref{lemma:Phi} can be reduced to configurations that are  subsets of \emph{two lattices only} (or just one if either  $z^+=\mathbf{0}$ or $z^-=\mathbf{0}$). For the formulation of the lemma, we introduce two further notions: we say that a set $Y \subset \R^2$ is \emph{connected} if for each pair $x,y \in Y$ there exists a chain $(v_1,\ldots,v_n)$ with $v_i \in Y$ for $i \in \lbrace 1,\ldots,n\rbrace$, $v_1= x$, $v_n= y$, and $|v_{i+1} - v_i|  = 1$ for $i \in \lbrace 1,\ldots,n-1\rbrace$. Moreover, given a configuration $X$ and $Y \subset X$, we define \emph{the boundary of $Y$ inside $Q^\nu_T(y)$}  by
\begin{align}\label{eq: boundary of Y}
\partial Y = \{ x \in Y \cap Q^\nu_T(y) \colon \#(\mathcal{N}(x) \cap Y) < 6\}. 
\end{align}

\begin{lemma}[Reduction to subsets of two lattices]\label{lemma:reduction} Let $z^+,z^- \in \mathcal{Z}$, $\nu \in \mathbb{S}^1$, $y \in \R^2$, and $T>0$. Let $X \subset \mathbb{R}^2$ be a minimizer of
\begin{align}\label{eq: one inequl}
\min\Big\{E_1\big(X,Q^\nu_T(y)\big)\colon \   X = \mathscr{L}(z^\pm) \text{ \rm on } \partial^\pm_1 Q_T^\nu(y) \Big\}. 
\end{align}
Then, it satisfies the following two properties:
\begin{itemize}
\item[(i)] (Subset of lattices) There holds $X = X^+ \cup X^-$  on $Q_T^\nu(y)$,  where $X^\pm \subset \mathscr{L}(z^\pm)$ and $X^\pm$ is connected. 
\item[(ii)] (Structure of boundaries) The sets $\partial X^+$ and $\partial X^-$ defined in \eqref{eq: boundary of Y}  are connected and satisfy $\# \mathcal{N}(x) \le 5$ for all $x \in \partial X^\pm$, as well as $\max_{x,y \in \partial X^\pm} |x-y| \ge T$. 
\end{itemize}
\end{lemma}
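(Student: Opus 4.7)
The plan is to exploit the bond-graph / face-defect strategy described in the introduction. Associate to $X$ the planar graph $G_X$ whose vertices are the atoms of $X$ lying in $Q^\nu_T(y) \cup \partial^+_1 Q^\nu_T(y) \cup \partial^-_1 Q^\nu_T(y)$ and whose edges are the straight segments joining pairs of atoms at distance exactly $1$. The hard-core constraint together with \eqref{eq: neighborhood bound} ensures that this embedding is planar and that every vertex has degree at most $6$. Applying Euler's formula to $G_X$ and using that each bounded face has at least three sides, one derives an identity of the form
\begin{align*}
E_1\bigl(X, Q^\nu_T(y)\bigr) \;=\; (\text{contribution fixed by the boundary datum}) \;+\; \tfrac{1}{2}\,\mathrm{def}(X),
\end{align*}
where $\mathrm{def}(X) := \sum_{f \text{ bounded}}(|f|-3)$ is the face defect. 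Since the first summand is rigid under modifications of $X$ inside $Q^\nu_T(y)$, minimality forces $\mathrm{def}(X)=0$: every bounded face of $G_X$ is an equilateral triangle. Concretely, a face with $|f|\ge 4$ can be triangulated by inserting the missing sites of the unique rotated and translated triangular lattice consistent with its boundary cycle, strictly decreasing $E_1$ while preserving the boundary data.

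Once all bounded faces are equilateral triangles, the rigidity of the triangular lattice implies that each connected component of $G_X$ is a subset of a single rotated/translated copy of $\mathscr{L}$: start from any face triangle and propagate via edge-sharing, noting that the extension is unique. The boundary conditions $X = \mathscr{L}(z^\pm)$ on $\partial^\pm_1 Q^\nu_T(y)$ then force any component meeting $\partial^\pm_1 Q^\nu_T(y)$ to be a subset of $\mathscr{L}(z^\pm)$. A component meeting neither boundary strip is a bounded lattice piece lying strictly inside $Q^\nu_T(y)$; since some of its atoms necessarily have fewer than six neighbours (any finite lattice piece has a boundary) and no atom outside the cube is affected upon removal, deletion of such a component strictly decreases $E_1$, contradicting minimality. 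This yields $X = X^+ \cup X^-$ on $Q^\nu_T(y)$ with $X^\pm \subset \mathscr{L}(z^\pm)$. The connectedness of $X^\pm$ then follows because $\mathscr{L}(z^\pm) \cap \partial^\pm_1 Q^\nu_T(y)$ is already bond-connected (the strip has width $20 \gg 1$) and any extra part of $X^\pm$ must be bond-connected to it in order not to be deletable.

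For (ii), the plan is first to exclude holes in $X^\pm$: if a bounded region of $Q^\nu_T(y)$ enclosed by $X^+$ contained a missing site of $\mathscr{L}(z^+)$, inserting that site would raise the neighbour counts of the surrounding atoms and strictly decrease $E_1$. Hence $X^+$ is a simply-connected lattice region, so $\partial X^+$ is a single closed cycle in the bond graph of $\mathscr{L}(z^+)$ and is therefore connected. The inequality $\#\mathcal{N}(x)\le 5$ for $x \in \partial X^+$ follows from the rigidity of the hexagonal neighbourhood: if $x \in X^+$ had $\#\mathcal{N}(x)=6$, the six neighbours would form a regular hexagon at unit distance and would therefore all lie in $\mathscr{L}(z^+)$, hence in $X^+$, contradicting $x\in\partial X^+$. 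Finally, the diameter bound $\max_{x,y\in\partial X^\pm}|x-y|\ge T$ comes from the inclusion $\mathscr{L}(z^+)\cap \partial^+_1 Q^\nu_T(y)\subset X^+$: this part spans the full horizontal extent of the top of the cube, so $\partial X^+$ must contain atoms at both lateral ends, which are at distance at least $T$.

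The main technical obstacle I anticipate is making the face-filling step in the first paragraph rigorous. A priori the boundary cycle of a face with $|f|\ge 4$ could contain atoms belonging to different rotated lattices (at "touching points"), in which case no single lattice fills the face. Resolving this requires a careful combinatorial analysis of the admissible cycles around such a face — using the hard-core constraint, the fact that the cycle must be an embedded unit-distance polygon, and the boundary values on $\partial^\pm_1 Q^\nu_T(y)$ — and will occupy the bulk of the proof. The remaining ingredients (lattice-rigidity propagation, the hole-filling argument, and the diameter estimate) are comparatively routine once the face defect has been quantified.
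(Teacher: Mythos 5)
Your central claim --- that minimality forces $\mathrm{def}(X)=0$, so that every bounded face of the bond graph is an equilateral triangle --- is false, and the obstacle you flag at the end is not a technicality but fatal. When $z^+\neq z^-$ (both nonzero), \emph{every} admissible configuration, including the minimizer, must contain non-triangular faces: if all bounded faces were unit triangles, your own rigidity-propagation argument would force each connected component onto a single rotated/translated copy of $\mathscr{L}$, and a component meeting both $\partial^+_1 Q^\nu_T(y)$ and $\partial^-_1 Q^\nu_T(y)$ would then have to coincide simultaneously with two distinct lattices --- a contradiction. (Compare Figure \ref{fig:low energy}: optimal grain boundaries consist of large polygonal faces between the two lattices, with at most some touching bonds.) Such faces cannot be ``triangulated by inserting the missing sites'' because no single lattice is consistent with their boundary cycle. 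Your energy identity is also not correct as stated: Euler's formula gives $E_1 = 3 + d + 2b_{\rm ac} + \eta$ (up to localization issues), where $d$ is the perimeter of the reduced bond graph and $b_{\rm ac}$ the number of acyclic bonds; neither quantity is ``fixed by the boundary datum'' --- a configuration with a long meandering grain boundary has a much larger $d$ than one with a flat interface --- so you cannot conclude that minimality acts on $\eta$ alone.

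The paper uses the face defect \emph{comparatively} rather than claiming it vanishes. It first introduces the maximal strongly connected sublattice components $M^\pm$ attached to the boundary strips, shows by hole-filling arguments that $\partial M^\pm$ are simple paths crossing the cube, and then proves that for any connected component $X'$ of $(X\cap Q^\nu_T(y))\setminus(M^+\cup M^-)$ the sub-configuration $X'\cup\Gamma$ (with $\Gamma\subset\partial M^+\cup\partial M^-$ the adjacent boundary arcs) satisfies the quantitative bound $2+d+2b_{\rm ac}+\eta\ge 3n_\Gamma-b_\Gamma$; combined with Euler's formula this shows that deleting $X'$ strictly lowers the energy. That removal argument --- not global triangularity --- is the substance of the lemma, and establishing the defect lower bound (via the classification of $\partial$-$k$-gons and Lemma \ref{lemma:auxiliarypath}) is where the real work lies. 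A further slip in your part (ii): $\partial X^\pm$ as defined in \eqref{eq: boundary of Y} is a simple path joining the two lateral faces of $Q^\nu_T(y)$ (whence the diameter bound $\ge T$), not a closed cycle; your hole-filling idea is essentially right for that step, but it must be carried out for $M^\pm$ \emph{before} the removal argument, which relies on $\partial M^\pm$ being simple paths.
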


Note that the minimum in \eqref{eq: one inequl} exists since $E_1$ is lower semicontinuous, see \eqref{def:potential} and \eqref{def:energy}, and the problem is finite dimensional. We also point out that $X^+\cap X^- \neq \emptyset$ is possible, see e.g.\ Figure~\ref{fig:low energy},  i.e., the two grains described by $X^+$ and $X^-$ can have common atoms. Resolving this ambiguity by introducing a specific choice, the grain boundary and bonds connecting the two grains can be described in more detail. 

\begin{lemma}[Bonds between grain boundaries]\label{lemma:grain-bonds}
Let $X^\pm$ be the sets found in Lemma \ref{lemma:reduction}. There exist $Y^\pm$ with $X^\pm \setminus \partial X^\mp \subset Y^\pm \subset X^\pm$ such that: 
\begin{itemize} 
\item[(i)] (Partition into grains) $Y^+ \cup Y^- = X^+ \cup X^-$ and $Y^+ \cap Y^- \cap Q_T^\nu(y) = \emptyset$.  
\item[(ii)] (Grain and bulk boundaries) $\partial Y^\pm \subset \partial X^\pm$ and $Y^\pm = \mathscr{L}(z^\pm)$ on $\partial^\pm_1 Q_T^\nu(y)$. 
\item[(iii)] (Neighborhood structure at grain boundary) There holds 
\begin{align*}
  \big| \sum\nolimits_{x \in \partial Y^\pm} \# (\mathcal{N}(x) \cap Y^\pm) - 4 \# \partial Y^\pm \big| \le 2. 
\end{align*}
\end{itemize}
\end{lemma}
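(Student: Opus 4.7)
The plan is to build $Y^\pm$ by partitioning the \emph{shared atoms} $S := X^+ \cap X^- \cap Q_T^\nu(y)$ between the two grains. I assume throughout that $z^+ \neq z^-$; the case $z^+ = z^-$ reduces to $X^+ = X^-$ and the statement is immediate. The key preliminary observation is that every $x \in S$ lies in $\partial X^+ \cup \partial X^-$: if $x \notin \partial X^+$, then $x$ is surrounded by the full unit hexagon of $\mathscr{L}(z^+)$ contained in $X^+$, and the finite-energy bound $\#\mathcal{N}(x) \leq 6$ together with $z^+ \neq z^-$ forces every neighbor of $x$ in $X^-$ to also lie in $\mathscr{L}(z^+) \cap \mathscr{L}(z^-)$; combined with the connectedness of $X^-$ from Lemma~\ref{lemma:reduction}(i) and the nontriviality imposed by the boundary conditions on $\partial^-_1 Q_T^\nu(y)$, this shows $x \in \partial X^-$.

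Next I would set $Y^\pm := X^\pm \setminus R^\pm$, where $R^+, R^-$ form a disjoint partition of $S$ inside $Q_T^\nu(y)$: atoms in $S \cap (\partial X^- \setminus \partial X^+)$ are forced into $R^+$ by the constraint $X^- \setminus \partial X^+ \subseteq Y^-$, those in $S \cap (\partial X^+ \setminus \partial X^-)$ are forced into $R^-$ by symmetry, and the ambiguous atoms in $S \cap \partial X^+ \cap \partial X^-$ are distributed by a rule tailored to (iii). The inclusions $X^\pm \setminus \partial X^\mp \subseteq Y^\pm \subseteq X^\pm$ and property (i) then follow directly, and the boundary conditions $Y^\pm = \mathscr{L}(z^\pm)$ on $\partial^\pm_1 Q_T^\nu(y)$ are preserved since $R^\pm \subseteq Q_T^\nu(y)$. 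For the inclusion $\partial Y^\pm \subseteq \partial X^\pm$ in (ii), a local argument shows that if $x \in Y^+ \cap Q_T^\nu(y)$ is interior to $X^+$ and a neighbor $x'$ of $x$ lies in $R^+$, then applying the hexagon argument at both $x$ and $x'$ forces $x$ itself into $\partial X^+$, completing (ii).

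The main content is property (iii), a discrete Gauss--Bonnet identity for the grain boundary. I view the enlarged region $Y^+$ together with the lattice filling of the strip $\partial^+_1 Q_T^\nu(y)$ as a simply connected planar subcomplex of the triangulation of $\mathscr{L}(z^+)$ by unit equilateral triangles, and apply Euler's formula to obtain
\[
\sum_{x \in \gamma}\bigl(4 - \#(\mathcal{N}(x) \cap Y^+)\bigr) = 6,
\]
where $\gamma$ is the closed topological boundary cycle of the enlarged region. By Lemma~\ref{lemma:reduction}(ii), $\gamma$ splits into the grain boundary arc $\partial Y^+ \subseteq Q_T^\nu(y)$ and a complementary arc $\gamma_{\mathrm{out}}$ tracing the outer rim of the fully-filled strip $\partial^+_1 Q_T^\nu(y)$. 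Along $\gamma_{\mathrm{out}}$ the discrete curvature $4 - \#(\mathcal{N}(x) \cap Y^+)$ vanishes except at the finite number of corners of the rectangular rim, whose contributions sum to $4 \pm 2$ depending on their alignment with the lattice. Subtracting yields $\bigl|\sum_{x \in \partial Y^+}\bigl(4 - \#(\mathcal{N}(x) \cap Y^+)\bigr)\bigr| \leq 2$, which is precisely (iii); the proof for $Y^-$ is analogous.

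The principal obstacle is to rigorously justify the decomposition of $\gamma$ into exactly two arcs $\partial Y^+$ and $\gamma_{\mathrm{out}}$. This amounts to showing that $\partial Y^+$ is a single simple path with precisely two endpoints on $\partial Q_T^\nu(y)$, which follows from the connectedness of $\partial X^\pm$ in Lemma~\ref{lemma:reduction}(ii), and to selecting the partition of the ambiguous atoms in $S \cap \partial X^+ \cap \partial X^-$ so that no isolated clusters or extra junctions arise in $Y^\pm$ that would inflate the Euler correction term.
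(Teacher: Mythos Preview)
Your outline has the right shape but leaves out the one idea that makes everything work: the explicit rule for partitioning the shared atoms. You write that the ambiguous atoms in $S \cap \partial X^+ \cap \partial X^-$ are ``distributed by a rule tailored to (iii)'' without saying what that rule is. The paper's rule is the majority rule: assign $x$ to $Y^+$ if $\#(\mathcal{N}(x)\cap X^+)\ge \#(\mathcal{N}(x)\cap X^-)$, otherwise to $Y^-$. The point of this choice is that, since $\mathcal{N}(x)\cap X^+\cap X^-=\emptyset$ whenever $x\in X^+\cap X^-$ (two distinct triangular lattices cannot share a unit bond), and since $\#\mathcal{N}(x)\le 5$ on $\partial X^\pm$ while strong connectedness gives $\#(\mathcal{N}(x)\cap X^\pm)\ge 2$, any atom \emph{removed} from $X^\pm$ necessarily satisfies $\#(\mathcal{N}(x)\cap X^\pm)=2$. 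This single implication is what drives both (ii) and (iii), and it is not available under an unspecified partition.

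Concretely, this is why your ``local argument'' for $\partial Y^\pm\subset\partial X^\pm$ is incomplete: if $y$ is interior to $X^+$ and a neighbour $x'\in R^+$, you need to know $x'$ had only two $X^+$-neighbours to force a contradiction, and that comes from the majority rule. It is also why your final paragraph misfires: that $\partial Y^+$ is a single simple path does \emph{not} follow from Lemma~\ref{lemma:reduction}(ii) alone---removing arbitrary boundary atoms from a simply connected region can disconnect the remaining boundary or create branching. Removing only degree-$2$ atoms amounts to excising corners of the path $\partial X^+$, which manifestly preserves the simple-path structure. Once that is in hand, the paper's argument for (iii) is actually simpler than your Euler-formula scheme on an enlarged region: it reads off the total turning of the path $\partial Y^\pm$ between the two lateral faces, which lies in $\tfrac{\pi}{3}\{-2,-1,0,1,2\}$, and converts via $\alpha(x)=\tfrac{\pi}{3}(\#(\mathcal{N}(x)\cap Y^\pm)-1)$ directly into the bound $|\sum_{x\in\partial Y^\pm}(\#(\mathcal{N}(x)\cap Y^\pm)-4)|\le 2$. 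Also note that your ``forced'' cases $S\cap(\partial X^-\setminus\partial X^+)$ and $S\cap(\partial X^+\setminus\partial X^-)$ are in fact empty: the same hexagon argument you sketch, applied symmetrically, shows $S\subset\partial X^+\cap\partial X^-$, not merely $S\subset\partial X^+\cup\partial X^-$.
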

We thus have that on average each boundary atom has four neighbors in the same grain. As it has at most five neighbors in the whole configuration, it has on average less than one bond connecting it to the other grain.

From a technical perspective,  Lemma \ref{lemma:reduction}  will provide an important tool to study the properties of the cell formulas. From the physical point of view, it shows that our extremely brittle set-up, while allowing for rebonding, does not support interpolating boundary layers near cracks.  Its proof will require some concepts from graph theory which will be only needed for this part of the article. For this reason,  it is possible   to omit the proofs of Lemmas \ref{lemma:reduction} and \ref{lemma:grain-bonds}  on first reading and to proceed directly with Section \ref{sec: Part II}.  As our graph theoretic description gives in fact a more precise picture of the geometry of grain boundaries, which is of some independent interest, we summarize these findings in Theorem \ref{theorem:grain-boundary} at the end of Section \ref{sec:Reduction-two-lattices}. 

We now address the proof of the lemma and start by introducing some notions from graph theory.

\noindent\emph{The bond graph:} 
We define the \textit{bond graph} of $X\subset \R^2$ as the set of positions $X$ with the set of \emph{bonds} $\{\{x,y\}\colon  \, x \in X, \  y \in \mathcal{N}(x)\}$, where $\mathcal{N}(x) = \mathcal{N}_1(x)$ is defined in \eqref{def:neighbourhood}.  As for  configurations with finite energy $E_1$  there holds $\mathrm{dist}(x,X \setminus \{x\})\geq 1$ for all $x\in X$ and $y \in \mathcal{N}(x)$ only if $|x-y| = 1 < \sqrt{2}$,     the bond   graph is planar.  Indeed, given a quadrilateral  with all sides and one diagonal equal to $1$, the second diagonal is $\sqrt{3} >1$.

A sequence of atoms $p=(v_1,\ldots,v_n) \subset X$ is called a \emph{simple path} in $X$ if the atoms are distinct and  $\lbrace v_{j-1}, v_{j}\rbrace $ are bonds for $j\in\lbrace 1,\ldots,n-1\rbrace$. If  $(v_1,\ldots,v_{n-1})$ is a simple path and $v_{n-1}$ is connected to $v_n = v_1$ by a bond, $p$ is a  \emph{cycle} in $X$.  We say that a configuration is \emph{connected} if each two atoms are joinable through a simple path. (Note that this definition is consistent with the one given before the statement of Lemma \ref{lemma:reduction}.) A bond is called \emph{acyclic} if it is not contained in any cycle of the bond graph.  The \emph{reduced bond graph} of $X$ is obtained by first deleting all acyclic bonds and then all atoms which are not connected to any other atom. By a \emph{face} of $X$ we always mean a face of its reduced bond graph. The boundary of a face is given by a disjoint union of cycles and by a unique cycle if the reduced bond graph is connected. Such a boundary is called a \emph{polygon} and, in particular,  a $j$-gon if it consists of $j\in \N$ atoms. 
 
\noindent \emph{Sub-configuration:} We say that $Z \subset X$ is a \emph{sub-configuration} of $X$. All notions defined above are defined analogously for any sub-configuration $Z$ of $X$.

\noindent \emph{Face defect:} We define the \emph{face defect} of a sub-configuration $Z \subset X$ by
 \begin{align}\label{def:face defect} 
 \eta(Z) = \sum\nolimits_{j \geq 3} \,  (j-3)f_j(Z),
\end{align}  
where $f_j(Z)$ denotes the number of polygons with $j$ atoms in the bond graph of $Z$.

 \noindent \emph{Strong connectedness:} We say that a configuration $Z$ is \emph{strongly connected}  if   $Z \setminus \lbrace x \rbrace$ is connected for every $x \in Z$. Note that strongly connected  graphs with more than two atoms coincide with their reduced bond graph as they do not contain acyclic bonds since removing one of the atoms belonging to the bond would disconnect the configuration.   

\noindent \emph{Maximal components:} Fix $Q^\nu_T(y)$. Let $z^+,z^- \in \mathcal{Z}   $ and consider $X \subset \mathbb{R}^2$ such that $X= \mathscr{L}(z^\pm)$ on $\partial^\pm_1Q^\nu_T(y)$.  We denote the \emph{set of strongly  connected subsets of lattices}  by
\begin{align*}
\mathcal{C}^\pm = \big\{Z \subset X \cap \mathscr{L}(z^\pm) \colon \, Z \cap \partial^\pm_1 Q^\nu_T(y) \neq \emptyset, \, Z   \text{ is strongly  connected}\big\}.
\end{align*}
We introduce the \emph{maximal components},  denoted by $M^\pm$, as  the maximal elements  in $\mathcal{C}^\pm$ with respect to set inclusion. These sets can be written as 
\begin{align}\label{def:Xpm}
M^\pm = \bigcup\nolimits_{Z \in \mathcal{C}^\pm} Z.
\end{align}
Note that $M^+=\emptyset$ or $M^-=\emptyset$ if $z^+=\mathbf{0}$ or $z^-=\mathbf{0}$, respectively. Moreover, we point out that $M^\pm$ are in general not subsets of $Q^\nu_T(y)$. We illustrate $M^\pm\cap Q^\nu_T$ in Figure \ref{fig:Xpm}.

\begin{figure}[H]
 \includegraphics{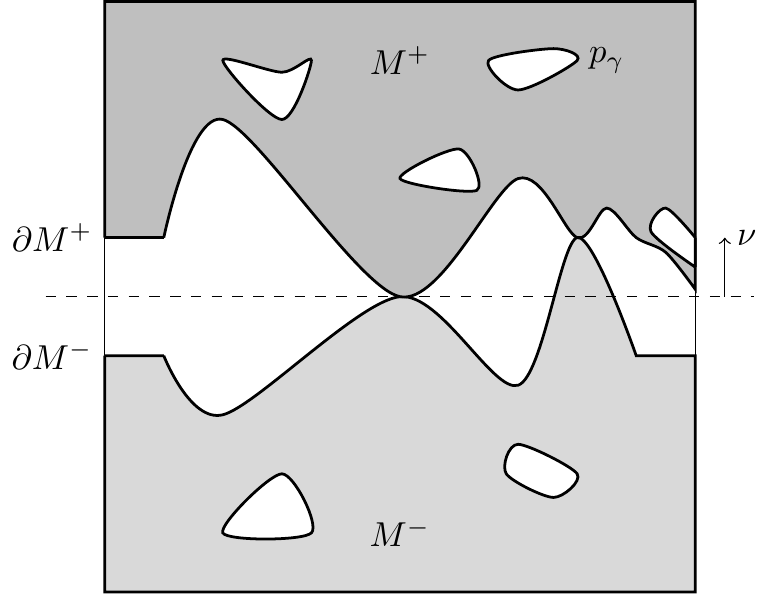}
\caption{A schematic picture of $M^+ \cap Q^\nu_T(y)$, depicted in dark gray, and of $M^- \cap Q^\nu_T(y)$, depicted in light gray. Their boundaries are illustrated in bold. We depict also a curve $p_\gamma$ considered in Step 2 of the proof below.}\label{fig:Xpm}
\end{figure}

\begin{lemma}[Simple paths in maximal components]\label{lemma:auxiliarypath}  Let $\gamma = (x_1,\ldots,x_k)$ be a simple path in $X$ with $x_1,x_k \in M^+$ (or both in $M^-$) such that $x_2,\ldots, x_{k-1} \notin M^+ $ (or $x_2,\ldots, x_{k-1} \notin M^- $, respectively). Then $k \geq 4$.
\end{lemma}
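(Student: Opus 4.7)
Since the hypothesis $x_2,\ldots,x_{k-1}\notin M^+$ is non-vacuous only when $k\geq 3$, the only case to rule out is $k=3$. I would assume for contradiction that $\gamma=(x_1,x_2,x_3)$ is a simple path in $X$ with $x_1,x_3\in M^+$ and $x_2\notin M^+$, and then exhibit $Z^*\in\mathcal{C}^+$ with $x_2\in Z^*$, contradicting $x_2\notin M^+$. The argument for endpoints in $M^-$ is entirely symmetric.

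The first step is a lattice-geometric reduction: I claim $x_2\in\mathscr{L}(z^+)$. Indeed, since $x_1,x_3\in M^+\subset\mathscr{L}(z^+)$ and $|x_1-x_2|=|x_2-x_3|=1$, the triangle inequality yields $|x_1-x_3|\leq 2$, and since distinct points of the triangular lattice have mutual distance in $\{1,\sqrt{3},2,\sqrt{7},\ldots\}$, this forces $|x_1-x_3|\in\{1,\sqrt{3},2\}$. In each of these three cases, the intersection of the two unit circles centered at $x_1,x_3$ lies entirely in $\mathscr{L}(z^+)$: the unique lattice midpoint for $|x_1-x_3|=2$, and the two common hexagonal neighbors of $x_1,x_3$ for $|x_1-x_3|\in\{1,\sqrt{3}\}$. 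Hence $x_2\in X\cap\mathscr{L}(z^+)$.

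The second step is to construct $Z^*\in\mathcal{C}^+$ containing $x_2$. Set $L^+:=\mathscr{L}(z^+)\cap\partial_1^+ Q^\nu_T(y)\subset X$ (by the boundary condition on $X$). If $x_1\in L^+$, then the two-point set $\{x_1,x_2\}$ is strongly connected (as $\{x_1,x_2\}\setminus\{x_i\}$ is a singleton, hence vacuously connected) and belongs to $\mathcal{C}^+$, giving the contradiction immediately; hence one may assume $x_1\notin L^+$ and, symmetrically, $x_3\notin L^+$. Pick $Z_1,Z_3\in\mathcal{C}^+$ with $x_1\in Z_1$, $x_3\in Z_3$. Since $x_i\notin L^+$, the set $Z_i\cap L^+$ consists of vertices distinct from $x_i$, which is crucial below. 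I would then propose $Z^*:=Z_1\cup Z_3\cup L^+\cup\{x_2\}\subset X\cap\mathscr{L}(z^+)$ and verify strong connectedness in three sub-steps: (a) the slab $L^+$ is itself strongly connected, being a thickness-at-least-$10$ piece of the triangular lattice entirely contained in $X$; (b) each $Z_i\cup L^+$ is strongly connected, since $Z_i\cap L^+\ne\emptyset$ and biconnectedness of $L^+$ (via Menger's theorem applied inside $L^+$) provides vertex-disjoint rerouting paths around any removed vertex, possibly after enlarging $Z_i$ within $L^+$ to arrange at least two common atoms with $L^+$; (c) adjoining $x_2$ with the two bonds $\{x_1,x_2\}$ and $\{x_2,x_3\}$ preserves strong connectedness, since removing $x_2$ leaves the strongly connected $Z_1\cup Z_3\cup L^+$ (bridged through $L^+$ because both $Z_1,Z_3$ meet $L^+$ outside $\{x_1,x_3\}$), while removing any other vertex keeps $x_2$ attached to the remaining graph through at least one of the surviving neighbors $x_1,x_3$.

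The main obstacle is the graph-theoretic verification in sub-step (b): a union of two strongly connected subgraphs sharing only a single vertex typically has that vertex as a cut vertex and fails to be strongly connected, so the biconnectedness of $L^+$ is essential to provide the alternative internal routes within $L^+$. The observation that $x_1,x_3\notin L^+$, made in sub-step (c), also plays a pivotal role, since it guarantees that the removal of $x_1$ or $x_3$ does not destroy the $Z_i$-to-$L^+$ connections needed to keep $Z^*\setminus\{x_i\}$ connected.
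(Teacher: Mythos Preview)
Your first step---the lattice-geometric argument that $x_2\in\mathscr{L}(z^+)$ via the three cases $|x_1-x_3|\in\{1,\sqrt{3},2\}$---is correct and is exactly what the paper does (this is the content of its Figure~\ref{fig:pathlength3}).

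Your second step, however, is far more elaborate than necessary, and this is where your approach diverges from the paper's. The paper simply observes that $M^+\cup\{x_2\}$ is itself strongly connected and lies in $\mathcal{C}^+$, which immediately contradicts the maximality of $M^+$. The verification is a one-line case check: removing $x_2$ leaves $M^+$ (connected by strong connectedness of $M^+$); removing any $y\in M^+$ leaves $M^+\setminus\{y\}$ connected, and $x_2$ remains attached through whichever of $x_1,x_3$ survives (they are distinct since the path is simple). The crucial point you seem to have overlooked is that $M^+$ is \emph{by definition} a maximal element of $\mathcal{C}^+$, hence itself strongly connected---you never need to reconstruct strong connectedness from individual $Z_i$'s and the boundary slab $L^+$.

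Your detour through $Z^*=Z_1\cup Z_3\cup L^+\cup\{x_2\}$ is a workable alternative in principle, but as written sub-step~(b) is genuinely incomplete: a union of two strongly connected graphs meeting in a single vertex has that vertex as a cut vertex, and your fix (``possibly after enlarging $Z_i$ within $L^+$ to arrange at least two common atoms'') is a gesture rather than an argument---you would still need to show the enlarged set is strongly connected, which is the same kind of statement you are trying to prove. This can be repaired (e.g.\ by arguing that any $Z\in\mathcal{C}^+$ can be extended by a thick strip of $L^+$ so that $Z$ meets $L^+$ in at least two atoms forming a bond), but all of this machinery is bypassed entirely once you use $M^+$ directly.
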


\begin{proof}
Let $\gamma$ be as in the statement, without restriction with $x_1,x_k \in M^+$. Recall that $M^+ \subset \mathscr{L}(z^+)$.   If we had $k=3$, then we would necessarily get $x_2 \in \mathscr{L}(z^+)$, as well, see Figure \ref{fig:pathlength3}. This, however, contradicts the choice of the maximal component $M^+$.  In fact, also $M^+\cup \lbrace x_2 \rbrace$ would be a strongly connected set. 
\end{proof}

\begin{figure}[H]
 \includegraphics{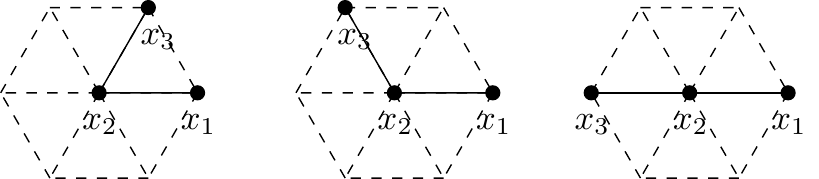}
\caption{The three different (up to rotation and reflection) possibilities of paths of length $3$.}
\label{fig:pathlength3}
\end{figure}

\begin{proof}[Proof of Lemma \ref{lemma:reduction}]  Without restriction we assume $z^+ \ne z^-$.  The proof strategy is as follows: we first show that $X$ consists of at most two connected components which contain the lower and the upper part of the boundary, respectively (Step 1). We are then left with at most two connected components which contain the maximal components $M^\pm$ defined in \eqref{def:Xpm}.  Then, we prove that these components $M^\pm$ do not contain holes. This ensures that $\partial M^\pm \cap Q^\nu_T(y)$ are simple paths  (Step 2). Finally, we show that there are no parts of $X$ that may be connected to $M^\pm$, but that are not subsets of the upper and lower  lattice $\mathscr{L}(z^\pm)$ (Step 3). Steps 1--3 are proved by contradiction, i.e., we suppose that  $X$ did not satisfy the abovementioned properties and then we show  that the configuration can be modified in such a way that the energy strictly decreases.   Some technical estimates are given in Steps 4--5.

Fix $z^\pm \in \mathcal{Z}$,  $\nu \in \mathbb{S}^1$,  $T>0$, and $y \in \R^2$. Denote by $X \subset \R^2$ a minimizer of \eqref{eq: one inequl}.  Without loss of generality we assume that 
\begin{align}\label{eq:draussen-weg} 
  X \subset \{x \in \overline{(Q_T^\nu(y))_{ 1}} \colon \mathcal{N}(x) \cap Q_T^\nu(y) \neq \emptyset \} \cup \partial_1^+ Q_T^\nu(y) \cup \partial_1^- Q_T^\nu(y). 
\end{align} 
In particular, we have  $X=\mathscr{L}(z^\pm)$ on $\partial_1^\pm Q_T^\nu(y)$.  By $M^\pm$ we denote  its maximal upper and lower component, respectively, given by (\ref{def:Xpm}). (Recall that $M^+=\emptyset$ or $M^-=\emptyset$ if $z^+=\mathbf{0}$ or $z^-=\mathbf{0}$.) Without restriction we assume that $z^\pm = (\theta^\pm,\tau^\pm,1)$. Otherwise, we apply all arguments just to the component $z^\pm$ with $z^\pm \neq \mathbf{0}$.

\noindent \emph{Step 1: $X$ has at most two connected components  in $Q_T^\nu(y)$  and $\#\mathcal{N}(x) \ge 2$ for all $x \in X  \cap Q_T^\nu(y)$.} First, we observe that the maximal components $M^+$ and $M^-$ are either contained in one single or in two different connected components of $X$. Assume by contradiction that the configuration $X$ consists of more than the (at most two) connected components containing $M^\pm$. Then we can  remove the other connected components not containing $M^\pm$ and obtain a new configuration  which  has strictly less energy and the same boundary data as $X$. This follows directly from the definition of the energy in \eqref{def:energyneighbourhood}. 

 Moreover, if there exists $ x'  \in X$ such that $\#\mathcal{N}( x' ) \le 1$,  then we can consider the configuration $X\setminus \{ x' \}$ to obtain a configuration with strictly less energy since, by \eqref{def:energyneighbourhood}, we have 
\begin{align*}
  E_1(X,Q^\nu_T(y)) 
	&= \frac{1}{2}\sum\nolimits_{x \in X\cap Q^\nu_T(y)} (6-\#\mathcal{N}(x)) 
	\ge E_1\big(X\setminus \lbrace x' \rbrace,Q^\nu_T(y)\big) + 2. 
\end{align*}  
\noindent \emph{Step 2: $\partial M^\pm$ is a simple path.}  In this step, we show that each of  the sets $\partial M^\pm$ defined in \eqref{eq: boundary of Y} is a  simple path in $X$ joining  the lateral faces of $Q^\nu_T(y)$. More precisely, let  
\begin{align*}
H_{\nu^\perp,-}^T(y)  := \{x \in \mathbb{R}^2\colon  \langle(x-y), \nu^\perp\rangle  <  - T/2 \} \ \  \text{ and } \ \  H_{\nu^\perp,+}^T(y)  = \{x \in \mathbb{R}^2\colon \langle  (x-y), \nu^\perp\rangle  \ge  T/2 \}. 
\end{align*}
 Then there are $v^\pm_- \in M^\pm \cap H_{\nu^\perp,-}^T(y)$ and $v^\pm_+ \in M^\pm \cap H_{\nu^\perp,+}^T(y)$ such that $\{v^\pm_-, v^\pm_+\} \cup \partial M^\pm$ is a simple path with first element $v^\pm_-$ and last element  $v^\pm_+$. 

To prove this, we color each (closed) equilateral triangle of sidelength $1$ all of whose corners are contained in $M^\pm$ in dark/light gray, respectively, see Figure \ref{fig:Xpm}. We first show that there are no cycles in $\partial M^\pm$. Since $M^\pm$ is strongly connected, this also yields that the colored regions inside $Q_T^\nu(y)$ are simply connected and that $\partial M^\pm$ lies on the boundary of the respective colored region. Assume by contradiction that  there exists a cycle $p=(v_1,\ldots,v_{n}) \subset M^\pm$ with $v_{n}=v_1$. Denote by $\mathrm{int}(p)$ the interior connected component of the curve  
 $$p_\gamma=\bigcup\nolimits_{i=1}^{n-1}[v_i; v_{i+1}],$$
see Figure \ref{fig:Xpm}. Now define
\begin{align*}
\tilde{X} = \begin{cases} \mathscr{L}(z^+)&\text{in } \mathrm{int}(p),\\
X &\text{otherwise.}
\end{cases}
\end{align*}
Since we did not change the neighborhood of each atom $x \in Q^\nu_T(y) \setminus \overline{\mathrm{int}(p)}$, we obtain by \eqref{def:energyneighbourhood} and Lemma \ref{lemma:propertiesofE}(iv)
\begin{align*}
E_1\big(\tilde{X},Q^\nu_T(y)\big) &= E_1\big(\tilde{X},\overline{\mathrm{int}(p)}\big) + E_1\big(\tilde{X},Q^\nu_T(y) \setminus \overline{\mathrm{int}(p)}\big) \\&<  E_1\big(X,\overline{\mathrm{int}(p)}\big) + E_1\big(X,Q^\nu_T(y) \setminus \overline{\mathrm{int}(p)}\big) = E_1\big(X,Q^\nu_T(y)\big),
\end{align*} 
 where we have used that $\#\mathcal{N}(x) = 6$ for all  $x \in \tilde{X} \cap \mathrm{int}(p)$ and that every $x \in p$ has at least as many bonds in $\tilde{X}$ as in $X$, while for at least one $x \in p$ the number of bonds has increased.  We have constructed a configuration $\tilde{X}$ with  strictly less energy and the same boundary data as $X$. This contradicts the fact that  $X \subset \R^2$ is a minimizer of \eqref{eq: one inequl}, and shows that there are no such cycles in $M^\pm$.

We next show that even the complement of each colored region inside $Q^\nu_T(y)$ is connected. If this were not the case,  without restriction we assume for contradiction that there are $v, w \in M^+ \cap H_{\nu^\perp,+}^T(y)$  such that there is a simple path with first element $v$, last element $w$, and intermediate elements in $\partial M^+$, whose bonds together with a segment in $\partial Q_T^\nu(y)$ bound a region free of dark triangles. By the boundary conditions, we can suppose that   $ 6  \ge \langle v, \nu\rangle > \langle w, \nu \rangle \ge  -6 $, see also Figure~\ref{fig:Xpm}. We extend it to a cycle $p$ by placing additional atoms in $\mathscr{L}(z^+) \cap \overline{(Q_T^\nu(y))_\varepsilon} \cap H_{\nu^\perp,+}^T(y)$. Our assumptions on $X$ specified in \eqref{eq:draussen-weg}  and Step 1 guarantee that each point in $\mathscr{L}(z^+)$ on or inside of $p$ has distance at least $1$ to every atom of the connected component of $X$ that contains $M^-$. Now let 
\begin{align*}
\tilde{X} 
= \begin{cases} \mathscr{L}(z^+)&\text{in } \overline{\mathrm{int}(p)},\\ 
  X &\text{in } \R^2 \setminus \overline{\mathrm{int}(p)}, \\
  \emptyset &\text{otherwise.}
\end{cases}
\end{align*} 
Similarly as before we get $E_1(\tilde{X},Q^\nu_T(y)) < E_1(X,Q^\nu_T(y))$, which shows that also this situation does not occur. We conclude that each $M^\pm$ is strongly connected and both the dark and the light colored areas have connected complements relative to $Q^\nu_T(y)$.

We claim that $\partial M^\pm$ has to be a simple path. Assume by contradiction that this  were  not the case, e.g., for $M^+$.  Then, since $\partial M^+$ lies on the boundary of the region in dark gray  being the union of triangles, we find $x \in \partial M^+$ which is a corner of exactly two of these triangles and these triangles share only $x$ as a common point, see Figure \ref{fig:nonsimple}.  Since $\partial M^+$ does not contain cycles, we find $x^+,x^- \in \mathcal{N}(x)$ such that each path in $M^+$ connecting $x^+$ with $x^-$ contains $x$. This, however, contradicts the strong connectedness of $M^+$, and shows that $\partial M^+$ is a simple path. This concludes Step 2.

\begin{figure}[H]
 \includegraphics{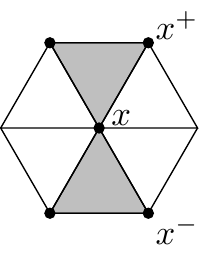}
\caption{A point $x \in \partial M^\pm$ that would make $\partial M^\pm$ a non-simple path. }
\label{fig:nonsimple}
\end{figure}

\noindent \emph{Step 3: Comparison with subsets of the lattice.}  Our goal is to show that there holds $X \subset \mathscr{L}(z^+) \cup \mathscr{L}(z^-)$. Recalling the definition of $M^\pm$ in  \eqref{def:Xpm}, it thus suffices to show that removing the connected components of  $(X \cap Q^\nu_T(y)) \setminus (M^+ \cup M^-)$ would strictly decrease the energy which clearly contradicts the assumption that $X$ is a minimizer. (Recall that we have already reduced to the case that $X$ consists of at most two connected components. Note, however, that $(X \cap Q^\nu_T(y)) \setminus (M^+ \cup M^-)$ might consist of more connected components.)

This will conclude the proof of the statement: it shows that the minimizer $X$ is indeed a  subset of  $\mathscr{L}(z^+) \cup \mathscr{L}(z^-)$. Moreover, the property that  $\partial M^\pm \cap Q^\nu_T(y)$ are  simple  paths joining  the lateral faces of $Q^\nu_T(y)$ has already been addressed in Step 2. Finally, we observe that $\# \mathcal{N}(x) \le 5$ for all $x \in \partial M^\pm$. In fact, $\# \mathcal{N}(x) = 6$ for some $x \in \partial M^\pm$ would entail $\lbrace x \rbrace \cup \mathcal{N}(x) \subset M^\pm$ as $M^\pm \subset \mathscr{L}(z^\pm)$ is the maximal component. This contradicts \eqref{eq: boundary of Y}. 
 
Now, consider a connected component  $X'$ of $(X \cap Q^\nu_T(y)) \setminus (M^+ \cup M^-)$.  We want to prove that 
\begin{align}\label{ineq:comparisonsubset}
 E_1\big(X,Q^\nu_T(y)\big) \geq E_1\big(X \setminus X',Q^\nu_T(y)\big)+1.
\end{align}
We first introduce some further notation. By $\Gamma^\pm \subset \partial M^\pm$ we denote the smallest connected sets $\Gamma^\pm \supset \mathcal{N}(X') \cap M^\pm$, where we define $\mathcal{N}(X') := \bigcup_{x \in X'} \mathcal{N}(x) \setminus X'$.   Define  $\Gamma := \Gamma^+ \cup \Gamma^-$ and $X_\Gamma:= X' \cup \Gamma$. Note that both $\Gamma^-$ and $\Gamma^+$ are simple paths in $X$  since $\partial M^\pm$ are simple paths,   see Figure \ref{fig:cleainingpossibilities}.  For $x \in X_\Gamma$, we introduce the \emph{internal and external neighborhoods} by 
\begin{align}\label{eq: intern/extern}
\mathcal{N}_i(x) = \mathcal{N}(x) \cap X_\Gamma,  \ \ \ \ \ \ \ \  \mathcal{N}_e(x) = \mathcal{N}(x) \setminus X_\Gamma,
\end{align}
i.e., the set of neighbors inside  and outside of  $X_\Gamma$, respectively.  Note that $X_\Gamma$ is connected.  Its reduced bond graph is  delimited by a finite union of   disjoint   cycles.  We denote by $\partial X_\Gamma$ the union of  these  cycles and  by  $d = \# \partial X_\Gamma$  its   cardinality. (The notation is unrelated to \eqref{eq: boundary of Y}.) We further define
\begin{align}\label{eq: many notation}
& f_j = \# j\text{-gons of $X_\Gamma$}, \quad   f= \sum\nolimits_j f_j,   \quad \eta= \eta(X_\Gamma),  \quad  n_{\Gamma} = \#\Gamma, \quad n=\#X_\Gamma,\notag \\ & 
 b_\Gamma= \#\big\{\{x,y\} \colon x,y \in \Gamma, \, y \in \mathcal{N}(x)\big\}, \ \ \ \  b=\#\big\{\{x,y\} \colon x,y \in X_\Gamma, \, y \in \mathcal{N}(x)\big\},\notag \\ & 
 b_{\mathrm{ac}} = \#\big\{\{x,y\} \text{ acyclic}\colon x,y \in X_\Gamma, \, y \in \mathcal{N}(x)\big\},
\end{align}
where $\eta$ was introduced in \eqref{def:face defect}. Note that $f$ corresponds to the number of faces  both in the bond graph and in the reduced bond graph of $X_\Gamma$. We will see that there holds
\begin{align}\label{ineq:eta}
2+d+2b_{\mathrm{ac}}+\eta \geq 3n_\Gamma -b_\Gamma.
\end{align}
We defer the proof of \eqref{ineq:eta} to Steps  4--5  below and proceed to prove \eqref{ineq:comparisonsubset}.

Since in the passage from $X$ to $X \setminus X'$ the neighborhood of atoms outside $X_\Gamma$ is left unchanged and for atoms in $\Gamma$ the neighbors outside of $X_\Gamma \setminus \Gamma$ remain,  in view of  \eqref{def:energyneighbourhood}, we need to check that 
\begin{align}\label{eq: to be checked}
\frac{1}{2}\sum\nolimits_{x \in X_\Gamma}(6- \#\mathcal{N}(x))\geq  \frac{1}{2}\sum\nolimits_{x \in \Gamma} \big(6-(\#\mathcal{N}_e(x) + \#(\mathcal{N}(x) \cap \Gamma)\big)+1. 
\end{align}
We can count the faces to  obtain
\begin{align}\label{eq: doublecounting}
2b-d-2b_{\mathrm{ac}}= \sum\nolimits_{j \geq 3}\, jf_j = \eta + 3f.
\end{align}
Indeed, the first identity follows from the fact that in the summation all bonds contained in the union of cycles delimiting the  reduced bond graph of $X_\Gamma$ are counted only once, the acyclic bonds are not counted, and all other cyclic bonds are counted twice. The second identity follows from  \eqref{def:face defect}. As the bond graph is planar and connected, we can apply Euler's formula (omitting the exterior face) to get $n-b+f =1$. Then, by \eqref{ineq:eta} and   (\ref{eq: doublecounting})  we derive
$$3n - b \ge   3n_\Gamma -b_\Gamma +1.$$
By the definitions in  \eqref{eq: intern/extern}--\eqref{eq: many notation} and the facts that $\sum\nolimits_{x \in X_\Gamma} \#\mathcal{N}_i(x)=2b$, $\sum_{x\in \Gamma} \#(\mathcal{N}(x) \cap \Gamma) = 2b_\Gamma$ this implies 
\begin{align}\label{eq: to be checked-2}
\frac{1}{2}\sum\nolimits_{x \in X_\Gamma}(6- \#\mathcal{N}_i(x))\geq  \frac{1}{2}\sum\nolimits_{x \in \Gamma} \big(6- \#(\mathcal{N}(x) \cap \Gamma)\big)+1. 
\end{align}
Now we note that 
$\#\mathcal{N}(x) -\#\mathcal{N}_e(x) =\#\mathcal{N}_i(x)$ for $x\in \Gamma$ and  $\mathcal{N}(x) = \mathcal{N}_i(x)$ for $x \in X_\Gamma \setminus \Gamma$, see \eqref{eq: intern/extern}. This along with \eqref{eq: to be checked-2} shows the desired estimate \eqref{eq: to be checked}. To conclude the proof, it remains to show  \eqref{ineq:eta}.

\begin{figure}[H]
 \includegraphics{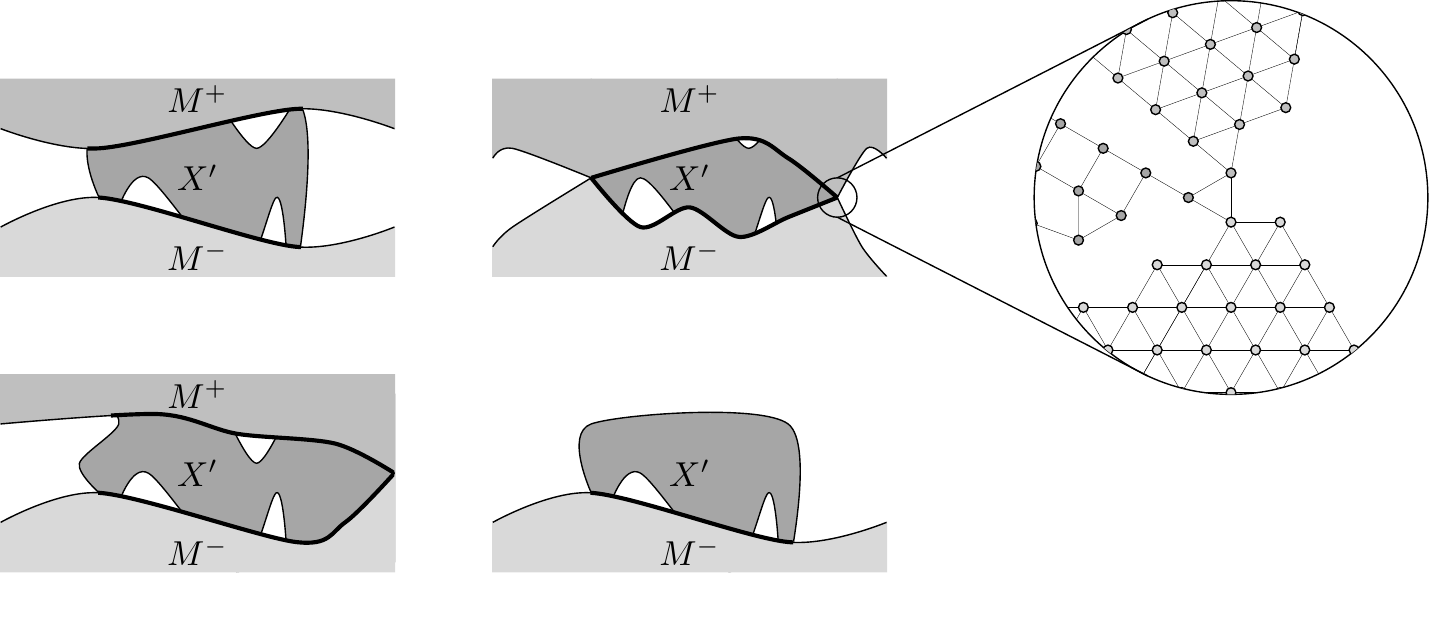}
\caption{The  different possibilities of $X'$ touching $M^\pm$ corresponding to case (a) on the top left, case (b) on the top  right, and case (c) in the two bottom pictures. $M^+$ is always depicted in gray, $M^-$ in light gray, and $X'$ in dark gray. $\Gamma^+$ and $\Gamma^-$ are depicted by the bold black lines.}
\label{fig:cleainingpossibilities}
\end{figure}

  \noindent \emph{Step 4: Proof of  \eqref{ineq:eta}.}  Recall that $\Gamma$ consists of the two simple paths $\Gamma^+$ and $\Gamma^-$. We need to distinguish three cases: 
\begin{align*}
\text{(a) $\Gamma$ is not connected, \ \  \ \ \ (b) $\Gamma$ is a cycle, \ \ \ \  \ (c)  $\Gamma$ is a simple path.}
\end{align*} 
Since $\Gamma^\pm$ are simple paths, and the bond graph of  $X^\prime$ is planar and connected, we see that these are all possibilities that may occur, see Figure \ref{fig:cleainingpossibilities} for an illustration of the different cases. At this point, we also use that $\Gamma^\pm$ are the  smallest connected sets with $\Gamma^\pm \supset \mathcal{N}(X') \cap M^\pm$  and $\Gamma^\pm \subset \partial M^\pm$, where $\partial M^\pm$ is a simple path connecting $H_{\nu^\perp,-}^T(y) \cap \mathscr{L}(z^\pm)$ and $H_{\nu^\perp,+}^T(y) \cap \mathscr{L}(z^\pm)$.  

First of all, we observe that 
\begin{align}\label{eq:nGamma-bGamma}
\text{Case (a):} \ \ n_\Gamma \leq b_\Gamma +2, \ \ \ \ \ \ \text{Case (b):} \ \ n_\Gamma \leq  b_\Gamma, \ \ \ \ \ \ \text{Case (c):} \ \ n_\Gamma \leq b_\Gamma +1.
\end{align}
 This is due to the fact that the bond graph of $\Gamma$ contains  $\Gamma^\pm$  and a  simple path containing $k$ bonds consists of $k+1$ atoms, and in a cycle the number of bonds equals the number of atoms.   (As there may be more bonds present if there are triangles in the bond graph, we get inequalities.) 
 
Using \eqref{eq:nGamma-bGamma}, it suffices to prove
\begin{align}\label{estimate:eta}
d+2b_{\mathrm{ac}}+\eta \geq \begin{cases} 2n_\Gamma &\text{in case (a)},\\
2n_\Gamma-2 &\text{in case (b)},\\
2n_\Gamma -1&\text{in case (c)},
\end{cases}
\end{align}
where   $d$, $\eta$, $n_\Gamma$, and $b_\mathrm{ac}$ are defined in \eqref{eq: many notation}.   This will rely on the estimate
\begin{align}\label{ineq:eta(a,b,c)}
\eta \geq n_\Gamma-2.
\end{align} 
 We first show \eqref{estimate:eta} in the three cases and defer the proof of \eqref{ineq:eta(a,b,c)} to Step 5.  Observe that if a connected component $\tilde{\Gamma}$ of $\Gamma$ satisfies $\tilde{\Gamma} \not \subset \partial X_\Gamma$, then $\#\tilde{\Gamma} = 1$ and $\tilde{\Gamma}$ connects to $X'$ by one acyclic bond. This follows from the observation that, whenever $x \in \tilde{\Gamma}$ satisfies $\mathcal{N}(x) \cap X_{\Gamma} \ge 2$, then $x$ lies on a cycle in $X_\Gamma$ and thus, as an element of $\Gamma$, is contained in $\partial X_\Gamma$.

\noindent \emph{Case {\normalfont (a)}:} Suppose first $\Gamma \subset \partial X_\Gamma$.  Since $\partial X_\Gamma$ is a  disjoint  union of cycles and $\Gamma$ consists of two simple paths, we get  $\#(\partial X_\Gamma \setminus \Gamma ) \ge 2$.  In fact, if $\Gamma^+$ and $\Gamma^-$ intersect the same cycle of $\partial X_\Gamma$, this follows from the fact that $\Gamma^+ \cup \Gamma^-$ is not connected. If $\Gamma^+$ and $\Gamma^-$ intersect different cycles of $\partial X_\Gamma$, it suffices to  use that $\Gamma^\pm$ are not cycles. This shows $d\geq n_\Gamma + 2$. Then \eqref{ineq:eta(a,b,c)} implies \eqref{estimate:eta}.  If $\Gamma^-\subset \partial X_\Gamma$, $\Gamma^+ \not\subset \partial X_\Gamma$, then, as before, $\#(\partial X_\Gamma \setminus \Gamma^-) \ge 1$ and thus $d \ge \#\Gamma^- + 1$. The observation below \eqref{ineq:eta(a,b,c)} gives $\# \Gamma^+=1$ and $b_{\mathrm{ac}} \ge 1$, so particularly $d\geq n_\Gamma$. Then again \eqref{ineq:eta(a,b,c)} implies \eqref{estimate:eta}. The case $\Gamma^-\not \subset \partial X_\Gamma$, $\Gamma^+ \subset \partial X_\Gamma$ is analogous. Finally, if $\Gamma^-, \Gamma^+ \not\subset \partial X_\Gamma$, then $n_{\Gamma} = 2$ and $b_{\mathrm{ac}} \ge 2$ since $\Gamma^-$ and $\Gamma^+$ cannot be connected to $X'$ by the same (acyclic) bond. This proves \eqref{estimate:eta}.

\noindent \emph{Case {\normalfont (b)}:} Since $\Gamma$ is a cycle, we get $\Gamma \subset \partial X_\Gamma$. Thus, we obtain  $n_\Gamma \le d$  and \eqref{ineq:eta(a,b,c)} yields \eqref{estimate:eta}. 

\noindent \emph{Case {\normalfont (c)}:}  Suppose first that $\Gamma \subset \partial X_\Gamma$.  Since  $\Gamma$  is not a cycle  and $\partial X_\Gamma$ is a union of cycles, we get  $\# (\partial X_\Gamma \setminus \Gamma) \ge 1$.   This implies $d \geq n_\Gamma+1$.  Then \eqref{ineq:eta(a,b,c)} again yields \eqref{estimate:eta}.  If $\Gamma \not\subset \partial X_\Gamma$, then $n_{\Gamma} = 1$ and $b_{\mathrm{ac}} \ge 1$, from which \eqref{estimate:eta} follows.

\noindent \emph{Step 5: Proof of \eqref{ineq:eta(a,b,c)}.} It remains to check  \eqref{ineq:eta(a,b,c)}. To this end, we classify the polygons in the  (reduced)  bond graph of $X_\Gamma$ in the following way: for $k \ge 1$, we set  
\begin{align*}
\partial\text{-}k\text{-gon}= \{P \text{ polygon in } X_\Gamma \colon \, \#(P \cap \Gamma) = k \} \ \ \ \text{ and } \ \ \  \partial\text{-gon} = \bigcup\nolimits_{k\geq 1} \partial\text{-}k\text{-gon},
\end{align*}
and define  $D_k=\#\partial\text{-}k\text{-gon}$.  In order to estimate the cardinality of $P \in \partial\text{-}k\text{-gon}$, we introduce the following condition: 
\begin{align}\label{eq:alpha}
\text{ there exist} \ \ \ x_+ \in M^+ \cap P \ \ \ \text{and} \ \ \ x_- \in  (M^- \setminus M^+)  \cap P \ \ \ \text{ with } \ \ \  |x_+-x_-|=1.
\end{align} 
We claim that always $\#P \geq k+1$, while in case \eqref{eq:alpha} does not hold there holds $\#P \geq k+2$. 

To see the first claim we note that clearly $\#P \geq k$. If $\#P = k$, then $P \subset \Gamma$ and $\Gamma$ is a cycle, hence $P = \Gamma$. But then all bonds connecting $\Gamma$ and $X'$ are acyclic. As observed below \eqref{ineq:eta(a,b,c)}, this entails $\# \Gamma = 1$ which, however, is not possible in case $\Gamma$ is a cycle. 

Assume now \eqref{eq:alpha} does not hold. First, suppose that  $P \cap \Gamma \subset M^+$ or $P \cap \Gamma \subset M^-$. If $k=1$, the statement  $\#P \geq k+2$  is clear as $\# P \ge 3$. If $k \ge 2$, we can choose a simple path in $P$ such that only the first and the last atom lie in $M^+$ (or $M^-$, respectively).  The statement then follows from Lemma \ref{lemma:auxiliarypath}. On the other hand, if   $P \cap  (M^+ \setminus M^-) \neq \emptyset$ and $P \cap (M^- \setminus M^+) \neq \emptyset$,   then there exist two simple paths contained in $P$ joining   $M^+ \setminus M^-$ and $M^- \setminus M^+$.   Since \eqref{eq:alpha} does not hold, each of these two paths contains an atom that is not contained in $\Gamma$. This implies $\#P \geq k+2$.

We are now in a position to prove  (\ref{ineq:eta(a,b,c)}).  By the definition of $\eta$ and the cardinality estimate for $\partial\text{-}k\text{-gons}$  we obtain 
\begin{align}\label{eq:LLL1}
\begin{split}
\eta = \sum\nolimits_{j \geq 3}  f_j(j-3) \geq \sum\nolimits_{k\geq 1}\, D_k(k+2-3) -N  \geq \sum\nolimits_{k\geq 1}\, D_k(k-1) - \begin{cases} 0 &\text{in case (a)},\\
2 &\text{in case (b)},\\
1&\text{in case (c)},
\end{cases} 
\end{split}
\end{align}
where $N$ denotes the number of $\partial$-gons satisfying case \eqref{eq:alpha}. We used  that:  in case (a) we have $N = 0$ since otherwise $\Gamma$ would be connected, in case (b) the fact that $X'$ is connected and the planarity of the bond graph  imply that $N \le 2$, and in case (c) we get $N \le 1$ since $\Gamma$ is a simple path.  Finally, we claim that 
\begin{align}\label{eq:LLL2}
\sum\nolimits_{k\geq 1} D_k(k-1)\geq \begin{cases} n_\Gamma -2 &\text{in case (a)},\\
n_\Gamma &\text{in case (b)},\\
n_\Gamma -1&\text{in case (c)},
\end{cases}
\end{align}
 Indeed, this follows from  the fact  that each bond in between two successive atoms $x,y \in \Gamma$ is contained in exactly one  $\partial$-gon and $k-1$ estimates from above the number of bonds between atoms in $\Gamma\cap P$ whenever $P \in \partial\text{-}k\text{-gon}$  as otherwise $P = \Gamma$ and $\# P = k$ which we have excluded above.  (The estimate is strict if $\Gamma \cap P$ is not connected.) By combining \eqref{eq:LLL1}--\eqref{eq:LLL2} we obtain (\ref{ineq:eta(a,b,c)}). This concludes the proof. 
\end{proof}

\begin{proof}[Proof of Lemma \ref{lemma:grain-bonds}]
Without restriction we assume that $z^+ \ne z^-$. Let $X^\pm$ be as in the statement of Lemma \ref{lemma:reduction}, i.e., $X^\pm = M^\pm$. We define 
\begin{align*}
  Y^+ 
  &= X^+ \setminus  (\partial X^+ \cap \partial X^-) \cup \big\{ x \in \partial X^+ \cap \partial X^- \colon \#(\mathcal{N}(x) \cap X^+) \ge \#(\mathcal{N}(x) \cap X^-)\big\}, \\ 
  Y^- 
  &= X^- \setminus  (\partial X^+ \cap \partial X^-)  \cup \big\{ x \in \partial X^+ \cap \partial X^- \colon \#(\mathcal{N}(x) \cap X^+) < \#(\mathcal{N}(x) \cap X^-)\big\}. 
\end{align*}
\noindent \emph{Proof of $\mathrm{(i)}$.}  Property (i) is obviously satisfied by construction.

\noindent \emph{Proof of $\mathrm{(ii)}$.}
 As a preparation, let us note that,  if $x \in X^+ \cap X^-$, then $\mathcal{N}(x) \cap  X^+ \cap X^- = \emptyset$ since $z^+ \ne z^-$. Moreover, if $x \in X^+ \cap X^- \cap Q^\nu_T(y) = \partial X^+ \cap \partial X^-$, then $\#\mathcal{N}(x) \le 5$ by  Lemma \ref{lemma:reduction}(ii).  Since $X^\pm$  is strongly connected, we also have $\#(\mathcal{N}(x) \cap X^\pm) \geq 2$. Our definition of $Y^\pm$ then entails 
\begin{align}\label{imp: choice}
x \in X^\pm \setminus Y^\pm \implies \#(\mathcal{N}(x) \cap X^\pm) = 2.
\end{align}
 This ensures $Y^\pm = X^\pm = \mathscr{L}(z^\pm)$ on $\partial^\pm_1 Q^\nu_T(y)$. Furthermore, it entails $\partial Y^\pm \subset \partial X^\pm$.  Indeed, $y \in \partial Y^\pm \setminus \partial X^\pm$  would give $\# (\mathcal{N}(y)  \cap X^\pm) = 6$ and $\# (\mathcal{N}(y)  \cap Y^\pm) \le  5$, i.e., there exists $x \in X^\pm \setminus Y^\pm$ with $|x-y| = 1$. But then $\#(\mathcal{N}(x) \cap \mathcal{N}(y)\cap X^\pm)=2$, which yields the contradiction $\#(\mathcal{N}(x) \cap X^\pm) \ge 3$. 

\noindent \emph{Proof of $\mathrm{(iii)}$.} Since $X^\pm$ is simply connected and $x \in \partial X^\pm \setminus \partial Y^\pm$ is only possible if $ \# (\mathcal{N}(x) \cap X^\pm) = 2$ (see \eqref{imp: choice}), we get that $\partial Y^\pm$ is a simple path connecting the lateral faces of $Q^\nu_T(y)$. More precisely, by Step 2 of the proof of Lemma \ref{lemma:reduction}, there are $v^\pm_- \in X^\pm \cap H_{\nu^\perp,-}^T(y)$ and $v^\pm_+ \in X^\pm \cap H_{\nu^\perp,+}^T(y)$ such that $\{v^\pm_-, v^\pm_+\} \cup \partial Y^\pm$ is a simple path with first element $v^\pm_-$ and last element $v^\pm_+$. The bonds between any two consecutive atoms in this chain form a polygonal line and we denote by $\alpha(x)$ the (interior) angle it forms at atom $x$. 

As the first and the last segments cross the lateral faces of $Q^\nu_T(y)$ and $Y^\pm$ is strongly connected, we have 
$$ \sum\nolimits_{x \in \partial Y^\pm} (\pi - \alpha(x)) 
   \in \frac{1}{3} \{ -2\pi, -\pi, 0, \pi, 2\pi \}. $$  
Since $X^\pm$ is simply connected, due to \eqref{imp: choice}, the same holds true for $Y^\pm$. Hence, $\alpha(x)$ relates to the number of neighbours of $x$ within $Y^\pm$ by the formula 
$$ \alpha(x) 
   = \frac{1}{3} \big( \#(\mathcal{N}(x) \cap Y^\pm) -1 \big) \pi. $$
As a consequence we obtain 
\begin{align*}
  \Big| \sum\nolimits_{x \in \partial Y^\pm} \big( \# (\mathcal{N}(x) \cap Y^\pm) - 4 \big) \Big| 
	= \Big| \frac{3}{\pi} \sum\nolimits_{x \in \partial Y^\pm} \big( \alpha(x) - \pi \big) \Big| 
	\le 2.  
\end{align*}
This concludes the proof. 
\end{proof}

We summarize our main findings on the structure of grain boundaries obtained in the proof of Lemma \ref{lemma:reduction} in the following theorem.  
\begin{theorem}[Reduction to subsets of two lattices]\label{theorem:grain-boundary} Let $z^+,z^- \in \mathcal{Z}$, $z^+ \ne z^-$, $\nu \in \mathbb{S}^1$, $y \in \R^2$, and $T>0$. Let $X \subset \mathbb{R}^2$ be a minimizer of
\begin{align*}
\min\Big\{E_1\big(X,Q^\nu_T(y)\big)\colon \  X = \mathscr{L}(z^\pm) \text{ \rm on } \partial^\pm_1 Q_T^\nu(y) \Big\}. 
\end{align*}
Then $X = M^+ \cup M^-$ on $Q_T^\nu(y)$, where $M^+, M^-$ are the maximal components of $X$, see \eqref{def:Xpm}. Coloring each (closed) equilateral triangle of sidelength $1$ all of whose corners are contained in $M^\pm$ in dark/light gray, yields two simply connected plain regions containing $\partial^\pm_1 Q^\nu_T(y)$, respectively, whose boundary part inside of $Q^\nu_T(y)$ is given by a simple path of atoms. 
\end{theorem}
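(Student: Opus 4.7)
The plan is to derive the theorem as a consolidation of the geometric content of Lemma \ref{lemma:reduction}, upgrading it to the language of the coloring picture. The first step is to identify the decomposition $X = X^+ \cup X^-$ from Lemma \ref{lemma:reduction}(i) with the maximal-component decomposition $X = M^+ \cup M^-$. Since $X = \mathscr{L}(z^\pm)$ on $\partial^\pm_1 Q^\nu_T(y)$, both $X^+$ and $X^-$ meet the corresponding boundary region and contain the respective maximal component; the reverse inclusion follows because Lemma \ref{lemma:reduction}(i) already asserts $X^\pm \subset \mathscr{L}(z^\pm)$, and strong connectedness of $X^\pm$ (inherited from the fact that any cut vertex could be removed by the minimality argument already exploited in Lemma \ref{lemma:reduction}) forces $X^\pm \subseteq M^\pm$. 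Thus $X = M^+ \cup M^-$ on $Q^\nu_T(y)$.

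Next I would establish the simple connectedness of the two colored regions by the same rigidity/filling argument that drives the proof of Lemma \ref{lemma:reduction}. Connectedness of each colored region follows from strong connectedness of $M^\pm$, since consecutive atoms along a path in $M^\pm$ belong to at least one common colored triangle by the lattice structure. To rule out holes, suppose the complement of the dark gray region inside $Q^\nu_T(y)$ had a bounded connected component $U$; since $M^+$ is strongly connected, the boundary of $U$ would give rise to a cycle $p \subset M^+$ enclosing points not of color dark gray. Filling the interior of $p$ with $\mathscr{L}(z^+)$ produces a competitor $\tilde X$ which has the same boundary data and in which every newly added atom has six neighbours while every atom of $p$ gains at least one neighbour, so $E_1(\tilde X, Q^\nu_T(y)) < E_1(X, Q^\nu_T(y))$, contradicting minimality. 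The same applies to the light gray region, so both colored regions are simply connected plane domains, and they contain $\partial^\pm_1 Q^\nu_T(y)$ by the boundary conditions.

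Finally, to show that the boundary of each colored region inside $Q^\nu_T(y)$ is a single simple path of atoms, I would combine simple connectedness with the exclusion of pinch points (atoms where two triangles of the same color meet only at a vertex). The absence of holes reduces the boundary to a single closed curve composed of alternating lattice bonds and lateral-face segments; strong connectedness of $M^\pm$ and the planarity of the bond graph prevent any atom from being a corner of two disjoint triangle fans of the same color, since such a configuration would disconnect $M^\pm$ upon removal of that vertex. Consequently, the portion of this boundary lying in the interior of $Q^\nu_T(y)$ is a simple polygonal arc whose two ends emerge on the lateral faces of $Q^\nu_T(y)$, as asserted. The main obstacle I anticipate is the careful case analysis at pinch points, which mirrors the subcase distinction (a)--(c) in Step 4 of the proof of Lemma \ref{lemma:reduction}; the rest is a direct geometric rephrasing of that lemma's conclusions.
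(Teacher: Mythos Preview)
Your proposal is essentially correct and mirrors the paper's treatment: the paper does not give a separate proof of this theorem at all, but presents it explicitly as a summary of the geometric facts established in the course of proving Lemma~\ref{lemma:reduction} (filling argument for simple connectedness, pinch-point exclusion via strong connectedness). Two small corrections: first, the identification $X^\pm = M^\pm$ requires no extra argument, since in the paper's proof of Lemma~\ref{lemma:reduction} the sets $X^\pm$ are \emph{defined} to be the maximal components $M^\pm$ (cf.\ the proof of Lemma~\ref{lemma:grain-bonds}); your detour through ``cut vertices can be removed by minimality'' is unnecessary and, as stated, not quite justified. Second, the pinch-point exclusion is handled in Step~2 of the proof of Lemma~\ref{lemma:reduction} (the argument around Figure~\ref{fig:nonsimple}), not in the case distinction (a)--(c) of Step~4, which concerns the structure of $\Gamma$ relative to stray components $X'$ and is irrelevant here.
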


\section{Characterization of solid-vacuum/solid-solid interactions}\label{sec: Part II}

This section is devoted to establish a relation between the cell formula $\Phi$ defined in \eqref{eq: Phi def} and the density $\varphi_{\rm hex}$ given in \eqref{eq: phi-hex-def}. In particular, we will analyze the situation where the two lattices $\mathscr{L}(z^+)$ and $\mathscr{L}(z^-)$, which determine the admissible configurations at the boundary, allow for \emph{touching points}, i.e., atoms $x^+ \in \mathscr{L}(z^+)$ and $x^- \in \mathscr{L}(z^-)$ with $|x^+ - x^-| =1$.  We start by formulating the two results of this section.

\begin{lemma}[Relation of $\Phi$ and $\varphi_{\rm hex}$]\label{lemma:vacuumirrational} There exists a universal constant $C>0$ such that  for each $\nu \in \mathbb{S}^1$ and for every sequence of centers $\lbrace y_T\rbrace_T$ the following properties hold:\smallskip\\
\noindent {\rm (i)} If $z^+=(\theta,\tau,1) \in \mathcal{Z}$ and $z^- = \mathbf{0}$ or if $z^+ = \mathbf{0}$  and  $z^-=(\theta,\tau,1) \in \mathcal{Z}$, there holds for all $T>0$
\begin{align*}
\Big| \frac{1}{T} \min\big\{E_1\big(X_T,Q^\nu_T(y_T)\big) \colon \,  X_T = \mathscr{L}(z^\pm) \text{ \rm on } \partial^\pm_1
Q^\nu_T(y_T)\big\}  -\varphi_{\mathrm{hex}}\big(e^{-i\theta} \nu\big)\Big| \le C/T.
\end{align*}
\smallskip
\noindent {\rm (ii)} For all $z^+ =(\theta^+,\tau^+,1)$, $z^- =(\theta^-,\tau^-,1) \in \mathcal{Z}$ there holds for all $T>0$
\begin{align*}
\frac{1}{T} \min\big\{E_1\big(X_T,Q^\nu_T(y_T)\big) \colon \,  X_T = \mathscr{L}(z^\pm) \text{ \rm on } \partial^\pm_1
Q^\nu_T(y_T)\big\} \leq  \varphi_{\mathrm{hex}}\big(e^{-i\theta^+} \nu\big)+ \varphi_{\mathrm{hex}}\big(e^{-i\theta^-} \nu\big) + C/T.
\end{align*}
Moreover, if $z^+ \ne z^-$, then also 
\begin{align*}
\frac{1}{T} \min\big\{E_1\big(X_T,Q^\nu_T(y_T)\big) \colon \,  X_T = \mathscr{L}(z^\pm) \text{ \rm on } \partial^\pm_1
Q^\nu_T(y_T)\big\} \geq \frac{1}{2} \varphi_{\mathrm{hex}}\big(e^{-i\theta^+} \nu\big)+ \frac{1}{2} \varphi_{\mathrm{hex}}\big(e^{-i\theta^-} \nu\big) - C/T.
\end{align*}  
\end{lemma}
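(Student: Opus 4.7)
The argument rests on the structure theorems already established. Lemma~\ref{lemma:reduction} reduces any minimizer to a union of at most two lattice subsets whose grain boundaries are simple paths joining the lateral faces of $Q^\nu_T(y_T)$, and Lemma~\ref{lemma:grain-bonds} provides a disjoint partition $Y^+\sqcup Y^-$ of the grain atoms in $Q^\nu_T(y_T)$ with controlled neighbor statistics. In addition, I use the standard identification of the atomistic energy of a lattice subset $A\subset\mathscr{L}(z)$ with (a constant multiple of) the crystalline perimeter with density $\varphi_{\mathrm{hex}}$ of the Voronoi union $V_A=\bigcup_{x\in A}V(x)$, transported to rotated/translated lattices via the rotational invariance of $E_1$ in Lemma~\ref{lemma:propertiesofE}(i).

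For part~(i), assume without loss of generality $z^-=\mathbf{0}$. The upper bound comes from the explicit competitor $X_T=\mathscr{L}(z^+)\cap\{\langle x-y_T,\nu\rangle\ge 0\}$, modified in an $O(1)$-thick strip near $\partial Q^\nu_T(y_T)$ to meet the prescribed boundary conditions; counting missing bonds along the resulting staircase interface together with the defining formula for $\varphi_{\mathrm{hex}}$ yields $E_1\le T\varphi_{\mathrm{hex}}(e^{-i\theta}\nu)+O(1)$. For the lower bound, Lemma~\ref{lemma:reduction} identifies the minimizer with $M^+\subset\mathscr{L}(z^+)$ whose boundary $\partial M^+$ is a simple path joining the two lateral faces of the cube. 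Rewriting $E_1(M^+,Q^\nu_T(y_T))$ as the crystalline perimeter of $V_{M^+}$ with density $\varphi_{\mathrm{hex}}$ and applying a standard projection onto $\nu^\perp$ (the $\varphi_{\mathrm{hex}}$-length of any curve traversing the cube between its two lateral faces dominates $T\varphi_{\mathrm{hex}}(e^{-i\theta}\nu)$) gives the matching lower bound.

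For part~(ii), the upper bound follows from the decoupled competitor
\[
X_T=\bigl(\mathscr{L}(z^+)\cap\{\langle x-y_T,\nu\rangle\ge 1\}\bigr)\cup\bigl(\mathscr{L}(z^-)\cap\{\langle x-y_T,\nu\rangle\le -1\}\bigr),
\]
truncated near $\partial Q^\nu_T(y_T)$ as above; the unit gap prevents any cross-lattice bond, so the energy splits additively into two solid--vacuum problems each bounded by part~(i). For the lower bound, assume $z^+\ne z^-$ and apply Lemmas~\ref{lemma:reduction} and~\ref{lemma:grain-bonds} to obtain $Y^\pm$ disjoint in $Q^\nu_T(y_T)$ with $Y^\pm=\mathscr{L}(z^\pm)$ on $\partial^\pm_1Q^\nu_T(y_T)$. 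Writing $B$ for the total number of cross-grain bonds, the disjointness of $Y^+$ and $Y^-$ yields the identity
\[
E_1(X_T,Q^\nu_T(y_T))=E_1(Y^+,Q^\nu_T(y_T))+E_1(Y^-,Q^\nu_T(y_T))-B+O(1).
\]
Lemma~\ref{lemma:grain-bonds}(iii) together with $\#\mathcal{N}(x)\le 5$ on $\partial Y^\pm$ (from Lemma~\ref{lemma:reduction}(ii)) gives the two-sided estimate $B\le\#\partial Y^\pm+O(1)$, while the averaged-neighbor identity in Lemma~\ref{lemma:grain-bonds}(iii) also implies $E_1(Y^\pm,Q^\nu_T(y_T))\ge\#\partial Y^\pm-O(1)$. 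Combining these bounds on $B$ (one from each grain) yields $B\le\tfrac{1}{2}(E_1(Y^+)+E_1(Y^-))+O(1)$, hence
\[
E_1(X_T,Q^\nu_T(y_T))\ge\tfrac{1}{2}E_1(Y^+,Q^\nu_T(y_T))+\tfrac{1}{2}E_1(Y^-,Q^\nu_T(y_T))-O(1),
\]
and part~(i) applied to $Y^\pm$ closes the argument.

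The main technical obstacle I anticipate is the precise identification between the discrete energy $E_1$ and the crystalline perimeter $\int_{\partial V_A}\varphi_{\mathrm{hex}}(e^{-i\theta}\nu_V)\,d\mathcal{H}^1$ with a uniform $O(1)$ error, in particular the handling of Voronoi cells that straddle the lateral faces of $Q^\nu_T(y_T)$ and the compatibility with the rigid boundary conditions $X_T=\mathscr{L}(z^\pm)$ on $\partial^\pm_1 Q^\nu_T(y_T)$; once this is in place, both the projection-type lower bound in~(i) and the bond-counting arguments in~(ii) proceed deterministically.
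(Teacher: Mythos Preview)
Your proposal is correct and, for the upper bounds in (i) and (ii) and for the lower bound in (ii), essentially coincides with the paper's argument. In particular, your bond-counting via the cross-grain quantity $B$ is just a repackaging of the paper's chain of inequalities: the paper uses Lemma~\ref{lemma:grain-bonds}(iii) together with $\#\mathcal{N}(x)\le 5$ on $\partial Y^\pm$ to derive directly
\[
\tfrac{1}{2}\sum_{x\in Y^\pm\cap Q^\nu_T}(6-\#\mathcal{N}(x))\ \ge\ \tfrac{1}{2}\#\partial Y^\pm\ \ge\ \tfrac{1}{4}\sum_{x\in\partial Y^\pm}\bigl(6-\#(\mathcal{N}(x)\cap Y^\pm)\bigr)-\tfrac{1}{2},
\]
and then sums over the disjoint grains, which is arithmetically the same as your route through $B\le\tfrac{1}{2}(E_1(Y^+)+E_1(Y^-))+O(1)$.

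The genuine difference is the lower bound in~(i). You propose to identify $E_1$ of a lattice subset with the $\varphi_{\rm hex}$-perimeter of its Voronoi union and then apply a convexity/projection bound for curves traversing the cube. This works, but it is precisely the technical obstacle you flag, and it is \emph{not} what the paper does. The paper bypasses the Voronoi identification entirely via a direct \emph{slicing argument}: after Lemma~\ref{lemma:reduction} gives $X_T\subset\mathscr{L}(z)$, one counts, for each of the three lattice directions $e^{i\theta}\omega^k$, the number $\#\mathcal{I}_k$ of lattice lines in that direction that hit the interface segment $[y_T-\tfrac{T}{2}\nu^\perp;y_T+\tfrac{T}{2}\nu^\perp]$. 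Each such line, by the boundary conditions, must contain at least one atom with a missing neighbor in that direction, whence $E_1\ge\sum_{k=1}^3\#\mathcal{I}_k-C$; and an elementary computation gives $\#\mathcal{I}_k\ge\tfrac{2T}{\sqrt{3}}|\langle e^{-i\theta}\nu,\omega^k\rangle|-C$, which is exactly the defining sum in~\eqref{eq: phi-hex-def}. This is more elementary than your perimeter route, gives the $O(1)$ error term without any boundary-cell analysis, and also yields the matching upper bound by observing that the half-space competitor realizes equality in the slicing count. Your approach has the advantage of being conceptually tied to the Wulff-shape picture of \cite{AuYeungFrieseckeSchmidt:12}, but for the quantitative $C/T$ statement needed here the slicing argument is both shorter and sharper.
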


Note that this lemma indeed provides a relation between $\varphi_{\rm hex}$ and the density $\Phi$  since
\begin{align}\label{eq: a good inquality}
\Phi(z^+,z^-,\nu) \le  \liminf_{T\to +\infty} \frac{1}{T} \min\big\{E_1\big(X_T,Q^\nu_T(y_T)\big) \colon \,  X_T = \mathscr{L}(z^\pm) \text{ on } \partial^\pm_1 Q^\nu_T(y_T)\big\}
\end{align}
for all $z^\pm\in\mathcal{Z}$, $\nu \in \mathbb{S}^1$, and all $\lbrace y_T\rbrace_T$.  We point out that the energy density $\varphi_{\mathrm{hex}}$ has  already been identified in \cite{AuYeungFrieseckeSchmidt:12,DeLucaNovagaPonsiglione:19}. In our exposition,  once the  technical result about reduction to two lattices  (see Lemma \ref{lemma:reduction}) has been achieved, the proof of Lemma \ref{lemma:vacuumirrational}(i) is rather  simple compared to \cite[Theorem 2.2]{DeLucaNovagaPonsiglione:19}. In addition, this version with  convergence  rate is a novel result and is needed in order to prove Proposition \ref{proposition:existence-original}.

The next lemma is a refinement which addresses the question under which conditions on the difference of the  rotation angles $\theta^+ - \theta^-$ equality holds in (ii).  To formulate this statement, recall $\omega= \frac{1}{2}+\frac{i}{2}\sqrt{3}$ from Subsection \ref{subsection:definitions}.  We introduce the \emph{set of good angles}, denoted by ${\mathcal{G}_{\mathbb{A}}}$, as the angles $\theta \in \mathbb{A}$ which can be written as 
\begin{align}\label{eq: good angles}
e^{i\theta}= \frac{v_1}{v_2}, \ \ \   \text{ with } v_1,v_2 \in  \mathscr{L} \setminus \{0\}. 
\end{align} 
Here,  the division of $v_1,v_2 \in \mathbb{C}$ has to be understood in the sense of complex numbers. I.e., such angles correspond to rotations which transform  one lattice point into another one.  Note that  ${\mathcal{G}_{\mathbb{A}}}$ is clearly countable. From an algebraic standpoint, our notion of ${\mathcal{G}_{\mathbb{A}}}$ coincides with those angles   $\theta$ such that $e^{i\theta}$ is a fraction of the commutative ring $\mathscr{L}$.

\begin{lemma}[Touching lattices] \label{lemma: touching} Let $z^\pm=(\theta^\pm,\tau^\pm,1) \in \mathcal{Z}$  be such that
\begin{align}\label{eq: touching lattices}
\Phi(z^+,z^-,\nu) \le \varphi_{\rm hex}\big(e^{-i\theta^-}\nu\big) + \varphi_{\rm hex}\big(e^{-i\theta^+} \nu\big)  - \eta 
\end{align}
for an $\eta > 0$. Then, there exists an optimal sequence $\lbrace X_T\rbrace_T$  for $\Phi(z^+,z^-,\nu)$, see \eqref{lemma:Phi}, such that for all $T>0$ large enough, there holds $X_T \subset \mathscr{L}(z^+_T) \cup  \mathscr{L}(z^-_T)$, where $z^\pm_T=(\theta^\pm_T,\tau^\pm_T,1) \in \mathcal{Z}$, and the rotation angles satisfy    
\begin{align}\label{eq: difference angle}
\theta^+_T - \theta^-_T  =  \theta^+ - \theta^- \in {\mathcal{G}_{\mathbb{A}}} \ \ \  \text{for all $T>0$.}
\end{align}
More precisely, $e^{i(\theta^+ - \theta^-)} = v_1/v_2$ for lattice vectors $v_1,v_2 \in \mathscr{L} \setminus \{0\}$ with $|v_1|,|v_2| \le C_{\eta}$, where $C_{\eta}>0$ only depends on $\eta$. 
\end{lemma}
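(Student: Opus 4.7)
The plan is to extract an optimal sequence $\{X_T\}_T$ realising the liminf in \eqref{lemma:Phi}, with $X_T = \mathscr{L}(z_T^\pm)$ on $\partial_1^\pm Q_T^\nu(y_T)$, $z_T^\pm \to z^\pm$, and $T^{-1}E_1(X_T,Q_T^\nu(y_T)) \to \Phi(z^+,z^-,\nu)$. For every fixed $T$, I replace $X_T$ by a minimizer of \eqref{eq: one inequl} with the same boundary data; Lemma \ref{lemma:reduction} then yields $X_T = X_T^+ \cup X_T^-$ on $Q_T^\nu(y_T)$ with $X_T^\pm \subset \mathscr{L}(z_T^\pm)$ strongly connected, and Lemma \ref{lemma:grain-bonds} furnishes a partition $X_T \cap Q_T^\nu(y_T) = Y_T^+ \sqcup Y_T^-$ in which $Y_T^\pm = \mathscr{L}(z_T^\pm)$ on $\partial_1^\pm Q_T^\nu(y_T)$ and each boundary atom of $Y_T^\pm$ has on average $4$ neighbours within its own grain.

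Next comes an energy decomposition. Letting $B_T$ count cross-grain bonds $\{x^+,x^-\}$ with $x^\pm \in Y_T^\pm$ and at least one endpoint in $Q_T^\nu(y_T)$, a direct computation based on the disjointness $X_T = Y_T^+ \sqcup Y_T^-$ inside $Q_T^\nu(y_T)$ gives
\[
E_1(X_T,Q_T^\nu(y_T)) = E_1(Y_T^+, Q_T^\nu(y_T)) + E_1(Y_T^-, Q_T^\nu(y_T)) - B_T + O(1).
\]
Each $Y_T^\pm$ is admissible in the solid-vacuum cell problem with vacuum on $\partial_1^\mp Q_T^\nu(y_T)$, so Lemma \ref{lemma:vacuumirrational}(i) yields $E_1(Y_T^\pm,Q_T^\nu(y_T)) \ge T\varphi_{\rm hex}(e^{-i\theta_T^\pm}\nu) - C$. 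Combining these bounds with the gap hypothesis \eqref{eq: touching lattices}, the optimality of $X_T$, and the Lipschitz continuity of $\theta \mapsto \varphi_{\rm hex}(e^{-i\theta}\nu)$, I obtain $B_T \ge (\eta/2) T$ for all sufficiently large $T$.

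The heart of the argument is to convert this lower bound on $B_T$ into a rigidity constraint on $\theta_T := \theta_T^+ - \theta_T^-$. Writing $x^\pm = e^{i\theta_T^\pm}(v^\pm + \tau_T^\pm)$ with $v^\pm \in \mathscr{L}$, the touching equation $|x^+ - x^-| = 1$ becomes $|e^{i\theta_T}v^+ - v^- + c_T| = 1$ for a suitable translation $c_T$. If $\theta_T \in {\mathcal{G}_{\mathbb{A}}}$, I choose $p,q \in \mathscr{L}$ coprime with $e^{i\theta_T} = p/q$ (so $|p|=|q|$); multiplying by $q$ rewrites the equation as $|pv^+ - qv^- + qc_T| = |q|$. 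Since $\mathscr{L}$ is a principal ideal domain, $\{pv^+ - qv^- : v^\pm \in \mathscr{L}\} = \mathscr{L}$, and for each admissible value on the circle of radius $|q|$ the set of preimages $(v^+,v^-)$ is a coset of the rank-two kernel $\{(qw,pw) : w \in \mathscr{L}\}$; its projection onto $v^+$ is the sublattice $q\mathscr{L}$ of index $|q|^2$ in $\mathscr{L}$. Counting touching atoms in the strip of length $O(T)$ and bounded width surrounding the interface, and using the divisor-type bound $\#\{z \in \mathscr{L} : |z|^2 = |q|^2\} = O_\varepsilon(|q|^{2\varepsilon})$ for the Eisenstein norm form $a^2 + ab + b^2$, I arrive at
\[
B_T \le C_\varepsilon\, T\, |q|^{2\varepsilon - 2}.
\]
Matching this against $B_T \ge (\eta/2) T$ forces $|q| \le C_\eta$. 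In the complementary case $\theta_T \notin {\mathcal{G}_{\mathbb{A}}}$, the map $(v^+,v^-) \mapsto e^{i\theta_T}v^+ - v^-$ is injective on $\mathscr{L}^2$, and Weyl equidistribution of $\{e^{i\theta_T}v^+ \bmod \mathscr{L}\}_{v^+}$ on the torus $\mathbb{R}^2/\mathscr{L}$ --- together with the fact that the locus of points at distance exactly $1$ from $\mathscr{L}$ is a Lebesgue null set modulo $\mathscr{L}$ --- gives $B_T = o(T)$, contradicting the lower bound for large $T$.

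Finally, the set $F_\eta := \{\theta \in {\mathcal{G}_{\mathbb{A}}} : e^{i\theta} = v_1/v_2,\ v_1,v_2 \in \mathscr{L}\setminus\{0\},\ |v_1|,|v_2| \le C_\eta\}$ is finite, and $\theta_T \to \theta^+ - \theta^-$ in the torus $\mathbb{A}$, so $\theta_T = \theta^+ - \theta^-$ for all $T$ sufficiently large; this also places $\theta^+ - \theta^- \in F_\eta$. After redefining $X_T$ on the remaining small values of $T$ so that $\theta_T^\pm = \theta^\pm$ (which does not affect the liminf), the full statement is obtained. The main obstacle will be the quantitative counting in the third paragraph: the Eisenstein-integer divisor bound is essential to convert ``many touching bonds'' into the explicit $|v_1|,|v_2| \le C_\eta$, while the irrational case is softer, being handled by asymptotic equidistribution because only a contradiction for large $T$ is required.
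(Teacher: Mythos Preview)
Your setup and the lower bound $B_T\ge c\eta\,T$ on the number of cross-grain bonds match the paper's Step~1. The divergence is at the core: you count touching pairs number-theoretically, splitting into $\theta_T\in{\mathcal{G}_{\mathbb{A}}}$ versus $\theta_T\notin{\mathcal{G}_{\mathbb{A}}}$, whereas the paper runs a geometric pigeonhole that handles all $\theta_T$ uniformly. In your rational branch two details need repair but the conclusion survives. First, $\partial X_T^\pm$ is not confined to a strip of bounded width---Lemma~\ref{lemma:reduction}(ii) only gives a simple path of at most $8T$ atoms; use instead that atoms in one fixed coset of the coincidence sublattice are at pairwise distance $\ge |q|$, so a unit-step path of length $n$ meets each coset in at most $n/|q|+1$ points, yielding $B_T\le C_\varepsilon T|q|^{2\varepsilon-1}$ rather than $|q|^{2\varepsilon-2}$. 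Second, the circle $|w+qc_T|=|q|$ is centred at $-qc_T=-(p\tau_T^+-q\tau_T^-)$, generically \emph{not} a lattice point, so the Eisenstein divisor bound for $\#\{z\in\mathscr{L}:|z|=|q|\}$ does not apply as stated; a bound for lattice points on an arbitrary circle of radius $r$ (for instance $O(r^{2/3})$) still suffices.

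The irrational branch, however, has a genuine gap. Weyl equidistribution is an asymptotic statement for a \emph{fixed} irrational parameter as $v^+$ ranges over large structured sets, and it carries no uniform rate. Here $\theta_T$ varies with $T$ and converges to $\theta^+-\theta^-$, so no single equidistribution applies; even for fixed $\theta_T$ the relevant $v^+$ lie on an arbitrary connected path in $\mathscr{L}$, not a set over which equidistribution is available. You therefore cannot deduce $B_T=o(T)$ this way, and the dichotomy collapses. The paper sidesteps the split entirely: combining $\#\partial X_T^\pm\le 8T$ with a local density lower bound $\#(\partial X_T^\pm\cap B_r(x))\ge cr$ along the boundary path, it shows that at least half of the touching points have another touching point within distance $R=O(\eta^{-1})$; pigeonholing over the $O(R^4)$ possible relative-position pairs $(\xi_1,\xi_2)$ and a second density argument then locate three such pairs within mutual distance $O(\eta^{-5})$, two of whose associated quadrilaterals must be translates of one another, directly giving $e^{i(\theta_T^+-\theta_T^-)}=v_T^+/v_T^-$ with $|v_T^\pm|\le C\eta^{-5}$ for \emph{every} large $T$, with no rational/irrational dichotomy.
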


Condition \eqref{eq: touching lattices} means that the   surface  energy between sub-lattices of  $\mathscr{L}(z^+)$ and $\mathscr{L}(z^-)$ can be strictly less than   the sum of the surface energies corresponding to each lattice interacting with the vacuum. This indicates that there are many atoms (in a certain sense) in $\mathscr{L}(z^+)$ with distance $1$ to atoms in $\mathscr{L}(z^-)$. Therefore, we speak of lattices    which have ``touching points''. The lemma shows two properties of optimal sequences: (i) they can be chosen as a subset of two lattices only, cf.\ also Lemma \ref{lemma:reduction}, (ii) the difference of the corresponding rotation angles is constant and  lies in ${\mathcal{G}_{\mathbb{A}}}$.

We now proceed with the proofs of the two lemmas.

\begin{proof}[Proof of Lemma \ref{lemma:vacuumirrational}]
For the whole proof, we fix $\nu \in \mathbb{S}^1$ and  a sequence of centers $\lbrace y_T\rbrace_T$.   

\noindent \emph{Proof of $\mathrm{(i)}$.} Let $z= (\theta,\tau,1)\in \mathcal{Z}\setminus \lbrace\mathbf{0}\rbrace$. We only prove the result for $z^+ = z$ and $z^- = \mathbf{0}$ since the argumentation for the reflected boundary conditions is the same. We obtain the statement by showing separately the two inequalities, where one is proved by a slicing argument and the other one in a constructive way.

\noindent \emph{Step 1: First inequality.} The goal of this step is to prove
\begin{align*}
\frac{1}{T} \min\big\{E_1\big(X_T,Q^\nu_T(y_T)\big) \colon \,  X_T = \mathscr{L}(z^\pm) \text{ on } \partial^\pm_1
Q^\nu_T(y_T)\big\}  \ge \varphi_{\mathrm{hex}}\big(e^{-i\theta} \nu\big)  -C/T.
\end{align*}
Consider $X_T \subset \mathbb{R}^2$ satisfying $X_T = \mathscr{L}(z)$ on $\partial^+_1 Q^\nu_T(y_T)$, $X_T=\mathscr{L}(\mathbf{0}) = \emptyset$ on $\partial^-_1 Q^\nu_T(y_T)$, and
\begin{align}\label{eq: for-first}
 E_1\big(X_T,Q^\nu_T(y_T)\big)  =  \min \big\{E_1(\tilde{X}_T,Q^\nu_T({y}_T)\Big) \colon \,  \tilde{X}_T = \mathscr{L}(z^\pm) \text{ on } \partial^\pm_1Q^\nu_T({y}_T)\big\}.  
\end{align}
By Lemma \ref{lemma:reduction},  we get that $X_T \subset \mathscr{L}(z) = e^{i\theta}(\mathscr{L}+\tau)$. Recall the definition $\omega= \frac{1}{2}+\frac{i}{2}\sqrt{3}$. We now perform a slicing argument: for $k \in \lbrace 1,2,3\rbrace$,  we define for each $\mu \in \R$
$${
I_k(\mu) :=   \big\{\lambda e^{i\theta}\omega^k + \mu e^{i\theta}  (\omega^k)^\perp   \colon \,  \lambda \in \mathbb{R} \big\} 
}$$
the line in lattice direction $e^{i\theta}\omega^k$ passing through the line $\R e^{i\theta} (\omega^k)^\perp$  at point $\mu e^{i\theta}(\omega^k)^\perp$. We set
\begin{align*}
\mathcal{I}_k =\Big\{\mu \in \mathbb{R} \colon \,   I_k(\mu) \cap \mathscr{L}(z) \neq \emptyset, \ I_k(\mu) \cap [y_T - \tfrac{T}{2} \nu^\perp; y_T + \tfrac{T}{2} \nu^\perp] \Big\}.
\end{align*}
Due to the boundary conditions, up to a  bounded number of times  independent of both $\nu$ and  $T$,  for each $\mu \in \mathcal{I}_k$  we find $x \in X_T \subset \mathscr{L}(z)$ such that  $x+e^{i\theta} \omega^k \notin X_T$ or $x- e^{i\theta} \omega^k \notin X_T$.  (Note that a bounded number of lattice lines,  independent of $T$,  in direction $e^{i\theta}\omega^k$  and passing through $[y_T - \frac{T}{2} \nu^\perp; y_T + \frac{T}{2} \nu^\perp]$ does not intersect $\partial^+_1 Q^\nu_T(y_T)$.)  By \eqref{def:energyneighbourhood} this yields  
\begin{align}\label{ineq:mathcalIk}
E_1\big(X_T, Q^\nu_T(y_T)\big) \geq \sum\nolimits_{k=1}^3 \#\mathcal{I}_k -C
\end{align}
 for a constant $C>0$ independent of $T$. 
It remains to estimate $\#\mathcal{I}_k$. For $\mu \in \mathbb{R}$ such that $I_k(\mu)\cap \mathscr{L}(z)\neq \emptyset$, we get  $I_k(\mu \pm \sqrt{3}/2)\cap \mathscr{L}(z) \neq \emptyset$ and  $I_k(\mu')\cap \mathscr{L}(z) =\emptyset$ for all $\mu' \in (\mu -\sqrt{3}/2,\mu +\sqrt{3}/2) \setminus \lbrace \mu \rbrace$.
 Finally, we have 
$$\mathcal{L}^1\Big(\Pi_k\big([y_T - \tfrac{T}{2} \nu^\perp; y_T + \tfrac{T}{2} \nu^\perp]\big)\Big) = T \big|\langle  \nu,  e^{i\theta} \omega^k \rangle\big|,$$ 
where $\Pi_k$ denotes the  orthogonal projection onto $\R e^{i\theta} (\omega^k)^\perp$. We therefore obtain
\begin{align}\label{eq:mathcalIk}
\#\mathcal{I}_k 
\geq\frac{2T}{\sqrt{3}} \big| \langle  \nu, e^{i\theta}   \omega^k \rangle \big|   -C
=\frac{2T}{\sqrt{3}}  \big|\langle e^{-i\theta} \nu,   \omega^k \rangle\big|   -C.   
\end{align}  
By \eqref{eq: phi-hex-def} and  \eqref{ineq:mathcalIk}--\eqref{eq:mathcalIk} we conclude 
\begin{align*}
\frac{1}{T} E_1\big(X_T, Q^\nu_T(y_T)\big) \geq \frac{2}{\sqrt{3}}\sum\nolimits _{k=1}^3  \big|\langle e^{-i\theta} \nu,   \omega^k \rangle\big| -C/T = \varphi_{\mathrm{hex}}\big(e^{-i\theta} \nu\big) - C/T.
\end{align*}
This along with \eqref{eq: for-first} shows the first inequality. 

\noindent \emph{Step 2: Second inequality.} The goal of this step is to prove
\begin{align}\label{ineq:vacuumphi}
\frac{1}{T} \min\big\{E_1\big(X_T,Q^\nu_T(y_T)\big) \colon \,  X_T = \mathscr{L}(z^\pm) \text{ on } \partial^\pm_1
Q^\nu_T(y_T)\big\}  \le \varphi_{\mathrm{hex}}\big(e^{-i\theta} \nu\big)  +  C/T.
\end{align}
This is achieved by constructing an explicit competitor for the minimization problem: we define $X^+_T$ by
\begin{align}\label{def:Xepsphihexvacuum}
X^+_T = \begin{cases}
\mathscr{L}(z) &\text{in } \{x\colon \langle x-y_T,  \nu \rangle \geq 5\},\\
\emptyset &\text{otherwise,}
\end{cases}
\end{align}
i.e., $X^+_T$ is a (discrete version of a) half space. We directly see that $X^+_T = \mathscr{L}(z)$ on $\partial^+_1 Q^\nu_T(y_T)$ and  $X^+_T=\emptyset$ on $\partial^-_1 Q^\nu_T(y_T)$.  To estimate its energy, we start by observing that for this choice of $X^+_T$ equality holds in \eqref{ineq:mathcalIk}  with $\mathcal{I}_k$ as defined above,  up to an error of order ${\rm O}(1)$. Indeed, if $x \in \mathscr{L}(z)\setminus X^+_T$, then either $x+ \lambda e^{i\theta} \omega^k \notin X^+_T$ for all $\lambda \in \mathbb{N}$ or  $x-\lambda e^{i\theta} \omega^k \notin X^+_T$ for all $\lambda \in \mathbb{N}$. Then, the equalities in \eqref{ineq:mathcalIk} and \eqref{eq:mathcalIk} along with \eqref{eq: phi-hex-def}   yield   
\begin{align}\label{ineq:Xepsphihexvacuum-new}
\frac{1}{T} E_1\big(X^+_T, Q^\nu_T(y_T)\big) \le \frac{2}{\sqrt{3}}\sum\nolimits _{k=1}^3  \big| \langle e^{-i\theta} \nu,   \omega^k \rangle \big| +C/T = \varphi_{\mathrm{hex}}\big(e^{-i\theta} \nu\big) + C/T.
\end{align}
This shows \eqref{ineq:vacuumphi}. For purposes of the proof of (ii) below, we note that construction \eqref{def:Xepsphihexvacuum} with $-\nu$ in place of $\nu$ can be applied to obtain a configuration $X^-_T \subset \mathbb{R}^2$ with ${X}^-_T = \mathscr{L}(z)$ on $\partial^-_1 Q^\nu_T(y_T)$ and  $X^-_T=\emptyset$ on $\partial^+_1 Q^\nu_T(y_T)$ which satisfies \eqref{ineq:Xepsphihexvacuum-new}.

\noindent \emph{Proof of $\mathrm{(ii)}$.} Fix $z^+ =(\theta^+,\tau^+,1) \in \mathcal{Z}$ and $z^- =(\theta^-,\tau^-,1) \in \mathcal{Z}$. We show the first  inequality by an explicit construction.  The second one is obtained with the help of Lemma \ref{lemma:grain-bonds}. 

\noindent \emph{Step 1: First inequality.} We define $X_T = X_T^+ \cup X_T^-$, where
\begin{align*}
X^+_T = \begin{cases}
\mathscr{L}(z^+) &\text{in } \{x\colon \langle x-y_T,\nu \rangle \geq 5\},\\
\emptyset &\text{otherwise.}
\end{cases}, \ \ \  \ \ \ \    X^-_T = \begin{cases}
\mathscr{L}(z^-) &\text{in } \{x\colon \langle x-y_T,  \nu \rangle \leq -5\},\\
\emptyset &\text{otherwise.}
\end{cases}
\end{align*}
Then, $X_T$ clearly satisfies the boundary conditions $X_T = \mathscr{L}(z^\pm)$ on $\partial^\pm_1 Q^\nu_T(y_T)$ and by repeating the reasoning in \eqref{ineq:Xepsphihexvacuum-new} we find  
\begin{align*}
 \frac{1}{T}E_T\big(X_T,Q^\nu_T(y_T)\big) &=   \frac{1}{T}\Big( E_1\big(X_T^+,Q^\nu_T(y_T)\big) + E_1\big(X_T^-,Q^\nu_T(y_T)\big) \Big)  \\& \le \varphi_{\mathrm{hex}}\big(e^{-i\theta^+} \nu\big)+ \varphi_{\mathrm{hex}}\big(e^{-i\theta^-} \nu\big) +C/T.
\end{align*}

\noindent \emph{Step 2: Second inequality.} Consider $X_T \subset \mathbb{R}^2$ satisfying $X_T = \mathscr{L}(z^\pm)$ on $\partial^\pm_1 Q^\nu_T(y_T)$ and
\begin{align*}
 E_1\big(X_T,Q^\nu_T(y_T)\big) = \min \big\{E_1\big(\tilde{X}_T,Q^\nu_T({y}_T)\big) \colon \,  \tilde{X}_T = \mathscr{L}(z^\pm) \text{ on } \partial^\pm_1Q^\nu_T({y}_T)\big\}.  
\end{align*}
By Lemmas \ref{lemma:reduction} and \ref{lemma:grain-bonds} there holds $X_T = X^+_T \cup X^-_T = Y^+_T \dot\cup Y^-_T$ on $Q_T^\nu(y)$, where $Y^\pm_T = \mathscr{L}(z^\pm)$ on $\partial^\pm_1 Q^\nu_T(y_T)$ and 
\begin{align*}
  \big| \sum\nolimits_{x \in \partial Y^\pm_T} \# (\mathcal{N}(x) \cap Y^\pm_T) - 4 \# \partial Y^\pm_T \big| \le 2. 
\end{align*}
Since $\#\mathcal{N}(x) \le 5$ for any $x \in \partial Y^\pm_T$ ($\subset \partial X^\pm_T$), we get 
\begin{align*}
  \frac{1}{2} \sum\nolimits_{x \in Y^\pm_T \cap Q^\nu_T(y_T)} (6 - \#\mathcal{N}(x)) 
  \ge \frac{1}{2} \# \partial Y^\pm_T 
	\ge \frac{1}{4} \sum\nolimits_{x \in \partial Y^\pm_T} \big( 6 - \# (\mathcal{N}(x) \cap Y^\pm_T) \big) - 1/2. 
\end{align*}
So observing that $Y^\pm_T$ is a competitor in (Step 1 of) (i) above and using that $Y^+_T \cap Y^-_T \cap Q^\nu_T(y_T) = \emptyset$, we find   that 
\begin{align*}
  \frac{1}{T} E_1\big(X_T,Q^\nu_T(y_T)\big) 
  &\ge \frac{1}{2T} E_1\big(Y^+_T,Q^\nu_T(y_T)\big) + \frac{1}{2T} E_1\big(Y^-_T,Q^\nu_T(y_T)\big) - 1/T \\ 
	&\ge \frac{1}{2} \varphi_{\mathrm{hex}}\big(e^{-i\theta^+} \nu\big)+ \frac{1}{2} \varphi_{\mathrm{hex}}\big(e^{-i\theta^-} \nu\big) - C/T.  
\end{align*}
This concludes the proof.
\end{proof}

\begin{proof}[Proof of Lemma \ref{lemma: touching}]  Let $\lbrace X_T \rbrace_T$ be an optimal sequence    for $\Phi(z^+,z^-,\nu)$ and denote by $\lbrace y_T\rbrace_T$ the corresponding centers of the cubes.  Due to Lemma \ref{lemma:reduction}, we may without restriction assume that  $X_T = X^+_T \cup X_T^-$,  for sub-configurations $X^\pm_T$ satisfying   $X_T^\pm  \subset \mathscr{L}(z^\pm_T) $, where  $z^\pm_T = (\theta^\pm_T,\tau^\pm_T,1) \to z^\pm =(\theta^\pm,\tau^\pm,1)$ as $T \to +\infty$. Moreover, the sets  $\partial X^\pm$ defined in  \eqref{eq: boundary of Y}  are connected, and there holds $X_T= \mathscr{L}(z^\pm_T)$ on $\partial_1^\pm Q^\nu_T(y_T)$.  In what follows, we fix a subsequence (not relabeled) such that by \eqref{eq: touching lattices}  we have
\begin{align}\label{def:delta}
\varphi_{\rm hex}\big(e^{-i\theta^+}\nu\big) + \varphi_{\rm hex}\big(e^{-i\theta^-}\nu\big)-\lim_{T\to +\infty} \frac{1}{T}E_1\big(X_T,Q^\nu_T(y_T)\big) \ge \eta > 0.
\end{align}
 Our strategy to show \eqref{eq: difference angle} lies in  proving 
\begin{align}\label{eq: main prop to prove}
e^{i(\theta_T^+ - \theta_T^-)} = \frac{v_{T}^+}{v_{T}^-} \ \ \ \text{with} \ \ \ v_{T}^+,v_{T}^- \in \mathscr{L} \ \ \ \text{satisfying} \ \ \ |v_{T}^+| = |v_{T}^-|\leq C_\eta 
\end{align}
for all $T$ sufficiently large, where $C_\eta$ only depends on $\eta$. From this estimate, the statement in  \eqref{eq: difference angle} easily follows. In fact, given \eqref{eq: main prop to prove},  since $\mathscr{L}$ is a discrete set and $\theta^\pm_T \to \theta^\pm$, $e^{i(\theta^+_T - \theta^-_T)} = v_{T}^+/v_{T}^-$ is eventually constant and we find $\theta^+ -\theta^- = \theta_T^+- \theta_T^- \in {\mathcal{G}_{\mathbb{A}}}$ for all $T$ large enough.

Let us come to the proof of  \eqref{eq: main prop to prove}.  Recall by Lemma \ref{lemma:reduction} that $X_T$ is contained in the two  components $X_T^+$ and $X_T^-$. We  further define the set of \emph{touching points}  
\begin{align*}
 \mathcal{T}^+_{T} & = \{x \in X_T^+\colon \,  \exists \, y \in X_T^- \text{ such that } |x-y|=1\}, \\ \mathcal{T}_{T}^- &= \{x \in X_T^-\colon \,  \exists \, y \in X_T^+ \text{ such that } |x-y|=1\}.
\end{align*}
Note that $\mathcal{T}_{T}^\pm \subset \bigcup_{x\in  \partial X_T^\pm} (\lbrace x\rbrace \cup \mathcal{N}(x))$, see definition \eqref{eq: boundary of Y}.  ($\mathcal{T}_{T}^\pm \setminus  \partial X_T^\pm \neq \emptyset$ is possible if $X_T^+ \cap X_T^- \neq \emptyset$.) By \eqref{eq: neighborhood bound} we also observe that 
\begin{align}\label{eq: control}
 \# \mathcal{T}_{T}^+ / 6 \le \# \mathcal{T}_{T}^- \le 6\# \mathcal{T}_{T}^+. 
\end{align}
We start with a brief outline of the proof. Steps 1--4 are devoted to some preliminary estimates: we first show that the cardinality of the sets $ \partial X_T^\pm$  and  $ \mathcal{T}_{T}^\pm$ scales like $T$ by providing a lower bound for $ \mathcal{T}_{T}^\pm$ (Step 1) and an upper bound for $\partial X_T^\pm$ (Step 2). Then we show that, for the majority of  points in $ \mathcal{T}_{T}^\pm$, neighborhoods contain many points of $ \partial X_T^\pm$ (Step 3)  and also elements of   $ \mathcal{T}_{T}^\pm$ (Step 4).  Based on this, we can find quadrilaterals consisting of two points in  $\mathcal{T}^+_{T}$ and two points in $ \mathcal{T}^-_{T}$ where two sides have length $1$ and the other two sides are parallel to lattice vectors of the form $e^{i\theta^+_T}w_T^+$ and    $e^{i\theta^-_T}w_T^-$, respectively, for some $w_T^+, w_T^- \in \mathscr{L}$ with controlled norm. From this, \eqref{eq: main prop to prove} can be derived (Step 5 and Step 6).

\noindent \emph{Step 1: Cardinality of touching points.} We show  $\# \mathcal{T}_{T}^\pm \geq \frac{\eta}{22}T$ for $T$ large enough. By \eqref{eq: neighborhood bound}, \eqref{def:energyneighbourhood}, and the fact that $X^\pm_T= \mathscr{L}(z^\pm_T)$ on $\partial_1^\pm Q^\nu_T(y_T)$, we obtain 
\begin{align*}
E_1\big(X_T,Q^\nu_T(y_T)\big)\geq &\frac{1}{2}\sum_{x \in X_T^+ \cap Q^\nu_T(y_T)} \big(6-\#(\mathcal{N}(x) \cap X_T^+)\big)   + \frac{1}{2} \sum_{x \in X_T^- \cap Q^\nu_T(y_T)} \big(6-\#(\mathcal{N}(x) \cap X_T^-)\big)  \\ & \ \ \ - 3(\#\mathcal{T}_{T}^+ + \#\mathcal{T}_{T}^-)
\end{align*}
and therefore 
\begin{align*}
3(\#\mathcal{T}_{T}^+ + \#\mathcal{T}_{T}^-) 
\ge E_1\big(X_T^+,Q^\nu_T(y_T)\big) + E_1\big(X_T^-,Q^\nu_T(y_T)\big) - E_1\big(X_T,Q^\nu_T(y_T)\big). 
\end{align*}
We note by the definition of $X_T$ that the subconfigurations   $X_T^+$ and $X_T^-$  are competitors for the minimization problems appearing  in Lemma  \ref{lemma:vacuumirrational}(i).  Dividing by $T$ and passing to the $\liminf$ along $T\to +\infty$, by \eqref{def:delta} we therefore conclude 
\begin{align*}
\liminf_{T\to +\infty} \, \frac{1}{T} ( \#\mathcal{T}_{T}^+ + \#\mathcal{T}_{T}^- ) 
\ge \eta/3.
\end{align*}
This yields $\liminf_{T\to +\infty} \frac{1}{T}\#\mathcal{T}_{T}^\pm \ge \frac{\eta}{21}$ by \eqref{eq: control}, and concludes Step 1. 

\noindent \emph{Step 2: A priori bound on the length of the boundaries.} We claim that  for  $T > 0$ large enough the boundaries $\partial X_T^\pm \subset Q^\nu_T(y_T)$ (cf.\ \eqref{eq: boundary of Y}) satisfy 
\begin{align}\label{ineq:lengthboundpartialXplus}
 \# ( \partial X_T^+ \cup \partial X_T^- )  \leq 8T. 
\end{align}
 In fact, by  Lemma \ref{lemma:reduction}(ii) there holds $\#\mathcal{N}(x) \leq 5$ for all $x \in \partial X_T^\pm$ and therefore for $T$ sufficiently large we get by \eqref{def:energyneighbourhood},  \eqref{def:delta}, and  the fact that $\Vert \varphi_{\rm hex} \Vert_{L^\infty(\mathbb{S}^1)}   = 2 $ (see \eqref{eq: phi-hex-def})
\begin{align*}
\# ( \partial X_T^+ \cup \partial X_T^- )  &\leq \sum\nolimits_{x \in X_T \cap Q_T^\nu(y_T)} (6-\#\mathcal{N}(x)) = 2\, E_1\big(X_T,Q^\nu_T(y_T)\big) \\
&   \leq 2T \big ( \varphi_{\rm hex}\big(e^{-i\theta^+}\nu\big) + \varphi_{\rm hex}\big(e^{-i\theta^-}\nu\big) \big)\leq  8T. 
\end{align*}

\noindent \emph{Step 3: Atomic density lower bound for $\partial X_T^\pm$.} We claim that  there exists a universal $0 < c <1$ such that  for all $T> r \ge 1$ we have
\begin{align}\label{ineq:auxiliaryestimateballs}
\#\big(  \partial X_T^\pm \cap B_r(x)  \big) \geq c r \quad  \text{for all $x \in \R^2$ with } \mathrm{dist}(x, \partial X_T^\pm) \le 1.
\end{align}
To prove this estimate we assume without restriction that $T > 3r$. Due  to  Lemma \ref{lemma:reduction}(ii), $\partial X^\pm_T$  is connected and  $\partial X_T^\pm \setminus B_r(x) \neq \emptyset$. Therefore, there has to exist a  simple path  in $\partial X^\pm_T$  that connects  some atom in $\partial X_T^\pm \setminus B_r(x)$ with  an atom in $\overline{B_1(x)}$ and has at least  $c r$ atoms inside $B_r(x)$.

\noindent   \emph{Step 4: Bounded gap between points in $\mathcal{T}_{T}^\pm$.} Given $R>0$, we introduce the set of $R$-\emph{isolated points} by 
\begin{align}\label{eq: bounded gap8}
\mathcal{I}^\pm_{T,R} := \big\{ x\in \mathcal{T}_T^\pm \colon \,   B_R(x) \cap \mathcal{T}_T^\pm \subset \overline{B_2(x)} \big\}.
\end{align}
We claim that there exists  a universal $\bar{c} > 0$ such that for $R=\bar{c}/\eta$ and all $T$ sufficiently large 
\begin{align}\label{eq: bounded gap5}
\#    \mathcal{I}^\pm_{T,R}   \le \#     \mathcal{T}_T^\pm /2.
\end{align}
To see this, note that due to \eqref{ineq:lengthboundpartialXplus}, \eqref{ineq:auxiliaryestimateballs} for $r=R/2$ (use that $ \mathrm{dist}(x, \partial X_T^\pm) \le 1$ for all $x \in \mathcal{T}_T^\pm $) and Step 1 we have 
\begin{align*}
\# \mathcal{I}^\pm_{T,R}  
\leq \frac{2}{cR} \sum\nolimits_{x \in  \mathcal{I}^\pm_{T,R}} \#\big(\partial X_T^\pm \cap B_{R/2}(x) \big) 
\leq \frac{C}{cR} \#\partial X_T^\pm \leq \frac{C}{cR}T 
\leq \frac{C}{c \bar{c}} \# \mathcal{T}_{T}^\pm,
\end{align*}
where $C>0$ denotes a universal constant varying from step to step. Here, in the second step we accounted for possible multiple counting by using that, due to the definition of $\mathcal{I}^\pm_{T,R}$, the intersection $B_{R/2}(x) \cap B_{R/2}(y)$, $x,y \in \mathcal{I}^\pm_{T,R_T}$, can be non-empty only if $|x-y| \le 2$. The assertion follows if $\bar{c}$ is chosen big enough.

\noindent \emph{Step 5: Bounded gap between pairs of points having the same relative position.} We choose two arbitrary lattice vectors $\xi_1,\xi_2$ satisfying $e^{-i\theta^-_T}\xi_1, e^{-i\theta_T^+}\xi_2  \in B_{2R} \cap (\mathscr{L}\setminus \{0\})$ with $R>0$ given by Step 4. Define
\begin{align*}
\mathcal{D}^{\xi_1,\xi_2}_{T}=\big\{(x_1,y_1) \in  \mathcal{T}_{T}^- \times \mathcal{T}_{T}^-\colon \, & \text{there exist $x_2,y_2 \in  \mathcal{T}_{T}^+ $ such that} \\
&    |x_1 - x_2| = 1, \, |y_1 - y_2| = 1   \text{ and }  x_1-y_1=\xi_1,\,  x_2-y_2=\xi_2 \big\}.
\end{align*}
The set consists of pairs $(x_1,y_1)$ in $\mathcal{T}_T^-$ whose difference is $\xi_1$ and which have corresponding neighbors in $\mathcal{T}_T^+$ with difference $\xi_2$. 

 We observe by \eqref{eq: bounded gap8} that for $x_1 \in \mathcal{T}_{T}^- \setminus \mathcal{I}^-_{T,R}$ we find $\xi_1 \in B_R \cap e^{i\theta_T^-} \mathscr{L}$  with $|\xi_1| > 2$  and $y_1 \in \mathcal{T}_{T}^-$ such that $x_1 - y_1 = \xi_1$. We denote the corresponding neighbors in $\mathcal{T}_{T}^+$ by $x_2$ and $y_2$, respectively. Since $x_2,y_2 \in  e^{i\theta_T^+}  (\mathscr{L}+  \tau_T^+  )$ and $|x_1-y_1| = |x_2 - y_2|  =1$, we find $\xi_2 \in B_{2R} \cap e^{i\theta_T^+}\mathscr{L}$ such that $x_2 - y_2 = \xi_2$. Clearly, $\xi_2 \neq 0$ as $|\xi_1|>2$. This discussion along with  \eqref{eq: bounded gap5} implies
\begin{align}\label{eq: bounded gap3}
\frac{1}{2} \#\mathcal{T}_{T}^- \le \#     \big(   \mathcal{T}_T^-\setminus   \mathcal{I}^-_{T,R}\big)  \le  \sum\nolimits_{(\xi_1,\xi_2)} \# \mathcal{D}^{\xi_1,\xi_2}_{T},
\end{align}
where the sum runs over all pairs $(\xi_1,\xi_2)$ with $e^{-i\theta_T^-}\xi_1, e^{-i\theta_T^+}\xi_2  \in B_{2R} \cap (\mathscr{L}\setminus \{0\})$. Choose  $(\zeta_1^T,\zeta_2^T) \in (B_{2R} \cap e^{i\theta_T^-}(\mathscr{L}\setminus \{0\})) \times (B_{2R} \cap e^{i\theta_T^+}(\mathscr{L}\setminus \{0\}))$  such that $\#\mathcal{D}^{\zeta_1^T,\zeta_2^T}_{T} \geq \#\mathcal{D}^{\xi_1,\xi_2}_{T}$ for all  $(\xi_1,\xi_2) \in (B_{2R} \cap e^{i\theta_T^-}(\mathscr{L}\setminus \{0\})) \times (B_{2R} \cap e^{i\theta_T^+}(\mathscr{L}\setminus \{0\}))$.   Then, \eqref{eq: bounded gap3} and the fact that the  number of pairs  $e^{-i\theta_T^-}\xi_1,e^{-i\theta_T^+}\xi_2 \in B_{2R} \cap (\mathscr{L}\setminus \{0\})$ is controlled by $CR^4$ yield
\begin{align}\label{eq: bounded gap2}
\#\mathcal{T}_{T}^- \le CR^4 \,  \# \mathcal{D}^{\zeta^T_1,\zeta^T_2}_{T} 
\end{align} 
for a universal $C > 0$.  We write $\mathcal{D}^{\zeta_1^T,\zeta_2^T}_{T} =\{x_{j}^T,y_{j}^T\}_{j=1}^{M_T}$ for some $M_T \in \mathbb{N}$. We claim that there  is a universal $c' > 0$ such that for $\varrho = c' \eta^{-5}$ 
\begin{align}\label{eq: bounded gap}
\text{there exist $j,k,l \in \lbrace 1,\ldots,M_T\rbrace$ pairwise distinct such that $x_k^T,x_l^T \in B_\varrho(x_j^T)$.}
\end{align}
Assume that, on the contrary, $\varrho$ is such that each $B_{\varrho}(x_k^T) \setminus \lbrace x_k^T \rbrace$  contains at most one point $\{x_{j}^T\}_{j=1}^{M_T}$. Then, it is elementary to see that we can choose $\lbrace \tilde{x}_j^T \rbrace_{j=1}^{\lceil M_T/2 \rceil} \subset \{x_{j}^T\}_{j=1}^{M_T}$ such that $B_{\varrho/2}(\tilde{x}^T_j) \cap  B_{\varrho/2}(\tilde{x}^T_k) = \emptyset$ for $j,k \in \lbrace 1,\ldots, \lceil M_T/2 \rceil\rbrace$, $j\neq k$. This along with    \eqref{ineq:lengthboundpartialXplus}, \eqref{ineq:auxiliaryestimateballs}, and $2\lceil M_T/2 \rceil \ge \#\mathcal{D}^{\zeta_1^T,\zeta_2^T}_{T}$ implies 
\begin{align*}
\#\mathcal{D}^{\zeta_1^T,\zeta_2^T}_{T} &\leq  2\Big\lceil \frac{M_T}{2} \Big\rceil 
\le \frac{4}{c\varrho} \sum_{j=1}^{\lceil M_T/2 \rceil} \hspace{-0.3cm} \#\big(\partial X_T^- \cap B_{\varrho/2}(\tilde{x}_j^T) \big) 
\le \frac{4}{c\varrho} \# \partial X_T^- 
\le \frac{32T}{c\varrho}. 
\end{align*}
From \eqref{eq: bounded gap2}, $\# \mathcal{T}_{T}^- \geq \frac{\eta}{22}T$ (see  Step 1), and the choice $R = \bar{c}/\eta$ in Step 4 we then get $\varrho \le c' \eta^{-5}/2$ for a universal $c' > 0$. The assertion of \eqref{eq: bounded gap} is thus guaranteed for $\varrho = c' \eta^{-5}$. This concludes Step 5.  

\noindent \emph{Step 6: Conclusion.} We denote the three atoms identified in \eqref{eq: bounded gap} by $x_1^1, x_1^2, x_1^3$ (for convenience, we use a different notation and labeling), and denote by $y_1^1,y_1^2,y_1^3$ the corresponding points such that $(x^j_1,y^j_1) \in \mathcal{D}^{\zeta_1^T,\zeta_2^T}_T$ for $j\in\lbrace 1,2,3\rbrace$. In particular, recall that 
\begin{align}\label{eq: rho bound}
|x^1_1-x^2_1|, \ \ 
|x^1_1-x^3_1|, \ \ 
|x^2_1-x^3_1|\leq 2 \varrho.
\end{align} 
By the definition of $\mathcal{D}^{\zeta_1^T,\zeta_2^T}_T$, there exist $(x^1_2,y^1_2),(x^2_2,y^2_2),(x^3_2,y^3_2)$ such that $|x^j_1-x^j_2|=|y^j_1-y^j_2|=1$, $\zeta_1^T=x^j_1-y^j_1$, and $\zeta_2^T=x^j_2-y^j_2$ for $j \in \lbrace 1,2,3 \rbrace$. Now for each $j$, the four points  $\{x^j_1,x^j_2,y^j_2,y^j_1\}$ form a quadrilateral (possibly self-intersecting) with two edges of length one and two edges oriented in $\zeta^T_1$ and $\zeta^T_2$, respectively. Now there are two cases to consider: (a) $\zeta_1^T=\zeta_2^T$ and (b) $\zeta_1^T\neq \zeta_2^T$.

\noindent \emph{Case $\mathrm{(a)}$:} We have that $x^1_1-y^1_1=x^1_2-y^1_2$, where $x^1_1-y^1_1=e^{i\theta_T^-} v_1$ and $x^1_2-y^1_2= e^{i \theta_T^+}v_2$ for $v_1,v_2 \in (\mathscr{L}\setminus \lbrace0 \rbrace) \cap B_{2R}$. Then $e^{i\theta_T^-} v_1 = e^{i\theta_T^+} v_2$ and thus \eqref{eq: main prop to prove} holds for $v^+_T =v_1$ and $v^-_T = v_2$  with $|v^+_T|, |v^-_T| \le 2 R = 2 \bar{c}/\eta$. 

\noindent \emph{Case $\mathrm{(b)}$:} Note that two of the three quadrilaterals  $\{x^j_1,x^j_2,y^j_2,y^j_1\}$, $j \in \lbrace 1,2,3\rbrace$, are necessarily translates of each other. In fact, there are only two different quadrilaterals (up to translation) with fixed order of the sides, prescribed side-length $1$ of two  opposite  edges, and prescribed length and orientation of the other two edges,  see Figure \ref{fig:squares}. 

\begin{figure}[H]
 \includegraphics{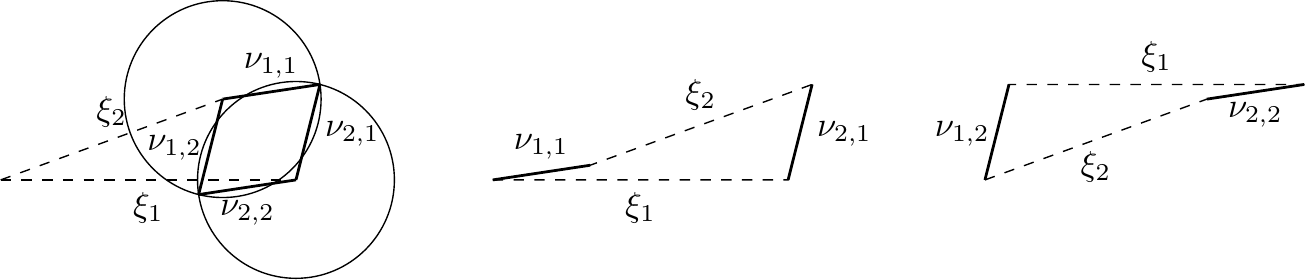}
\caption{The two possible  quadrilaterals  in Step 6,  where $\xi_1,\xi_2$ are given  unlike  vectors and $\nu_{1,1}, \nu_{2,1}, \nu_{1,2}, \nu_{2,2}$ denote the possible sides of length $1$.}
\label{fig:squares}
\end{figure}

Without restriction, assume that the quadrilaterals for $j=1$ and $j=2$ are translates of each other. Then we get $x_1^1 - x_2^1 =x_1^2 - x_2^2$.   We write  $x_1^j = e^{i \theta_T^-} (b_1^j + \tau_T^-)$ and $x_2^j = e^{i \theta^+_T} (b_2^j + \tau^+_T)$  for suitable $b^j_1, b^j_2 \in \mathscr{L}$ for $j \in \lbrace 1,2 \rbrace$. (Note that the lattice vectors depend on $T$ which we do not include in the notation for convenience.) Then $x_1^1 - x_2^1 =x_1^2 - x_2^2$ implies  $e^{i \theta_T^-} (b_1^1 - b_1^2) = e^{i \theta^+_T} (b_2^1 - b_2^2)$.  Since $x_1^1 \ne x_1^2$ we have $b_1^1 - b_1^2 \ne 0$ and thus also $b_2^1 - b_2^2 \ne 0$,  and therefore
\begin{align*}
e^{i (\theta^+_T - \theta^-_T)} = \frac{b_1^1 - b_1^2}{b_2^1 - b_2^2}.
\end{align*}
Due to \eqref{eq: rho bound}, we obtain $|b_1^1 - b_1^2| = |x_1^1 - x_1^2|\leq 2\varrho$ and, since $|b_1^1 - b_1^2| =  |b_2^1 - b_2^2|$, also $|b_2^1 - b_2^2| \leq 2\varrho$. As we clearly also have $b_1^1 - b_1^2, b_2^1 - b_2^2 \in \mathscr{L}$, we derive that  \eqref{eq: main prop to prove} holds for $v_{T}^+:= b_1^1 - b_1^2$ and $v_{T}^- := b_2^1 - b_2^2$  with $|v^+_T|, |v^-_T| \le 2 \varrho = 2c'\eta^{-5}$. As explained below \eqref{eq: main prop to prove},  \eqref{eq: main prop to prove}  implies \eqref{eq: difference angle}, and therefore the proof is concluded. 
\end{proof}

\section{Cell formula  Part II:  Relation of converging and fixed boundary values}\label{section:surfacetension2}

In this final section about cell formulas we show that converging boundary conditions as in the cell formula $\Phi$, see \eqref{eq: Phi def}, can be replaced by fixed boundary values. Moreover, we show Proposition \ref{proposition:existence-original}  and the properties of $\varphi$ stated in  Theorem \ref{prop: properties of varphi}.  We introduce the auxiliary function 
 \begin{align}\label{eq barphi}
\bar{\varphi}(z^+,z^-,\nu) := \liminf_{T \to +\infty}  \frac{1}{T}  \inf  \big\{E_{1}\big(X_T,Q^\nu_{T}(y_T)\big) \colon\, y_T \in \R^2, \,   X_T = \mathscr{L}(z^\pm)  \text{ on } \partial^\pm_{1}Q^\nu_{T}(y_T) \big\} 
\end{align}
for $z^\pm \in \mathcal{Z}$ and $\nu\in \mathbb{S}^1$. The  main goal of this section is to prove the following two statements.

\begin{lemma}\label{lemma: calculation} For each $z^+, z^- \in \mathcal{Z}$ and $\nu \in \mathbb{S}^1$ there holds
\begin{align}\label{eq 5N}
\Phi(z^+,z^-,\nu) = \bar{\varphi}(z^+,z^-,\nu).
\end{align}
Moreover, for $z^\pm = (\theta^\pm,\tau^\pm,1) \in \mathcal{Z}$ with $\{(x,y) \in \mathscr{L}(z^+) \times \mathscr{L}(z^-) \colon \, |x - y| = 1 \} = \emptyset$, we have
$ \bar{\varphi}(z^+,z^-,\nu) 
   = {\varphi}_{\rm hex}\big(e^{-i\theta^+}\nu\big) + {\varphi}_{\rm hex}\big(e^{-i\theta^-}\nu\big).$
\end{lemma}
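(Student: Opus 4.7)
The identity $\Phi = \bar{\varphi}$ reduces to the non-trivial inequality $\bar{\varphi} \leq \Phi$: the reverse inequality $\Phi \leq \bar{\varphi}$ is immediate because the constant sequence $z^\pm_T \equiv z^\pm$ is admissible in the minimum defining $\Phi$, see \eqref{lemma:Phi}. For the non-trivial direction I distinguish two cases. Case A: if $\Phi(z^+,z^-,\nu) \geq \varphi_{\rm hex}(e^{-i\theta^+}\nu) + \varphi_{\rm hex}(e^{-i\theta^-}\nu)$, the two half-space competitors built in the proof of Lemma \ref{lemma:vacuumirrational}(ii) satisfy exactly the fixed boundary conditions $\mathscr{L}(z^\pm)$, so $\bar{\varphi}(z^+,z^-,\nu) \leq \varphi_{\rm hex}(e^{-i\theta^+}\nu) + \varphi_{\rm hex}(e^{-i\theta^-}\nu) \leq \Phi(z^+,z^-,\nu)$ and this case is settled.

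Case B is the non-trivial one: suppose the strict inequality $\Phi(z^+,z^-,\nu) < \varphi_{\rm hex}(e^{-i\theta^+}\nu) + \varphi_{\rm hex}(e^{-i\theta^-}\nu) - \eta$ holds for some $\eta > 0$. Then by Lemma \ref{lemma: touching} there is an optimal sequence $\{X_T\}_T$ for $\Phi$ with $X_T \subset \mathscr{L}(z^+_T) \cup \mathscr{L}(z^-_T)$, $z^\pm_T \to z^\pm$, and the key rigidity $\theta^+_T - \theta^-_T = \theta^+ - \theta^-$ for $T$ large. Setting $\alpha_T := \theta^-_T - \theta^- \to 0$, I apply the rotation $e^{-i\alpha_T}$ about the cube centre $y_T$; by frame indifference (Lemma \ref{lemma:propertiesofE}(i)) the energy is unchanged, both boundary lattices now carry exactly the target angles $\theta^\pm$, and the tilt of the cube normal by $\alpha_T \to 0$ is absorbed by a slight enlargement of the cube at lower-order cost. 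A subsequent translation aligns the lower boundary lattice with $\mathscr{L}(z^-)$, while the upper boundary lattice equals $\mathscr{L}(z^+) + \sigma_T$ with $\sigma_T \to 0$. A cut-off construction in the spirit of Lemma \ref{lemma:cutoff} within an annular strip of constant width $\delta$ along $\partial^+_1 Q^\nu_T$---selected by averaging so that the configuration already coincides with the target lattice away from a small defect set, the defective atoms being deleted and $\mathscr{L}(z^+)$ reinserted---yields a competitor admissible in $\bar{\varphi}$ with energy at most $E_1(X_T,Q^\nu_T(y_T)) + C\delta T + o(T)$. Dividing by $T$ and passing first to $T \to \infty$ and then $\delta \to 0$ concludes the proof of $\bar{\varphi} \leq \Phi$.

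For the second statement, the upper bound $\bar{\varphi}(z^+,z^-,\nu) \leq \varphi_{\rm hex}(e^{-i\theta^+}\nu) + \varphi_{\rm hex}(e^{-i\theta^-}\nu)$ is again furnished by the half-space construction of Lemma \ref{lemma:vacuumirrational}(ii). For the matching lower bound, let $X$ be a minimizer realizing the infimum in \eqref{eq barphi}; by Lemma \ref{lemma:reduction} one may decompose $X = X^+ \cup X^-$ on $Q^\nu_T(y_T)$ with $X^\pm \subset \mathscr{L}(z^\pm)$. The assumed absence of pairs at distance $1$ between $\mathscr{L}(z^+)$ and $\mathscr{L}(z^-)$ forces $\mathcal{N}(x) \cap X^\mp = \emptyset$ for every $x \in X^\pm$, so the bond graphs of $X^+$ and $X^-$ are disjoint and the energy decomposes as $E_1(X,Q^\nu_T(y_T)) = E_1(X^+,Q^\nu_T(y_T)) + E_1(X^-,Q^\nu_T(y_T))$. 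Each summand is then a competitor in the vacuum-solid cell problem of Lemma \ref{lemma:vacuumirrational}(i) and is hence bounded below by $T\,\varphi_{\rm hex}(e^{-i\theta^\pm}\nu) - C$; division by $T$ and passage to the limit yield the desired lower bound.

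The main obstacle is the cut-off step in Case B. Because the sticky-disk potential blows up below distance one, the infinitesimal shift $\sigma_T$ between actual and target upper lattices cannot simply be ``translated away'' without creating forbidden atom pairs along the matching interface. The intervention therefore requires genuine deletion and repopulation of atoms inside an annular strip, and keeping the resulting surface cost bounded by $C\delta T$ (rather than letting it grow with $T$) needs both the algebraic rigidity of Lemma \ref{lemma: touching}---forcing the mismatch angle into the discrete set $\mathcal{G}_{\mathbb{A}}$ and thereby permitting exact rotation correction---and an averaging argument in the spirit of Lemma \ref{lemma:cutoff} to locate a strip along which the atomic configuration deviates from the target lattice only at few exceptional positions.
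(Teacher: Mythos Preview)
Your case split and the rotation step (which is essentially Lemma~\ref{lemma:thetaT-rotation}) are fine, and your treatment of the second assertion matches the paper. The gap is the translation correction in Case~B.

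After you rotate and translate to align the lower lattice, the configuration near $\partial^+_1 Q^\nu_T$ is exactly $\mathscr{L}(z^+)+\sigma_T$ with $\sigma_T\neq 0$. This lattice shares \emph{no} atoms with $\mathscr{L}(z^+)$: for generic (and in fact for every sufficiently small) $\sigma_T$, every single atom in your annular strip is a ``defect'' relative to the target. The averaging device of Lemma~\ref{lemma:cutoff} therefore has nothing to average---all layers are equally bad. Any gluing of $\mathscr{L}(z^+)+\sigma_T$ to $\mathscr{L}(z^+)$ across an interface of length $\sim T$ (you need a buffer, since nearest neighbours are at distance $|\sigma_T|<1$) costs of order $\varphi_{\rm hex}\cdot T$, not $C\delta T+o(T)$. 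So your energy estimate fails by a full order of $T$.

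The paper avoids a cut-off altogether. Having already fixed $\theta_T^\pm=\theta^\pm$, it shifts the two sublattices $X_T^\pm\subset\mathscr{L}(\theta^\pm,\tau_T^\pm,1)$ \emph{separately} by $e^{i\theta^\pm}(\tau^\pm-\tau_T^\pm)$, landing exactly in $\mathscr{L}(z^\pm)$. Intra-lattice bonds are trivially preserved; the only danger is that inter-lattice bonds break or forbidden pairs appear. This is ruled out by the closedness Lemma~\ref{lemma:translation}: because $\theta^+-\theta^-\in\mathcal{G}_{\mathbb{A}}$, the set of pairs at distance $1$ (resp.\ $<1$, $>1$) between $\mathscr{L}(z^+_T)$ and $\mathscr{L}(z^-_T)$ is \emph{uniformly} stable under the small translations $\tau_T^\pm\to\tau^\pm$. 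Hence the shifted configuration is admissible for $\bar\varphi$ with energy increased only by $O(1)$ (from a bounded extension region), not $O(T)$. This periodicity/closedness argument is the missing ingredient; no cut-off in the style of Lemma~\ref{lemma:cutoff} can substitute for it in the sticky-disk setting.
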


\begin{proposition}\label{proposition:existence} For every $z^+,z^- \in \mathcal{Z}$,   $\nu \in \mathbb{S}^1$, and every sequence $\lbrace y_T\rbrace_T \in \R^2$ there exists
\begin{align}\label{eq:existenceoflimit}
\bar{\varphi}(z^+,z^-,\nu) = \lim_{T\to +\infty}\frac{1}{T}\min\left\{E_1\big(X_T,Q^\nu_T(y_T)\big)\colon \,   X_T = \mathscr{L}(z^\pm) \text{ \rm on } \partial_1^\pm Q^\nu_T(y_T) \right\}
\end{align}
and is independent of $\lbrace y_T\rbrace_T$. In particular, we get $\varphi  \equiv \bar{\varphi}$, and the statement of  Proposition \ref{proposition:existence-original} holds. 
\end{proposition}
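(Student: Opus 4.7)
\medskip

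The plan is to reduce Proposition \ref{proposition:existence-original} to Proposition \ref{proposition:existence} via the scaling relation of Lemma \ref{lemma:propertiesofE}(ii): setting $T = \rho/\varepsilon$ and $y_T = x_0/\varepsilon$, the minimization in \eqref{def:varphi} equals $T^{-1} g(T,y_T)$, where $g(T,y) := \min\{E_1(X_T,Q^\nu_T(y)) : X_T = \mathscr{L}(z^\pm) \text{ on } \partial^\pm_1 Q^\nu_T(y)\}$; hence existence of the limit and its independence from $\{y_T\}_T$ in \eqref{eq:existenceoflimit} immediately implies independence from both $x_0$ and $\rho$, and the identification $\varphi \equiv \bar{\varphi}$. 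Since $g(T,y_T)$ is an admissible value in the infimum defining $\bar{\varphi}$ in \eqref{eq barphi}, the inequality $\liminf_{T \to \infty} g(T,y_T)/T \ge \bar{\varphi}$ is automatic; the entire content is to prove the matching upper bound $\limsup_{T \to \infty} g(T,y_T)/T \le \bar{\varphi}$ uniformly in $\{y_T\}_T$.

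I would split the upper bound into two regimes according to the natural dichotomy provided by Lemma \ref{lemma: touching}. In the \emph{generic regime}, where either $\{z^+,z^-\} \cap \{\mathbf{0}\} \neq \emptyset$ or $\bar{\varphi}(z^+,z^-,\nu) = \varphi_{\mathrm{hex}}(e^{-i\theta^+}\nu) + \varphi_{\mathrm{hex}}(e^{-i\theta^-}\nu)$, the explicit construction in Lemma \ref{lemma:vacuumirrational} (using two half-space configurations meeting at the interface) provides a competitor for $g(T,y_T)$ \emph{for every} $y_T \in \mathbb{R}^2$, with
$$ g(T,y_T)/T \le \varphi_{\mathrm{hex}}(e^{-i\theta^+}\nu) + \varphi_{\mathrm{hex}}(e^{-i\theta^-}\nu) + C/T = \bar{\varphi}(z^+,z^-,\nu) + C/T, $$
yielding the limit with convergence rate $O(1/T)$.

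In the \emph{rigid regime}, setting $\eta := \varphi_{\mathrm{hex}}(e^{-i\theta^+}\nu) + \varphi_{\mathrm{hex}}(e^{-i\theta^-}\nu) - \bar{\varphi}(z^+,z^-,\nu) > 0$, Lemma \ref{lemma: calculation} identifies $\bar{\varphi}$ with $\Phi$, and Lemma \ref{lemma: touching} furnishes an optimal sequence $\{X^*_T\}_T$ on cubes $Q^\nu_T(y^*_T)$ with $X^*_T \subset \mathscr{L}(z^+_T) \cup \mathscr{L}(z^-_T)$, with $z^\pm_T = (\theta^\pm_T,\tau^\pm_T,1) \to z^\pm$ and $\theta^+_T - \theta^-_T = \theta^+ - \theta^- = \theta_0 \in \mathcal{G}_{\mathbb{A}}$ \emph{exactly}. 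Writing $e^{i\theta_0} = v_1/v_2$ with $v_1,v_2 \in \mathscr{L}$ and $|v_1|,|v_2| \le C_\eta$, the common sublattice $\Lambda_T := e^{i\theta^-_T}\mathscr{L} \cap e^{i\theta^+_T}\mathscr{L}$ of translations preserving both $\mathscr{L}(z^+_T)$ and $\mathscr{L}(z^-_T)$ has fundamental domain of diameter bounded by a constant $C'_\eta$ uniformly in $T$. Given an arbitrary target center $y_T$, I would first apply the rotational invariance of Lemma \ref{lemma:propertiesofE}(i) to align $\theta^\pm_T$ with $\theta^\pm$ exactly (which is compatible since the difference is conserved), then choose $t_T \in \Lambda_T$ with $|y^*_T - y_T - t_T| \le C'_\eta$ and translate the optimal configuration by $t_T$; the translated cube $Q^\nu_T(y^*_T - t_T)$ lies within bounded Hausdorff distance of $Q^\nu_T(y_T)$, and a cut-off construction in the bounded transition annulus analogous to Steps 4--5 of Lemma \ref{lemma:cutoff}---together with a further transition of length $T$ and width $o(1)$ to absorb the remaining $\tau$-mismatch (which is $o(1)$ since $\tau^\pm_T \to \tau^\pm$)---produces an admissible competitor for $g(T,y_T)$ whose energy exceeds $E_1(X^*_T,Q^\nu_T(y^*_T))$ by $o(T)$.

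The principal obstacle is precisely this last cut-off step in the rigid regime: because the asymptotic boundary lattice $\mathscr{L}(z^\pm)$ may differ from $\mathscr{L}(z^\pm_T)$ by a (small but nonzero) translation, reconciling them over a boundary of length $T$ is delicate in the extremely brittle sticky-disk setting, where even modest misalignments can propagate. The rigidity provided by $\theta^+_T - \theta^-_T = \theta^+ - \theta^-$ and the quantitative convergence $|\tau^\pm_T - \tau^\pm| \to 0$ ensure, however, that the net additional energy is sublinear in $T$. Combining both regimes gives $\limsup_{T \to \infty} g(T,y_T)/T \le \bar{\varphi}$ for every $\{y_T\}_T$, completing \eqref{eq:existenceoflimit} and hence Proposition \ref{proposition:existence-original}.
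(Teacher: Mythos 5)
Your reduction of Proposition \ref{proposition:existence-original} to \eqref{eq:existenceoflimit} by scaling, the observation that the $\liminf$ inequality is automatic, and the idea of relocating near-optimal configurations by vectors of the coincidence site lattice $e^{i\theta^+}\mathscr{L}\cap e^{i\theta^-}\mathscr{L}$ are all in line with the paper (the latter is exactly the mechanism behind Proposition \ref{proposition:xindependence}). However, there is a genuine gap in your rigid regime: $\bar{\varphi}$ is defined in \eqref{eq barphi} as a $\liminf$ in $T$, so the ``optimal sequence'' $\{X^*_T\}_T$ you invoke from Lemma \ref{lemma: touching} only satisfies $\frac{1}{T}E_1(X^*_T,Q^\nu_T(y^*_T))\to\bar{\varphi}(z^+,z^-,\nu)$ along some subsequence $S_j\to+\infty$; for $T$ outside this subsequence the quantity $\frac{1}{T}E_1(X^*_T,Q^\nu_T(y^*_T))$ may be far above $\bar{\varphi}(z^+,z^-,\nu)$. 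Transporting $X^*_T$ to the prescribed center $y_T$ therefore only yields $\liminf_{T\to+\infty} g(T,y_T)/T\le\bar{\varphi}(z^+,z^-,\nu)$, not the claimed $\limsup$ bound, and hence does not prove existence of the limit.

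What is missing is a subadditivity (tiling) argument, which is the core of the paper's proof: fix $S=T_j$ large from the good subsequence, partition the interface of an arbitrary large cube $Q^\nu_T(y_T)$ into $\lfloor T/S\rfloor$ cubes of side $S$ centered on the line $\{\langle x-y_T,\nu\rangle=0\}$, place in each a center-adjusted near-minimizer of energy at most $S(\bar{\varphi}(z^+,z^-,\nu)+\eta_S)$ (this is where your translation idea, made uniform in the center as in Proposition \ref{proposition:xindependence}, enters), separate the small cubes by emptied buffer zones of bounded size so that the assembled configuration has finite energy, and fill the remainder with the two pure half-lattices. This gives $g(T,y_T)\le\lfloor T/S\rfloor\, S(\bar{\varphi}(z^+,z^-,\nu)+\eta_S)+CT/S+CS$ for \emph{every} $T$, and letting first $T\to+\infty$ and then $S\to+\infty$ yields the required $\limsup$ bound. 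A secondary issue: aligning $\theta^\pm_T$ with $\theta^\pm$ by a global rotation also rotates the cube and the normal, so one must in addition compare cell problems on $Q^{\nu_T}_T$ and on $Q^\nu_T$ with $\nu_T\to\nu$; this requires its own construction (Step 2 of Lemma \ref{lemma:thetaT-rotation}) and is passed over in your proposal.
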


 We point out that Lemma \ref{lemma: calculation}, Proposition \ref{proposition:existence}, and  Lemma \ref{lemma:cutoff} conclude the proof of Proposition \ref{prop, psi}. Subsection \ref{sec: calculation} is devoted to the proof of Lemma \ref{lemma: calculation}. Afterwards, in Subsection \ref{sec: properties}, we show  Proposition \ref{proposition:existence} (which particularly yields Proposition \ref{proposition:existence-original}) and we prove further properties of the density $\varphi$ stated in Theorem \ref{prop: properties of varphi}. Then, all proofs of our main results announced in Subsection \ref{subsection:limitfunctional} are concluded.

\subsection{Converging and fixed boundary values}\label{sec: calculation}
 
This subsection is devoted to the proof of Lemma \ref{lemma: calculation}. By definition it is clear that  $\Phi(z^+,z^-,\nu) \leq \bar{\varphi}(z^+,z^-,\nu)$ for all  $z^+,z^- \in \mathcal{Z}$ and  $\nu \in \mathbb{S}^1$.  To see \eqref{eq 5N}, it therefore suffices  to prove the  opposite inequality
\begin{align}\label{eq: real inequality}
\Phi(z^+,z^-,\nu) \ge \bar{\varphi}(z^+,z^-,\nu).
\end{align}
Moreover, we observe that if $z^+ = \mathbf{0}$ or $z^- = \mathbf{0}$, then Lemma \ref{lemma:vacuumirrational}(i) and the continuity of $\varphi_{\rm hex}$ imply $\Phi(z^+,z^-,\nu) =  \bar{\varphi}(z^+,z^-,\nu) = \varphi_{\mathrm{hex}}(e^{-i\theta} \nu)$, where $\theta$ is the angle corresponding to $z^+$ or $z^-$, respectively. Therefore, it suffices to treat the case $z^\pm = (\theta^\pm,\tau^\pm,1) \in  \mathcal{Z}$. To this end, it is crucial that converging boundary values as in \eqref{eq: Phi def} can be replaced by fixed ones. We split the analysis into two steps by first addressing the rotations and then the translations. 
We start with the rotations.  In view of  Lemma \ref{lemma: touching}, we may without restriction assume that $\theta^+ - \theta^- \in {\mathcal{G}_{\mathbb{A}}}$ since otherwise $\Phi(z^+,z^-,\nu) \ge {\varphi}_{\rm hex}(e^{-i\theta^+}\nu) + {\varphi}_{\rm hex}(e^{-i\theta^-}\nu)$ and \eqref{eq: real inequality} follows from Lemma \ref{lemma:vacuumirrational}(ii). Lemma \ref{lemma: touching} already implies that the difference of rotations $\theta^+_T - \theta^-_T$ is constant in $T$. The next lemma shows that also $\theta_T^+$ and $\theta_T^-$ can be chosen to be   constant.

\begin{lemma}[Fixed rotations]\label{lemma:thetaT-rotation} Consider $z_T^\pm=(\theta_T^\pm,\tau^\pm_T,1) \in \mathcal{Z}$ such that $\theta_T^+-\theta_T^-= \theta^+-\theta^-$ for all $T>0$ for some $\theta^+,\theta^-\in \mathbb{A}$  and $\theta_T^\pm \to \theta^\pm$.  Let $\nu \in \mathbb{S}^1$. Then, there holds 
\begin{align*}
\liminf_{T \to +\infty}  & \frac{1}{T} \inf\big\{E_1\big(X_T,Q^\nu_T(y_T)\big) \colon \,  y_T \in \mathbb{R}^2, \, X_T= \mathscr{L}(z^\pm_T) \text{ on } \partial^\pm_1 Q^\nu_T(y_T) \big\} \\
& \ \ \ \ge \liminf_{T \to +\infty} \frac{1}{T} \inf\big\{E_1\big(X_T,Q^\nu_T(y_T)\big) \colon \,  y_T \in \mathbb{R}^2, \, X_T =  \mathscr{L}(\hat{z}^\pm_T) \text{ on } \partial^\pm_1 Q^\nu_T(y_T) \big\},
\end{align*}
where $\hat{z}^\pm_T := (\theta^\pm,\tau^\pm_T,1)$. 
\end{lemma}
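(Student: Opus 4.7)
The argument splits naturally into two regimes, according to whether the LHS attains the ``vacuum bound'' $V:=\varphi_{\mathrm{hex}}(e^{-i\theta^+}\nu)+\varphi_{\mathrm{hex}}(e^{-i\theta^-}\nu)$ or falls strictly below it. If the LHS is at least $V$, the conclusion is immediate: Lemma \ref{lemma:vacuumirrational}(ii) applied to the sequence $\hat z_T^\pm$ (whose rotation angles are constantly $\theta^\pm$) yields the RHS $\leq V$ up to an $O(1/T)$ error, and hence RHS $\leq$ LHS. The remainder of the proof therefore treats the regime where LHS $\leq V-\eta$ for some $\eta>0$.

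In this regime, along an optimizing subsequence I would take near-minimizers $X_T$ with centers $y_T$ satisfying $X_T=\mathscr{L}(z_T^\pm)$ on $\partial_1^\pm Q^\nu_T(y_T)$. By Lemma \ref{lemma:reduction} we may assume $X_T\subset\mathscr{L}(z_T^+)\cup\mathscr{L}(z_T^-)$ with grain boundaries $\partial X^\pm$ given by simple paths joining the lateral faces of $Q^\nu_T(y_T)$, of total length at most $8T$ by Step 2 of the proof of Lemma \ref{lemma: touching}; in particular, outside an $O(1)$-neighborhood of these one-dimensional grain boundaries, $X_T$ coincides with the full lattice. Using the hypothesis $\theta_T^+-\theta_T^-=\theta^+-\theta^-$, the offset $\alpha_T:=\theta_T^+-\theta^+=\theta_T^--\theta^-$ is common to both signs; I would then set $Y_T:=e^{-i\alpha_T}X_T$ and $\tilde y_T:=e^{-i\alpha_T}y_T$. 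By Lemma \ref{lemma:propertiesofE}(i) this preserves the energy, $E_1(Y_T,Q^{e^{-i\alpha_T}\nu}_T(\tilde y_T))=E_1(X_T,Q^\nu_T(y_T))$, and $Y_T\subset\mathscr{L}(\hat z_T^+)\cup\mathscr{L}(\hat z_T^-)$ with $Y_T=\mathscr{L}(\hat z_T^\pm)$ on the rotated fat boundary $\partial_1^\pm Q^{e^{-i\alpha_T}\nu}_T(\tilde y_T)$.

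The core of the proof is then to convert $Y_T$ into an admissible competitor $\tilde X_T$ for the RHS on the axis-aligned cube $Q^\nu_T(\tilde y_T)$. I would choose a buffer width $w_T$ with $\max(T\alpha_T,1)\ll w_T\ll T$, using an averaging argument over concentric shells in the spirit of Step 4 of the proof of Lemma \ref{lemma:cutoff}, so that the inner rotated cube $Q^{e^{-i\alpha_T}\nu}_{T-2w_T}(\tilde y_T)$ fits inside $Q^\nu_T(\tilde y_T)$, and then define $\tilde X_T:=Y_T$ on $Q^{e^{-i\alpha_T}\nu}_{T-2w_T}(\tilde y_T)$ and $\tilde X_T:=\mathscr{L}(\hat z_T^\pm)$ in the annular shell $Q^\nu_T(\tilde y_T)\setminus Q^{e^{-i\alpha_T}\nu}_{T-2w_T}(\tilde y_T)$, with upper/lower lattice chosen according to $\mathrm{sign}\langle\cdot-\tilde y_T,\nu\rangle$. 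Since both components are contained in the same pair of target lattices $\mathscr{L}(\hat z_T^\pm)$, the minimum-distance constraint is automatically preserved in the interior and in the bulk of the shell.

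For the energy estimate, the interior contribution is controlled by $E_1(Y_T,Q^{e^{-i\alpha_T}\nu}_T(\tilde y_T))=E_1(X_T,Q^\nu_T(y_T))$, possibly reduced by neighbor-gains at the seam, while the bulk of the shell contributes zero because its atoms have all six neighbors. Residual contributions come only from an $O(1)$-neighborhood of the seam $\partial Q^{e^{-i\alpha_T}\nu}_{T-2w_T}(\tilde y_T)$: those due to missing atoms in $Y_T$ occur only where $\partial Y^\pm$ crosses the seam and amount to $O(1)$, and those due to the disagreement between the rotated grain boundary $\partial Y^\pm$ and the nominal shell interface $\{\langle\cdot-\tilde y_T,\nu\rangle=0\}$ are made $o(T)$ by the averaging choice of $w_T$ together with the $O(T)$ length bound on $\partial X^\pm$. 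The main technical obstacle is precisely this last point: ensuring that pasting $\mathscr{L}(\hat z_T^+)$ next to $\mathscr{L}(\hat z_T^-)$ along the nominal straight interface in the shell remains admissible (no two atoms at distance $<1$) even in the strip where the neighbouring $Y_T$ has its grain boundary in a different place, which requires a local modification removing a controlled number of atoms per unit length of the mismatch; the rigidity provided by Lemma \ref{lemma:reduction} is exactly what guarantees that such a modification costs only $o(T)$. Taking the liminf then yields $\tilde X_T$ admissible for the RHS with $E_1(\tilde X_T,Q^\nu_T(\tilde y_T))\leq E_1(X_T,Q^\nu_T(y_T))+o(T)$, concluding the proof.
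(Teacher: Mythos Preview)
Your rotation step—setting $\alpha_T:=\theta_T^+-\theta^+=\theta_T^--\theta^-$ and passing from $(X_T,z_T^\pm,\nu)$ to $(Y_T,\hat z_T^\pm,\nu_T)$ with $\nu_T:=e^{-i\alpha_T}\nu$ via Lemma~\ref{lemma:propertiesofE}(i)—is exactly the paper's Step~1. The divergence, and the gap, is in how you pass from the rotated normal $\nu_T$ back to $\nu$.

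You shrink the rotated cube to $Q^{\nu_T}_{T-2w_T}(\tilde y_T)$ and fill the surrounding shell with $\mathscr{L}(\hat z_T^\pm)$ split along the nominal line $\{\langle\cdot-\tilde y_T,\nu\rangle=0\}$. The trouble is that on the \emph{lateral} boundary of the smaller cube you have no boundary information whatsoever: the boundary conditions of $Y_T$ live on $\partial_1^\pm Q^{\nu_T}_T(\tilde y_T)$, not on the smaller cube. The grain boundary $\partial Y^\pm$ may therefore cross each lateral seam at an arbitrary height $h$, producing a mismatch segment of length $|h|$ on which your shell has, say, $\mathscr{L}(\hat z_T^+)$ while the adjacent $Y_T$ is $\mathscr{L}(\hat z_T^-)$; this can force removal of $O(|h|)$ atoms to keep distances $\ge 1$. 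The $O(T)$ \emph{length} bound on $\partial X^\pm$ does not bound $|h|$ (a path of length $\le 8T$ connecting the two lateral faces at height $O(1)$ can still reach height $O(T)$), and averaging over $w_T$ does not help either: the total mismatch area in the annulus is not controlled by the grain-boundary length. Your claim that Lemma~\ref{lemma:reduction} supplies ``rigidity'' making this $o(T)$ is not substantiated. The case split into $\ge V$ versus $<V-\eta$ is also unnecessary.

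The paper's construction avoids this entirely and is more elementary. Instead of going inward, it goes \emph{outward}: it keeps $Y_T$ on the full rotated cube $Q^{\nu_T}_T(\tilde y_T)$, embeds this into a slightly larger axis-aligned cube $Q^\nu_{(1+2\delta)T}(\tilde y_T)$, and fills the exterior with $\mathscr{L}(\hat z_T^\pm)$ in $\{\pm\langle\nu,\cdot-\tilde y_T\rangle\ge 5\}$, emptying only two small corner tubes $A^\delta_T$ of diameter $O(\delta T)$. The point is that the gluing now uses the \emph{known} boundary data $Y_T=\mathscr{L}(\hat z_T^\pm)$ on $\partial_1^\pm Q^{\nu_T}_T(\tilde y_T)$: away from the strip $\{|\langle\nu_T,\cdot-\tilde y_T\rangle|\le 5\}$ the lateral and top/bottom boundaries of the rotated cube already carry the correct lattice, so the only mismatch is confined to where this strip meets the lateral faces, a region automatically of size $O(\delta T)$ since $|\nu_T-\nu|<\delta$. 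One then takes $T\to\infty$ and finally $\delta\to 0$. No case distinction and no appeal to Lemma~\ref{lemma:reduction} are needed.
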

We defer the proof and proceed with the properties of translations.  Again consider $z^\pm=(\theta^\pm,\tau^\pm,1) \in \mathcal{Z}$ with $\theta^+ - \theta^- \in {\mathcal{G}_{\mathbb{A}}}$. Recall by \eqref{eq: good angles} that  there holds $e^{i(\theta^+ - \theta^-)}= \frac{v_1}{v_2}$ for $v_1,v_2 \in \mathscr{L}\cap \mathbb{C}$ with $|v_1|=|v_2|$. We consider the \emph{coincidence site lattice}
\begin{align}\label{eq: a,b}
e^{i\theta^+}\mathscr{L} \cap e^{i\theta^-}\mathscr{L} = \{ ja +kb\colon j,k\in \mathbb{Z} \},
\end{align}
where $a,b \in e^{i\theta^+}\mathscr{L} \cap e^{i\theta^-}\mathscr{L} $ are spanning vectors of minimal length. Then, for later purposes, we define the \emph{fundamental parallelogram} of $e^{i\theta^+}\mathscr{L} \cap e^{i\theta^-}\mathscr{L}$ by
\begin{align}\label{eq: fundamental parallelogram}
{P}_{\theta^+,\theta^-} = \big\{ \lambda_1a+\lambda_2b :  \, 0\leq \lambda_1 < 1, \,  0 \leq \lambda_2 < 1\big\}.
\end{align}
We will use the following uniform closedness property of the set of touching points  between sequences of translates of two perfect lattices. 
\begin{lemma}[Closedness of touching points]\label{lemma:translation} 
Consider $z_n^\pm = (\theta^\pm,\tau^\pm_n,1) \in \mathcal{Z}$ for $n \in \N$ and $z^\pm = (\theta^\pm,\tau^\pm,1) \in \mathcal{Z}$ such that  $\theta^+ -\theta^-\in {\mathcal{G}_{\mathbb{A}}}$ and $\tau^\pm_n \to \tau^\pm$. For $x \in \mathscr{L}(z^+)$, $y \in \mathscr{L}(z^-)$ we set 
\begin{align*}
  x_n^+ = x + e^{i\theta^+}(\tau^+_n - \tau^+) \in \mathscr{L}(z^+_n), \qquad 
	y_n^- = y + e^{i\theta^-}(\tau^-_n - \tau^-) \in \mathscr{L}(z^-_n). 
\end{align*}
Then, there is an $n_0 \in \N$ such that for all $n \ge n_0$ and all $x \in \mathscr{L}(z^+)$, $y \in \mathscr{L}(z^-)$ the following implications are  verified:  
\begin{align*}
{\rm (i)} \ \ |x - y| < 1 \implies |x_n^+ - y^-_n| < 1 
\ \ \ \ \text{and} \ \ \ \ 
{\rm (ii)} \ \ |x - y| > 1 \implies |x_n^+ - y^-_n| > 1. 
\end{align*}
In particular, $|x_n^+ - y^-_n| =1 $ for some $n \ge n_0$ implies $|x-y|=1$. 
\end{lemma}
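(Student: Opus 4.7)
My plan is to first reduce the statement to a question about a single lattice and a translation vector. Setting $\Lambda := e^{i\theta^+}\mathscr{L} + e^{i\theta^-}\mathscr{L}$, $c := e^{i\theta^+}\tau^+ - e^{i\theta^-}\tau^-$, and $c_n := e^{i\theta^+}\tau_n^+ - e^{i\theta^-}\tau_n^-$, for any $x \in \mathscr{L}(z^+)$ and $y \in \mathscr{L}(z^-)$ one checks directly that $x - y = v + c$ for some $v \in \Lambda$, and that the perturbed difference becomes $x_n^+ - y_n^- = v + c_n$. Since $\tau_n^\pm \to \tau^\pm$ forces $c_n \to c$, both implications (i) and (ii) reduce to the uniform claim that, for every $v \in \Lambda$, the strict inequalities $|v+c| < 1$ and $|v+c| > 1$ persist when $c$ is replaced by $c_n$ for all $n$ sufficiently large.

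The key step is to argue that $\Lambda$ is actually a lattice and not a dense additive subgroup of $\R^2$. This is where the hypothesis $\theta^+ - \theta^- \in {\mathcal{G}_{\mathbb{A}}}$ enters: writing $e^{i(\theta^+ - \theta^-)} = v_1/v_2$ with $v_1, v_2 \in \mathscr{L} \setminus \{0\}$ and using that the Eisenstein ring $\mathscr{L} = \mathbb{Z}[\omega]$ is a principal ideal domain, the ideal $v_1 \mathscr{L} + v_2 \mathscr{L}$ equals $g\mathscr{L}$ for a greatest common divisor $g \in \mathscr{L}$, whence
\begin{align*}
  e^{-i\theta^-}\Lambda = (v_1/v_2)\mathscr{L} + \mathscr{L} = (1/v_2)(v_1\mathscr{L} + v_2\mathscr{L}) = (g/v_2)\mathscr{L}
\end{align*}
is a scaled rotated copy of $\mathscr{L}$. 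Equivalently, one may simply appeal to the existence of the fundamental parallelogram $P_{\theta^+,\theta^-}$ introduced above. In particular, $\Lambda \cap \{w\colon |w+c|\le 2\}$ is finite, so only finitely many $v \in \Lambda$ produce a value $|v+c| \in [0,2]$.

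This discreteness then yields a uniform gap: the finite set $\{|v+c| : v \in \Lambda,\ |v+c| \le 2,\ |v+c| \ne 1\}$ has positive distance $2\delta > 0$ from $1$, and for $n \ge n_0$ with $|c_n - c| < \delta$ one verifies by a short case distinction (whether $|v+c|$ lies in $[0,1-2\delta]$, in $[1+2\delta, 2]$, or in $(2, +\infty)$) that strict inequalities are preserved uniformly in $v$. The contrapositives of (i) and (ii) then give the final statement that $|x_n^+ - y_n^-| = 1$ forces $|x-y| = 1$.

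The conceptual obstacle is the passage from a pointwise statement to a \emph{uniform} $n_0$ valid simultaneously for all (infinitely many) pairs $(x,y)$. This is precisely what requires the commensurability hypothesis: without $\theta^+ - \theta^- \in {\mathcal{G}_{\mathbb{A}}}$ the set $\Lambda$ would be dense, values $|v+c|$ would accumulate at $1$, and no uniform $n_0$ could exist. Once discreteness is in hand, the remainder of the argument is essentially soft.
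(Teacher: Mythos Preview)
Your proof is correct and the core idea---that commensurability forces the set of possible differences $x-y$ to be discrete, so only finitely many values of $|x-y|$ near $1$ occur and a uniform gap exists---is the same as the paper's. The packaging differs slightly: the paper works with the \emph{intersection} lattice $e^{i\theta^+}\mathscr{L}\cap e^{i\theta^-}\mathscr{L}$ and reduces to $y$ in its fundamental parallelogram $P_{\theta^+,\theta^-}$, then shifts back by a coincidence vector for general $y$; you instead parametrize the differences $x-y$ directly by the \emph{sum} lattice $\Lambda=e^{i\theta^+}\mathscr{L}+e^{i\theta^-}\mathscr{L}$ and invoke the PID structure of the Eisenstein integers to see that $\Lambda$ is a lattice. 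Your route is arguably cleaner, since it isolates exactly the quantity $v+c$ that matters and makes the uniform gap $2\delta$ explicit. One small remark: your aside ``equivalently, one may simply appeal to $P_{\theta^+,\theta^-}$'' is a bit oblique, since that is the fundamental domain of the intersection, not of $\Lambda$; the implication that a rank-two intersection forces the sum to be discrete is true but would need a sentence. Your direct PID argument is the more self-contained justification.
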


We again defer the proof and  now proceed with the proof of Lemma \ref{lemma: calculation}.
 
\begin{proof}[Proof of Lemma \ref{lemma: calculation}] Let $z^+,z^- \in \mathcal{Z}$, $\nu \in \mathbb{S}^1$. Recalling the discussion at the beginning of the subsection, we note that it suffices to show inequality \eqref{eq: real inequality}. Moreover, we can assume that $z^\pm= (\theta^\pm,\tau^\pm,1)$  and  that $\theta^+ - \theta^- \in {\mathcal{G}_{\mathbb{A}}}$.

 Let $\lbrace X_T\rbrace_T$ be an optimal sequence for $\Phi$ with corresponding centers $\lbrace y_T\rbrace_T$ of the cubes, i.e., 
\begin{align}\label{eq:recoverypsi}
\liminf_{T\to+\infty} \frac{1}{T}E_1\big(X_T,Q^\nu_T(y_T)\big) = \Phi(z^+,z^-,\nu) <+\infty.
\end{align}
By applying   Lemma \ref{lemma: touching}, we can suppose that $X_T = X_T^+ \cup X_T^-$ with $X_T^\pm \subset \mathscr{L}(z_T^\pm)$ and $X_T = \mathscr{L}(z_T^\pm)$ on $\partial_1^\pm Q_T^\nu(y_T)$, where $z^\pm_T = (\theta_T^\pm,\tau^\pm_T,1) \to z^\pm$. By \eqref{eq: difference angle} and Lemma \ref{lemma:thetaT-rotation} we can also assume that $\theta_T^\pm = \theta^\pm$ for all $T$.  We distinguish the two cases  (a) $\{(x,y) \in \mathscr{L}(z^+) \times \mathscr{L}(z^-) \colon \, |x - y| = 1 \} = \emptyset$ and (b) $\{(x,y) \in \mathscr{L}(z^+) \times \mathscr{L}(z^-) \colon \, |x - y| = 1 \} \neq \emptyset$. 

\noindent \emph{Case $\mathrm{(a)}$:  $\{(x,y) \in \mathscr{L}(z^+) \times \mathscr{L}(z^-) \colon \, |x - y| = 1 \} = \emptyset$.} By Lemma \ref{lemma:translation}  we can assume that $\{(x,y) \in \mathscr{L}(z^+_T) \times \mathscr{L}(z^-_T) \colon \, |x - y| = 1 \} = \emptyset$ for all $T$. Thus,  we get $\mathcal{N}(x) \cap X_T^- = \emptyset$ for all $x \in X_T^+$ and viceversa. Therefore,  by \eqref{def:energyneighbourhood} we obtain
\begin{align*}
 \Phi(z^+,z^-,\nu)  = \liminf_{T\to+\infty} \frac{1}{T}E_1\big(X_T,Q^\nu_T(y_T)\big) = \liminf_{T\to+\infty} \Big(\frac{1}{T}E_1\big(X^+_T,Q^\nu_T(y_T)\big) +\frac{1}{T}E_1\big(X^-_T,Q^\nu_T(y_T)\big) \Big).
\end{align*}
Note that  $X_T^\pm= \mathscr{L}(z^\pm_T)=\mathscr{L}(\theta^\pm,\tau^\pm_T,1)$  on $\partial^\pm_1 Q^\nu_T(y_T)$ and $X_T^\pm =\emptyset$ on $\partial^\mp_1 Q^\nu_T(y_T)$. 
By Lemma  \ref{lemma:vacuumirrational}(i), the energy on each sublattice $X^+_T$ on $X^-_T$ can be estimated separately, and we obtain 
\begin{align}\label{eq: N6}
 \Phi(z^+,z^-,\nu)  \ge {\varphi}_{\rm hex}\big(e^{-i\theta^+}\nu\big) + {\varphi}_{\rm hex}\big(e^{-i\theta^-}\nu\big).
 \end{align}
 Then, Lemma  \ref{lemma:vacuumirrational}(ii) and \eqref{eq barphi} imply $\Phi(z^+,z^-,\nu) \ge  \bar{\varphi}(z^+,z^-,\nu)$  and $\bar{\varphi}(z^+,z^-,\nu) = {\varphi}_{\rm hex}\big(e^{-i\theta^+}\nu\big) + {\varphi}_{\rm hex}\big(e^{-i\theta^-}\nu\big)$.  This concludes the proof of \eqref{eq: real inequality} in  case (a). We also point out that the property stated below \eqref{eq 5N} holds. (In case $\theta^+ - \theta^- \notin {\mathcal{G}_{\mathbb{A}}}$, \eqref{eq: N6} is immediate from \eqref{eq: touching lattices}.) 

\noindent \emph{Case $\mathrm{(b)}$:  $\{(x,y) \in \mathscr{L}(z^+) \times \mathscr{L}(z^-) \colon \, |x - y| = 1 \} \neq \emptyset$.} Our goal is to construct a new competitor $\tilde{X}_T = \tilde{X}_T^+ \cup \tilde{X}_T^-$ such that $\tilde{X}_T^\pm \subset \mathscr{L}(z^\pm)$, $\tilde{X}^\pm_T =  \mathscr{L}(z^\pm)$  on  $\partial^\pm_1 Q_{T+22}^\nu(y_T)$,  and
\begin{align}\label{ineq:lowerboundXtilde}
E_1\big(\tilde{X}_T,Q^\nu_{T+22}(y_T)\big) \leq E_1\big(X_T,Q^\nu_T(y_T)\big)+C.
\end{align}
Once this is established,    by \eqref{eq barphi} and \eqref{eq:recoverypsi} we clearly get
\begin{align*}
\Phi(z^+,z^-,\nu) = \liminf_{T\to+\infty}  \frac{1}{T}  E_1\big(X_T,Q^\nu_T(y_T)\big) \ge \liminf_{T \to +\infty} \frac{1}{T+ 22  }E_1\big(\tilde{X}_T,Q^\nu_{T+ 22  }(y_T)\big) \ge  \bar{\varphi}(z^+,z^-,\nu).
\end{align*}
To construct $\tilde{X}_T$, we first extend $X_T$ to $\hat{X}_T$  by  
  \begin{align}\label{def:Xhat}
  \hat{X}_T = \begin{cases} X_T &\text{on } Q^\nu_{T+10}(y_T)\setminus A_T, \\
  \mathscr{L}(z^\pm_T) &\text{on } \{x\colon \pm\langle\nu,x-y_T\rangle \geq 2\} \cap \big( Q^\nu_{T+34}(y_T)\setminus (Q^\nu_{T+10}(y_T) \cup A_T)\big), \\
     \emptyset &\text{on } A_T  \cup   \big( \mathbb{R}^2 \setminus  Q^\nu_{T+34}(y_T)\big),  
  \end{cases}
  \end{align}
where $A_T=Q^\nu_{10}(y_T+(T/2)\nu^\perp)\cup Q^\nu_{10}(y_T-(T/2)\nu^\perp) \cup (\{|x\colon\langle \nu,x-y_T\rangle|< 2\} \setminus Q^\nu_{T+10}(y_T))$. By definition, we get  $E_1(\hat{X}_T)<+\infty$ since $|x-y| \geq 1$ for all $x,y \in \hat{X}_T$, $x\neq y$. Note that we can write $\hat{X}_T  = \hat{X}_T^+  \dot\cup \hat{X}_T^-$, where $\hat{X}_T^\pm \subset \mathscr{L}(z^\pm_T)$ and $\hat{X}^\pm_T =  \mathscr{L}(z^\pm_T)$  on  $\partial^\pm_1 Q_{T+22}^\nu(y_T)$. We claim that
  \begin{align}\label{ineq:XhattauXT}
  E_1\big(\hat{X}_T,Q^\nu_{T+32}(y_T)\big) \leq E_1\big(X_T,Q^\nu_T(y_T)\big) +C.
\end{align}   
In fact, if there exists $x\in  \hat{X}_T \cap  Q^\nu_{T}(y_T)$  such that $\#(\mathcal{N}(x) \cap \hat{X}_T) < \#(\mathcal{N}(x) \cap X_T)$, then necessarily $x\in (A_T)_1\cap  Q^\nu_{T}(y_T)$. However, $\mathcal{L}^2((A_T\cap  Q^\nu_{T}(y_T))_{2}) \leq C$ and therefore, due to Lemma \ref{lemma:propertiesofE}(v), we get 
\begin{align}\label{ineq:insideQTp5}
\#\big\{x \in \hat{X}_T \cap  Q^\nu_{T}(y_T) \colon \, \#(\mathcal{N}(x) \cap \hat{X}_T) < \#(\mathcal{N}(x) \cap X_T)\big\} \leq C.
\end{align}
In a similar fashion,  if $x\in \hat{X}_T\cap (Q^\nu_{T+32}(y_T) \setminus  Q^\nu_{T}(y_T))$ such that $ \#(\mathcal{N}(x) \cap \hat{X}_T) < 6$, then necessarily $ x\in (A_T)_1 \cap Q^\nu_{T+32}(y_T)$.  Thus, again by Lemma \ref{lemma:propertiesofE}(v), only a bounded number of atoms in $Q^\nu_{T+32}(y_T) \setminus  Q^\nu_{T}(y_T)$ independently of $T$ has less than six neighbors. This along with \eqref{ineq:insideQTp5} and \eqref{def:energyneighbourhood} yields \eqref{ineq:XhattauXT}.

 Let us now define $\tilde{X}_T$.  We recall the notation in \eqref{eq: basic set def} and define $\tilde{X}_T =  \tilde{X}_T^+ \cup \tilde{X}_T^-$ by
\begin{align*}
\tilde{X}^+_T= \big(\hat{X}_T^+ +  e^{i\theta^+}(\tau^+ -\tau^+_T)\big), \ \ \ \ \ \ \tilde{X}^-_T= \big(\hat{X}_T^- +  e^{i\theta^-} (\tau^- -\tau^-_T)\big).
\end{align*}
For convenience, we denote the atoms of $\hat{X}_T$ by $\lbrace x^j_T\rbrace_j$ and the corresponding atoms of $\tilde{X}_T$ by $\lbrace \tilde{x}^j_T\rbrace_j$,  i.e., $\tilde{x}^j_T=x^j_T+e^{i\theta^\pm}(\tau^\pm -\tau^\pm_T)$ if $x^j_T\in \hat{X}_T^\pm$ . By \eqref{def:Xhat}  and the choice of $\hat{X}_T$, it is obvious that $\tilde{X}_T^\pm \subset \mathscr{L}(z^\pm)$ and $\tilde{X}^\pm_T =  \mathscr{L}(z^\pm)$ on $\partial^\pm_1 Q^\nu_{T+ 22}(y_T)$ for $T$ large enough.  Here, the  extension  $\hat{X}_T= \mathscr{L}(z_T^\pm)$ on  $\{x\colon\pm\langle\nu,x-y_T\rangle\geq 2\} \cap (Q^\nu_{T+34}(y_T)\setminus (Q^\nu_{T+10}(y_T) \cup A_T))$ is  crucial in order to ensure that these boundary conditions hold for $\tilde{X}_T$.  (The value $2$ is for definiteness only. Every value less than $5$ works, provided $T$ is sufficiently large.)     To show \eqref{ineq:lowerboundXtilde}, we prove 
\begin{align*}
E_1\big(\tilde{X}_T,Q^\nu_{T+22}(y_T)\big) \leq  E_1\big(\hat{X}_T,Q^\nu_{T+32}(y_T)\big).
\end{align*}
 Then, the result follows from \eqref{ineq:XhattauXT}. To this end, we need to check the following  for large $T$:  
 \begin{align}\label{eq: this is to check}
 {\rm (i)}  \ \  \text{$|x^j_T-x^k_T| =1$ \ $\implies$ \ $|\tilde{x}^j_T - \tilde{x}^k_T|=1$},  \ \ \ \ \text{and} \ \  \ \ {\rm (ii)} \ \ |\tilde{x}^j_T-\tilde{x}^k_T|\ge 1 \text{ for all $j,k$, $j\neq k$}.
 \end{align}
 In fact,  due to \eqref{eq: this is to check}(ii),  $\tilde{X}_T$ is a  configuration with finite energy. Moreover, \eqref{eq: this is to check}(i) shows that $x^k_T \in \mathcal{N}(x^j_T)$ implies $\tilde{x}^k_T \in \mathcal{N}(\tilde{x}^j_T)$, and therefore the energy can only decrease, see \eqref{def:energyneighbourhood}. 

Let us finally check \eqref{eq: this is to check}. If both atoms are in $\hat{X}_T^-$ or $\hat{X}_T^+$, then it is clear by the definition of $\tilde{X}_T$ that $x^j_T-x^k_T = \tilde{x}^j_T-\tilde{x}^k_T$,  which gives (i) and (ii) due to \eqref{eq:recoverypsi} and \eqref{ineq:XhattauXT}. Otherwise, if $x^j_T \in \hat{X}_T^-$ and $x^k_T \in \hat{X}_T^+$ or vice versa, (i) follows from Lemma \ref{lemma:translation}, whereas (ii) follows from Lemma \ref{lemma:translation}(i), \eqref{eq:recoverypsi} and \eqref{ineq:XhattauXT}. 
\end{proof}

To conclude the proof of Lemma \ref{lemma: calculation}, it remains to give the proofs of Lemmas \ref{lemma:thetaT-rotation} and \ref{lemma:translation}.

\begin{proof}[Proof of Lemma \ref{lemma:thetaT-rotation}] 
Let $z_T^\pm=(\theta_T^\pm,\tau_T^\pm,1) \in \mathcal{Z}$ and $\nu \in \mathbb{S}^1$  be given as in the statement. 

\noindent \emph{Step 1: Rotation to boundary conditions with fixed rotation angles.} Choose $\tilde{y}_T \in \mathbb{R}^2$ and $\tilde{X}_T \subset \mathbb{R}^2$ satisfying $\tilde{X}_T = \mathscr{L}(z^\pm_T)$ on $\partial^\pm_1 Q^\nu_T(\tilde{y}_T)$ such that
\begin{align}\label{eq:convergenceXT}
E_1\big(\tilde{X}_T,Q^\nu_T(\tilde{y}_T)\big) \le \inf\left\{E_1\big({X}_T,Q^\nu_T({y}_T)\big) \colon \,  {y}_T \in \mathbb{R}^2,\,  {X}_T = \mathscr{L}(z^\pm_T) \text{ on } \partial^\pm_1 Q^\nu_T(y_T) \right\} + 1/T.
\end{align}
We define  ${X}^{\rm rot}_T :=e^{i(\theta^+-\theta_T^+)}\tilde{X}_T$, $\nu_T:=e^{i(\theta^+-\theta^+_T)}\nu $,  $y_T^{\rm rot} := e^{i(\theta^+-\theta^+_T)}\tilde{y}_T$, and  $\hat{z}^\pm := (\theta^\pm,\tau^\pm_T,1)$. Then,   by   Lemma \ref{lemma:propertiesofE}(i)  and $\theta_T^+ - \theta_T^- = \theta^+-\theta^-$ for all $T$, there holds $X^{\rm rot}_T =\mathscr{L}(\hat{z}^\pm_T)$ on $\partial^\pm_1 Q^{\nu_T}_T(y_T^{\rm rot})$ and
\begin{align*}
E_1\big(\tilde{X}_T,Q^{\nu}_T(\tilde{y}_T)\big) & = E_1\big(X^{\rm rot}_T, Q^{\nu_T}_T(y_T^{\rm rot})) \\&  \geq \inf\big\{E_1\big({X}_T,Q^{\nu_T}_T({y}_T)\big) \colon\, {y}_T \in \mathbb{R}^2, \,  {X}_T=\mathscr{L}(\hat{z}^\pm_T) \text{ on } \partial^\pm_1 Q^{\nu_T}_T(y_T)\big\}  
\end{align*}
for all $T>0$.   Therefore, in view of \eqref{eq:convergenceXT},  to show the statement it suffices to prove
\begin{align}\label{ineq:liminfinfnuT}
\liminf_{T \to +\infty} \frac{1}{T} & \inf\big\{E_1\big(X_T,Q^{\nu_T}_T(y_T)\big) \colon \,  y_T \in \mathbb{R}^2, \, X_T =  \mathscr{L}(\hat{z}^\pm_T) \text{ on } \partial^\pm_1 Q^{\nu_T}_T(y_T) \big\}\notag \\
& \ge \liminf_{T \to +\infty} \frac{1}{T} \inf\big\{E_1\big(X_T,Q^\nu_T(y_T)\big) \colon \,  y_T \in \mathbb{R}^2, \, X_T =  \mathscr{L}(\hat{z}^\pm_T) \text{ on } \partial^\pm_1 Q^\nu_T(y_T) \big\}.
\end{align}
Note that the difference of the two formulas  lies only in the fact that $\nu$ is replaced by $\nu_T$, where $\nu_T \to \nu$ as $T \to +\infty$.

\noindent \emph{Step 2: Proof of \eqref{ineq:liminfinfnuT}.}  Fix $\delta>0$ and let $T >0$ be  sufficiently large such that $|\nu_T-\nu| <\delta$. We choose $\tilde{y}_T \in \mathbb{R}^2$ and $\tilde{X}_T \subset \mathbb{R}^2$ satisfying  $\tilde{X}_T = \mathscr{L}(\hat{z}^\pm_T)$ on $\partial^\pm_1 Q^{\nu_T}_T(\tilde{y}_T)$ such that 
\begin{align}\label{ineq:infT}
E_1\big(\tilde{X}_T,Q^{\nu_T}_T(\tilde{y}_T)\big) \leq  \inf\big\{E_1\big(X_T,Q^{\nu_T}_T(y_T)\big) \colon \,  y_T \in \mathbb{R}^2, \, X_T=\mathscr{L}(\hat{z}^\pm_T) \text{ on } \partial^\pm_1 Q^{\nu_T}_T(y_T) \big\} +\delta.
\end{align}
 Recall \eqref{eq: basic set def} and \eqref{def:line segment}.  We set $T_\delta = (1+2\delta)T$ and  define  
\begin{align*}
A^\delta_T =\left(\left[  \tilde{y}_T - \frac{T}{2}\nu_T^{\bot};  \tilde{y}_T - \frac{T_\delta}{2}\nu^{\bot} \right]  \cup  \left[\tilde{y}_T+\frac{T}{2}\nu^{\bot}_T; \tilde{y}_T+ \frac{T_\delta}{2}\nu^{\bot} \right]\right)_{\kappa T\delta}  \setminus \Big(\partial^+_1 Q^\nu_{T_\delta}(\tilde{y}_T) \cup \partial^-_1 Q^\nu_{T_\delta}(\tilde{y}_T)\Big),
\end{align*}
where  $\kappa > 1$ is chosen sufficiently large later. We define the configuration $\hat{X}_T \subset \mathbb{R}^2$ by
\begin{align}\label{def:XtildeT}
\hat{X}_T = \begin{cases} \tilde{X}_T &\text{in } Q^{\nu_T}_T(\tilde{y}_T),\\
\emptyset &\text{in } A^\delta_T \setminus Q^{\nu_T}_{T}(\tilde{y}_T),\\
\mathscr{L}(\hat{z}^\pm_T) &\text{in } \{x\colon  \pm\langle\nu, (x-\tilde{y}_T)\rangle \geq 5\} \setminus \big( A^\delta_T\cup Q^{\nu_T}_T(\tilde{y}_T)\big).
\end{cases}
\end{align}
Here, $\kappa >1$ is chosen large enough (independently of $T$) such that $|x-y| \geq 1$ for all $x,y \in \hat{X}_T$, $x\neq y$. In principle, $|x-y|<1$ may occur for points $x \in \tilde{X}_T \cap Q^{\nu_T}_T(\tilde{y}_T)$ and $y \in  \mathbb{R}^2 \setminus Q^{\nu_T}_T(\tilde{y}_T)$ if $x \in Q^{\nu_T}_T(\tilde{y}_T) \setminus Q^{\nu_T}_{T-2}(\tilde{y}_T)$,  $\pm \langle\nu_T, (x-\tilde{y}_T)\rangle \geq -5$ and $\pm\langle \nu, (y-\tilde{y}_T) \rangle \leq  - 5 $,  but for $\kappa$ big enough such pairs of points are contained in $A^\delta_T$. 

We note that $\partial^\pm_1 Q^\nu_{T_\delta}(\tilde{y}_T) \cap Q^\nu_T(\tilde{y}_T) = \emptyset$ for $T$ large enough since $\nu_T \to \nu$ as $T\to +\infty$. Thus, by construction we get  $\hat{X}_T = \mathscr{L}(\hat{z}^\pm_T)$ on $\partial^\pm_1 Q^\nu_{T_\delta}(\tilde{y}_T)$ for $T$ sufficiently large. Therefore, we obtain
\begin{align}\label{ineq:inf}
  \inf\big\{E_1\big(X_T,Q^{\nu}_{T_\delta}({y}_T)\big) \colon \,  y_T \in \mathbb{R}^2, \,  X_T=\mathscr{L}(\hat{z}^\pm_T) \text{ on } \partial^\pm_1 Q^\nu_{T_\delta}(y_T)\big\} \leq E_1\big(\hat{X}_T,Q^\nu_{T_\delta}(\tilde{y}_T)\big).
\end{align}
We claim  that 
\begin{align}\label{ineq:energyestimate-new}
E_1\big(\hat{X}_T,Q^\nu_{T_\delta}(\tilde{y}_T)\big)\leq E_1\big(\tilde{X}_T,Q^{\nu_T}_T(\tilde{y}_T)\big) + C \kappa \delta T
\end{align}
for a universal $C>0$. We defer the proof of this estimate to Step 3 below and conclude the proof of \eqref{ineq:liminfinfnuT}. Dividing \eqref{ineq:energyestimate-new} by $T_\delta$ and letting $T \to +\infty$, we derive 
\begin{align*}
\liminf_{T \to +\infty}\frac{1}{T_\delta} E_1\big(\hat{X}_T,Q^\nu_{T_\delta}(\tilde{y}_T)\big)& \leq \liminf_{T \to +\infty}\frac{1}{T}E_1\big(\tilde{X}_T,Q^{\nu_T}_T(\tilde{y}_T)\big) + C \kappa \delta.
\end{align*}
This along with  \eqref{ineq:infT} and  \eqref{ineq:inf}, and the fact that $\delta>0$ was arbitrary shows \eqref{ineq:liminfinfnuT}. 
 It thus remains to prove \eqref{ineq:energyestimate-new}. 

\noindent \emph{Step 3: Proof of \eqref{ineq:energyestimate-new}.}  We divide the proof into the two estimates
\begin{align}\label{ineq:estimate1}
E_1\big(\hat{X}_T,Q^{\nu_T}_{T}(\tilde{y}_T)\big)\leq E_1\big(\tilde{X}_T,Q^{\nu_T}_{T}(\tilde{y}_T)\big) + C\kappa \delta T, 
\end{align}
\begin{align}\label{ineq:estimate2}
E_1\big(\hat{X}_T,Q^{\nu}_{T_\delta}(\tilde{y}_T) \setminus Q^{\nu_T}_{T}(\tilde{y}_T)\big) \leq C \kappa \delta T,
\end{align}
for a universal $C>0$. Clearly, \eqref{ineq:estimate1}--\eqref{ineq:estimate2}  and Lemma \ref{lemma:propertiesofE}(iv)  imply \eqref{ineq:energyestimate-new}. We first prove \eqref{ineq:estimate1}. Recall by \eqref{def:XtildeT}  and the boundary values of $\tilde{X}_T$  that $\hat{X}_T = \tilde{X}_T$  in   $ \overline{Q^{\nu_T}_{T+2}(\tilde{y}_T)} \setminus (A^\delta_T  \setminus Q^{\nu_T}_T(\tilde{y}_T))$. Thus, $x \in Q^{\nu_T}_T(\tilde{y}_T)$ can have less neighbors in  $\hat{X}_T$ than in $\tilde{X}_T$ only if $x \in  (A^\delta_T)_1\cap (Q^{\nu_T}_T(\tilde{y}_T) \setminus Q^{\nu_T}_{T-2}(\tilde{y}_T))$. As $\mathrm{diam}(A^\delta_T) \leq C\kappa\delta T$ and therefore  $\mathcal{L}^2(( (A^\delta_T)_1 \cap (Q^{\nu_T}_T(\tilde{y}_T) \setminus Q^{\nu_T}_{T-2}(\tilde{y}_T)))_{1}) \le C\kappa \delta T$,  this implies by   Lemma \ref{lemma:propertiesofE}(v) that a number of atoms $x \in Q^{\nu_T}_T(\tilde{y}_T)$ bounded by $C\kappa \delta T$ have less neighbors in  $\hat{X}_T$ than in $\tilde{X}_T$. This shows \eqref{ineq:estimate1} by \eqref{def:energyneighbourhood}. To see \eqref{ineq:estimate2}, again due to \eqref{def:XtildeT}, all atoms $x \in \hat{X}_T \cap (Q^{\nu}_{T_\delta}(\tilde{y}_T) \setminus ( Q^{\nu_T}_{T}(\tilde{y}_T) \cup {(A^\delta_T)_1)}$ have six neighbors. Hence, their energy contribution is zero.  As $\hat{X}_T = \emptyset$ in $A^\delta_T  \setminus Q^{\nu_T}_{T}(\tilde{y}_T)  $ and  $\mathcal{L}^2(( {( A^\delta_T)_1}  \setminus A^\delta_T)_1) \le C  \kappa \delta T$,  this implies, as before,  that 
\begin{align*}
\#\left(\hat{X}_T \cap \big(  {( A^\delta_T)_1}  \cap Q^{\nu}_{T_\delta}(\tilde{y}_T)  \big)  \setminus Q^{\nu_T}_{T}(\tilde{y}_T)\right) \leq   C  \mathcal{L}^2(({( A^\delta_T)_1}  \setminus A^\delta_T)_1)\leq C \kappa \delta T.
\end{align*}
Again in view of  \eqref{def:energyneighbourhood}, this implies \eqref{ineq:estimate2} and concludes the proof.  
\end{proof}

\begin{proof}[Proof of Lemma \ref{lemma:translation}] 
Suppose first that $y \in {P}_{\theta^+,\theta^-}$ with ${P}_{\theta^+,\theta^-}$ defined in \eqref{eq: fundamental parallelogram}. Then (i) follows from $x^+_n \to x$, $y^-_n \to y$, and the observation that there are only finitely many pairs $(x,y) \in \mathscr{L}(z^+) \times ({P}_{\theta^+,\theta^-} \cap \mathscr{L}(z^-))$ with $|x-y|<1$. The same argument applies to show that (ii) holds true for all pairs $(x,y) \in (({P}_{\theta^+,\theta^-})_3 \cap \mathscr{L}(z^+)) \times ({P}_{\theta^+,\theta^-} \cap \mathscr{L}(z^-))$ for large $n$. Choosing $n$ so big that also $|\tau^\pm_n-\tau^\pm| < 1$ gives (ii) for all $(x,y) \in \mathscr{L}(z^+) \times ({P}_{\theta^+,\theta^-} \cap \mathscr{L}(z^-))$. 

Now, consider a general $y\in \mathscr{L}(z^-)$. One finds $v \in e^{i\theta^+}\mathscr{L} \cap e^{i\theta^-}\mathscr{L}$ such that $y-v \in {P}_{\theta^+,\theta^-}$. The assertion then follows by applying the special case described above to $x - v$ and $y -v$, and by observing that $(x-v)^+_n = x^+_n-v$ and $(y-v)^-_n = y^-_n-v$. Finally, the implication $|x_n^+ - y^-_n| =1  \Rightarrow |x-y|=1$ follows from (i) and (ii) by contraposition. 
\end{proof}

\subsection{Well  definedness  and properties of the energy density $\varphi$}\label{sec: properties}

This final subsection is devoted to the proofs of Proposition \ref{proposition:existence} and Theorem \ref{prop: properties of varphi}. Our proofs in this subsection follow standard strategies. Due to the discrete character of our model, however, careful constructions are needed. As a preliminary step, we show that in \eqref{eq barphi} the sequence $T\to +\infty$ can be chosen independently of the centers of the cells.

\begin{proposition}\label{proposition:xindependence} For each $z^+,z^- \in \mathcal{Z}$ and $\nu \in \mathbb{S}^1$ there exists a sequence $\{T_j\}_j$ such that  $T_j \to +\infty$ as $j \to +\infty$  and for all $\{y_j\}_j \subset \mathbb{R}^2$ there holds 
\begin{align*}
 \frac{1}{T_j}  \min  \big\{E_{1}\big(X,Q^\nu_{T_j}(y_j)\big) \colon\,  X = \mathscr{L}(z^\pm)  \text{ on } \partial^\pm_{1}Q^\nu_{T_j}(y_j) \big\}\le \bar{\varphi}(z^+,z^-,\nu) + \eta_j,
\end{align*}
where $\lbrace \eta_j\rbrace_j \subset (0,+\infty)$ is a null sequence which  depends on $z^\pm$ and $\nu$, but is independent of $\{y_j\}_j$.   
\end{proposition}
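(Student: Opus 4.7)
The plan is to produce near-optimal competitors on arbitrary cubes by a tiling construction based on near-optimal configurations on smaller cubes, with the tile translations constrained to lie in the coincidence site lattice so that the fixed boundary data $\mathscr{L}(z^\pm)$ is respected across all tiles. Several cases collapse to a direct application of Lemmas~\ref{lemma:vacuumirrational} and~\ref{lemma: touching}.

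First, write $m_T(y):= \inf\{E_1(X,Q^\nu_T(y)) \colon X=\mathscr{L}(z^\pm) \text{ on } \partial^\pm_1 Q^\nu_T(y)\}$ for brevity. If $z^+=\mathbf{0}$ or $z^-=\mathbf{0}$, Lemma~\ref{lemma:vacuumirrational}(i) yields $|m_T(y_T)/T-\varphi_{\mathrm{hex}}(e^{-i\theta}\nu)|\le C/T$ uniformly in $y_T$ and, through \eqref{eq barphi}, forces $\bar\varphi(z^+,z^-,\nu)=\varphi_{\mathrm{hex}}(e^{-i\theta}\nu)$; the statement then holds with $T_j=j$ and $\eta_j=C/j$. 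If instead $z^\pm=(\theta^\pm,\tau^\pm,1)\in\mathcal{Z}\setminus\{\mathbf{0}\}$ but $\theta^+-\theta^-\notin{\mathcal{G}_{\mathbb{A}}}$, the contrapositive of Lemma~\ref{lemma: touching} gives $\Phi(z^+,z^-,\nu)\ge\varphi_{\mathrm{hex}}(e^{-i\theta^+}\nu)+\varphi_{\mathrm{hex}}(e^{-i\theta^-}\nu)$, and combining this with $\Phi\le\bar\varphi$ and the upper bound of Lemma~\ref{lemma:vacuumirrational}(ii) yields equality, so the same upper bound again provides the uniform control. Hence we focus on the remaining case $\theta^+-\theta^-\in{\mathcal{G}_{\mathbb{A}}}$.

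In this case, by \eqref{eq: a,b} the coincidence site lattice $\Lambda:=e^{i\theta^+}\mathscr{L}\cap e^{i\theta^-}\mathscr{L}$ is a full-rank two-dimensional sub-lattice whose fundamental parallelogram $P_{\theta^+,\theta^-}$ has finite diameter $D$; crucially, translation by any $v\in\Lambda$ preserves both $\mathscr{L}(z^+)$ and $\mathscr{L}(z^-)$ simultaneously. Select some $a\in\Lambda\setminus\{0\}$ whose angle with $\nu^\perp$ is less than $\pi/4$ (such $a$ exists since $\Lambda$ is two-dimensional), and along the sub-sequence $S_k:=n_k\langle a,\nu^\perp\rangle$, $n_k\nearrow\infty$, use \eqref{eq barphi} to choose centers $y_k^*\in\R^2$ and configurations $X_k^*$ with $X_k^*=\mathscr{L}(z^\pm)$ on $\partial^\pm_1 Q^\nu_{S_k}(y_k^*)$ and $E_1(X_k^*,Q^\nu_{S_k}(y_k^*))\le S_k(\bar\varphi+1/k)$.

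Given arbitrary $y\in\R^2$ and integers $M_j,k(j)\nearrow\infty$ to be selected, set $T_j:=M_j S_{k(j)}$. Pick $\sigma\in\R^2$ with $|\sigma|\le D$ such that $y+\sigma-y_{k(j)}^*\in\Lambda$, and place the translated copy $X_{k(j)}^*+(c_m-y_{k(j)}^*)$ on each cube $Q^\nu_{S_{k(j)}}(c_m)$ with $c_m:=y+\sigma+m n_{k(j)} a$ for $|m|\le M_j/2$, completing by $\mathscr{L}(z^+)$ above and $\mathscr{L}(z^-)$ below the union of tiles. Since each $c_m-y_{k(j)}^*\in\Lambda$ is a joint period of both lattices, every translated copy matches $\mathscr{L}(z^\pm)$ on its lateral layer $\partial^\pm_1$, so adjacent tiles and the filling lattices are mutually consistent; the sub-$\pi/4$ angle condition on $a$ guarantees that for $M_j$ large all tiles lie inside $Q^\nu_{T_j}(y)$ and that the boundary condition $\mathscr{L}(z^\pm)$ on $\partial^\pm_1 Q^\nu_{T_j}(y)$ is satisfied. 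Translation invariance (Lemma~\ref{lemma:propertiesofE}(i)) together with additivity (Lemma~\ref{lemma:propertiesofE}(iv)) give $E_1(X(y),Q^\nu_{T_j}(y))\le M_j S_{k(j)}(\bar\varphi+1/k(j))+C S_{k(j)}$, where the residual $O(S_{k(j)})$ term accounts for lateral corrections, bounded via Lemma~\ref{lemma:propertiesofE}(v). Dividing by $T_j$ and choosing $k(j)=M_j=j$ yields the uniform estimate with the null sequence $\eta_j:=1/j+C/j$ independent of $y$. The principal obstacle is the lattice-matching constraint that forces tile translations into $\Lambda$: both the non-triviality of $\Lambda$ (which is precisely $\theta^+-\theta^-\in{\mathcal{G}_{\mathbb{A}}}$) and the existence of a lattice direction close to $\nu^\perp$ are essential, the latter to control the accumulated vertical drift across the $M_j$ tiles.
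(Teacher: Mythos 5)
Your reduction to the case $\theta^+-\theta^-\in{\mathcal{G}_{\mathbb{A}}}$ via Lemma \ref{lemma:vacuumirrational} and Lemma \ref{lemma: touching} is correct and matches the paper. The gap is in the main case, and it is the \emph{transverse drift} of your tiles. You place the tile centers at $c_m=y+\sigma+m\,n_{k(j)}a$ with $a\in\Lambda=e^{i\theta^+}\mathscr{L}\cap e^{i\theta^-}\mathscr{L}$. Writing $a=\alpha\nu^\perp+\beta\nu$, the $\nu$-component of $c_m-y$ is $m\,n_{k(j)}\beta+{\rm O}(1)$, so the extreme tiles sit at distance $\sim \tfrac{|\beta|}{2\alpha}T_j$ from the hyperplane $\{\langle x-y,\nu\rangle=0\}$. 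For a generic $\nu$ there is no $a\in\Lambda\setminus\{0\}$ with $\beta=0$, and the angle condition ``less than $\pi/4$'' only gives $|\beta|\le\alpha$, so the drift is \emph{proportional to $T_j$}, not bounded. This is fatal twice over: (i) on the lateral faces of $Q^\nu_{T_j}(y)$ your ``fill by $\mathscr{L}(z^+)$ above / $\mathscr{L}(z^-)$ below the tiles'' puts $\mathscr{L}(z^-)$ at points of $\partial^+_1Q^\nu_{T_j}(y)$ with $5\le\langle x-y,\nu\rangle\lesssim \tfrac{|\beta|}{2\alpha}T_j$, violating the admissibility constraint; (ii) even ignoring boundary data, the staircase between consecutive tiles creates $M_j$ vertical interface segments of height $n_{k(j)}|\beta|$ each, i.e.\ an extra energy of order $M_j n_{k(j)}|\beta|\sim T_j|\beta|/\alpha$, which does not vanish after dividing by $T_j$. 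A secondary issue: $\bar\varphi$ is a $\liminf$ over $T\to+\infty$, so you cannot simply demand that near-optimal configurations exist along your prescribed subsequence $S_k=n_k\langle a,\nu^\perp\rangle$ without a rounding argument.

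The missing idea is that the recentering shift must be \emph{bounded}, and the whole tiling is unnecessary for this proposition (it belongs to the proof of Proposition \ref{proposition:existence}, which then legitimately uses centers marching exactly along $\nu^\perp$ because the fixed boundary datum $\mathscr{L}(z^\pm)$ makes adjacent cells automatically compatible). The paper's proof takes a \emph{single} near-optimal configuration on $Q^\nu_{S_j}(x_j)$, picks one coincidence-lattice point $\bar y_j\in(\mathscr{L}(z^+)\cap\mathscr{L}(z^-))+x_j$ with $|y_j-\bar y_j|\le\kappa$ ($\kappa$ a constant depending only on the spanning vectors $a,b$ of $\Lambda$), translates by $\bar y_j-x_j\in\Lambda$, and enlarges the cube to side $l_jS_j$ with $l_j=1+4\kappa/S_j\to1$, patching with $\mathscr{L}(z^\pm)$ and deleting ${\rm O}(1)$ atoms near the junctions. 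All errors are then ${\rm O}(1)$, not ${\rm O}(T_j)$. Your construction could be repaired by rounding each center $y+mS\nu^\perp$ \emph{independently} to the nearest point of $y^*_{k(j)}+\Lambda$ (bounded offsets, patched at each of the $M_j$ junctions at cost ${\rm O}(1)$ each), but as written the collinear choice of the $c_m$ along $\mathbb{R}a$ does not work.
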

\begin{proof}
First, if $z^+ =\mathbf{0}$ or $z^-=\mathbf{0}$, the statement follows from Lemma \ref{lemma:vacuumirrational}(i) and the definition of $\bar{\varphi}$ in \eqref{eq barphi} for any sequence $\lbrace T_j\rbrace_j$. Now consider $z^\pm = (\theta^\pm, \tau^\pm,1)$. If $\theta^+-\theta^- \notin {\mathcal{G}_{\mathbb{A}}}$, the statement follows from Lemma \ref{lemma:vacuumirrational}(ii), \eqref{eq: a good inquality}, and Lemma \ref{lemma: touching} for any sequence $\lbrace T_j\rbrace_j$. Therefore, it remains to treat the case   $\theta^+-\theta^- \in {\mathcal{G}_{\mathbb{A}}}$.

Consider a sequence $S_j \to +\infty$, $\{x_j\}_j\subset \mathbb{R}^2$, and configurations $\lbrace X_j\rbrace_j \subset \mathbb{R}^2$ satisfying  $X_j = \mathscr{L}(z^\pm)$ on $\partial^\pm_{1}Q^\nu_{S_j}(x_j)$ such that
 \begin{align}\label{eq:convergencephi}
 \bar{\varphi}(z^+,z^-,\nu) = \lim_{j\to +\infty} \frac{1}{S_j}E_{1}\big(X_j,Q^\nu_{S_j}(x_j)\big).
 \end{align}
 By Lemma \ref{lemma:reduction} it is not restrictive to assume that $X_j \subset \mathscr{L}(z^\pm)$ for all $j \in \mathbb{N}$. Our goal is to find a sequence $l_j \to 1$ such that for all $\lbrace y_j\rbrace_j$ there are  configurations $\lbrace \tilde{X}_j \rbrace_j\subset \R^2$ satisfying  $\tilde{X}_j =\mathscr{L}(z^\pm)$ on $\partial^\pm_{1} Q^\nu_{l_jS_j}(y_j)$ such that  
\begin{align}\label{ineq:energyestimate}
E_{1} \big (\tilde{X}_j,Q^\nu_{l_jS_j}(y_j)\big) \leq E_{1}\big(X_j,Q^\nu_{S_j}(x_{j})\big) + C
\end{align}
 for a constant $C>0$ only depending on $z^\pm$ and $\nu$. Once this is achieved, we obtain the statement as  follows: we introduce the sequence $T_j := l_jS_j$, divide \eqref{ineq:energyestimate} by $T_j$, and use \eqref{eq:convergencephi} to get
 \begin{align*}
 \frac{1}{T_j} \min  \big\{E_{1}\big(X,Q^\nu_{T_j}(y_j)\big) \colon\,  X = \mathscr{L}(z^\pm)  \text{ on } \partial^\pm_{1}Q^\nu_{T_j}(y_j) \big\}& \le 
 \frac{1}{T_j}  E_{1} \big (\tilde{X}_j,Q^\nu_{l_jS_j}(y_j)\big)\\
 & \le \frac{1}{l_jS_j} E_{1}\big(X_j,Q^\nu_{S_j}(x_j)\big) + \frac{C}{T_j} 
\\& \le  \bar{\varphi}(z^+,z^-,\nu)  + \eta_j,
 \end{align*}   
 where $\lbrace\eta_j\rbrace_j$ is a null sequence only depending on $z^+,z^-,\nu$, and $\lbrace T_j\rbrace_j$, but independent of the centers $\lbrace y_j\rbrace_j$.
 
Consider any sequence of centers  $\lbrace y_j\rbrace_j$. We now construct $\tilde{X}_j$ and confirm  \eqref{ineq:energyestimate}.  We choose  $\bar{y}_j \in  (\mathscr{L}(z^+) \cap \mathscr{L}(z^-))+x_j$ such that $|y_j-\bar{y}_j| \leq \kappa$, where $\kappa := |a| + |b| + 5$  only depends on the spanning vectors $a,b$ in \eqref{eq: a,b}, but is independent of $j$. Let $l_j := 1 + 4\kappa/S_j$. We set
$$
  A_j = \left(\left[\overline{y}_j-\frac{S_j}{2}\nu^\perp;y_j-\frac{l_jS_j}{2}\nu^\perp  \right]\right)_{4\kappa} \cup \left(\left[\overline{y}_j+\frac{S_j}{2}\nu^\perp;y_j+\frac{l_jS_j}{2}\nu^\perp  \right]\right)_{4\kappa}.
$$
Note that $\partial^\pm_{1}Q^\nu_{l_jS_j}(y_j) \cap Q^\nu_{S_j}(\bar{y}_j) = \emptyset$ since $S_jl_j - S_j = 4\kappa$,  $|y_j-\bar{y}_j| \leq \kappa$, and $\kappa \ge 5$. We define $\tilde{X}_j \subset \mathbb{R}^2$ by 
\begin{align*}
\tilde{X}_j = \begin{cases} X_j + \bar{y}_j-x_j &\text{in } Q^\nu_{S_j}(\bar{y}_j) \setminus A_j,\\
\emptyset &\text{in } A_j \setminus \big( \partial^+_{1}Q^\nu_{l_jS_j}(y_j) \cup  \partial^-_{1}Q^\nu_{l_jS_j}(y_j)\big),\\
\mathscr{L}(z^\pm) &\text{in } \big( \{\pm\langle \nu,  x- y_j \rangle \geq 5\} \setminus \big(A_j \cup Q^\nu_{S_j}( \bar{y}_j)\big)\big) \cup  \partial^\pm_{1}  Q^\nu_{l_jS_j}(y_j). 
\end{cases}
\end{align*} 
By definition, $\tilde{X}_j $ attains the correct boundary conditions, and therefore it remains to check \eqref{ineq:energyestimate}. First, as $x_j - \bar{y}_j \in \mathscr{L}(z^+) \cap \mathscr{L}(z^-)$ and  $X_j = \mathscr{L}(z^\pm)$ on $\partial^\pm_{1}Q^\nu_{S_j}(x_j)$, we observe that $\tilde{X}_j = \mathscr{L}(z^\pm)$ on $(\partial^\pm_{1} Q^\nu_{S_j}(\bar{y}_j) \cap Q^\nu_{S_j}(\bar{y}_j)) \setminus A_j$. This along with the definition of $A_j$ implies $|x-y|\geq 1$ for all $ x, y \in \tilde{X}_j$,  $x\neq y$, and thus $E_{1} \big (\tilde{X}_j,Q^\nu_{l_jS_j}(y_j)\big) < +\infty$. Moreover,   by Lemma \ref{lemma:propertiesofE}(i) we obtain
\begin{align}\label{ineq:Xjtilde1}
E_{1}\big(\tilde{X}_j,Q^\nu_{S_j}(\bar{y}_j)\big) \leq E_{1}\big(X_j,Q^\nu_{S_j}(x_j)\big) + C.
\end{align}
Here, the extra term $C>0$  is due the fact that we take into account the interactions of points $x \in \tilde{X}_j \cap Q^\nu_{S_j}(\bar{y}_j) \cap (A_j)_1$ . Since  $\mathcal{L}^2((A_j)_2) \le C_\kappa$ for $C_\kappa$ depending only $\kappa$ and $E_1(\tilde{X}_j) <+\infty$, by Lemma \ref{lemma:propertiesofE}(v), the cardinality of these points can be controlled by $C_\kappa$. Then, by \eqref{def:energyneighbourhood} we indeed get  \eqref{ineq:Xjtilde1}. Additionally, there holds 
\begin{align}\label{ineq:Xjtilde2}
E_1\Big(\tilde{X}_j,Q^\nu_{l_jS_j}(y_j) \setminus Q^\nu_{S_j}(\bar{y}_j)\Big) \leq C,
\end{align}
where $C$ again only depends on $\kappa$. In fact, all points $x \in \tilde{X}_j \cap (Q^\nu_{l_jS_j}(y_j) \setminus Q^\nu_{S_j}(\bar{y}_j))$ with $\mathrm{dist}(x,A_j) > 1$  satisfy $\#\mathcal{N}(x) =6$ and therefore they do not contribute to the energy. Again due to Lemma \ref{lemma:propertiesofE}(v), the cardinality of $x \in \tilde{X}_j$ with $\mathrm{dist}(x,A_j) \le 1$ can be estimated by $C_\kappa$. This gives \eqref{ineq:Xjtilde2}. Now, \eqref{ineq:Xjtilde1}--\eqref{ineq:Xjtilde2} along with Lemma \ref{lemma:propertiesofE}(iv)  imply \eqref{ineq:energyestimate}. This concludes the proof. 
\end{proof}

\begin{proof}[Proof of Proposition \ref{proposition:existence}]
We first show that, once \eqref{eq:existenceoflimit} has been established, the result in Proposition \ref{proposition:existence-original} follows. Indeed, given $x_0 \in \R^2$ and $\rho>0$, estimate  \eqref{def:varphi} readily follows from \eqref{eq:existenceoflimit} for the sequence of centers  $y_T = (T/\rho)x_0$ and a scaling argument, see Proposition \ref{lemma:propertiesofE}(ii) for $\eps =\rho/T$, $\lambda=T/\rho$, and $A = Q^\nu_\rho(x_0)$.

It remains to prove \eqref{eq:existenceoflimit}. Let $z^\pm \in \mathcal{Z}$, $\nu \in \mathbb{S}^1$, and a sequence $\lbrace y_T\rbrace_T \subset \R^2$ be given. In view of the definition of $\bar{\varphi}$, see \eqref{eq barphi}, it suffices to show
\begin{align}\label{eq: mainEbound2}
\limsup_{T \to +\infty}\frac{1}{T}\min\big\{E_1\big(X_T,Q^\nu_T(y_T)\big)\colon \,  X_T = \mathscr{L}(z^\pm) \text{ on } \partial_1^\pm Q^\nu_T(y_T) \big\}\le\bar{\varphi}(z^+,z^-,\nu).
 \end{align}

\noindent \emph{Step 1: Comparison via construction.}
Consider $1 \ll  S \ll T$. Without restriction, we can assume that $S \in \lbrace T_j\rbrace_j$, where $\lbrace T_j\rbrace_j$ is the sequence identified in  Proposition  \ref{proposition:xindependence}. For simplicity, if $S = T_j$, we will write $\eta_S$ instead of $\eta_{T_j}$ for the null sequence given by  Proposition  \ref{proposition:xindependence}. Define $N_{S,T}:= \lfloor T/S\rfloor$. For $j \in \lbrace 1,\ldots, N_{S,T} \rbrace$  we set $x_j = y_T +  (-T/2- S/2 + j S)\nu^\perp$. We choose $X_j \subset \mathbb{R}^2$ such that $X_j = \mathscr{L}(z^\pm)$    on   $\partial_1^\pm Q^\nu_S(x_j)$ and 
\begin{align}\label{eq:EnergyQi}
E_1\big(X_j,Q^\nu_S(x_j)\big)&=\min\left\{E_1\big(X,Q^\nu_S(x_j)\big) \colon \,  X = \mathscr{L}(z^\pm) \text{ on } \partial_1^\pm Q^\nu_S(x_j) \right\}  \le S\big(\bar{\varphi}(z^+,z^-,\nu) + \eta_S\big),
\end{align}
where the inequality follows from Proposition  \ref{proposition:xindependence}.
For $j=1,\ldots, N_{S,T}$, we introduce the  set $A_j = Q^\nu_{10}(x_j+(S/2)\nu^\perp) \cup Q^\nu_{10}(x_j-(S/2)\nu^\perp)$ and let $X_T$ be defined by 
\begin{align*}
X_T = \begin{cases} X_j &\text{in }  Q^\nu_S(x_j) \setminus A_j , \ j\in \lbrace 1,\ldots,N_{S,T}\rbrace, \\
\emptyset &\displaystyle\text{in } \{x\colon\, |\langle\nu, x - y_T \rangle | < 5\} \setminus Q^*,\\
\mathscr{L}(z^\pm) &\displaystyle \text{in }\{x\colon\,\pm\langle\nu , x - y_T \rangle  \geq 5\}\setminus Q^*,
\end{cases}
\end{align*}
where for brevity we have set $Q^* := \bigcup\nolimits_{j=1}^{N_{S,T}} (Q^\nu_S(x_j)\setminus A_j)$. Note that $X_T = \mathscr{L}(z^\pm)$ on $\partial^\pm_1 Q^\nu_T(y_T)$. For an illustration of the construction, we refer to Figure \ref{fig: lim-ex}. We will show that
\begin{align}\label{eq: mainEbound}
 E_1\big(X_T,Q^\nu_T(y_T)\big)\le \lfloor T/S\rfloor \, S\big(\bar{\varphi}(z^+,z^-,\nu) + \eta_S\big) + CT/S + CS
 \end{align}
 for a universal constant $C>0$.  Once this is achieved, we divide by $T$, take first the $\limsup$ as $T\to+\infty$, and then the limit as $S\to +\infty$ (with $S$ chosen from the sequence $\lbrace T_j\rbrace_j$ given by Proposition  \ref{proposition:xindependence}). As $\eta_S\to 0$, this yields  \eqref{eq: mainEbound2} and thus the statement of the proposition.


\begin{figure}[htp]
 \includegraphics{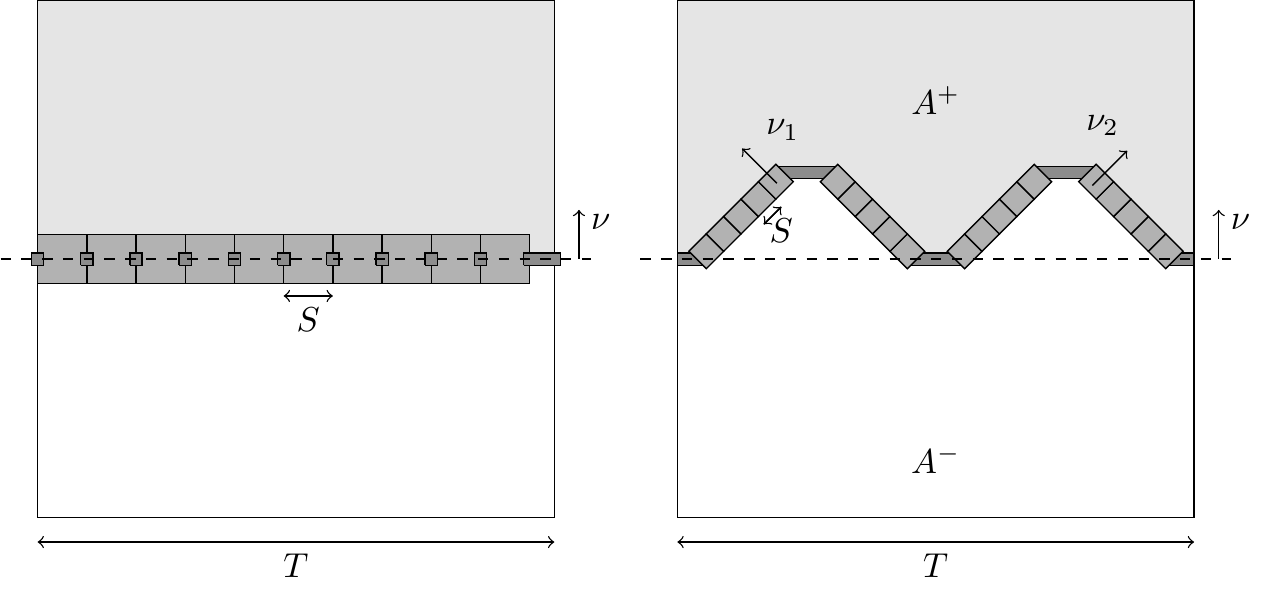}
\caption{Illustration of the construction for the existence of the limit on the left as well as the convexity in the third variable on the right. On the white region $X_T= \mathscr{L}(z^-)$, on the light gray region $X_T= \mathscr{L}(z^+)$, and on the dark gray region $X_T=\emptyset$. The dark gray cubes, that are cut out in order to ensure that $X_T$ has finite energy, are illustrated on the left, but they are also present in the construction on the right. In the  gray  cubes, we set $X_T$ equal to the minimizer with boundary conditions  $\mathscr{L}(z^\pm)$. For illustration purposes, we suppose that $w = 0$ in \eqref{eq: w relevance}.}\label{fig: lim-ex}
\end{figure}

\noindent \emph{Step 2: Proof of \eqref{eq: mainEbound}.}  It remains to prove \eqref{eq: mainEbound}. First, by construction,  the definition of $A_j$,  and the boundary values of the configurations $X_j$, we get $|x-y|\geq 1$ for all $ x, y \in X_T$,  $x\neq y$, and   therefore $E(X_T) <+\infty$.  By Lemma \ref{lemma:propertiesofE}(iv) and \eqref{eq:EnergyQi} there holds 
\begin{align}\label{ineq:XT}
 E_1\big(X_T,Q^\nu_T(y_T)\big) &= \sum\nolimits_{j=1}^{N_{S,T}} E\big(X_T,Q^\nu_S(x_j)\big) + E\Big(X_T, Q^\nu_T(y_T) \setminus \bigcup\nolimits_{j=1}^{N_{S,T}} Q^\nu_S(x_j)\Big) \\& \le  \lfloor T/S\rfloor \, \Big( S\big(\bar{\varphi}(z^+,z^-,\nu) + \eta_S\big) + C\Big) +  E\Big(X_T, Q^\nu_T(y_T) \setminus \bigcup\nolimits_{j=1}^{N_{S,T}} Q^\nu_S(x_j)\Big). \notag
\end{align}
Here, the addend $C$ in the brackets is due to the fact that there may be $x \in X_T \cap Q^\nu_S(x_j)$ with more neighbors in $X_j$  than in  $X_T$. This, however, can only occur for atoms in $x \in Q^\nu_S(x_j)$ such that  $x\in(\partial Q^\nu_S(x_j))_6 \cap (\{y\colon\,\langle  y-x_j, \nu\rangle=0\})_6.$  Since  $E(X_T) <+\infty$, we can apply Lemma \ref{lemma:propertiesofE}(v) and get that their cardinality is controlled by some universal constant $C$.  

 It remains to estimate the energy outside the union of the smaller cubes. We claim that
\begin{align}\label{ineq:Eoutside}
E\Big(X_T, Q^\nu_T (y_T) \setminus \bigcup\nolimits_{j=1}^{N_{S,T}} Q^\nu_S(x_j)\Big)\leq C S.
\end{align}
To see this, note that an atom  $x \in X_T \cap (Q^\nu_T(y_T) \setminus \bigcup_{j=1}^{N_{S,T}} Q^\nu_S(x_j)) $ can contribute to the energy only if  $|\langle  x - y_T, \nu \rangle| \leq 6$.  Since $E(X_T) <+\infty$, applying Lemma \ref{lemma:propertiesofE}(v), we obtain 
\begin{align*}
\#\Big\{x \in X_T \cap \Big( Q^\nu_T(y_T) \setminus \bigcup\nolimits_{j=1}^{N_{S,T}} Q^\nu_S(x_j)\Big)\colon \, |\langle  x - y_T, \nu \rangle| \leq 6 \Big\} \leq C\left(T - S \left\lfloor T/S \right\rfloor\right) \leq CS,
\end{align*}
where $T - S \lfloor T/S \rfloor$ controls the length of the  rightmost dark gray region in  the left part of  Figure \ref{fig: lim-ex}. In view of \eqref{def:energyneighbourhood}, this implies \eqref{ineq:Eoutside}. Combining \eqref{ineq:XT} and  \eqref{ineq:Eoutside} we obtain \eqref{eq: mainEbound}, which concludes the proof. 
\end{proof}

As a final preparation for the proof of Theorem \ref{prop: properties of varphi}, we characterize the translations of  lattices   with touching points. To this end, we introduce the following notation: for given $\theta = \theta^+ -\theta^- \in {\mathcal{G}_{\mathbb{A}}}$, we say $e^{i\theta^+} \tau^+ - e^{i\theta^-} \tau^-$ is a \emph{good translation} and write $e^{i\theta^+} \tau^+ - e^{i\theta^-} \tau^- \in \mathcal{G}_{\mathbb{T}}(\theta)$, whenever $(\tau^+, \tau^-) \in \mathbb{T}^2$ are such that there exist $x \in \mathscr{L}(\theta^+,\tau^+,1)$ and $y \in \mathscr{L}(\theta^-,\tau^-,1)$ with $|x - y| = 1$. (By rotational invariance this does indeed only depend on the difference $\theta = \theta^+ -\theta^-$.) 

\begin{lemma}[Properties of translations]\label{lemma:translationproperties} 
Suppose that $\theta = \theta^+ -\theta^-\in {\mathcal{G}_{\mathbb{A}}}$. Then $\mathcal{G}_{\mathbb{T}}(\theta)$ is contained in a finite union (of arcs) of spheres of radius $1$, namely 
$$ \mathcal{G}_{\mathbb{T}}(\theta) 
   \subset \bigcup\nolimits_{x', y'} \partial B_1(y' - x'), $$ 
where the union is taken over the all $x' \in e^{i\theta^+} \mathscr{L} \cap ({P}_{\theta^+,\theta^-})_5$ and $y' \in e^{i\theta^-} \mathscr{L} \cap {P}_{\theta^+,\theta^-}$, where ${P}_{\theta^+,\theta^-}$ is the fundamental parallelogram defined in \eqref{eq: fundamental parallelogram}. (Recall also notation \eqref{eq: basic set def}). 
\end{lemma}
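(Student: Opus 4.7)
The plan is to exploit the coincidence site lattice $e^{i\theta^+}\mathscr{L} \cap e^{i\theta^-}\mathscr{L}$ (which is a finite-index sublattice of both $e^{i\theta^\pm}\mathscr{L}$ precisely because $\theta \in {\mathcal{G}_{\mathbb{A}}}$, cf.\ \eqref{eq: a,b}) in order to shift an arbitrary pair of touching atoms into a bounded window. Suppose $\sigma = e^{i\theta^+}\tau^+ - e^{i\theta^-}\tau^- \in \mathcal{G}_{\mathbb{T}}(\theta)$, so that by definition there exist $x \in \mathscr{L}(\theta^+,\tau^+,1)$ and $y \in \mathscr{L}(\theta^-,\tau^-,1)$ with $|x-y|=1$. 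Writing $x = x_0 + e^{i\theta^+}\tau^+$ for some $x_0 \in e^{i\theta^+}\mathscr{L}$ and $y = y_0 + e^{i\theta^-}\tau^-$ for some $y_0 \in e^{i\theta^-}\mathscr{L}$, the equality $|x-y| = 1$ immediately rewrites as $|x_0 - y_0 + \sigma| = 1$, that is, $\sigma \in \partial B_1(y_0 - x_0)$.

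Next, I would use the fundamental-domain property of ${P}_{\theta^+,\theta^-}$ to pick $w \in e^{i\theta^+}\mathscr{L} \cap e^{i\theta^-}\mathscr{L}$ such that $y' := y_0 - w \in {P}_{\theta^+,\theta^-}$. Since $w \in e^{i\theta^-}\mathscr{L}$, this $y'$ lies in $e^{i\theta^-}\mathscr{L} \cap {P}_{\theta^+,\theta^-}$. Setting $x' := x_0 - w$, which belongs to $e^{i\theta^+}\mathscr{L}$ because $w \in e^{i\theta^+}\mathscr{L}$, the difference $y' - x' = y_0 - x_0$ is preserved, so $\sigma \in \partial B_1(y' - x')$.

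It remains to confine $x'$ to the neighborhood $({P}_{\theta^+,\theta^-})_5$. For this I would fix once and for all canonical representatives of $\tau^\pm \in \mathbb{T}$ in the parallelogram from \eqref{eq: tautautau}; its points satisfy $|\lambda_1 + \lambda_2 \omega|^2 = \lambda_1^2 + \lambda_1\lambda_2 + \lambda_2^2 < 3$, hence $|\tau^\pm| < \sqrt{3}$ and $|\sigma| < 2\sqrt{3} < 4$. The triangle inequality then gives $|x' - y'| = |x_0 - y_0| \le 1 + |\sigma| < 5$, and since $y' \in {P}_{\theta^+,\theta^-}$, this places $x'$ in $({P}_{\theta^+,\theta^-})_5$. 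As $e^{i\theta^+}\mathscr{L} \cap ({P}_{\theta^+,\theta^-})_5$ and $e^{i\theta^-}\mathscr{L} \cap {P}_{\theta^+,\theta^-}$ are finite sets, this exhibits $\mathcal{G}_{\mathbb{T}}(\theta)$ as a subset of the advertised finite union of unit spheres.

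The statement is, essentially, a direct book-keeping consequence of the coincidence site lattice structure, and I do not anticipate any serious obstacle. The only subtle point is to commit to a representative of $\tau^\pm$ so that $|\sigma|$ is uniformly bounded, which is what ultimately guarantees that only finitely many centers $y' - x'$ occur.
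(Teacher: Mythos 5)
Your proof is correct and follows essentially the same route as the paper's: translate the touching pair by a coincidence-site-lattice vector so that the $\mathscr{L}(z^-)$-atom lands in the fundamental parallelogram, and use the bounded representatives of $\tau^\pm$ from \eqref{eq: tautautau} to confine the partner atom to $({P}_{\theta^+,\theta^-})_5$. The only cosmetic difference is that you subtract the translations $e^{i\theta^\pm}\tau^\pm$ at the outset rather than at the end, which changes nothing.
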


\begin{proof} 
Consider  $x \in \mathscr{L}(\theta^+,\tau^+,1)$ and $y \in \mathscr{L}(\theta^-,\tau^-,1)$ with $|x - y| = 1$. We find a shifting  vector   $v \in e^{i\theta^+} \mathscr{L} \cap e^{i\theta^-} \mathscr{L}$  such that $y' := y - v  - e^{i\theta^-} \tau^- \in e^{i\theta^-} \mathscr{L} \cap {P}_{\theta^+,\theta^-}$. By defining $x' := x - v - e^{i\theta^+} \tau^+ \in e^{i\theta^+} \mathscr{L}$ we clearly get  
$$ 1 =  |y-x|  =   \big| \big(y'  -x'\big)  - \big(e^{i\theta^+} \tau^+ - e^{i\theta^-} \tau^-\big)\big|.$$
The latter identity along with $|\tau^\pm| \le \sqrt{3} < 2$ (see \eqref{eq: tautautau}) yields  $x' \in  e^{i\theta^+} \mathscr{L} \cap ({P}_{\theta^+,\theta^-})_5$ as well as $e^{i\theta^+} \tau^+  -   e^{i\theta^-} \tau^- \in \partial B_1(y' - x')$. 
\end{proof}

We close this subsection  with the proof of Theorem \ref{prop: properties of varphi}.

\begin{proof}[Proof of Theorem \ref{prop: properties of varphi}]
\emph{Proof of {\rm (i),(ii)}.} The proof of (i) follows from the definition of $\varphi$ and Lemma \ref{lemma:vacuumirrational}(i). For (ii), we use Lemma \ref{lemma:vacuumirrational}(ii) to obtain the inequality 
\begin{align}\label{eq: N9}
 \frac{1}{2} \varphi_{\mathrm{hex}}\big(e^{-i\theta^+} \nu\big)+ \frac{1}{2}\varphi_{\mathrm{hex}}\big(e^{-i\theta^-} \nu\big) \leq \varphi(z^+,z^-,\nu) \le  \varphi_{\mathrm{hex}}\big(e^{-i\theta^+} \nu\big)+ \varphi_{\mathrm{hex}}\big(e^{-i\theta^-} \nu\big)
\end{align}
for all  $(z^+,z^-) \in (\mathcal{Z}\setminus \lbrace \mathbf{0}\rbrace)^2$, $z^+ \neq z^-$.  By Lemma \ref{lemma: touching}, $\varphi = \Phi$ (see Lemma \ref{lemma: calculation} and Proposition \ref{proposition:existence}), Lemma \ref{lemma: calculation}, and the definition of $\mathcal{G}_{\mathbb{T}}(\theta^+-\theta^-)$, the inequality in \eqref{eq: N9} can be strict only if $\theta^+-\theta^- \in {\mathcal{G}_{\mathbb{A}}}$ and $e^{i\theta^+}\tau^+-e^{i\theta^-}\tau^- \in {\mathcal{G}_{\mathbb{T}}}(\theta^+-\theta^-)$. Clearly, ${\mathcal{G}_{\mathbb{A}}} \subset \mathbb{A}$ is countable, see \eqref{eq: good angles}, and ${\mathcal{G}_{\mathbb{T}}}(\theta^+-\theta^-) \subset \mathbb{R}^2$ is contained in a finite union of spheres by Lemma \ref{lemma:translationproperties}.

\noindent \emph{Proof of {\rm (iii)}.}  Let $\nu_1,\nu_2 \in \mathbb{S}^1$, $\lambda \in (0,1)$. Our goal is to prove
\begin{align}\label{eq:convexity}
\varphi(z^+,z^-,\lambda\nu_1 +(1-\lambda)\nu_2) \leq  \lambda\varphi(z^+,z^-,\nu_1)+(1-\lambda)\varphi(z^+,z^-,\nu_2).
\end{align}
Assume that $\lambda\nu_1 +(1-\lambda)\nu_2\neq 0$ (otherwise the statement is trivial) and define $\nu = \frac{\lambda\nu_1 +(1-\lambda)\nu_2}{|\lambda\nu_1 +(1-\lambda)\nu_2|} \in \mathbb{S}^1$. By the  positive  $1$-homogeneity of $\varphi$, \eqref{eq:convexity} is equivalent to
\begin{align}\label{eq:convexitynu}
\varphi(z^+,z^-,\nu) \leq \lambda_1\varphi(z^+,z^-,\nu_1)+\lambda_2\varphi(z^+,z^-,\nu_2),
\end{align}
where $\lambda_1 =  \frac{\lambda}{|\lambda\nu_1 +(1-\lambda)\nu_2|}, \lambda_2 = \frac{1-\lambda}{|\lambda\nu_1 +(1-\lambda)\nu_2|} >0$.
In the following, we will prove \eqref{eq:convexitynu}.

\noindent \emph{Step 1: Convexity via construction.} We construct competitors for the problem $\varphi(z^+,z^-,\nu)$, and refer to Figure \ref{fig: lim-ex} for an illustration. Fix $n \in \mathbb{N}$ such that $\lambda_1,\lambda_2 \leq n/2$.  Let $1 \ll S \ll T$. As before,  we  assume that $S \in \lbrace T_j\rbrace_j$, where $\lbrace T_j\rbrace_j$ is the sequence identified in  Proposition  \ref{proposition:xindependence}. For simplicity, if $S = T_j$, we will write $\eta_S$ instead of $\eta_{T_j}$ for the null sequence given by  Proposition~\ref{proposition:xindependence}.

Define  $ N_j(S,T) := \left\lfloor \lambda_j(T-(10n+5)S)/(nS) \right\rfloor$ for $j \in \lbrace 1,2\rbrace$.  In the following, the indices $i$, $j$, and $k$ are always chosen from $j \in \lbrace 1,2\rbrace$,  $i\in\lbrace 0,\ldots,N_j(S,T)\rbrace$,  and $k\in \lbrace 0,\ldots,n-1\rbrace$ without further notice.  As usual, the orthonormal vectors to $\nu,\nu_1,\nu_2$ obtained by clockwise rotation about $\pi/2$ are denoted by $\nu^\bot,\nu_1^\bot,\nu_2^\bot$, respectively. From $\nu = \lambda_1 \nu_1 + \lambda_2 \nu_2$ and the definition of $N_j(S,T)$ we get 
\begin{align}\label{eq: w relevance}
 N_1(S,T) \nu_1^\bot + N_2(S,T) \nu_2^\bot 
   = M \nu^\bot - w, 
   \end{align}
where $M = (T-(10n+5)S)/(nS)$ and $w = \alpha_1 \nu_1^\bot + \alpha_2 \nu_2^\bot$ for suitable $0 \le \alpha_1, \alpha_2 < 1$, in particular, $|w| \le 2$. 
We set
\begin{align*}
x_i^{1,k}  =\big(-T/2+  5S + S(M+10)k  \big) \, \nu^\perp +  i\,S\nu_1^\perp, \ \ \ \
x_i^{2,k}  =x_{N_1(S,T)}^{1,k} +  5S\nu^\perp+  i\,S\nu_2^\perp,
\end{align*}
and let $X_i^{j,k} \subset \mathbb{R}^2$ be defined as a minimizer of the problem
\begin{align}\label{eq:EnergyQijk}
\begin{split}
\min \left\{E_1\big(X,Q^{\nu_j}_S(x_i^{j,k})\big) \colon  \,  X = \mathscr{L}(z^\pm) \text{ on } \partial_1^\pm Q^{\nu_j}_S(x_i^{j,k}) \right\}.
\end{split}
\end{align}
We recall notation \eqref{eq: basic set def}--\eqref{def:line segment} and define
\begin{align*}
U = \big([- \tfrac{T}{2}\nu^\perp; x_0^{1,0}]\big)_{\kappa} \cup \bigcup_{k=0}^{n-1} \big([x_{N_1(S,T)}^{1,k}; x_0^{2,k}]\big)_{\kappa} \cup \bigcup_{k=0}^{n-2} \big([x_{N_2(S,T)}^{2,k}; x_0^{1,k+1}]\big)_{\kappa} \cup  \big([ x_{N_2(S,T)}^{2,n-1}; \tfrac{T}{2}\nu^\perp]\big)_{\kappa},
\end{align*}
where $\kappa>1$ is chosen later.
Note that $U$ consists of $2n+1$ tubular neighborhoods of segments whose  maximal  length is bounded by $CS$.  (Apart from the segment $[ x_{N_2(S,T)}^{2,n-1}; \tfrac{T}{2}\nu^\perp]$, this follows directly from the choice of the points $x_i^{j,k}$ and \eqref{eq: w relevance}. For $[ x_{N_2(S,T)}^{2,n-1}; \tfrac{T}{2}\nu^\perp]$, it  follows from $x_{N_2(S,T)}^{2,n-1} = (-T/2 + S(M+10)n)\, \nu^\perp - S w = (T/2 - 5S)\, \nu^\perp - Sw$, where $|w| \le 2$.)  We also observe that  $Q^\nu_T \setminus (\bigcup_{i,j,k} Q^{\nu_j}_S(x_i^{j,k}) \cup U)$ consists of two connected components. The connected component intersecting $\partial^+_1 Q^\nu_T$ is denoted  by $A^+$ and the other one is denoted by $A^-$. Note that the cubes $Q^{\nu_j}_S(x_i^{j,k})$ do not intersect $\partial^\pm_1 Q^\nu_T$. We introduce the sets  $A_i^{j,k} = Q^{\nu_j}_{10}( x_i^{j,k} + (S/2) \nu^\perp_j )\cup Q^{\nu_j}_{10}( x_i^{j,k} - (S/2) \nu^\perp_j )$ and let $X_T$ be defined by 
\begin{align}\label{def:XTconvexity}
X_T = \begin{cases} X_i^{j,k} &\text{in } Q^{\nu_j}_S(x_i^{j,k})\setminus A_i^{j,k}, \\
\emptyset &\text{in }  \left(U \setminus\left( \bigcup_{i,j,k} Q^{\nu_j}_S(x_i^{j,k}) \cup \partial^-_1 Q^{\nu}_T\cup \partial^+_1 Q^\nu_T\right)\right)\cup \bigcup_{i,j,k}A_i^{j,k} ,\\
\mathscr{L}(z^\pm) &\text{in } A^\pm \cup \partial^\pm_1 Q^\nu_T.
\end{cases}
\end{align}
For an illustration of the sets and the configuration $X_T$ we refer to Figure \ref{fig: lim-ex}. Clearly, we have $X_T = \mathscr{L}(z^\pm)$ on $\partial^\pm_1 Q^\nu_T$.

\noindent \emph{Step 2: Energy estimate on $X_T$.} We now estimate the energy of $X_T$. First, due to the boundary conditions  $X_i^{j,k} = \mathscr{L}(z^\pm)$ on  $\partial_1^\pm Q^{\nu_j}_S(x_i^{j,k})$, one can check that for $\kappa$ big enough there holds $|x-y| \ge 1$ for all $x,y \in X_T$, $x\neq y$ and therefore $E_1(X_T)<+\infty$. We now prove the following two sub-estimates
\begin{align}\label{ineq:estimate1conv}
E_1\Big(X_T, \big( A^+ \cup A^- \cup \partial^+_1 Q^\nu_T\cup\partial^-_1 Q^\nu_T \big)  \cap Q^\nu_T \Big)\leq CnS
\end{align}
and
\begin{align}\label{ineq:estimate2conv}
E_1\left(X_T,\bigcup\nolimits_{i,j,k} Q^\nu_S(x_i^{j,k}) \cup \big(U \setminus (\partial^-_1 Q^{\nu}_T\cup \partial^+_1 Q^\nu_T)   \big)\right) \leq & \sum\nolimits_{j=1}^2\frac{\lambda_jT}{S} \left(S\big({\varphi}(z^+,z^-, \nu_j) + \eta_S\big) +C\right),
\end{align}
where $\lbrace \eta_S \rbrace_S$ denotes a sequence with $\eta_S \to 0$ as $S \to +\infty$. 

\noindent \emph{Proof of \eqref{ineq:estimate1conv}:} For $x \in X_T\cap (A^+ \cup A^- \cup \partial^+_1 Q^\nu_T\cup\partial^-_1 Q^\nu_T )  \cap Q^\nu_T$ such that $\mathrm{dist}(x, U) >1$, there holds $\#\mathcal{N}(x)=6$. This follows from the boundary conditions of $X_i^{j,k}$ on every cube $Q^{\nu_j}_S(x_i^{j,k})$ and the fact that $X_T=\mathscr{L}(z^\pm)$ in $A^\pm\cup \partial^\pm_1 Q^\nu_T$. Therefore, in order to obtain \eqref{ineq:estimate1conv}, it suffices to estimate the cardinality of the atoms $x\in X_T$ lying in $(U)_1$. As $U$ consists of $2n+1$ tubular neighborhoods of segments whose length is bounded by $CS$, we get $\mathcal{L}^2((U)_2) \le CnS$.   Therefore, employing Lemma \ref{lemma:propertiesofE}(v), we obtain $\#(X_T \cap (U)_1 )\leq CnS$.  By \eqref{def:energyneighbourhood} this implies \eqref{ineq:estimate1conv}.

\noindent \emph{Proof of \eqref{ineq:estimate2conv}:} In view of \eqref{def:XTconvexity}, in order to obtain \eqref{ineq:estimate2conv},  it suffices to estimate the energy contribution of atoms in $\bigcup_{i,j,k}  (Q^{\nu_j}_S(x_i^{j,k})\setminus A_i^{j,k})$. For each  $i,j,k$, there holds $X_T= \mathscr{L}(z^\pm)$ on 
 $$\big(\partial Q^{\nu_j}_S(x_i^{j,k})\big)_5 \setminus \big\{x\colon  \pm \langle x-x_i^{j,k}, \nu_j\rangle  \leq   C \kappa  \big\}$$
 with a constant $C > 0$ only depending on $\nu_1, \nu_2$ and $\nu$. This shows that the cardinality of $X_T \cap Q^{\nu_j}_S(x_i^{j,k}) \cap ((A_i^{j,k})_1 \cup (U)_1)$, which contains all atoms $x \in X_T \cap Q^{\nu_j}_S(x_i^{j,k})$ for which possibly $\#(\mathcal{N}(x)\cap X_T) < \#(\mathcal{N}(x)\cap X_i^{j,k})$, is uniformly controlled due to Lemma \ref{lemma:propertiesofE}(v). We thus obtain $E\big(X_T,Q^{\nu_j}_S(x_{i}^{j,k})\big) \leq E\big(X_i^{j,k},Q^{\nu_j}_S(x_{i}^{j,k})\big) + C$ by \eqref{def:energyneighbourhood}.  Thus, using \eqref{eq:EnergyQijk},  Proposition \ref{proposition:existence}, and Proposition \ref{proposition:xindependence} we get 
\begin{align}\label{ineq:EXijk}
E\big(X_T,Q^{\nu_j}_S(x_{i}^{j,k})\big) \leq E\big(X_i^{j,k},Q^{\nu_j}_S(x_{i}^{j,k})\big) + C \le S\big({\varphi}(z^+,z^-,\nu_j) + \eta_S\big) +C.
\end{align} 
For $j \in \lbrace 1,2\rbrace$, we find 
\begin{align*}
\# \big\{(i,k) \colon \,   i=0,\ldots,N_j(S,T),  \, k=0,\ldots,n-1 \big\}  =  n\bigg(\bigg\lfloor \frac{\lambda_j(T-(10n+5)S) }{nS} \bigg\rfloor + 1 \bigg)  \leq \frac{\lambda_j T}{S}.
\end{align*}
This along with \eqref{ineq:EXijk} yields   \eqref{ineq:estimate2conv}. 

\noindent \emph{Step 3: Conclusion.} Noting that
\begin{align*}
\min\left\{E_1(X,Q^{\nu}_T) \colon  \, X = \mathscr{L}(z^\pm) \text{ on } \partial_1^\pm Q^{\nu}_T \right\}\leq E_1(X_T,Q^\nu_T),
\end{align*}
 and using \eqref{ineq:estimate1conv}--\eqref{ineq:estimate2conv}  as well as Lemma \ref{lemma:propertiesofE}(iv),  we have
\begin{align*}
\min\left\{E_1(X,Q^{\nu}_T) \colon X = \mathscr{L}(z^\pm) \text{ on } \partial_1^\pm Q^{\nu}_T \right\}  & \leq  \lambda_1 T\big({\varphi}(z^+,z^-,\nu_1) + \eta_S\big) +C\lambda_1 T /S \notag \\ & \ \ \ + \lambda_2T\big({\varphi}(z^+,z^-,\nu_2) + \eta_S\big) + C\lambda_2T/S  + CnS.
\end{align*}
Dividing by $T$, letting first $T\to+\infty$, and then $S\to +\infty$, we obtain \eqref{eq:convexitynu} by Proposition \ref{proposition:existence}, where we also  use  $\eta_S \to 0$. This concludes the proof of (iii).

\noindent \emph{Proof of $\mathrm{(iv)}$.} Let $z^\pm=(\theta^\pm,\tau^\pm,1)$, $\nu \in \mathbb{S}^1$, and $\theta \in \mathbb{A}$. Our goal is to prove
\begin{align}\label{ineq:firstframeinvariance}
\varphi\big((\theta^++\theta,\tau^+,1),(\theta^-+\theta,\tau^-,1),e^{i\theta}\nu\big)=\varphi\big((\theta^+,\tau^+,1),(\theta^-,\tau^-,1),\nu\big).
\end{align} 
Due to Proposition \ref{proposition:existence}, for every $T>0$ we can choose $X_T \subset \mathbb{R}^2$, such that $X_T =\mathscr{L}((\theta^\pm,\tau^\pm,1))$ on $\partial_1^\pm Q^\nu_T$ and such that
\begin{align}\label{eq:convergencerotinvariance}
\lim_{T\to+\infty} \frac{1}{T} E_1\big(X_T,Q^\nu_T\big) = \varphi\big((\theta^+,\tau^+,1),(\theta^-,\tau^-,1),\nu\big).
\end{align}
We set $X_T^\theta= e^{i\theta}X_T$. Then $X_T^\theta = \mathscr{L}((\theta^\pm+\theta,\tau^\pm,1))$ on $\partial_1^\pm Q^{\nu_\theta}_T$, where $\nu_\theta= e^{i\theta}\nu$.  Applying  Proposition \ref{proposition:existence}, Lemma \ref{lemma:propertiesofE}(i), and \eqref{eq:convergencerotinvariance}, we obtain 
\begin{align*}
\varphi((\theta^++\theta,\tau^+,1),(\theta^-+\theta,\tau^-,1),e^{i\theta}\nu)&\leq \liminf_{T\to+\infty} \frac{1}{T} E_1(X_T^\theta,Q^{\nu_\theta}_T) \\&= \lim_{T\to +\infty} \frac{1}{T} E_1(X_T,Q^{\nu}_T) = \varphi((\theta^+,\tau^+,1),(\theta^-,\tau^-,1),\nu).
\end{align*}
This implies  one inequality in \eqref{ineq:firstframeinvariance}. The other inequality follows by repeating the argument for $(\tilde{\theta}^\pm,\tau^\pm,1)= (\theta^\pm+\theta,\tau^\pm,1)$, $\tilde{\nu} = e^{i\theta}\nu$, and $\tilde{\theta}=-\theta$. This concludes the proof of (iv). 

\noindent \emph{Proof of $\mathrm{(v)}$.}  Let $z^\pm=(\theta^\pm,\tau^\pm,1)$, $\nu \in \mathbb{S}^1$, and $\tau \in \mathbb{T}$. Our goal is to prove
\begin{align}\label{ineq:firstframeinvariance-new}
\varphi\Big( \big(\theta^+,\tau^++e^{-i\theta^+}\tau ,1\big), \big(\theta^-,\tau^-+e^{-i\theta^-}\tau,1\big),\nu\Big)=\varphi\big((\theta^+,\tau^+,1),(\theta^-\tau^-,1),\nu\big).
\end{align}
Due to Proposition \ref{proposition:existence}, for every $T>0$ we can choose $X_T \subset \mathbb{R}^2$ , such that $X_T =\mathscr{L}((\theta^\pm,\tau^\pm,1))$ on $\partial_1^\pm Q^\nu_T$ and such that  \eqref{eq:convergencerotinvariance} holds. We set $X_T^\tau= X_T+\tau$. Then $X_T^\tau = \mathscr{L}((\theta^\pm,\tau^\pm+e^{-i\theta^{\pm}}\tau,1))$ on $\partial_1^\pm Q^{\nu}_T(\tau)$.  Applying  Proposition \ref{proposition:existence}, Lemma \ref{lemma:propertiesofE}(i), and   \eqref{eq:convergencerotinvariance}, we get
\begin{align*}
\varphi\big(\big(\theta^+,\tau^++e^{-i\theta^+}\tau,1\big),\big(\theta^-,\tau^-+e^{-i\theta^-}\tau,1\big),\nu\big)&\leq \liminf_{T\to+\infty} \frac{1}{T} E_1(X_T^\tau,Q^{\nu}_T(\tau)) = \lim_{T\to +\infty} \frac{1}{T} E_1(X_T,Q^{\nu}_T)  \\& = \varphi\big((\theta^+,\tau^+,1),(\theta^-,\tau^-,1),\nu\big).
\end{align*}
This yields one inequality of \eqref{ineq:firstframeinvariance-new}. The other inequality follows by repeating the argument for    $({\theta}^\pm,\tilde{\tau}^\pm,1) = (\theta^\pm,\tau^\pm+e^{-i\theta^\pm}\tau,1)$ and $\tilde{\tau}=-\tau$. This concludes the proof of (v).
\end{proof}

\section*{Acknowledgements} 

MF and LK acknowledge support by the DFG projects FR 4083/1-1, FR 4083/3-1 and by the Deutsche Forschungsgemeinschaft (DFG, German Research Foundation) under Germany's Excellence Strategy EXC 2044 -390685587, Mathematics M\"unster: Dynamics--Geometry--Structure.

\end{document}